\newtheorem{theorem}{Theorem}[section]
\newtheorem{proposition}[theorem]{Proposition}
\newtheorem{lemma}[theorem]{Lemma}
\newtheorem{corollary}[theorem]{Corollary}
\newtheorem{definition}[theorem]{Definition}
\newtheorem{example}[theorem]{Example}
\newtheorem{remark}[theorem]{Remark}
\theoremstyle{remark}
\newcommand{\cA}{\mathcal{A}}\newcommand{\cB}{\mathcal{B}}
\newcommand{\cC}{\mathcal{C}}
\newcommand{\cL}{\mathcal{L}}
\newcommand{\bF}{\mathbb{F}}
\newcommand{\bN}{\mathbb{N}}
\newcommand{\bZ}{\mathbb{Z}}
\newcommand{\1}{\mathds{1}}
\newcommand{\poly}{\operatorname{poly}}
\newcommand{\dis}{\operatorname{dis}}
\newcommand{\nc}{\newcommand}
\nc{\on}{\operatorname}
\nc{\Spec}{\on{Spec}}
\nc{\Aut}{\textit{Aut}}
\nc{\id}{\textit{id}}
\nc{\chr}{\on{char}}
\nc{\im}{\on{im}}
\nc{\Hom}{\on{Hom}}
\nc{\lcm}{\on{lcm}}
\nc{\dual}[1]{\prescript{t}{}{#1}}
\nc{\transpose}[1]{{#1}^{\intercal}}
\nc{\Sym}{\on{Sym}}
\nc{\End}{\on{End}}
\nc{\stab}{\on{stab}}
\nc{\Li}{\on{Li}}
\nc{\spn}{\on{span}}
\nc{\sgn}{\on{sgn}}
\nc{\supp}{\on{supp}}
\nc{\Unif}{\on{Unif}}
\title{
  Quantum LDPC Codes of Almost Linear Distance \\ via Homological Products\thanks{Research supported in part by a ONR grant N00014-24-1-2491 and a UC Noyce initiative award. L.~Golowich is supported by a National Science Foundation Graduate Research Fellowship under Grant No.~DGE 2146752. This work was done in part while the authors were attending the Spring 2024 programs at the Simons Institute for the Theory of Computing.}
}
\author{Louis Golowich \\
  UC Berkeley \\
  \href{mailto:lgolowich@berkeley.edu}{\texttt{lgolowich@berkeley.edu}}
  \and
  Venkatesan Guruswami \\
  UC Berkeley \\
  \href{mailto:venkatg@berkeley.edu}{\texttt{venkatg@berkeley.edu}}
}
\begin{document}

\pagenumbering{gobble}

\maketitle
\thispagestyle{empty}

\begin{abstract}
  We present new constructions of quantum codes of linear or close-to-linear distance and dimension with low-weight stabilizers. Only a few constructions of such codes were previously known, and were primarily based on a specific operation from homological algebra, namely the balanced product. In contrast, our constructions are based on a more basic and widely used product, namely the homological product (i.e.~the tensor product of chain complexes). Our results help address the natural question: When do homological products preserve good code distance?
  % We present new constructions of quantum codes of linear or close-to-linear distance with low-weight stabilizers. Our constructions are based on homological products (i.e.~tensor products of chain complexes), a widely used operation from homological algebra. Our results help address the natural question: When do homological products preserve good code distance?

  Our first main result constructs asymptotically good $[[N,\Theta(N),\Theta(N)]]$ quantum codes with small polynomial stabilizer weight from homological products of codes with a property called \textit{product-expansion}. This notion was recently introduced and used to bound the distance of balanced product quantum codes; we apply it instead to homological products.

  For every $\epsilon>0$, our second main result constructs close-to-linear distance $[[N,N^{1-\epsilon},N^{1-\epsilon}]]$ (subsystem) quantum LDPC codes with constant stabilizer weight from iterated homological products of a constant-sized quantum locally testable code. The key insight here is that by using subsystem codes (but still with constant-weight stabilizers), we can circumvent a particular obstruction that limited the distance of many prior product code constructions to at most~$\tilde{O}(\sqrt{N})$.
\end{abstract}

\newpage

% % \enlargethispage{1cm}
\tableofcontents

\newpage

\pagenumbering{arabic}

% TODOS:
% \begin{itemize}
% \item Add figure illustrating product-expansion proof?
% \end{itemize}

\section{Introduction}
Quantum low-density parity-check (qLDPC) codes provide one of the most promising means for achieving efficient quantum fault-tolerance. Such codes are defined to support error detection and correction via sparse queries (i.e.~stabilizer measurements) to code states. Each such query involves, and can therefore only propagate errors across, a small number of code components. However, this qLDPC condition has proven difficult to achieve, and asymptotically optimal qLDPC codes were only recently constructed following a line of breakthrough works \cite{hastings_fiber_2021,panteleev_quantum_2022,breuckmann_balanced_2021,panteleev_asymptotically_2022,leverrier_quantum_2022-1,dinur_good_2023}. The latter three of these works use nearly identical techniques, and still provide the only known asymptotically good qLDPC codes, meaning that the code dimension and distance scale linearly in the block length, and the stabilizers have constant weight. It remains an open question to find new constructions of qLDPC codes with good parameters. Such alternative constructions may yield better practical parameters or have properties useful for fault-tolerant computation, and would also be of independent mathematical interest.

In this paper, we develop new constructions of qLDPC codes with linear or close-to-linear dimension and distance based on \textit{homological products} (i.e.~tensor products of chain complexes), a well-known construction from homological algebra that generalizes classical tensor codes. Our constructions are based on new characterizations that we develop for conditions under which homological products preserve (close-to) linear quantum code distance.

Homological products have long been used to construct qLDPC codes, notably including hypergraph product codes \cite{tillich_quantum_2014}, which are simply homological products of 2-term chain complexes associated to classical LDPC codes. Hypergraph products of good classical LDPC codes give length-$N$ quantum codes of dimension $\Theta(N)$ and distance $\Theta(\sqrt{N})$ \cite{tillich_quantum_2014}; the well-known surface and Toric codes \cite{kitaev_fault-tolerant_2003} can also be viewed as hypergraph products of repetition codes.

However, it was a longstanding open question to construct qLDPC codes with distance greater than $\Theta(\sqrt{N})$. This question was resolved by the line of work described above leading to asymptotically good qLDPC codes, using \textit{balanced} or \textit{lifted products}, which can be viewed as quotients of hypergraph products by a group symmetry. Such balanced products remain the only known technique for obtaining qLDPC codes with (even close to) linear dimension and distance. Furthermore, balanced products are a fairly rigid construction, in that they require classical codes respecting a large group symmetry. The optimal constructions \cite{panteleev_asymptotically_2022,leverrier_quantum_2022-1,dinur_good_2023} also require classical codes with a certain strong property called \textit{product-expansion} \cite{kalachev_two-sided_2023}, which is difficult to obtain in practice.

In light of these limitations of balanced products, we study the question of whether homological products can be used to construct qLDPC codes with comparably good parameters. Homological products are more flexible in that they require no group symmetry. At a more foundational level, it is a mathematically interesting question to determine when homological product codes have large distance. Indeed, as quantum codes can be equivalently viewed as chain complexes or as topological spaces (see also \cite{freedman_building_2021}), this question can be phrased as asking how the systolic and cosystolic distance of a topological space (see Definition~\ref{def:sysdisexp}) behave under topological products.

In addition to its fundamental nature, this question is also of practical interest. For instance, the classical analogue of homological product codes, namely tensor codes, are widely studied and used throughout classical coding theory. Better understanding quantum homological products may yield similar benefits for quantum fault tolerance and beyond.

As mentioned above, many qLDPC code constructions based on homological products had distance at most $O(\sqrt{N})$. However, a few works have surpassed this barrier. Specifically, \cite{bravyi_homological_2014} showed that a length-$N$ homological product of two quantum random codes has distance $\Theta(N)$ and stabilizer weight $\Theta(\sqrt{N})$. \cite{kaufman_new_2021} used homological products of chain complexes from certain high-dimensional expanders \cite{lubotzky_ramanujan_2005,lubotzky_explicit_2005} to obtain qLDPC codes with distance $\Theta(N\cdot\poly\log N)$ and constant stabilizer weight. \cite{delfosse_union-find_2021} also showed how to obtain homological product codes with stabilizer weight $w$ and distance $\Theta(\sqrt{N\cdot w})$.

In this paper, we present two main results, each of which uses a different technique to construct homological product codes of linear or close-to-linear distance with low-weight stabilizers. The following two sections summarize these respective results. A more detailed overview, with necessary background notions and notation, is provided in Section~\ref{sec:techoverview}.

\subsection{Results on Product-Expanding Codes}
Our first main result is a simplification and generalization of the work of \cite{bravyi_homological_2014}, from which we obtain codes of linear dimension and distance, with stabilizer weight growing as a small polynomial in the block length. Interestingly, we bound the distance of these codes using bounds of \cite{polishchuk_nearly-linear_1994,kalachev_two-sided_2023,kalachev_personal_2024} on the product-expansion property mentioned earlier. As a result, we are able to derandomize the \cite{bravyi_homological_2014} result using Reed-Solomon codes, extend it to bound local testability (see Definition~\ref{def:CSScode}) in addition to distance, and also generalize it to higher-dimensional products, which have lower stabilizer weight. At a high level, these results suggest that product-expansion, viewed as a pseudorandom property, captures and generalizes the behavior of random homological products observed by \cite{bravyi_homological_2014}. The informal theorem statement below summarizes the code constructions described above:

\begin{theorem}[Informal statement of Theorem~\ref{thm:peprodinf}]
  \label{thm:peprodintro}
  Let $Q^1,\dots,Q^t$ consist of either:
  \begin{enumerate}
  \item\label{it:pe2} $t=2$ uniformly random quantum CSS codes over an arbitrary alphabet $\bF_q$, or
  \item\label{it:peRS} $t=2$ quantum Reed-Solomon CSS codes, or
  \item\label{it:pehigh} An arbitrary constant number $t\in\bN$ of random quantum CSS codes over a sufficiently large alphabet $\bF_q$.
  \end{enumerate}
  Then the homological product of $Q^1,\dots,Q^t$ gives an asymptotically good\footnote{Recall that an $[[N,K,D]]_q$ code is a quantum code on qudits of local dimension $q$ with block length $N$, dimension (i.e.~number of logical qudits) $K$, and distance $D$. The code is said to be asymptotically good if $K,D=\Theta(N)$.} $[[N,\Theta(N),\Theta(N)]]_q$ CSS code with stabilizer weight $O(N^{1/t})$. Furthermore, if $t=2$, then this product code is locally testable with $O(\sqrt{N})$-weight queries and constant soundness.
\end{theorem}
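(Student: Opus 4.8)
The plan is to present the homological product as an explicit chain complex and read off its dimension and stabilizer weight essentially by inspection, reserving the real work for the distance and (for $t=2$) the local testability, both of which we derive from product-expansion of the underlying classical codes. Concretely, write each CSS code $Q^i$ as a length-$3$ chain complex $C^i_2\xrightarrow{\partial^i_2}C^i_1\xrightarrow{\partial^i_1}C^i_0$ over $\bF_q$, with qudits on $C^i_1$, and (as we may arrange) with $\partial^i_1$ surjective and $\partial^i_2$ injective, so that the only homology sits in degree $1$. Taking $n_i=\Theta(N^{1/t})$, the homological product is the total complex of $C^1\otimes\cdots\otimes C^t$, a complex concentrated in degrees $0,\dots,2t$, and $Q$ is the quantum CSS code supported in the middle degree $t$. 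The Künneth formula gives $H_t(C^1\otimes\cdots\otimes C^t)=\bigoplus_{j_1+\cdots+j_t=t}\bigotimes_iH_{j_i}(C^i)=\bigotimes_iH_1(C^i)$, so the number of logical qudits is $K=\prod_iK_i=\Theta(\prod_in_i)=\Theta(N)$ whenever each factor has linear dimension (and the block length is $\Theta(N)$ as well). Every differential of a tensor product of complexes is a signed sum of maps $\mathrm{id}\otimes\cdots\otimes\partial^i_j\otimes\cdots\otimes\mathrm{id}$, so its rows and columns have weight at most $\sum_iw_i=O(t\max_in_i)=O(N^{1/t})$, where $w_i=O(n_i)$ bounds the check weights of $Q^i$ (those of random codes and of Reed--Solomon codes being dense); this is the claimed stabilizer weight.

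\textit{Distance from product-expansion.} The crux is a lemma asserting that if the classical codes underlying $Q^1,\dots,Q^t$ form a $\kappa$-product-expanding family (in the sense of \cite{kalachev_two-sided_2023}) and each has relative distance at least $\delta$, then $Q$ has relative distance $\Omega_{\kappa,\delta,t}(1)$. Since all hypotheses are self-dual, and the $X$-distance of $Q$ is the $Z$-distance of the homological product of the dual CSS codes (which meet the same hypotheses), it suffices to bound the $Z$-distance, i.e.\ the systolic distance of the total complex (Definition~\ref{def:sysdisexp}). Let $z$ be a minimum-weight representative of a nontrivial degree-$t$ homology class, so $\partial z=0$ and $z\notin\im\partial$, and suppose $|z|$ is below the target bound. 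Decompose $z$ along the grading into components indexed by tuples $(j_1,\dots,j_t)$ with $\sum j_i=t$, $j_i\in\{0,1,2\}$ (for $t=2$: $z=z_{20}+z_{11}+z_{02}$). The equation $\partial z=0$ expresses single-factor ``contractions'' of $z$ --- obtained by applying one $\partial^i_j$ and projecting --- as elements of sum-spaces of the shape $C\otimes(\cdots)+(\cdots)\otimes C$ for the factor codes $C$; product-expansion splits each such contraction into factor-aligned pieces whose supports are controlled by $|z|$. Feeding these splittings back in, one adds boundaries $\partial y$ with $|y|=O_\kappa(|z|)$ to move $z$ into a normal form aligned with the all-degree-$1$ component and concentrated on $o(\delta n_i)$ coordinates of each factor; on that normal form the cocycle conditions force the restriction to each factor to be a codeword or logical representative of that factor code, and being of sub-distance support it must vanish, so $z$ was a boundary --- contradicting nontriviality. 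The delicate and, I expect, hardest step is tracking the supports through these reductions and checking that every intermediate contraction genuinely lands in a sum-space to which product-expansion applies; this is also where exactness at the ends and the CSS relation $\partial^i_1\partial^i_2=0$ get used.

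\textit{The three instantiations.} It remains to exhibit product-expanding CSS codes of linear distance. For the Reed--Solomon case, a quantum Reed--Solomon code has $C_X,C_Z$ both (generalized) Reed--Solomon codes, nested as the CSS condition demands; these are MDS, hence of linear distance, and such a pair is $O(1)$-product-expanding by the bivariate low-degree-test analysis of \cite{polishchuk_nearly-linear_1994} (equivalently, robust testability of Reed--Solomon tensor codes), which derandomizes \cite{bravyi_homological_2014}. For a single pair of random codes, a uniformly random quantum CSS code over $\bF_q$ has linear-distance underlying classical codes with high probability, and a random pair is $O(1)$-product-expanding with high probability by \cite{kalachev_two-sided_2023}; here one checks that conditioning on $H_XH_Z^{\top}=0$ leaves each code uniform enough for this to persist. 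For an arbitrary constant number $t$ of factors, the same argument uses $t$-wise product-expansion of random codes over a sufficiently large alphabet \cite{kalachev_personal_2024}. In each case the distance lemma yields $D=\Theta(N)$.

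\textit{Local testability for $t=2$.} Product-expansion of the relevant pair of factor codes is, essentially by definition, a robust-testability statement for the corresponding tensor code; transporting it through the homological product shows that the natural family of $O(\sqrt N)$-weight checks --- for each fixed coordinate of one factor, the bundle of all checks of the other factor --- rejects any logical error with probability proportional to its distance from the codespace, which is the constant soundness claimed in the sense of Definition~\ref{def:CSScode}. This needs no ingredient beyond the product-expansion bounds already invoked for the distance bound.
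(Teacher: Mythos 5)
Your route differs from the paper's in a structural way: you take the ordinary (multi-sector) homological product of $3$-term complexes with full-rank checks (so the end homology vanishes and the K\"unneth dimension count works), whereas the paper works with \emph{single-sector} complexes $\partial=H_X^\top H_Z$ \`a la Bravyi--Hastings. That choice is not cosmetic. In the single-sector setting the K\"unneth formula gives the clean decomposition $Z_*(\cC_1\otimes\cdots\otimes\cC_t)=\bigotimes_i Z_*(\cC_i)+\sum_i B_*(\cC_i)^{(i)}$, and the distance bound is a short induction: pair a putative low-weight logical against a tensor $\bigotimes_i z^i$ of cocycle representatives, apply product-expansion of $(B_*(\cC_1),\dots,B_*(\cC_{t-1}),Z_*(\cC_t))$, contract with $z^t$, and recurse. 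In your multi-sector setting the whole difficulty is controlling how boundary moves leak weight out of the middle block $A_1\otimes B_1$ into the corner blocks $A_2\otimes B_0$, $A_0\otimes B_2$; your sketch defers exactly this ("tracking the supports through these reductions\dots is the hardest step") and supplies no mechanism for it. This is a genuine gap, not a routine verification: the paper's own multi-sector product bound (Theorem~\ref{thm:proddis}) handles this leakage only by paying factors of the locality of one factor and the filling constant (soundness) of the other, which are useless for dense random/Reed--Solomon codes, and the paper never proves a multi-sector analogue of Theorem~\ref{thm:sspe}. So the central "distance from product-expansion" lemma you invoke is neither proved by you nor available off the shelf.

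There is also a concrete error in how you instantiate the hypotheses. You state the lemma for "the classical codes underlying $Q^1,\dots,Q^t$" being product-expanding and call the hypotheses self-dual, and for Reed--Solomon you assert the pair is product-expanding by \cite{polishchuk_nearly-linear_1994}. But the CSS condition forces $Q_X^\perp\subseteq Q_Z$, i.e.\ $\dim Q_X+\dim Q_Z\geq n$, while the Polishchuk--Spielman bound requires the two dimensions to sum to at most $(1-\epsilon)n$; so the pair of classical codes inside a single quantum Reed--Solomon code can never satisfy your hypothesis. The paper's proof needs product-expansion only of the asymmetric mixed pairs $({Q_X^1}^\perp,Q_Z^2)$ and $({Q_Z^1}^\perp,Q_X^2)$ (more generally $({Q_X^1}^\perp,\dots,{Q_X^{i-1}}^\perp,Q_Z^i)$ and its $X/Z$ swap), and in the RS case this forces choosing the two quantum RS codes with rates separated by $\Theta(1)$ (Corollary~\ref{cor:sspeRS}); without identifying these pairs and that rate separation, item~\ref{it:peRS} does not follow from your argument. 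The $t=2$ testability claim is similarly asserted rather than proved; the paper's soundness bound is a separate filling argument using product-expansion of $(B_*(\cC_1),B_*(\cC_2))$ together with the classical distance of the $Z_*(\cC_i)$.
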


In Theorem~\ref{thm:peprodintro}, the proof of item~\ref{it:pe2} applies the product-expansion result of \cite{kalachev_two-sided_2023}. The proof of item~\ref{it:peRS} applies the product-expansion result of \cite{polishchuk_nearly-linear_1994}, which is perhaps surprising given that \cite{kalachev_two-sided_2023} observe that this result is too weak to apply to known distance bounds for balanced product constructions \cite{panteleev_asymptotically_2022,leverrier_quantum_2022-1,dinur_good_2023}. The proof of item~\ref{it:pehigh} applies a forthcoming product-expansion result of Kalachev \& Panteleev \cite{kalachev_personal_2024}.

\subsection{Results on Subsystem QLDPC Codes}
Our second main result provides a new method of taking homological products that preserve almost-linear distance by using \textit{subsystem codes}, which only encode messages into a subspace of the entire logical code space. Our analysis of this method builds on techniques of \cite{kaufman_new_2021}. As a result, we obtain a new construction of qLDPC codes of close-to-linear dimension and distance with constant stabilizer weight using an iterative construction, which is based on iterative homological products of a constant-sized code. We provide an informal statement below:

\begin{theorem}[Informal statement of Theorem~\ref{thm:iterativeinf}]
  \label{thm:iterativeintro}
  For every $\epsilon>0$, there exists a constant-sized code $\cC$ and an infinite sequence of qLDPC subsystem codes $(Q_i)_{i\in\bN}$ with parameters
  \begin{equation*}
    [[N_i,N_i^{1-\epsilon},N_i^{1-\epsilon}]]_2
  \end{equation*}
  and with constant stabilizer weight $O(1)$ (independent of $\epsilon$), such that each $Q_i$ is obtained by applying the stabilizer-weight-reduction transformation of \cite{hastings_quantum_2023} to the homological product of $Q_{i-1}$ with $\cC$.
\end{theorem}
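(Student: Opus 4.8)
The plan is to build $Q_i$ iteratively and track three quantities across iterations: the block length $N_i$, the dimension $K_i$, and the (subsystem) distance $D_i$, while keeping the stabilizer weight bounded by an absolute constant throughout. We start from a fixed constant-sized quantum locally testable code $\cC$ (in the sense of Definition~\ref{def:CSScode}); the key structural input we need from $\cC$ is not just good distance but good \emph{local testability / soundness}, because that is what will let the homological product with $\cC$ avoid the $\tilde O(\sqrt N)$ barrier. Writing $\cC$ as a length-$3$ chain complex, we form the homological (tensor) product $Q_{i-1}\otimes\cC$. A direct count gives $N_i' = \Theta(N_{i-1})$ (since $\cC$ has constant size) and $K_i' \ge \Omega(K_{i-1})$ by the Künneth formula for the homology of a tensor product of chain complexes. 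The crux is the distance: I would show that the subsystem distance of $Q_{i-1}\otimes\cC$ is at least $\Omega(D_{i-1}\cdot s)$ for a constant $s>0$ depending only on the soundness of $\cC$ — i.e.\ each homological product step multiplies the distance by a constant factor while multiplying the length by a (larger) constant factor. This is exactly the place where the subsystem relaxation is used: rather than bounding the distance of all logical operators, we restrict attention to a gauge-fixed subspace on which a cosystolic/soundness argument à la \cite{kaufman_new_2021} shows that low-weight representatives of nontrivial classes cannot exist unless they already witness a low-weight object in $Q_{i-1}$.

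The second ingredient is the weight-reduction transformation of \cite{hastings_quantum_2023}: after each tensor step the stabilizer weight has grown (by a constant factor, since $\cC$ is constant-sized, but over $i$ iterations it would blow up), so we apply weight reduction to pull it back down to an absolute constant $w_0$ independent of $i$ and of $\epsilon$. Here I need the quantitative guarantees of that transformation: it increases the block length by at most a constant factor, preserves the dimension, and preserves the distance up to a constant factor (for subsystem codes; this is the version of weight reduction that is distance-preserving in the subsystem setting). Composing the tensor step and the weight-reduction step, one iteration sends $(N,K,D)\mapsto (C_1 N,\ \ge c_2 K,\ \ge c_3 D)$ for absolute constants $C_1 > 1$ and $0 < c_2,c_3 \le 1$, with the weight reset to $w_0$ at the end.

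Iterating $i$ times from the constant-sized base code yields $N_i = \Theta(C_1^i)$, $K_i = \Omega(c_2^i)$, and $D_i = \Omega(c_3^i)$. Taking logarithms, $\log K_i \ge i\log c_2 + O(1)$ and $\log N_i = i\log C_1 + O(1)$, so $K_i \ge N_i^{\log c_2/\log C_1 - o(1)}$, and similarly $D_i \ge N_i^{\log c_3/\log C_1 - o(1)}$. The final step is a parameter-optimization / amplification argument: the raw exponents $\log c_2/\log C_1$ and $\log c_3/\log C_1$ coming from one step of the naive iteration are some fixed constants strictly less than $1$, \emph{not} $1-\epsilon$ for arbitrary $\epsilon$. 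To push the exponent up to $1-\epsilon$, I would choose the base code $\cC$ as a function of $\epsilon$: by taking $\cC$ itself to be a (constant-sized) quantum locally testable code whose block length, dimension, distance, and soundness are tuned so that one tensor-plus-weight-reduction step has $\log c_2/\log C_1,\ \log c_3/\log C_1 \ge 1-\epsilon$. Concretely, if $\cC$ has length $n$, dimension $\ge n^{1-\delta}$ and distance $\ge \Omega(n^{1-\delta})$ with soundness bounded below independent of $n$, then each step roughly multiplies $(K,D)$ by $n^{1-\delta}$-ish factors while multiplying $N$ by $n$, giving exponent $\to 1-\delta$ as $n\to\infty$; picking $n$ large enough (depending on $\epsilon$) and $\delta < \epsilon$ gives the claim, with $w_0$ unchanged since weight reduction's output weight does not depend on its input.

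The main obstacle, and where the real work lies, is the distance bound for the tensor step: proving that $\dis(Q_{i-1}\otimes\cC) \ge \Omega(\dis(Q_{i-1}))$ for subsystem distance, uniformly over the iteration. A naive Künneth/hypergraph-product style bound would only give $\min$ of the two factor distances, which is fine for one step but collapses under iteration once $\dis(\cC)$ is a fixed constant — and moreover the ordinary (non-subsystem) distance of such iterated products is known to get stuck at $\tilde O(\sqrt N)$. The resolution is to use the local testability of $\cC$: a low-weight logical operator of $Q_{i-1}\otimes\cC$, when sliced along the $\cC$ coordinate, is locally close to a codeword in each slice, and the soundness of $\cC$ forces global consistency, which in turn produces a low-weight logical operator of $Q_{i-1}$ itself — but only modulo gauge operators, which is precisely why the subsystem formulation is essential. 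Making this slicing-and-patching argument fully rigorous, and checking that the constants do not degrade across the (growing number of) iterations, is the heart of the proof; it is adapted from the high-dimensional-expander product analysis of \cite{kaufman_new_2021}, and the interaction with the weight-reduction transformation of \cite{hastings_quantum_2023} (which must be verified to preserve subsystem distance up to a constant) is the other delicate point.
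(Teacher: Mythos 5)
Your overall route coincides with the paper's: iterate ``homological product with a constant-sized qLTC whose size is chosen as a function of $\epsilon$,'' apply the weight reduction of \cite{hastings_quantum_2023} after every step to reset the locality to an absolute constant (the paper verifies this transformation for subsystem codes, resolving the point you flag), prove a per-step subsystem distance bound by a diagram chase \`a la \cite{kaufman_new_2021} using the testability of $\cC$, and win the exponent $1-\epsilon$ because the per-step length factor only slightly exceeds the per-step dimension/distance factors. The genuine gap is in how you package $\cC$: you write it as a length-$3$ chain complex and ask only for soundness of its middle-level code, but that is not enough for the per-step distance bound. In the paper's Theorem~\ref{thm:proddis} the loss factor is $w^{\cA}\cdot\mu_0(\cC)\cdot n^{\cC}_{-1}/d_{-1}(\cC)$ on the systolic side (and symmetrically $n^{\cC}_{1}/d^{1}(\cC)$ on the cosystolic side), i.e.\ one also needs the systolic/cosystolic distances of $\cC$ at the levels \emph{adjacent} to the middle one, which is why $\cC$ is taken to be a $5$-term complex from \cite{dinur_expansion_2024,kalachev_personal_2024} whose level-$\pm1$ ``syndrome'' codes have near-linear distance (meta-checks). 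For a bare $3$-term complex those adjacent levels are just the syndrome spaces: unless the parity-check matrices are full rank, $d_{-1}$ and $d^{1}$ can be $1$, the components of a cycle lying in the $A_1\otimes B_{-1}$ and $A_{-1}\otimes B_{1}$ blocks cannot be controlled, and the slicing argument collapses (this is exactly the edge-of-the-grid phenomenon of Section~\ref{sec:sqrtbarrier}); and one cannot simply impose full-rank checks, since the soundness of the known constant-sized qLTCs relies on redundant checks. So your ``slicing-and-patching'' step needs $\cC$ to carry this extra $5$-term structure, not just a sound $3$-term qLTC.

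A second, more mechanical problem is your parameter bookkeeping: the map $(N,K,D)\mapsto(C_1N,\ \ge c_2K,\ \ge c_3D)$ with $0<c_2,c_3\le 1$ cannot yield $K_i,D_i\ge N_i^{1-\epsilon}$ --- with non-increasing $K,D$ and geometrically growing $N$ the exponent tends to $0$. What actually happens (and what your later amplification paragraph implicitly relies on) is that the K\"{u}nneth formula multiplies the subsystem dimension by $\dim H_0(\cC)=\Theta(N)$ per step, and Theorem~\ref{thm:proddis} multiplies the distance by roughly $d_0(\cC)/\mathrm{polylog}(N)$ per step; in particular the distance gain is \emph{not} ``a constant $s$ depending only on the soundness of $\cC$'' but a factor growing with the size of $\cC$. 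The induction then closes because the weight-reduction loss is the fixed constant $p_0(c_0+w^{\cC})$ (the input locality is always $c_0+w^{\cC}$), so a single choice of $N=N(\epsilon)$ makes the per-step gain $N^{\epsilon}$ beat all polylogarithmic and constant losses uniformly in $i$; note also that inverse-polylogarithmic (rather than constant) soundness of $\cC$ suffices for this accounting.
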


The weight-reduction step in Theorem~\ref{thm:iterativeintro} is needed to keep the stabilizer weight constant in each iteration, as in general the stabilizer weight of a homological product code grows as the sum of the stabilizer weights of the input codes.

The key idea behind the construction in Theorem~\ref{thm:iterativeintro} is that by using subsystem codes, we are able to circumvent the $\tilde{\Theta}(\sqrt{N})$ distance barrier that limited most prior homological product code constructions. Specifically, we observe that this barrier only applies to certain logical operators (i.e.~codewords) within the code space. Therefore, we show that appropriate homological product codes will still have good distance for a large subspace of the logical operators. Thus appropriate subsystem codes of these products will still have good distance. We prove our specific distance bounds by adapting the techniques of \cite{kaufman_new_2021}, which showed a similar result, but that did not use subsystem codes. % and was constrained by the $\tilde{\Theta}(\sqrt{N})$ barrier.

To prove Theorem~\ref{thm:iterativeintro}, we set $\cC$ to be a constant-sized quantum locally testable code (qLTC; see Definition~\ref{def:CSScode}) such as one given by \cite{dinur_expansion_2024} (which in turn applies a forthcoming result of \cite{kalachev_personal_2024}); such a qLTC (which we emphasize is just constant-sized) is needed for our subsystem distance bound described above. We apply this distance bound to show that the distance of each $Q_i$ remains close-to-linear (i.e.~$\geq N_i^{1-\epsilon}$) in the block length $N_i$.

This iterative construction is reminiscent of a line of work in classical pseudorandomness, coding theory, and complexity theory, which iteratively builds larger objects with properties of interest from smaller or weaker ones. For instace, \cite{ben-sasson_robust_2004} constructed classical locally testable codes by iterative tensoring of a small classical code; at a high level, Theorem~\ref{thm:iterativeintro} uses a related approach to construct qLDPC codes. Other notable iterative constructions in the literature include explicit spectral expanders from the zig-zag product \cite{reingold_entropy_2002}, Dinur's proof of the PCP theorem \cite{dinur_pcp_2007}, and undirected connectivity in logarithmic space \cite{reingold_undirected_2008}. Theorem~\ref{thm:iterativeintro} can therefore be viewed as an analogue of these results for qLDPC codes. However, our construction does need to start with a strong constant-sized object, namely, a constant-sized qLTC with sufficiently good parameters. It is an interesting open question whether there exists an iterative construction with a weaker starting object.

\section{Technical Overview}
\label{sec:techoverview}
In this section, we provide a more detailed technical overview of our results. After presenting some necessary preliminaries in Section~\ref{sec:qcodeprelim}, we describe the results from our two main techniques in Section~\ref{sec:peprodinf} and Section~\ref{sec:subprodinf} respectively.

\subsection{Quantum Code Preliminaries}
\label{sec:qcodeprelim}
In this section, we briefly summarize some necessary background on quantum codes; more details can be found in Section~\ref{sec:prelim}.

A length-$n$ quantum CSS code consists of a pair $Q=(Q_X,Q_Z)$ of vector spaces $Q_X,Q_Z\subseteq\bF_q^n$ with $Q_X^\perp\subseteq Q_Z$, where $Q_X^\perp:=\{c\in\bF_q^n:c\cdot x=0\;\forall x\in Q_X\}$. The code $Q$ has dimension $k=\dim(Q_Z)-\dim(Q_X^\perp)$ and distance $d=\min_{c\in(Q_X\setminus Q_Z^\perp)\cup(Q_Z\setminus Q_X^\perp)}|c|$. We summarize these parameters by saying that $Q$ is an $[[n,k,d]]_q$ code.

If $H_X\in\bF_q^{m_X\times n},\;H_Z\in\bF_q^{m_Z\times n}$ are parity-check matrices for $Q_X,Q_Z$ respectively, meaning that $Q_X=\ker H_X$ and $Q_Z=\ker Q_Z$, then the locality (i.e.~stabilizer weight) $w$ of $Q$ equals the maximum weight of any row of $H_X$ or $H_Z$. We say $Q$ is locally testable with locality $w$ and soundness $\rho>0$ if for $\alpha\in\{X,Z\}$, it holds for every $e\in\bF_q^n$ that $|H_\alpha e|/m_\alpha\geq\rho\cdot\dis(e,Q_\alpha)/n$, where $\dis(e,Q_\alpha)=\min_{c\in Q_\alpha}|e-c|$. Thus local testability implies that the weight of an error on a code state can be estimated by simply measuring a random stabilizer.

\subsection{Homological Products of Product-Expanding Codes}
\label{sec:peprodinf}
In this section, we describe our results on homological products of chain complexes associated to product-expanding classical codes. For these results, we follow \cite{bravyi_homological_2014} in working with homological products of single-sector chain complexes, defined below.

\begin{definition}
  A \textbf{single-sector chain complex $\cC_*=(C,\partial^{\cC})$ over $\bF_q$} consists of a $\bF_q$-vector space $C$ and a linear \textbf{boundary map $\partial^{\cC}:C\rightarrow C$} satisfying $(\partial^{\cC})^2=0$. The associated \textbf{cochain complex} is $\cC^*=(C,\delta^{\cC}=(\partial^{\cC})^\top)$.

  For single-sector complexes $\cA,\cB$ over a field $\bF_q$ of characteristic $2$, the \textbf{homological product $\cC=\cA\otimes\cB$} is the single-sector complex given by $C=A\otimes B$ and $\partial^{\cC}=\partial^{\cA}\otimes I+I\otimes\partial^{\cB}$.
\end{definition}

A single-sector chain complex $\cC$ has a naturally associated quantum CSS code given by $Q_X=\ker\delta^{\cC}$ and $Q_Z=\ker\partial^{\cC}$. The locality (i.e.~stabilizer weight) of this code equals the maximum Hamming weight of any row or column of $\delta^{\cC}$. Conversely, given a CSS code $Q=(Q_X,Q_Z)$ with $\dim(Q_X)=\dim(Q_Z)$, we can construct a single-sector chain complex $\cC$ with associated code $Q$ by letting $\partial^{\cC}=H_X^\top H_Z$, where $H_X,H_Z$ are full-rank parity-check matrices for $Q_X,Q_Z$ respectively, meaning that $Q_X=\ker H_X$ and $Q_Z=\ker Q_Z$.

Note that by definition, if $\cA$ and $\cB$ have respective localities $w^{\cA}$ and $w^{\cB}$, then $\cC$ has locality $w^{\cC}\leq w^{\cA}+w^{\cB}$. In particular, if $\dim(A)=\dim(B)=n$, then $\dim(C)=n^2$ and $w^{\cC}\leq 2n$.

Following this observation, \cite{bravyi_homological_2014} showed that with high probability, the homological product of single-sector complexes associated to two random CSS codes gives a $[[N,\Theta(N),\Theta(N)]]$ CSS code with locality (i.e.~stabilizer weight) $\Theta(\sqrt{N})$. We strengthen and generalize this result in multiple ways: we show that these product codes are in fact locally testable, we show that the same result holds with Reed-Solomon codes (which are explicit) in place of the random codes, and we extend the result to products of more than two complexes\footnote{This result on higher products applies a forthcoming result of \cite{kalachev_personal_2024} (see Theorem~\ref{thm:petrand})}. Formally, our results are summarized in Theorem~\ref{thm:peprodinf} below. Here, given $k\in\bN$ and a prime power $q$, a length-$q$ quantum Reed-Solomon code is a CSS code with $Q_X=Q_Z=\textrm{RS}(q,k)$, where
\begin{equation*}
  \textrm{RS}(q,k) = \{(f(x))_{x\in\bF_q}:f(X)\in\bF_q[X],\deg(f)<k\}.
\end{equation*}

% (1) we show that these product codes are also locally testable with constant soundness, (2) we show that the same result holds when Reed-Solomon codes are used in place of random codes, and (3) we apply a forthcoming result of \cite{kalachev_personal_2024} (see Theorem~\ref{thm:petrand}) to show that the homological product of any constant number $t$ of random CSS codes over a sufficiently large alphabet $\bF_q$ gives a $[[N,\Theta(N),\Theta(N)]]_q$ code with stabilizer weight $\Theta(N^{1/t})$. Formally, we show the following:

\begin{theorem}[Informal statement of Corollary~\ref{cor:ssperandom} Corollary~\ref{cor:sspeRS}, Corollary~\ref{cor:sspemanyrandom}]
  \label{thm:peprodinf}
  Let $\cC_1,\cC_2$ be single-sector chain complexes associated to respective CSS codes $Q^1,Q^2$ from either of the two constructions below
  \begin{enumerate}
  \item\label{it:randinf} $Q^1,Q^2$ are uniformly random $[[n,\Theta(n),\Theta(n)]]$ CSS codes, or
  \item $Q^1,Q^2$ are $[[n,k_i=\Theta(n),\Theta(n)]]$ quantum Reed-Solomon codes for $i=1,2$ with $|k^1-k^2|=\Theta(n)$.
  \end{enumerate}
  Then the homological product $\cC_1\otimes\cC_2$ has (with high probability in case~\ref{it:randinf} above) an associated $[[N=n^2,\Theta(N),\Theta(N)]]$ CSS code that is locally testable with locality $\Theta(\sqrt{N})$ and soundness $\Omega(1)$.

  Furthermore, for an arbitrary constant $t\in\bN$, if $Q^1,\dots,Q^t$ are uniformly random $[[n,\Theta(n),\Theta(n)]]_q$ CSS codes over a sufficiently large alphabet $\bF_q$ with associated single-sector complexes $\cC_1,\dots,\cC_t$, then the homological product $\cC_1\otimes\cdots\otimes\cC_t$ gives a $[[N=n^t,\Theta(N),\Theta(N)]]_q$ code with stabilizer weight $\Theta(N^{1/t})$.
\end{theorem}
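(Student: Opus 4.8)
\textbf{Proof proposal for Theorem~\ref{thm:peprodinf}.}
The plan is to deduce all three corollaries from a single abstract statement: if $\cC_1,\dots,\cC_t$ are single-sector complexes associated to CSS codes $Q^1,\dots,Q^t$ whose constituent classical codes all have linear minimum distance and are $\rho$-product-expanding (in the sense of \cite{polishchuk_nearly-linear_1994,kalachev_two-sided_2023,kalachev_personal_2024}, together with their duals), then $\cC_1\otimes\cdots\otimes\cC_t$ has an associated $[[\prod_i n_i,\ \prod_i k_i,\ \Omega(\prod_i d_i)]]_q$ CSS code of stabilizer weight $\sum_i w^{\cC_i}$, and moreover this code has constant soundness when $t=2$ (with all implied constants depending only on $\rho$). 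The three corollaries then follow by checking this hypothesis for each family: uniformly random CSS codes --- arranged so that $\dim Q_X=\dim Q_Z$, which is what the single-sector construction requires --- have linear distance and, with high probability, are product-expanding, for $t=2$ over an arbitrary $\bF_q$ by \cite{kalachev_two-sided_2023} and for constant $t$ over a sufficiently large $\bF_q$ by \cite{kalachev_personal_2024}; and quantum Reed--Solomon codes with $Q_X=Q_Z=\mathrm{RS}(q,k^i)$ and $q/2+\Theta(q)\le k^i\le(1-\Omega(1))q$ are valid CSS codes of distance $q-k^i+1=\Theta(q)$ and dimension $2k^i-q=\Theta(q)$ whose relevant tensor products are robustly testable via the bivariate low-degree test of \cite{polishchuk_nearly-linear_1994}, under the hypothesis $|k^1-k^2|=\Theta(n)$.

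Within the abstract statement, the ``soft'' parameters come first. The block length is $\dim\bigl(\bigotimes_i C_i\bigr)=\prod_i n_i$. For the dimension, the K\"unneth formula over the field $\bF_q$ gives $H_*\bigl(\bigotimes_i\cC_i\bigr)\cong\bigotimes_i H_*(\cC_i)$, and the single-sector complex $\cC_i$ associated to $Q^i$ has $H_*(\cC_i)=\ker\partial^{\cC_i}/\im\partial^{\cC_i}\cong Q_Z^{(i)}/(Q_X^{(i)})^\perp$ of dimension $k_i$, so the associated CSS code has dimension $\prod_i k_i$. For the stabilizer weight, $\partial^{\cC}=\sum_{i=1}^t I^{\otimes(i-1)}\otimes\partial^{\cC_i}\otimes I^{\otimes(t-i)}$, so each row and column of $\partial^{\cC}$ has Hamming weight at most $\sum_i w^{\cC_i}$; for random or Reed--Solomon inputs the maps $\partial^{\cC_i}=H_X^{(i)\top}H_Z^{(i)}$ have rows of weight $\Theta(n_i)$, giving stabilizer weight $\Theta(N^{1/t})$ with a matching lower bound.

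The heart of the argument is the distance bound (and, when $t=2$, the soundness). Writing $\widetilde H_X^{(i)},\widetilde H_Z^{(i)}$ for the maps acting as $H_X^{(i)},H_Z^{(i)}$ on the $i$-th tensor factor and as the identity on the others, I would first record three structural facts about $\cC:=\bigotimes_i\cC_i$: (a) $\partial^{\cC}=\sum_i\widetilde H_X^{(i)\top}\widetilde H_Z^{(i)}$; (b) since each $\widetilde H_Z^{(i)}$ is surjective, $\im\partial^{\cC}=\sum_i\bigl(\bF_q^{n_1}\otimes\cdots\otimes(Q_X^{(i)})^\perp\otimes\cdots\otimes\bF_q^{n_t}\bigr)$ --- i.e.\ $\im\partial^{\cC}$ is exactly the ``generalized sum'' of the subcodes $(Q_X^{(i)})^\perp$ that product-expansion (equivalently, robust testability) speaks to; and (c) the partial-syndrome identity $\widetilde H_Z^{(j)}\,\partial^{\cC}c=\sum_{i\ne j}\widetilde H_X^{(i)\top}\widetilde H_Z^{(i)}\bigl(\widetilde H_Z^{(j)}c\bigr)$, so that applying $\widetilde H_Z^{(j)}$ to a $\cC$-cycle produces a cycle of the $(t-1)$-fold product $\bigotimes_{i\ne j}\cC_i$. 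Using (b) and (c) I would argue by induction on $t$, in the spirit of (but more transparent than) \cite{bravyi_homological_2014}: given $c\in\ker\partial^{\cC}$ of weight below $c_\rho\cdot\prod_i d_i$ for a suitable constant $c_\rho>0$, product-expansion lets me subtract an element of $\im\partial^{\cC}$ to straighten $c$, one direction at a time, into a codeword of the classical tensor code $\bigotimes_i Q_Z^{(i)}$ of weight still below $\prod_i d_i$; since $\prod_i d_i$ is the minimum distance of that tensor code, this codeword must be zero, so $c\in\im\partial^{\cC}$, giving distance $\Omega(\prod_i d_i)=\Omega(N)$. The transpose-symmetric argument handles cocycles. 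For the $t=2$ soundness, the same reduction, read now as robust testability of the classical tensor code $Q_Z^{(1)}\otimes Q_Z^{(2)}$ (respectively $Q_X^{(1)}\otimes Q_X^{(2)}$) --- which is exactly a reformulation of $\rho$-product-expansion --- shows that any vector $e$ with small $Z$-syndrome satisfies $\dis(e,\ker\partial^{\cC})\lesssim|\partial^{\cC}e|$ up to normalization, i.e.\ the CSS soundness condition with soundness $\Omega(\rho)$.

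I expect the straightening step to be the main obstacle. Unlike a classical tensor code, $\ker\partial^{\cC}$ is strictly larger than $\bigcap_i\ker\widetilde H_Z^{(i)}=\bigotimes_i Q_Z^{(i)}$, so a minimum-weight logical operator is merely a cycle and need not be ``locally a codeword in every direction''; one must use product-expansion both to turn it into a tensor codeword modulo $\im\partial^{\cC}$ and to ensure this does not inflate its weight past $\prod_i d_i$. Obtaining the \emph{multiplicative} distance $\Omega(\prod_i d_i)$ rather than the trivial $\Omega(\max_i d_i)$ is exactly what forces product-expansion to be applied simultaneously in all $t$ directions via fact (c), and carefully propagating $c_\rho$ through the $t$-fold induction --- while checking that the product-expansion constants actually supplied by \cite{polishchuk_nearly-linear_1994,kalachev_two-sided_2023,kalachev_personal_2024} are large enough --- is the delicate part; for the Reed--Solomon corollary this is precisely where $|k^1-k^2|=\Theta(n)$ enters, keeping the pairs of degrees arising in the induction inside the range for which \cite{polishchuk_nearly-linear_1994} provides robust testability. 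A secondary, routine task is to check for each family that the single-sector construction applies ($\dim Q_X=\dim Q_Z$, automatic when $Q_X=Q_Z$) and that product-expansion holds for the relevant codes and their duals.
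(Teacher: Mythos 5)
Your high-level architecture matches the paper's: K\"unneth for the dimension, additivity of locality, and product-expansion of \cite{polishchuk_nearly-linear_1994,kalachev_two-sided_2023,kalachev_personal_2024} driving the distance and soundness bounds. However, there are concrete gaps. First, your abstract hypothesis --- that the constituent classical codes be product-expanding ``together with their duals'' --- is the wrong condition, and for the Reed--Solomon instantiation it is unsatisfiable: since $\dim Q_X^i=\dim Q_Z^i>n/2$, pairs such as $(Q_X^1,Q_X^2)$ have rate sum exceeding $1$, and \cite{kalachev_two-sided_2023} point out that RS pairs together with their dual pairs can never both product-expand. The paper's Theorem~\ref{thm:sspe} deliberately requires only the one-sided mixed collections $({Q_X^1}^\perp,Q_Z^2)$ and $({Q_Z^1}^\perp,Q_X^2)$ (plus, for soundness, $({Q_X^1}^\perp,{Q_X^2}^\perp)$, $({Q_Z^1}^\perp,{Q_Z^2}^\perp)$ and the relative distances of $Q_Z^i,Q_X^i$); this is exactly why $|k^1-k^2|=\Theta(n)$ makes Corollary~\ref{cor:sspeRS} possible, and your blanket ``with duals'' requirement contradicts the mechanism you invoke at the end. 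Second, your structural fact (b) is false: $\im\partial^{\cC}$ is strictly contained in $\sum_i({Q_X^i}^\perp)^{(i)}$ in general (already for $t=2$ with $2$-dimensional acyclic factors one gets dimensions $2$ versus $3$). Only the inclusion holds; concluding that a straightened cycle with vanishing tensor part lies in $\im\partial^{\cC}$ requires the K\"unneth corollaries together with nondegeneracy of the cohomology/homology pairing (Lemma~\ref{lem:hombasis}, Corollaries~\ref{cor:ssprodbases} and~\ref{cor:ssprodcycles}), which your sketch does not supply.

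Third, and most importantly, the step you yourself flag as the heart --- straightening a low-weight cycle into a low-weight element of $\bigotimes_iQ_Z^i$ by an induction on $t$ --- is only a plan, and the inductive mechanism you propose does not close. Reducing dimension by applying $\widetilde H_Z^{(j)}$ (your fact (c)) destroys the homology-class information: any $c\in\bigotimes_iZ_*(\cC_i)$ is killed by $\widetilde H_Z^{(j)}$, so a nontrivial cycle can map to zero and the induction cannot certify nontriviality, which is precisely what must be propagated to rule out the zero tensor codeword. The paper's Lemma~\ref{lem:sspedis} instead carries a dual-cohomology witness $z^1\otimes\cdots\otimes z^t$ with nonzero pairing, applies product-expansion of $(B_*(\cC_1),\dots,B_*(\cC_{t-1}),Z_*(\cC_t))$ (via Lemma~\ref{lem:homvanexp}) to lower-bound the number of nonzero direction-$t$ columns, and contracts with ${z^t}^\top$ --- which preserves the nonzero pairing --- to recurse; the weight accounting multiplies the factors $\rho_in_i$ across levels. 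Without this witness-preserving contraction and the accompanying column-count bookkeeping, your sketch does not yield the multiplicative $\Omega(\prod_id_i)$ bound, and the analogous careful decomposition (plus the relative-distance correction step) is likewise missing from your one-line soundness claim for $t=2$.
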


Whereas \cite{bravyi_homological_2014} proved their distance bound with a specialized analysis for the product of two random CSS codes, we prove the results in Theorem~\ref{thm:peprodinf} in a unified manner, by showing that the distance and testability of homological product codes can in general be bounded using the \textit{product-expansion} of the underlying classical codes, defined below. Therefore our proof of Theorem~\ref{thm:peprodinf} is relatively general and concise; the major difficulty lies in proving product-expansion, for which we apply results of \cite{polishchuk_nearly-linear_1994,kalachev_two-sided_2023,kalachev_personal_2024}.

To define product-expansion, we need the following notation: for classical codes $(C_i\subseteq\bF_q^{n_i})_{i\in[t]}$, let
\begin{equation*}
  C^{(i)}=\bF_q^{n_1}\otimes\cdots\otimes\bF_q^{n_{i-1}}\otimes C_i\otimes\bF_q^{n_{i+1}}\otimes\cdots\otimes\bF_q^{n_t}.
\end{equation*}
Also, for $c\in C^{(i)}$, let $|c|_i\in[0,\prod_{j\neq i}n_j]$ denote the number of nonzero direction-$i$ columns in $c$.

\begin{definition}[\cite{kalachev_two-sided_2023}]
  \label{def:peinf}
  The \textbf{product-expansion $\rho$} of a collection $(C_i\subseteq\bF_q^{n_i})_{i\in[t]}$ of classical codes is the largest real number $\rho\geq 0$ such that for every $c\in C^{(1)}+\cdots+C^{(t)}$, there exists a decomposition
  \begin{equation*}
    c=c_1+\cdots+c_t
  \end{equation*}
  with each $c_i\in C^{(i)}$ such that
  \begin{equation*}
    |c| \geq \rho\sum_{i\in[t]}n_i|c_i|_i.
  \end{equation*}
\end{definition}

When $t=1$, the product-expansion of a single code simply equals its relative distance. Hence product-expansion is a sort of high-dimensional generalization of code distance. The product-expansion of two random classical codes (see \cite{kalachev_two-sided_2023}) was used by \cite{panteleev_asymptotically_2022}, and subsequently in a similar manner by \cite{leverrier_quantum_2022-1,dinur_good_2023}, to prove linear distance for quantum codes obtained from balanced products of classical LDPC codes \cite{breuckmann_balanced_2021}. However, whereas these results apply product-expansion to constant-sized ``local'' codes within the larger construction, we apply product-expansion directly on growing codes.

Our main result bounding homological product code distance via product-expansion is stated below.

\begin{theorem}[Informal statement of Theorem~\ref{thm:sspe} with Proposition~\ref{prop:sskunneth}]
  \label{thm:sspeinf}
  Fix a constant $t\in\bN$, and for $i\in[t]$ let $\cC_i$ be a single-sector chain complexe with associated $[[n_i,k_i,d_i]]$ CSS code $Q^i$. For every $i\in[t]$ assume that the collections $({Q^1_X}^\perp,\dots,{Q^{i-1}_X}^\perp,Q^i_Z)$  and $({Q^1_Z}^\perp,\dots,{Q^{i-1}_Z}^\perp,Q^i_X)$ have product-expansion at least some $\rho>0$. Then the CSS code $Q$ associated to the homological product $\cA=\cC_1\otimes\cdots\otimes\cC_t$ has parameters
  \begin{equation*}
    \left[\left[N=\prod_{i\in[t]}n_i,\; K=\prod_{i\in[t]}k_i,\; D\geq\rho^t\cdot\prod_{i\in[t]}n_i\right]\right].
  \end{equation*}

  Furthermore, if $t=2$, then if $({Q^1_Z}^\perp,{Q^2_Z}^\perp)$ and $({Q^1_X}^\perp,{Q^2_X}^\perp)$ both have product-expansion at least $\rho'>0$, and if the classical codes $Q^1_Z,Q^2_Z,Q^1_X,Q^2_X$ all have relative distance at least $\Delta\in[0,1]$, then the product code $Q$ is locally testable with locality $\leq\max\{n_1,n_2\}$ and soundness $\geq\rho'\cdot\min\{\rho',\Delta\}$.
\end{theorem}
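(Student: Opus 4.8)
The plan is to work directly with the homological product chain complex $\cA = \cC_1 \otimes \cdots \otimes \cC_t$ and extract the parameters of its associated CSS code $Q$ via a Künneth-type decomposition. First I would unpack the boundary map: for single-sector complexes, $\partial^{\cA} = \sum_{i} I \otimes \cdots \otimes \partial^{\cC_i} \otimes \cdots \otimes I$, so $Q_Z = \ker \partial^{\cA}$ and $Q_X = \ker \delta^{\cA}$. The dimension count $K = \prod_i k_i$ should follow from a direct tensor/Künneth argument (this is where Proposition~\ref{prop:sskunneth} is invoked): the homology of a tensor product of complexes over a field factors as the tensor product of the homologies, and the dimension of the $i$-th homology of $\cC_i$ is exactly $k_i$. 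So the main work is the distance bound $D \geq \rho^t \prod_i n_i$.

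For the distance, I would argue by symmetry that it suffices to lower-bound $|c|$ for any $Z$-logical operator, i.e.\ any $c \in Q_Z \setminus Q_X^\perp = \ker\partial^{\cA} \setminus \im\,\delta^{\cA}$ (the $X$-side is symmetric, swapping $\partial \leftrightarrow \delta$ and the roles of $Q_X, Q_Z$). The key structural observation is that an element of $\ker\partial^{\cA}$ can be massaged, using the ambient freedom of adding elements of $\im\,\delta^{\cA}$, into a form that makes the product-expansion hypothesis bite. Concretely, I expect an inductive argument on $t$: peel off the last tensor factor $\cC_t$, viewing $\cA = (\cC_1 \otimes \cdots \otimes \cC_{t-1}) \otimes \cC_t$; a cycle $c$ for $\partial^{\cA}$ projects to something that, direction-$t$ column by direction-$t$ column, lives in $\ker\partial^{\cC_t} = Q^t_Z$ modulo the tensor with the lower complex, and the cross-terms force the "support pattern" in the remaining directions to lie in the relevant sum of tensored codes. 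Applying the product-expansion of $({Q^1_X}^\perp,\dots,{Q^{t-1}_X}^\perp, Q^t_Z)$ (and symmetrically the other collection) lets me charge the weight of $c$ against $\sum_i n_i |c_i|_i$ for a good decomposition, and combining with the inductive hypothesis on the $(t-1)$-fold product — whose nontriviality is preserved because $c \notin \im\,\delta^{\cA}$ — yields the multiplicative $\rho^t$ loss and the factor $\prod_i n_i$. The bookkeeping of which code goes into which slot of the product-expansion collection, and verifying the cycle/coboundary conditions survive the projection onto columns, is what the hypotheses ${Q^i_X}^\perp \subseteq Q^i_Z$ are there to handle.

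For local testability when $t=2$: here $Q$ has parity checks coming from $\delta^{\cA} = \delta^{\cC_1}\otimes I + I \otimes \delta^{\cC_2}$ (and $\partial^{\cA}$ for the other side), whose rows have weight at most $\max\{n_1, n_2\}$, giving the locality bound. For soundness I would take an arbitrary error $e$ and bound $\dis(e, Q_Z)$ in terms of the syndrome weight $|\partial^{\cA} e|$. The strategy is a two-step "correct one direction at a time": use the relative distance $\Delta$ of the classical codes to clean up columns that are already close to codewords, and use the product-expansion $\rho'$ of $({Q^1_Z}^\perp, {Q^2_Z}^\perp)$ to handle the global consistency of the syndrome, decomposing the syndrome pattern into a direction-1 part and a direction-2 part and correcting each. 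This is essentially the tensor-code robust-testability argument (à la \cite{ben-sasson_robust_2004}, but in the CSS/cochain setting as in \cite{kaufman_new_2021}), and I would expect the soundness to come out as the product $\rho' \cdot \min\{\rho', \Delta\}$: one factor of $\rho'$ from decomposing the syndrome, and $\min\{\rho', \Delta\}$ from whichever of the two cleanup mechanisms is the bottleneck.

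The main obstacle I anticipate is the inductive distance argument: correctly identifying, for a cycle $c \in \ker\partial^{\cA}$, the element of $C^{(1)} + \cdots + C^{(t)}$ (in the appropriate tensor of codes) to which product-expansion should be applied, and checking that reducing $c$ modulo $\im\,\delta^{\cA}$ does not destroy the inductive hypothesis — i.e.\ that a nontrivial logical operator stays nontrivial after the reduction and column-projection. The cross-terms in $\partial^{\cA} = \sum_i I\otimes\cdots\otimes\partial^{\cC_i}\otimes\cdots\otimes I$ couple all the directions, so disentangling them into the one-direction-at-a-time structure that Definition~\ref{def:peinf} wants is the delicate step; everything else (the Künneth dimension count, the weight arithmetic, the locality bound) should be comparatively routine.
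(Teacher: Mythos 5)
Your skeleton does match the paper's route: the Künneth formula (Proposition~\ref{prop:sskunneth}) gives $K$, the distance is proved by induction on $t$ peeling off the last factor and invoking product-expansion of $({Q^1_X}^\perp,\dots,{Q^{t-1}_X}^\perp,Q^t_Z)$, and soundness is a tensor-code-style filling argument using $\rho'$ and $\Delta$. But the step you yourself flag as the main obstacle --- that nontriviality survives the reduction and the direction-$t$ projection --- is exactly the content of the paper's key lemma (Lemma~\ref{lem:sspedis}), and ``because $c\notin\im\delta^{\cA}$'' is not a mechanism. The paper's device is a dual witness: by Künneth plus nondegeneracy of the cohomology--homology pairing (Lemma~\ref{lem:hombasis}, Corollaries~\ref{cor:ssprodbases} and~\ref{cor:ssprodcycles}), any $a\in Q_Z\setminus Q_X^\perp$ pairs nontrivially with some pure tensor of cocycles, $\bigl(\bigotimes_i z^i\bigr)\cdot a\neq 0$ with $z^i\in Z^*(\cC_i)$; the induction is then run not on ``being a nontrivial cycle of the product'' but on the pair of conditions $a\in\bigotimes_i Z_*(\cC_i)+\sum_i B_*(\cC_i)^{(i)}$ and nonzero pairing with $\bigotimes_i z^i$. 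After applying product-expansion in direction $t$, one contracts with ${z^t}^\top$: the $B_*$-type pieces in direction $t$ are annihilated by $z^t$, so the pairing with $\bigotimes_{i<t}z^i$ is unchanged and nonzero, and the number of nonzero direction-$t$ columns (the quantity product-expansion charges against $|a|$) dominates the support of the contracted tensor, which lets the inductive bound multiply in to give $\prod_i(\rho_i n_i)$. Projecting onto an arbitrary direction-$t$ column or functional, which is what your sketch implicitly allows, can and generally will destroy nontriviality; choosing the contraction to be a cocycle $z^t$ occurring in a Künneth cohomology witness is the missing idea, and without it the induction does not close.

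Two smaller points. First, the $Z$-stabilizer space is $Q_X^\perp=\im\partial^{\cA}$, not $\im\delta^{\cA}$, so ``adding elements of $\im\delta^{\cA}$'' is the wrong freedom; in fact the paper never modifies $a$ by stabilizers at all --- it uses $\im\partial^{\cA}\subseteq\sum_iB_*(\cC_i)^{(i)}$ so that the Künneth decomposition of $a$ already has the required form, and all rewriting happens inside the product-expansion decomposition (via Lemma~\ref{lem:homvanexp}). Second, your soundness plan is directionally consistent with the paper's proof of~(\ref{eq:sspeexp}): there one takes an arbitrary filling of a boundary $b$, applies $\rho'$-product-expansion of the pair of boundary spaces to repair it so that few rows/columns are bad, and then, when $|b|<\rho'\min\{\Delta_1,\Delta_2\}n_1n_2$, replaces the filling by an element of $Z_*(\cC_1)\otimes Z_*(\cC_2)$ agreeing outside an $S_1\times S_2$ box, giving weight at most $|S_1|\cdot|S_2|\leq|b|/\rho'^2$. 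As written, though, your soundness paragraph is a plan rather than an argument; the substantive gap is the distance induction above.
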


Theorem~\ref{thm:peprodinf} follows immediately from Theorem~\ref{thm:sspeinf} with the product-expansion results of \cite{polishchuk_nearly-linear_1994,kalachev_two-sided_2023,kalachev_personal_2024}. Specifically, \cite{polishchuk_nearly-linear_1994} showed that a pair of Reed-Solomon codes for which the sum of the dimensions is less than the block length $n$ have good (i.e.~constant) product-expansion. \cite{kalachev_two-sided_2023} showed that a pair of random classical codes over arbitrary finite fields have good product-expansion. In a forthcoming work \cite{kalachev_personal_2024}, Kalachev \& Panteleev show that for every fixed $t\in\bN$, a set of $t$ random classical codes over sufficiently large finite fields have good product-expansion. Each of these three results yields a claim in Theorem~\ref{thm:peprodinf}.

It is perhaps interesting that we are able to prove linear code distance using the bound of \cite{polishchuk_nearly-linear_1994} on the product-expansion of Reed-Solomon codes, given the requirement that the code dimensions sum to less than the length $n$; in fact, the sum must be $\leq(1-\Omega(1))n$. Indeed, \cite{kalachev_two-sided_2023} observe that this requirement is inherent to Reed-Solomon codes, and that as a result, such codes cannot be used as the local codes in asymptotically good balanced product codes with current analysis techniques, which require product expansion for a pair of codes $(C_1,C_2)$ as well as their duals $(C_1^\perp,C_2^\perp)$. Note that if $\dim(C_1)+\dim(C_2)<n$, then $\dim(C_1^\perp)+\dim(C_2^\perp)>n$. We are able to avoid such a barrier in Theorem~\ref{thm:sspeinf}, as for instance in the $t=2$ case, we only need product-expansion for $({Q_X^1}^\perp,Q_Z^2)$ and for $({Q_Z^1}^\perp,Q_X^2)$. Thus as long as we choose $Q_X$ to have larger dimension (by at least $\Omega(n)$) than $Q_Z$, then we will have $\dim({Q_X^1})^\perp+\dim(Q_Z^2)$ and $\dim({Q_Z^1})^\perp+\dim(Q_X^2)$ both at most $(1-\Omega(1))n$, and we can apply the product-expansion bound of \cite{polishchuk_nearly-linear_1994}. It is an interesting question whether such an approach can somehow extend to the analysis of balanced product codes with local Reed-Solomon codes.

We now provide some details on the proof of Theorem~\ref{thm:sspeinf}. The formula for the dimension $K$ in Theorem~\ref{thm:sspeinf} follows directly from the well-known K\"{u}nneth formula; a variant for single-sector complexes was shown by \cite{bravyi_homological_2014} (see Proposition~\ref{prop:sskunneth}). Thus our key contribution in Theorem~\ref{thm:sspeinf} are the bounds on distance and soundness. Below, we prove the $t=2$ case of the distance bound. The proof of the bound for general $t$, as well as the soundness bound, use similar ideas, and can be found in Section~\ref{sec:sspeproof}.

\begin{proof}[Proof of distance bound in $t=2$ case of Theorem~\ref{thm:sspeinf}]
  Consider an arbitrary $a\in Q_Z\setminus Q_X^\perp$. We will show that $|a|\geq\rho^2n_1n_2$; an analogous proof applies to $a\in Q_X\setminus Q_Z^\perp$, so the desired bound $D\geq\rho^2n_1n_2$ follows.

  By the K\"{u}nneth formula (Proposition~\ref{prop:sskunneth}), there exists a decomposition $a=a_0+a_1+a_2$ for some $a_0\in Q^1_Z\otimes Q^2_Z$, $a_1\in({Q^1_X}^\perp)^{(1)}$, $a_2\in({Q^2_X}^\perp)^{(2)}\subseteq(Q^2_Z)^{(2)}$. Furthermore, there exist some $z^1\in Q^1_X$, $z^2\in Q^2_X$ such that $(z^1\otimes z^2)\cdot a_0=({z^1}^\top\otimes{z^2}^\top)a_0\neq 0$. Indeed, this latter statement follows from the K\"{u}nneth formula along with the nondegeneracy of the natural cohomology/homology bilinear form; see Lemma~\ref{lem:hombasis}, Corollary~\ref{cor:ssprodbases}, and Corollary~\ref{cor:ssprodcycles} for details. Put in the language of quantum codes, the statement says that $a_0$ is a nontrivial logical $Z$ operator, so it anticommutes with some element of the basis of logical $X$ operators given by the K\"{u}nneth formula.

  Because $({Q^1_X}^\perp,Q^2_Z)$ has product-expansion $\geq\rho$, there exist $a_1'\in({Q^1_X}^\perp)^{(1)}$, $a_2'\in (Q^2_Z)^{(2)}$ such that $a_1'+a_2'=a_1+(a_0+a_2)=a$ and
  \begin{equation}
    \label{eq:aboundinf}
    |a| \geq \rho(n_1|a_1'|_1+n_2|a_2'|_2).
  \end{equation}

  By definition, $a_2'-(a_0+a_2)=a_1-a_1'\in ({Q^1_X}^\perp)^{(1)}\cap(Q^2_Z)^{(2)}={Q^1_X}^\perp\otimes Q^2_Z$. Thus
  \begin{equation*}
    (H^1_Z\otimes{z^2}^\top)a_2' = (H^1_Z\otimes{z^2}^\top)(a_0+a_2)+(H^1_Z\otimes{z^2}^\top)(a_2'-(a_0+a_2)) = 0,
  \end{equation*}
  where $(H^1_Z\otimes I)a_0=0$ because $a_0\in Q^1_Z\otimes Q^2_Z$, $(I\otimes{z^2}^\top)a_2=0$ because $a_2\in({Q^2_X}^\perp)^{(2)}$ with $z^2\in Q^2_X$, and $(H^1_Z\otimes I)(a_2'-(a_0+a_2))=0$ because $a_2'-(a_0+a_2)\in{Q^1_X}^\perp\otimes Q^2_Z$. It follows that $(I\otimes{z^2}^\top)a_2'\in Q^1_Z$. But we also have that
  \begin{equation*}
    ({z^1}^\top\otimes{z^2}^\top)a_2' = ({z^1}^\top\otimes{z^2}^\top)((a_0+a_2)+(a_2'-(a_0+a_2))) = ({z^1}^\top\otimes{z^2}^\top)a_0 \neq 0,
  \end{equation*}
  where we again use the fact that $(I\otimes{z^2}^\top)a_2=0$, and also that $({z^1}^\top\otimes{z^2}^\top)(a_2'-(a_0+a_2))=0$ because $a_2'-(a_0+a_2)\in{Q^1_X}^\perp\otimes Q^2_Z$ with $z^1\in Q^1_X$. Thus ${z^1}^\top(I\otimes{z^2}^\top)a_2'\neq 0$, so we must have $(I\otimes{z^2}^\top)a_2'\neq 0$
  % $(I\otimes{z^2}^\top)a_2'\in Q^1_Z\setminus{Q^1_X}^\perp$. Indeed, this final conclusion again follows from the nondegeneracy of the cohomology/homology bilinear form, or in the language of quantum codes, from the fact that if two logical operators anticommute, neither can be a stabilizer.

  We have now shown that $(I\otimes{z^2}^\top)a_2'\in Q^1_Z\setminus\{0\}$. But by assumption in the theorem statement, $Q^1_Z$ has product-expansion $\geq\rho$, and the product-expansion of a single code equals its relative distance. Thus $|(I\otimes{z^2}^\top)a_2'|\geq\rho n_1$. Therefore must be at least $\rho n_1$ nonzero rows in the $n_1\times n_2$ matrix $a_2$; that is, $|a_2'|_2\geq\rho n_1$. It then follows by~(\ref{eq:aboundinf}) that $|a|\geq\rho^2n_1n_2$, as desired.
\end{proof}

\subsection{Homological Products of Subsystem QLDPC Codes}
\label{sec:subprodinf}
In this section, we describe our results on homological products of chain complexes associated to quantum LDPC and locally testable codes. For these results, we work with homological products over the more standard multi-sector chain complexes, which we simply call chain complexes:

\begin{definition}
  A \textbf{(multi-sector) chain complex $\cC_*$ over $\bF_q$} consists of a sequence of $\bF_q$-vector spaces $(C_i)_{i\in\bZ}$ and linear \textbf{boundary maps} $(\partial^{\cC}_i:C_i\rightarrow C_{i-1})_{i\in\bZ}$ satisfying $\partial^{\cC}_{i-1}\partial^{\cC}_i=0$ for all $i\in\bZ$. The associated \textbf{cochain complex $\cC^*$} has vector spaces $C^i=C_i$ and coboundary maps $\delta^{\cC}_i=(\partial^{\cC}_{i+1})^\top$. The \textbf{locality} of $\cC_*$ is the maximum Hamming weight of any row or column of any $\partial^{\cC}_i$. The \textbf{$i$-homology (resp.~$i$-cohomology) group} of $\cC_*$ is $H_i(\cC)=\ker(\partial^{\cC}_i)/\im(\partial^{\cC}_{i+1})$ (resp.~$H^i(\cC)=\ker(\delta^{\cC}_i)/\im(\delta^{\cC}_{i-1})$).

  For chain complexes $\cA,\cB$, the \textbf{homological product $\cC+\cA\otimes\cB$} is the chain complex given by $C_i=\bigoplus_{j\in\bZ}A_j\otimes B_{i-j}$ and $\partial^{\cC}_i=\bigoplus_{j\in\bZ}(\partial^{\cA}_j\otimes I+(-1)^jI\otimes\partial^{\cB}_{i-j})$.
\end{definition}

We typically work with chain complexes $\cC$ that have $C_i=0$ for every $i$ outside of some sequence of $t$ consecutive integers; we then call $\cC$ a $t$-term chain complex. For every level $i$ with $C_i\neq 0$, there is a naturally associated quantum CSS code $Q$ given by $Q_X=\ker\delta^{\cC}_i$ and $Q_Z=\ker\partial^{\cC}_i$. Conversely, given a length-$n$ CSS code $Q=(Q_X,Q_Z)$ with parity-check matrices $H_X\in\bF_q^{m_X\times n},\;H_Z\in\bF_q^{m_Z\times n}$, we can construct a 3-term chain complex $\cC=(\bF_q^{m_X}\xrightarrow{H_X^\top}\bF_q^n\xrightarrow{H_Z}\bF_q^{m_Z})$ with associated quantum code $Q$ at the middle level. By definition, the maximum stabilizer weight of $Q$ is always at most the locality of $\cC$. The dimension (i.e.~number of logical qudits) of $Q$ equals $\dim(H^i(\cC))=\dim(H_i(\cC))$, and elements of $H^i(\cC)$ and $H_i(\cC)$ can be viewed as equivalence classes of $X$ and $Z$ logical operators, respectively, up to stabilizers. The distance of $Q$ equals the minimum Hamming weight of any representative of a nontrivial element of $H^i(\cC)$ or $H_i(\cC)$.

For $\cC=\cA\otimes\cB$, the well-known K\"{u}nneth formula (see Proposition~\ref{prop:kunneth}) provides an isomorphism
\begin{equation*}
  H_i(\cC) \cong \bigoplus_{j\in\bZ}H_j(\cA)\otimes H_{i-j}(\cB),
\end{equation*}
which for $a\in\ker(\partial^{\cA}_j)$ and $b\in\ker(\partial^{\cB}_{i-j})$ is given by
\begin{equation*}
  a\otimes b+\im(\partial^{\cC}_i) \mapsfrom (a+\im(\partial^{\cA}_j))\otimes(b+\im(\partial^{\cB}_{i-j})).
\end{equation*}
The analogous result naturally also applies to cohomology.

\subsubsection{The $\sqrt{N}$ Distance Barrier}
\label{sec:sqrtbarrier}
Our goal in this paper is to characterize conditions under which homological products preserve good distance (i.e.~close to linear in the code length) of the associated quantum codes. However, there is a somewhat fundamental difficulty that arises when we work with finite-length complexes. Recall that for chain complexes $\cA,\cB$ and integers $i,j$, the K\"{u}nneth formula implies that if $a\in\ker(\partial^{\cA}_j)\setminus\im(\partial^{\cA}_{j+1})$ and $b\in\ker(\partial^{\cB}_{i-j})\setminus\im(\partial^{\cB}_{i-j+1})$, then $a\otimes b\in\ker(\partial^{\cC}_i)\setminus\im(\partial^{\cC}_{i+1})$. Therefore for instance consider the greatest $j\in\bZ$ with $A_j\neq 0$, so that $\partial^{\cA}_j=0$, and assume that $B_{i-j}\neq 0$ (this assumption is without loss of generality if we allow ourselves to swap $\cA$ and $\cB$, or instead consider their cochain complexes if necessary). Then unless $\partial^{\cA}_{j+1}$ is surjective (which can be difficult to enforce in many settings), we must have some $a\in\ker(\partial^{\cA}_j)\setminus\im(\partial^{\cA}_{j+1})$ of Hamming weight $|a|=1$. Therefore $|a\otimes b|=|b|$, which (assuming $\dim(A_j)\approx\dim(B_{i-j})$) will be at most roughly the square root of the length of the the quantum code at level $i$ of $\cC$.

This phenomenon was a major reason behind the ``$\sqrt{N}$ barrier,'' as prior to the work of \cite{hastings_fiber_2021} it was a longstanding open question to construct length-$N$ qLDPC codes with distance greater than $\tilde{\Theta}(\sqrt{N})$. This barrier is perhaps easiest to understand with the homological product of two 2-term complexes $\cA,\cB$ (i.e.~the ``hypergraph product'' of \cite{tillich_quantum_2014}), which can be visualized as a $2\times 2$ grid of vector spaces with maps between them:
\begin{equation}
  \label{eq:hgpcd}
  \begin{tikzcd}
    A_1\otimes B_1 \arrow[r,"-I\otimes\partial_1^{\cB}"] \arrow[d,"\partial_1^{\cA}\otimes I"] & A_1\otimes B_0 \arrow[d,"\partial_1^{\cA}\otimes I"] \\
    A_0\otimes B_1 \arrow[r,"I\otimes\partial_1^{\cB}"] & A_0\otimes B_0
  \end{tikzcd}
\end{equation}
The diagram above represents the 3-term chain complex $\cC=\cA\otimes\cB$, where level $2$ is the top left corner $A_1\otimes B_1$, level $1$ contains the diagonal elements $A_0\otimes B_1\oplus A_1\otimes B_0$, and level $0$ is the bottom right corner $A_0\otimes B_0$. The boundary maps of $\cC$ are simply the maps shown in the diagram. We let $Q$ be the quantum code at level $1$ of $\cC$.

Assume for simplicity that all $A_0,A_1,B_0,B_1$ all have dimension $\Theta(n)$. If $Q$ is a code of positive dimension, the K\"{u}nneth formula implies that either $\partial^{\cA}_1$ is not surjective and $\partial^{\cB}_1$ is not injective, or vice versa; assume the former, as the argument for the latter is analogous. Then as described above, there is some $a\in A_0\setminus\im(\partial^{\cA})$ of weight $|a|=1$, and there is some $b\in\ker(\partial^{\cB}_1)$, so $a\otimes b\in\ker(\partial^{\cC}_1)\setminus\im(\partial^{\cC}_2)=Q_Z\setminus Q_X^\perp$ has Hamming weight $|a\otimes b|=|b|\leq O(n)$. But $Q$ is a code of length $N=\Theta(n^2)$, and thus $Q$ has distance $D\leq O(\sqrt{N})$.

Thus homological products of 2-term chain complexes cannot yield codes of good distance. We could consider using complexes $\cA,\cB$ with $t^{\cA},t^{\cB}\geq 2$ terms, and letting $Q$ be the quantum code at some level $i$ of the product complex $\cC=\cA\otimes\cB$. However, a similar issue will arise. Specifically, the $2\times 2$ grid in~(\ref{eq:hgpcd}) would be replaced by a $t^{\cA}\times t^{\cB}$ grid, but if we look at vector spaces around the edge of this grid, we may again find codewords in $\ker(\partial^{\cC}_i)\setminus\im(\partial^{\cC}_{i+1})=Q_Z\setminus Q_X^\perp$ of the form $a\otimes b$ with $|a|=1$, so that $|a\otimes b|=|b|\leq O(n)=O(\sqrt{N})$.

\cite{bravyi_homological_2014} circumvented this barrier using single-sector complexes, which can also be viewed as infinitely long complexes $\cdots\xrightarrow{\partial}C\xrightarrow{\partial}C\xrightarrow{\partial}\cdots$ and hence do not have such low-distance codewords in vector spaces on the edges of the grid described above. However, single-sector complexes are somewhat restrictive, as they only have a single boundary map, and we were not able to obtain good bounds on the distance of single-sector homological products of qLDPC codes with constant locality. Instead, we took an alternative approach to circumventing this barrier, using subsystem codes as described below.

\subsubsection{Circumventing the Barrier with Subsystem Codes}
\label{sec:aroundbarrier}
Our key observation for circumventing the $\sqrt{N}$ distance barrier described above is that it only applies to certain subspaces of the code space. For instance, consider a homological product of 3-term chain complexes $\cA,\cB$ as a $3\times 3$ grid:
\begin{equation}
  \label{eq:homcd}
  \begin{tikzcd}
    A_2\otimes B_2  \arrow[r,"I\otimes\partial_2^{\cB}"] \arrow[d,"\partial_2^{\cA}\otimes I"] & A_2\otimes B_1 \arrow[r,"I\otimes\partial_1^{\cB}"] \arrow[d,"\partial_2^{\cA}\otimes I"] & A_2\otimes B_0 \arrow[d,"\partial_2^{\cA}\otimes I"] \\
    A_1\otimes B_2  \arrow[r,"-I\otimes\partial_2^{\cB}"] \arrow[d,"\partial_1^{\cA}\otimes I"] & A_1\otimes B_1 \arrow[r,"-I\otimes\partial_1^{\cB}"] \arrow[d,"\partial_1^{\cA}\otimes I"] & A_1\otimes B_0 \arrow[d,"\partial_1^{\cA}\otimes I"] \\
    A_0\otimes B_2 \arrow[r,"I\otimes\partial_2^{\cB}"] & A_0\otimes B_1 \arrow[r,"I\otimes\partial_1^{\cB}"] & A_0\otimes B_0
  \end{tikzcd}
\end{equation}
Let $Q$ be the quantum code at the middle level ($i=2$) of the 5-term chain complex $\cC=\cA\otimes\cB$, so that $Q_X,Q_Z$ are codes within the vector space $C_2=A_0\otimes B_2\oplus A_1\otimes B_1\oplus A_2\otimes B_0$ consisting of the three spaces in the bottom-left to top-right diagonal of the grid in~(\ref{eq:homcd}). By the K\"{u}nneth formula,
\begin{equation*}
  H_2(\cC) \cong H_0(\cA)\otimes H_2(\cB) \oplus H_1(\cA)\otimes H_1(\cB) \oplus H_2(\cA)\otimes H_0(\cB),
\end{equation*}
where for simplicity here we restrict attention to homology (i.e.~$Z$ logical operators); the analogous reasoning also applies to cohomology (i.e.~$X$ logical operators). The reasoning in Section~\ref{sec:sqrtbarrier} suggests that elements of $H_0(\cA)\otimes H_2(\cB)$ and $H_2(\cA)\otimes H_0(\cB)$ may have low-weight representatives. However, there is no apparent obstruction to having the homology classes associated to nonzero elements of $H_1(\cA)\otimes H_1(\cB)$ having only high-weight representatives (and analogously for cohomology). In the language of quantum codes, while the code $Q$ at level $2$ of $\cC$ may have low-weight logical operators, we may hope to guarantee that the subspace of logical operators associated to the middle element of the grid in (\ref{eq:homcd}) all have high weight. In Theorem~\ref{thm:proddisinf} below, we realize this hope for $\cA,\cB$ satisfying appropriate conditions.

This idea of a code where where we restrict attention to a subspace of logical operators has been studied in the quantum error correction literature, where it is called a \textit{subsystem code}, defined below.

\begin{definition}
  \label{def:subsysteminf}
  A \textbf{quantum subsystem code} is a pair $(Q,L)$ consisting of a CSS code $Q=(Q_X,Q_Z)$ together with a pair $L=(L_X,L_Z)$ of tuples $L_X=(c^1,\dots,c^k)\in Q_X^k$, $L_Z=(c_1,\dots,c_k)\in Q_Z^k$ of the same length $k$, such that $c^i\cdot c_j=\1_{i=j}$ for all $i,j\in[k]$.

  The dimension of the subsystem code $(Q,L)$ is $k$, the distance is
  \begin{equation*}
    d := \min_{c\in(Q_X\setminus\spn\{L_Z\}^\perp)\cup(Q_Z\setminus\spn\{L_X\}^\perp)}|c|,
  \end{equation*}
  and the locality $w$ equals the the locality (i.e.~stabilizer weight) of $Q$.
\end{definition}

\begin{remark}
  In Definition~\ref{def:subsysteminf}, we define the locality of a subsystem code to equal the maximum stabilizer weight for the chosen parity-check matrices of the underlying CSS code $Q$. Therefore for our purposes, a subsystem code with constant locality is simply a qLDPC code with constant-weight stabilizers, but where we only encode message qudits into certain logical operators. Such codes can typically be used for fault-tolerant computation analogously to non-subsystem qLDPC codes.

  In contrast, some works (e.g.~\cite{bacon_sparse_2017}) have studied subsystem codes where the locality is measured by the weight of generators for the groups $Q_X\cap\spn\{L_Z\}^\perp$ and $Q_Z\cap\spn\{L_X\}^\perp$, which are often called ``gauge operators.'' Codes with such a weaker locality property can be more difficult to use for applications to fault-tolerance, as while a high-weight stabilizer can be measured by decomposing it into low-weight gauge operators, many different gauge operators may need to be measured for each stabilizer.
\end{remark}

\subsubsection{Results on Products of QLDPC Subsystem Codes}
\label{sec:qldpcsubresults}
In this section, we state our main results on products of qLDPC codes. In Theorem~\ref{thm:proddisinf} below, we characterize conditions under which good subsystem qLDPC code distance is preserved by homological products. Following the intuition described in Section~\ref{sec:aroundbarrier} above, we restrict attention to logical operators in the middle terms of homological products of complexes of length $\geq 3$. In fact, we take the product of a 3-term complex $\cA$ with a 5-term complex $\cB$; for both complexes, we assume that the middle level is indexed $i=0$.

\begin{theorem}[Informal statement of Theorem~\ref{thm:proddis}]
  \label{thm:proddisinf}
  Let $\cA$ be a 3-term chain complex of locality $w^{\cA}$ with an associated $[[n^{\cA},k^{\cA},d^{\cA}]]$ subsystem code at the middle level ($i=0$). Let $\cB$ be a 5-term chain complex of locality $w^{\cB}$ with an associated $[[n^{\cB},k^{\cB},d^{\cB}]]$ CSS code a the middle level ($i=0$) that is locally testable with soundness $\rho^{\cB}$. Then the homological product $\cC=\cA\otimes\cB$ is a 7-term chain complex of locality $w^{\cC}\leq w^{\cA}+w^{\cB}$ with an associated $[[N,K,D]]$ subsystem code at the middle level for
  \begin{align*}
    N &\leq n^{\cA}n^{\cB}+2{n'}^{\cA}{n'}^{\cB} \\
    K &= k^{\cA}k^{\cB} \\
    D &\geq \frac{d^{\cA}d^{\cB}}{(w^{\cA}/\rho^{\cB})\cdot(n^{\cB}/{d'}^{\cB})},
  \end{align*}
  where we assume that the CSS codes at levels $i=\pm 1$ of $\cB$ (resp.~$\cA$) have length $\leq{n'}^{\cB}$ (resp.~$\leq{n'}^{\cA}$) and distance $\geq{d'}^{\cB}$.
\end{theorem}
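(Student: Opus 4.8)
The plan is to read off $N$ and the $7$-term, locality-$(w^{\cA}+w^{\cB})$ structure of $\cC=\cA\otimes\cB$ from its grading, to obtain $K$ and the subsystem data $L^{\cC}$ from the K\"unneth formula, and to bound $D$ by adapting the slicing argument of Section~\ref{sec:aroundbarrier} (following \cite{kaufman_new_2021}), using the local testability of the middle code of $\cB$ to control the ``bad'' slices. First, the level-$0$ space of $\cC$ is $C_0=(A_1\otimes B_{-1})\oplus(A_0\otimes B_0)\oplus(A_{-1}\otimes B_1)$, so $N=\dim C_0\le n^{\cA}n^{\cB}+2{n'}^{\cA}{n'}^{\cB}$, the complex $\cC$ is $7$-term, and it has locality $\le w^{\cA}+w^{\cB}$ by definition of the homological product. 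For $K$, the K\"unneth formula (Proposition~\ref{prop:kunneth}) gives $H_0(\cC)\cong\bigoplus_{j}H_j(\cA)\otimes H_{-j}(\cB)$; I set $L^{\cC}_X=\{\,c^i\otimes\tilde z^{(\ell)}\,\}$ and $L^{\cC}_Z=\{\,c_i\otimes\tilde w_{(\ell)}\,\}$, where $(\{c^i\},\{c_i\})=L^{\cA}$ is the subsystem data of $\cA$, and $\{\tilde z^{(\ell)}\}\subseteq\ker\delta^{\cB}_0$, $\{\tilde w_{(\ell)}\}\subseteq\ker\partial^{\cB}_0$ represent mutually dual bases of $H^0(\cB)$ and $H_0(\cB)$ (which exist by nondegeneracy of the homology/cohomology pairing). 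Since $c^i\in\ker\delta^{\cA}_0$ and $c_i\in\ker\partial^{\cA}_0$, the action of $\delta^{\cC}_0$ and $\partial^{\cC}_0$ on $A_0\otimes B_0$ shows $c^i\otimes\tilde z^{(\ell)}\in Q_X$ and $c_i\otimes\tilde w_{(\ell)}\in Q_Z$, while $(c^i\otimes\tilde z^{(\ell)})\cdot(c_{i'}\otimes\tilde w_{(\ell')})=(c^i\cdot c_{i'})(\tilde z^{(\ell)}\cdot\tilde w_{(\ell')})=\1_{i=i'}\1_{\ell=\ell'}$; hence $(Q,L^{\cC})$ is a valid subsystem code of dimension $k^{\cA}k^{\cB}$.

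For the distance bound I would argue on the $Z$-side, the $X$-side being symmetric under exchanging boundary and coboundary maps and the $X$/$Z$ codes of $\cA$ and $\cB$. Let $a\in Q_Z\setminus\spn\{L^{\cC}_X\}^\perp$ have minimum weight, and decompose $a=a_{[1]}+a_{[0]}+a_{[-1]}$ with $a_{[j]}\in A_j\otimes B_{-j}$. As $L^{\cC}_X\subseteq A_0\otimes B_0$, there are $i,\ell$ with $(c^i\otimes\tilde z^{(\ell)})\cdot a_{[0]}\neq0$; write $c:=c^i$ (so $c\in\ker\delta^{\cA}_0$, and $|c|\ge d^{\cA}$ since $c\cdot c_i\neq0$) and $z:=\tilde z^{(\ell)}$, and put $b:=(c^\top\otimes I)a_{[0]}\in B_0$. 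Extracting from $\partial^{\cC}_0 a=0$ the component valued in $A_0\otimes B_{-1}$, namely $(\partial^{\cA}_1\otimes I)a_{[1]}=\pm(I\otimes\partial^{\cB}_0)a_{[0]}$, and applying $c^\top\otimes I$ using $c^\top\partial^{\cA}_1=(\delta^{\cA}_0 c)^\top=0$, yields $\partial^{\cB}_0 b=0$, so $b\in Q^{\cB}_Z$; moreover $z\cdot b=(c\otimes z)\cdot a_{[0]}\neq0$ with $z\in Q^{\cB}_X$, so $b\notin(Q^{\cB}_X)^\perp$. Hence $b$ is a nontrivial $Z$-logical of the middle code of $\cB$, so $|b|\ge d^{\cB}$: there are at least $d^{\cB}$ ``active'' coordinates $k\in\supp(b)$, and for each of them the $\cA$-slice $u_k:=(a_{[0]})_{:,k}$ has $c\cdot u_k=b_k\neq0$, so whenever $u_k\in\ker\partial^{\cA}_0$ it is a nontrivial $\cA$-subsystem $Z$-logical and $|u_k|\ge d^{\cA}$.

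The remaining, and hardest, step is to handle the ``bad'' active slices, i.e.\ those $k$ with $\sigma_k:=\partial^{\cA}_0 u_k\neq0$. The $A_{-1}\otimes B_0$-component of $\partial^{\cC}_0 a=0$ reads $(\partial^{\cA}_0\otimes I)a_{[0]}=\pm(I\otimes\partial^{\cB}_1)a_{[-1]}$, so $\sigma_k\neq0$ only if some $\cB$-slice of $a_{[-1]}$ lies outside $\ker\partial^{\cB}_1$; likewise the $\cB$-syndrome of each $\cB$-slice of $a_{[0]}$ is a sum of at most $w^{\cA}$ of the $\cB$-slices of $a_{[1]}$. The plan is then: (i) use $\cC$-boundaries to clean up $a$, reducing the $\cB$-slices of $a_{[\pm1]}$ to minimum weight modulo $\im\partial^{\cB}_2$ and $\im\partial^{\cB}_0$ (which does not affect $b$ or the $\sigma_k$), and absorbing as much as possible of the pattern $(I\otimes\partial^{\cB}_1)a_{[-1]}\in\im(\partial^{\cA}_0\otimes\partial^{\cB}_1)$ into a boundary from $A_0\otimes B_1$, at a cost of at most $w^{\cB}$ per unit weight of the correcting cochain; (ii) use the distance ${d'}^{\cB}$ of the level-$\pm1$ codes of $\cB$, so that any cleaned-up $\cB$-slice of $a_{[\pm1]}$ representing a nonzero class in $H_{\pm1}(\cB)$ contributes $\ge{d'}^{\cB}$ to $|a_{[\pm1]}|$; (iii) apply the soundness-$\rho^{\cB}$ local testability of the middle code of $\cB$ to the $\cB$-slices of $a_{[0]}$ to bound both the residual cleanup cost and the number of remaining bad slices in terms of $|a|$, $w^{\cA}$, $n^{\cB}$, ${d'}^{\cB}$, and $\rho^{\cB}$. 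Combining such a bound with $|a|\ge|a_{[0]}|\ge\sum_{k\in\supp(b):\,\sigma_k=0}|u_k|\ge d^{\cA}\bigl(d^{\cB}-\#\{k:\sigma_k\neq0\}\bigr)$ and $|a|\ge|a_{[1]}|+|a_{[-1]}|$, and rearranging, should give $|a|\gtrsim d^{\cA}d^{\cB}/\bigl((w^{\cA}/\rho^{\cB})(n^{\cB}/{d'}^{\cB})\bigr)=D$ (the informal ``$\ge$'' absorbing an $O(1)$ factor). I expect the main obstacle to be exactly this bookkeeping in (i)--(iii): the weight spent on cleanup, the weight forced into $a_{[\pm1]}$ by the level-$\pm1$ distance, and the residual bad slices all have to be charged against $|a|$ tightly enough to recover the stated denominator, which is where the quantitative techniques of \cite{kaufman_new_2021} must be adapted to the subsystem setting.
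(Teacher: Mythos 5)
Your handling of $N$, $K$, the $7$-term structure, the locality bound, and the construction of $L^{\cC}$ is fine and matches the paper, and so is the first half of your distance argument: contracting $a_{[0]}$ with the $\cA$-logical $c$ does give $b=(c^\top\otimes I)a_{[0]}\in Z_0(\cB)\setminus\spn\{L^{\cB}_X\}^\perp$ ``for free'' from the cycle condition, so $|b|\geq d^{\cB}$. But the core of the theorem is exactly the step you leave as a plan, and the plan as sketched does not work with the stated hypotheses. Your contraction order forces you to repair the $\cA$-slices $u_k$ of $a_{[0]}$ (those with $\sigma_k=\partial^{\cA}_0u_k\neq 0$), and the obstruction is governed by $(\partial^{\cA}_0\otimes I)a_{[0]}=\mp(I\otimes\partial^{\cB}_1)a_{[-1]}$, i.e.\ by the \emph{level-$1$} boundary map of $\cB$. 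Any cleanup by a $\cC$-boundary from $A_0\otimes B_1$ that kills this pattern must be found by filling level-$0$ boundaries of $\cB$ by level-$1$ chains, so its cost is controlled by $\mu_1(\cB)$ (soundness of the level-$1$ code of $\cB$) and it hits $a_{[0]}$ with a factor $w^{\cB}$ --- quantities that appear neither in the hypotheses (only the \emph{middle} code of $\cB$ is assumed locally testable) nor in the claimed denominator $(w^{\cA}/\rho^{\cB})(n^{\cB}/{d'}^{\cB})$. Your step (iii) does not rescue this: the middle-code soundness applied to the $\cB$-slices of $a_{[0]}$ controls their syndromes under $\partial^{\cB}_0$, which are tied to $a_{[1]}$, not the bad-slice pattern coming from $a_{[-1]}$ and $\partial^{\cB}_1$. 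Your step (i) of reducing $a_{[\pm 1]}$ modulo $\im\partial^{\cB}_2,\im\partial^{\cB}_0$ also changes nothing relevant, since the bad set depends only on $(I\otimes\partial^{\cB}_1)a_{[-1]}$, which is invariant under adding $\im\partial^{\cB}_2$ row-wise. The remaining fallback, $\#\{k:\sigma_k\neq 0\}\leq w^{\cB}|a_{[-1]}|$, only yields $|a|\gtrsim\min\{d^{\cA}d^{\cB},\,d^{\cB}/w^{\cB}\}$, which is far weaker than the stated bound (and useless in the paper's iterative application, where $\cB$ is constant-sized).

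The paper's proof of Theorem~\ref{thm:proddis} sidesteps this by choosing the \emph{opposite} order: the only repair performed is to make the $\cB$-slices (the $A_0$-indexed rows) of $c_0$ lie in $Z_0(\cB)$. Since every row of $c_1$ lies in $Z_{-1}(\cB)$, one splits $c_1=\gamma_0+\sum_j\gamma_j$ with $\gamma_0$ the boundary part, bounds $|\gamma_0|\leq(n^{\cB}_{-1}/d_{-1}(\cB))|c_1|$ using the level-$(-1)$ distance ${d'}^{\cB}$, fills $\gamma_0$ by $f\in A_1\otimes B_0$ at cost $\mu_0(\cB)$ (the middle soundness $\rho^{\cB}$), and pays $w^{\cA}$ when adding $\partial^{(\cA)}f$ to $c_0$ --- which is precisely where the denominator comes from. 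Then contracting with the $\cB$-logical $b^*$ gives an element of $Z_0(\cA)$ \emph{automatically}, because $\partial^{(\cA)}c_0'=\partial^{(\cB)}c_{-1}$ has rows in $B_0(\cB)$ and $b^*$ annihilates $B_0(\cB)$; no second cleanup is needed. In short, your order places the ``free'' contraction and the ``costly'' repair on the wrong sides relative to which soundness is assumed, so the bookkeeping you defer cannot be closed as stated; you would either have to add a level-$1$ soundness hypothesis on $\cB$ (proving a different bound with $w^{\cB}$ and $\mu_1(\cB)$) or switch to the paper's contraction order.
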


Theorem~\ref{thm:proddisinf} provides a way to obtain good larger qLDPC codes from smaller ones. For instance, if ${n'}^{\cA},k^{\cA},d^{\cA}=\Theta(n^{\cA})$, ${n'}^{\cB},k^{\cB},d^{\cB},{d'}^{\cB}=\Theta(n^{\cB})$, and if $w^{\cA},w^{\cB},\rho^{\cB}$ are all bounded by constants, so that $\cA$ and $\cB$ have asymptotically good parameters, then Theorem~\ref{thm:proddisinf} provides a subsystem qLDPC code of length $N=\Theta(n^{\cA}n^{\cB})$ associated to the product $\cC=\cA\otimes\cB$ that also has asymptotically good parameters.

Under stronger conditions on $\cA$ and $\cB$, we can also bound the local testability of the product $\cC$; see Theorem~\ref{thm:prodfill} in Section~\ref{sec:genbounds} for details. A similar result also holds for complexes with arbitrarily many terms, though for simplicity in this paper we use the minimum length complexes that suffice for our purposes. As discussed below, a proof of a similar result for longer complexes can be found in \cite{kaufman_new_2021}.

The proof of Theorem~\ref{thm:proddisinf} goes by a diagram-chasing argument inspired by \cite{kaufman_new_2021}. Specifically, \cite{kaufman_new_2021} proved a related result, which they applied to show good $Z$-distance (i.e.~high weight for logical $Z$ operators) for quantum codes obtained as homological products of chain complexes from high-dimensional expanders \cite{lubotzky_ramanujan_2005,lubotzky_explicit_2005}.
However, \cite{kaufman_new_2021} did not use subsystem codes, and the $X$-distance of their product codes was only polylogarithmic in the block length.\footnote{Such poor $X$-distance was in part due to the fact that the cochain complexes they considered from high-dimensional expanders have much better distance and expansion properties than the associated chain complexes. However, as \cite{kaufman_new_2021} did not use subsystem codes, it seems difficult to obtain product codes with close-to-linear distance from their results, even when applied to products of different complexes.}
% However, because \cite{kaufman_new_2021} did not use subsystem codes as in Theorem~\ref{thm:proddisinf}, the $X$-distance of their product codes was only polylogarithmic in the block length; such poor $X$-distance is consistent with the reasoning in Section~\ref{sec:sqrtbarrier} above.
Thus after applying a distance-balancing procedure, \cite{kaufman_new_2021} obtained qLDPC codes of distance $\sqrt{N}\cdot\poly(\log N)$, which only barely surpasses the $\sqrt{N}$ barrier.

If both $\cA$ and $\cB$ have associated codes with large distance, then Theorem~\ref{thm:proddisinf} yields a large distance bound on the subsystem product code whenever $\cA$ has good (small) locality $w^{\cA}$ and $\cB$ has good (large) soundness $\rho^{\cB}$. At a high level, our proof of Theorem~\ref{thm:proddisinf} (inspired by \cite{kaufman_new_2021}) argues that by the K\"{u}nneth formula, codewords of the product code can be expressed as tensor products of codewords of $\cA$ and $\cB$, plus elements in the image of the boundary map $\partial^{\cC}$ (i.e.~stabilizers). Classical tensor product codes have good distance, so we just need to argue that the distance is not significantly reduced by adding in elements in $\im(\partial^{\cC})$. This final step goes by a diagram-chasing argument that bounds the loss in distance in terms of $w^{\cA}$ and $\rho^{\cB}$; the reader is referred to Section~\ref{sec:genbounds} for the more details.

In Theorem~\ref{thm:iterativeinf} below, we iteratively apply Theorem~\ref{thm:proddisinf} starting with a constant-sized chain complex from \cite{dinur_expansion_2024,kalachev_personal_2024} to obtain an infinite family of qLDPC codes with close-to-linear dimension and distance. Below, we view qLDPC codes $Q$ interchangeably with their associated 3-term chain complexes, as described previously.

\begin{theorem}[Informal statement of Theorem~\ref{thm:iterative}]
  \label{thm:iterativeinf}
  For every $\epsilon>0$, there exists a constant-sized 5-term chain complex $\cC$ and an infinite explicit sequence of qLDPC subsystem codes $(Q_i,L_i)_{i\in\bN}$ with parameters
  \begin{equation*}
    [[N_i,\;K_i\geq N_i^{1-\epsilon},\;D_i\geq N_i^{1-\epsilon}]]_2
  \end{equation*}
  and locality $w_i=O(1)$, such that each $Q_i$ is obtained by applying the stabilizer-weight-reduction transformation of \cite{hastings_quantum_2023} to the middle $3$ terms of the the homological product $Q_{i-1}\otimes\cC$.
\end{theorem}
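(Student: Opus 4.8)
The plan is to iterate Theorem~\ref{thm:proddisinf} (the informal product-distance theorem for subsystem codes), composing each step with the weight-reduction transformation of \cite{hastings_quantum_2023}, and to show that with a careful choice of the constant-sized seed complex $\cC$ and of the parameters $\epsilon_0 < \epsilon$ one can keep all the relevant ratios under control across all iterations. Concretely, I would first fix a small constant $\delta = \delta(\epsilon) > 0$ and invoke \cite{dinur_expansion_2024,kalachev_personal_2024} to obtain a constant-sized $5$-term chain complex $\cC$ whose middle-level CSS code is an $[[n^{\cC}, k^{\cC}, d^{\cC}]]_2$ qLTC with $k^{\cC}, d^{\cC} \geq (1-\delta) n^{\cC}$ (up to the precise relative-rate/relative-distance statement available), constant soundness $\rho^{\cC} = \Omega(1)$, constant locality $w^{\cC} = O(1)$, and with the CSS codes at levels $i = \pm 1$ having length ${n'}^{\cC} = \Theta(n^{\cC})$ and relative distance ${d'}^{\cC} = \Omega(1)$. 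I would then set $Q_0$ to be (the $3$-term complex associated to) some fixed small qLDPC subsystem code with good parameters, and define $Q_i$ by applying Theorem~\ref{thm:proddisinf} to $Q_{i-1} \otimes \cC$ (taking the middle three terms) and then applying the \cite{hastings_quantum_2023} weight reduction to restore constant locality.

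The heart of the argument is the inductive bookkeeping. Write $N_i, K_i, D_i, w_i$ for the block length, dimension, distance, and locality of $Q_i$. From Theorem~\ref{thm:proddisinf}, before weight reduction we get $N_i' = \Theta(N_{i-1} n^{\cC})$, $K_i' = K_{i-1} k^{\cC}$, and
\begin{equation*}
  D_i' \;\geq\; \frac{D_{i-1}\, d^{\cC}}{(w_{i-1}/\rho^{\cC})\cdot(n^{\cC}/{d'}^{\cC})}.
\end{equation*}
Since $n^{\cC}, d^{\cC}, {d'}^{\cC}, \rho^{\cC}$ are all constants, the only potentially dangerous factor is $w_{i-1}$; but the whole point of interleaving the \cite{hastings_quantum_2023} step is that $w_{i-1} = O(1)$ independent of $i$, so the multiplicative "loss factor" $(w_{i-1}/\rho^{\cC})\cdot(n^{\cC}/{d'}^{\cC})$ is a fixed constant $c = c(\epsilon)$ at every iteration. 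Thus $D_i' \geq D_{i-1} \cdot (d^{\cC}/c)$ while $N_i' = N_{i-1} \cdot \Theta(n^{\cC})$; choosing $n^{\cC}$ (equivalently $\delta$) large enough that $d^{\cC}/c \geq (n^{\cC})^{1-\epsilon_0}$ for a suitable $\epsilon_0 \in (0,\epsilon)$ makes $D_i' / N_i'^{\,1-\epsilon_0}$ nondecreasing, and similarly for $K_i'$ using $k^{\cC} \geq (n^{\cC})^{1-\epsilon_0}$. I would then track how the weight-reduction transformation of \cite{hastings_quantum_2023} affects $(N, K, D, w)$: it drives $w$ back down to an absolute constant while increasing $N$ by only a constant factor (depending on the pre-reduction weight, which is itself bounded by $w_{i-1} + w^{\cC} = O(1)$) and preserving $K$ and $D$ up to constant factors. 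Folding the weight-reduction blow-up into the constant $c$ and re-optimizing $\epsilon_0$ vs.\ $\epsilon$ gives $N_i, K_i, D_i$ with $K_i, D_i \geq N_i^{1-\epsilon}$ for all $i$ once the seed $Q_0$ is chosen so the inequality holds at $i=0$; and since $N_i \to \infty$ this yields the claimed infinite explicit family. Explicitness follows because $\cC$ is a fixed finite object, Theorem~\ref{thm:proddisinf} is an explicit transformation, and \cite{hastings_quantum_2023} is explicit.

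Two points deserve care, and one of them is the main obstacle. The secondary point is checking that Theorem~\ref{thm:proddisinf} can actually be applied at every iteration: it requires the left factor ($Q_{i-1}$, a $3$-term complex) to be a subsystem code and the right factor ($\cC$, a $5$-term complex) to be a genuine CSS code that is locally testable, and it requires the auxiliary hypotheses on the $i = \pm 1$ levels; all of these are properties of $\cC$ alone (constant-sized, chosen once) except that we must confirm the output $Q_i$ is again a $3$-term subsystem complex of the right shape so the induction closes — this is where the "middle three terms" truncation and the subsystem structure produced by the K\"unneth splitting $H_1(\cA)\otimes H_1(\cB)$ matter, and it should follow formally from the (formal) version of Theorem~\ref{thm:proddisinf}. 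The main obstacle is the interaction between the weight-reduction step and the distance/soundness parameters: \cite{hastings_quantum_2023} reduces stabilizer weight but can degrade distance and (crucially) does not obviously preserve the local-testability / soundness needed to feed $Q_i$ back in as the right-hand factor — however, in this construction $Q_{i}$ is only ever used as the \emph{left} factor (the subsystem side), and $\cC$ is the fixed qLTC on the right, so we never need soundness of the iterates; one must simply verify that weight reduction keeps $Q_i$ a valid subsystem code with distance down by at most a constant factor and locality an absolute constant. Quantifying that constant-factor distance loss precisely, and confirming it can be absorbed into $c(\epsilon)$ without forcing $n^{\cC}$ (hence the locality, which must be $O(1)$ \emph{independent of $\epsilon$}) to grow — that is the delicate calculation, and it is why the theorem asserts the final locality is independent of $\epsilon$ even though $n^{\cC}$ is not.
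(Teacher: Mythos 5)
Your proposal follows essentially the same route as the paper's proof of Theorem~\ref{thm:iterative}: fix a sufficiently large (but constant, $\epsilon$-dependent) 5-term qLTC seed from \cite{dinur_expansion_2024,kalachev_personal_2024}, iterate the subsystem product-distance bound with the seed always on the testable side, interleave the \cite{hastings_quantum_2023} weight reduction (extended to subsystem codes) so the left factor's locality stays an absolute constant, and absorb the per-iteration polylogarithmic/constant losses into the seed's $N^{\epsilon}$ slack by induction. Your stated seed parameters (constant soundness, relative distance $1-\delta$) are somewhat stronger than what Theorem~\ref{thm:dlv} literally gives, but since the seed is constant-sized these discrepancies only change the fixed loss constant and your bookkeeping goes through exactly as in the paper.
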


The weight-reduction transformation of \cite{hastings_quantum_2023} provides a means of reducing the locality of a CSS code from a given value $w$ to an absolute constant $O(1)$, at the cost of increasing the code length and decreasing the distance by a factor of $\poly(w)$; see Theorem~\ref{thm:locred} for a formal statement. We apply this transformation after every application of the homological product because the distance bound in Theorem~\ref{thm:proddisinf} decays with the locality of $\cA$. Therefore by maintaining constant locality, we are able to inductively maintain a good distance bound in each iteration.

The starting complex in Theorem~\ref{thm:iterativeinf} is a constant-sized instance of the construction of \cite{dinur_expansion_2024}, which applies a forthcoming result of \cite{kalachev_personal_2024} to obtain chain complexes whose associated CSS codes are quantum LTCs with nearly optimal asymptotic parameters. As Theorem~\ref{thm:iterativeinf} only needs a constant-sized such starting object $\cC$, it is an interesting question to instantiate it with alternative constructions.

% Papers with homological products of distance greater than $\sqrt{N}$: \cite{bravyi_homological_2014,kaufman_new_2021,delfosse_union-find_2021}

% Motivations:
% \begin{itemize}
% \item Shed light on fundamental question: how does (co)systolic distance of topological spaces behave under products?
% \item Refute some generally held beliefs in the community:
%   \begin{enumerate}
%   \item Without quotienting by group action, homological products cannot yield qLDPC codes beyond the $\sqrt{N}$ distance barrier
%   \item Product-expansion of Reed-Solomon codes is too weak to yield interesting quantum codes
%   \end{enumerate}
% \item Towards an iterative construction of qLDPC codes/qLTCs: seems we in fact (hopefully) do get iterative construction starting from (albeit a strong) constant-sized object, could even be useful in practice to construct good larger qLDPC codes from smaller ones; fundamentally different paradigm from most existing qLDPC constructions, which are ``one-shot'' without an iterative nature
% \end{itemize}

\section{Preliminaries}
\label{sec:prelim}
This section introduces necessary notation and preliminary notions.

\subsection{Notation}
For a positive integer $n$, we let $[n]=\{1,\dots,n\}$. For a prime power $q$, we let $\bF_q$ denote the finite field of order $q$. For a vector $v\in\bF_q^n$, we let $|v|=|\{i\in[n]:v_i\neq 0\}|$ denote the Hamming weight of $v$, meaning the number of nonzero coordinates in the standard (or otherwise specified) basis. Unless explicitly stated otherwise, we will assume all $\bF_q$-vector spaces that we consider are finite-dimensional, so that all Hamming weights are finite.

Sometimes it will be helpful to consider more general Hamming weights, in which we partition the coordinates into groups and count the number of distinct groups containing a nonzero coordinate; the ordinary Hamming weight is recovered by choosing the finest partition. For instance, for a matrix $c\in\bF_q^{n_1\times n_2}$, we let $|c|_1$ (resp.~$|c|_2$) denote the number of nonzero columns (resp.~rows) of $c$. However, we will always specify such notation before using it.

Elements of $\bF_q^{n_1}\otimes\cdots\otimes\bF_q^{n_t}=\bF_q^{n_1\times\cdots\times n_t}$ are sometimes referred to as $t$-dimensional tensors. For $i\in[t]$, a direction-$i$ column in such a tensor is a vector in $\bF_q^{n_i}$ obtained by restricting the tensor to a set of coordinates $(j_1,\dots,j_t)$ in which $j_k$ is fixed for all $k\neq i$, and $j_i$ takes all possible values in $[n_i]$.

\subsection{Classical Codes}
This section describes basic notions in classical coding theory.

\begin{definition}
  For a finite field $\bF_q$, a \textbf{classical linear code of length $n$ and dimension $k$ over $\bF_q$} is a $k$-dimensional linear subspace $C\subseteq\bF_q^n$. The \textbf{rate} of $C$ is $R:=k/n$. The \textbf{distance} $d$ of $C$ is the minimum Hamming weight of a nonzero element of $C$, that is $d=\min_{c\in C\setminus\{0\}}|c|$. We summarize the above parameters by saying that $C$ is an $[n,k,d]_q$ code. The \textbf{dual code} $C^\perp\subseteq\bF_q^n$ of $C$ is defined by $C^\perp=\{x\in\bF_q^n:x\cdot y=0\;\forall y\in C\}$, where $x\cdot y=\sum_{i\in[n]}x_iy_i$ denotes the standard bilinear form.
\end{definition}

All classical codes we consider in this paper will be linear, so we will often simply say ``classical code'' or even (when clear from context) ``code'' to refer to a classical linear code.

\begin{definition}
  \label{def:classtensor}
  For classical codes $C_1\subseteq\bF_q^{n_1}$ and $C_2\subseteq\bF_q^{n_2}$, the \textbf{tensor code} $C_1\otimes C_2\subseteq\bF_q^{n_1}\otimes\bF_q^{n_2}=\bF_q^{n_1\times n_2}$ consists of all $n_1\times n_2$ matrices for which every column lies in $C_1$ and every row lies in $C_2$. For $c_1\in C_1,c_2\in C_2$, we let $c_1\otimes c_2\in C_1\otimes C_2$ be the tensor codeword given by $(c_1\otimes c_2)_{i,j}=(c_1)_i(c_2)_j$. We furthermore define the \textbf{dual tensor code} $C_1\boxplus C_2\subseteq\bF_q^{n_1}\otimes\bF_q^{n_2}$ by
  \begin{equation*}
    C_1\boxplus C_2 = (C_1^\perp\otimes C_2^\perp)^\perp = C_1\otimes\bF_q^{n_2}+\bF_q^{n_1}\otimes C_2.
  \end{equation*}
\end{definition}

Note that ``dual tensor'' is a slight abuse of terminology that we use for conciseness, as the phrase ``dual tensor of duals'' would be more accurate.

We will use the following well-studied class of codes:

\begin{definition}
  \label{def:RS}
  Given a prime power $q$ and a positive integer $k$, the \textbf{Reed-Solomon code} $\textrm{RS}(q,k)$ is the classical code of length $q$ and dimension $k$ over $\bF_q$ consisting of the evaluations of all univariate polynomials of degree $<k$ on all points in $\bF_q$. That is,
  \begin{equation*}
    \textrm{RS}(q,k) = \left\{(f(x))_{x\in\bF_q}\in\bF_q^{\bF_q}\cong\bF_q^q:f(X)\in\bF_q[X],\;\deg(f)<k\right\}.
  \end{equation*}
\end{definition}

The following lemma is well-known:
\begin{lemma}
  \label{lem:RSdual}
For all integers $k$, $0 \le k \le q$,  $\textrm{RS}(q,k)^\perp = \textrm{RS}(q,q-k)$.
\end{lemma}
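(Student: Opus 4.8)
The statement to prove is the standard duality $\textrm{RS}(q,k)^\perp = \textrm{RS}(q,q-k)$. The plan is to first handle the degenerate endpoint cases $k=0$ and $k=q$ directly (where one side is the zero code and the other is all of $\bF_q^q$, using that $\textrm{RS}(q,q)=\bF_q^q$ since every function $\bF_q\to\bF_q$ is a polynomial of degree $<q$), and then assume $1\le k\le q-1$. By a dimension count, $\dim\textrm{RS}(q,k) + \dim\textrm{RS}(q,q-k) = k + (q-k) = q = n$, so it suffices to prove the single containment $\textrm{RS}(q,q-k)\subseteq\textrm{RS}(q,k)^\perp$, i.e.\ that the evaluation vectors of any $f$ with $\deg f < k$ and any $g$ with $\deg g < q-k$ are orthogonal.

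The key computational step is the classical identity
\begin{equation*}
  \sum_{x\in\bF_q} x^{j} = \begin{cases} -1 & \text{if } (q-1)\mid j \text{ and } j>0,\\ 0 & \text{otherwise (including } j=0\text{, where the sum is } q\cdot 1 = 0).\end{cases}
\end{equation*}
This is proved by the usual trick: pick a generator $\gamma$ of $\bF_q^\times$; if $(q-1)\nmid j$ then $\gamma^j\ne 1$, and multiplying the sum $S=\sum_{x\ne 0} x^j$ by $\gamma^j$ permutes the summands, forcing $(\gamma^j-1)S = 0$ hence $S=0$; if $(q-1)\mid j$ and $j>0$ then every nonzero term is $1$, so $S = q-1 = -1$ in $\bF_q$; then add the $x=0$ term, which contributes $0$ when $j>0$ and $1$ when $j=0$ (total $q=0$). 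Given this, for $f(X)=\sum_{a<k}f_aX^a$ and $g(X)=\sum_{b<q-k}g_bX^b$ we expand
\begin{equation*}
  \sum_{x\in\bF_q} f(x)g(x) = \sum_{a<k}\sum_{b<q-k} f_a g_b \sum_{x\in\bF_q} x^{a+b},
\end{equation*}
and observe that the exponent satisfies $0 \le a+b \le (k-1)+(q-k-1) = q-2 < q-1$, so the inner sum is $0$ in every term (the only nonzero case would require $a+b$ to be a positive multiple of $q-1$, impossible in this range). Hence the dot product vanishes, giving $\textrm{RS}(q,q-k)\subseteq\textrm{RS}(q,k)^\perp$, and equality follows from the dimension count.

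I do not expect any serious obstacle here; this is a textbook fact. The only mild subtlety is bookkeeping at the boundary values of $k$ (ensuring $q-k$ stays in range and that the ``$\textrm{RS}(q,q)=\bF_q^q$'' identity is invoked cleanly), and being careful that the exponent bound $a+b\le q-2$ is strict below $q-1$ so the character-sum identity always lands in the vanishing case — this is exactly why the two dimensions are complementary and the argument closes.
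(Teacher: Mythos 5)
Your proof is correct, and it is the standard argument (dimension count plus the vanishing of the power sums $\sum_{x\in\bF_q}x^j$ for $0\le j\le q-2$), with the endpoint cases $k=0,q$ handled properly via $\textrm{RS}(q,q)=\bF_q^q$. The paper itself offers no proof — it simply records Lemma~\ref{lem:RSdual} as well known — so your write-up supplies exactly the textbook justification the paper leaves implicit.
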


Definition~\ref{def:RS} can be generalized to allow for evaluations at only a subset of the points in $\bF_q$; Lemma~\ref{lem:RSdual} also has a corresponding generalization. However, for simplicity in this paper we restrict attention to the case where the polynomials are evaluated on the whole field.

\subsection{Quantum Codes}
This section describes basic notions in quantum coding theory. In this paper, we restrict attention to quantum CSS codes, described below.

\begin{definition}
  \label{def:CSScode}
  For a finite field $\bF_q$, a \textbf{quantum CSS code of length $n$ over $\bF_q$} is a pair $Q=(Q_X,Q_Z)$ of subspaces (i.e.~classical codes) $Q_X,Q_Z\subseteq\bF_q^n$ such that $Q_X^\perp\subseteq Q_Z$. The \textbf{dimension} of $Q$ is $k:=\dim(Q_Z)-\dim(Q_X^\perp)$, and the \textbf{rate} is $R:=k/n$. The \textbf{distance} of $Q$ is
  \begin{equation*}
    d := \min_{c\in(Q_X\setminus Q_Z^\perp)\cup(Q_Z\setminus Q_X^\perp)}|c|.
  \end{equation*}
  We summarize the above parameters by saying that $Q$ is an $[[n,k,d]]_q$ code.

  The quantum code $Q$ is \textbf{low-density parity-check (LDPC) of locality $w$} if there exist parity-check matrices $H_X\in\bF_q^{m_X\times n}$, $H_Z\in\bF_q^{m_Z\times n}$ for $Q_X,Q_Z$ respectively, meaning that $Q_X=\ker H_X$ and $Q_Z=\ker H_Z$, such that every row and column of $H_X$ and $H_Z$ has at most $w$ nonzero entries.

  The quantum code $Q$ is a \textbf{locally testable code (LTC) of soundness $\rho$} if the parity-check matrices $H_\alpha$, $\alpha=X,Z$ furthermore satisfy the following: for every $s\in\im H_\alpha$, there exists some $e\in\bF_q^n$ with $H_\alpha e=s$ such that
  \begin{equation*}
    \frac{|s|}{m_\alpha} \geq \rho\cdot\frac{|e|}{n}.
  \end{equation*}
\end{definition}

In this paper, we often simply say ``quantum code'' to refer to a quantum CSS code. We are interested in constructing families of quantum LDPC and locally testable codes with locality $w$ significantly less than the block length $n$. Typically, the study of qLDPC codes focuses on constructing families of codes with constant locality $w$ as the block length $n$ grows. However, we remark that \cite{hastings_quantum_2023,wills_tradeoff_2024} gives a method for reducing the locality $w$ to a constant at the cost of decreasing the code's distance and soundness. Thus quantum LDPC and locally testable codes of constant locality can be obtained from codes of sufficiently low (but growing) locality.

\begin{definition}
  An infinite family of classical or quantum codes is \textbf{asymptotically good} if the codes' dimension $k=\Theta(n)$ and distance $d=\Theta(n)$ both grow linearly as the block length $n\rightarrow\infty$. The family is \textbf{LDPC} if the locality $w=O(1)$ remains bounded by a constant.
\end{definition}

We will use the following quantum analogue of Definition~\ref{def:RS}:

\begin{definition}
  \label{def:QRS}
  A \textbf{quantum Reed-Solomon code} is a quantum CSS code $Q=(Q_X,Q_Z)$ such that both $Q_X$ and $Q_Z$ are classical Reed-Solomon codes in the sense of Definition~\ref{def:RS}.
\end{definition}

\subsection{Chain Complexes}
\label{sec:cc}
In this paper, we will use the language of chain complexes to describe the quantum codes we consider. In particular, this paper focuses on homological products of quantum codes, which are particularly natural to describe using chain complexes. Many definitions below have natural generalizations, such as to arbitrary rings, etc.~but for simplicity we only use a sufficiently general presentation for our purposes.

\begin{definition}
  \label{def:cc}
  A \textbf{chain complex $\cC_*$ over $\bF_q$} consists of a sequence of $\bF_q$-vector spaces $(C_i)_{i\in\bZ}$ and linear \textbf{boundary maps} $(\partial_i^{\cC}:C_i\rightarrow C_{i-1})_{i\in\bZ}$ satisfying $\partial_{i-1}^{\cC}\partial_i^{\cC}=0$ for all $i\in\bZ$. When clear from context, we omit the superscript and/or subscript and write $\partial=\partial_i=\partial^{\cC}=\partial_i^{\cC}$. Assuming that each $C_i$ has a fixed basis, then the \textbf{locality $w^{\cC}$} of $\cC$ is the maximum number of nonzero entries in any row or column of any matrix $\partial_i$ in this fixed basis. If there exist bounds $\ell<m\in\bZ$ such that for all $i<\ell$ and $i>m$ have $C_i=0$, then we may truncate the sequence and say that $\cC$ is the $(m-\ell+1)$-term chain complex
  \begin{equation}
    \label{eq:chaintruncdec}
    \cC_* = (C_m \xrightarrow{\partial_m} C_{m-1} \xrightarrow{\partial_{m-1}} \cdots \xrightarrow{\partial_{\ell+1}} C_\ell).
  \end{equation}
  We furthermore define the following (standard) vector spaces for $i\in\bZ$:
  \begin{align}
    \label{eq:cbhdefs}
    \begin{split}
      \text{the space of \bf $i$-cycles } Z_i(\cC) &:= \ker(\partial_i) \subseteq C_i \\
      \text{the space of \bf $i$-boundaries } B_i(\cC) &:= \im(\partial_{i+1}) \subseteq C_i \\
      \text{the \bf $i$-homology } H_i(\cC) &:= Z_i(\cC)/B_i(\cC)
    \end{split}
  \end{align}
  
  The \textbf{cochain complex $\cC^*$} associated to $\cC_*$ is the chain complex with vectors spaces $(C^i:=C_i)_{i\in\bZ}$ and boundary maps given by the \textbf{coboundary maps} $(\delta_i:=\partial_{i+1}^\top:C^i\rightarrow C^{i+1})_{i\in\bZ}$ obtained by transposing all the boundary maps of $\cC_*$. Thus the cochain complex of~(\ref{eq:chaintruncdec}) is
  \begin{equation*}
    \cC^* = (C^m \xleftarrow{\delta_{m-1}} C^{m-1} \xleftarrow{\delta_{m-2}} \cdots \xleftarrow{\delta_{\ell}} C^\ell).
  \end{equation*}
  We then analogously define the \textbf{$i$-cocycles $Z^i(\cC)=\ker(\delta_i)$, $i$-coboundaries $B^i(\cC)=\im(\delta_{i-1})$, and $i$-cohomology $H^i(\cC)=Z^i(\cC)/B^i(\cC)$}
\end{definition}

In this paper, we restrict attention to chain complexes in which all spaces $C_i$ are finite-dimensional $\bF_q$-vector spaces, so the following standard lemma holds.

\begin{lemma}[Well known]
  \label{lem:hombasis}
  Let $\cC$ be a chain complex over $\bF_q$ with all $C_i$ finite-dimensional. Then for all $i$,
  \begin{equation*}
    \dim H_i(\cC)=\dim H^i(\cC).
  \end{equation*}
  Furthermore, the bilinear form $\langle\cdot,\cdot\rangle:H^i(\cC)\times H_i(\cC)\rightarrow\bF_q$ given by
  \begin{equation*}
    \langle c'+B^i(\cC),c+B_i(\cC)\rangle:=c'\cdot c
  \end{equation*}
  is a well-defined nondegenerate form, independent of the choice of coset representatives $c',c$. In particular, letting $k_i=\dim H_i(\cC)$, then there exists a basis $c^1+B^i(\cC),\dots,c^{k_i}+B^i(\cC)$ for $H^i(\cC)$ and a basis $c_1+B_i(\cC),\dots,c_{k_i}+B_i(\cC)$ for $H_i(\cC)$ such that
  \begin{equation}
    \label{eq:dualbases}
    \langle c^j+B^i(\cC),c_\ell+B_i(\cC)\rangle=c^j\cdot c_\ell=\1_{j=\ell} \hspace{1em} \forall \; j,\ell\in[k_i].
  \end{equation}
\end{lemma}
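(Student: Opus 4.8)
\textbf{Proof proposal for Lemma~\ref{lem:hombasis}.}

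The plan is to prove the three assertions in sequence: the dimension equality, the well-definedness and nondegeneracy of the pairing, and finally the existence of dual bases (which is a purely linear-algebraic consequence of nondegeneracy). The only ``real'' content is the middle step, since everything over a field works out cleanly.

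First I would establish well-definedness of $\langle c'+B^i(\cC), c+B_i(\cC)\rangle := c'\cdot c$. Suppose $c' \in Z^i(\cC) = \ker(\delta_i) = \ker(\partial_{i+1}^\top)$ and $c \in Z_i(\cC) = \ker(\partial_i)$. Replacing $c'$ by $c' + \delta_{i-1} u = c' + \partial_i^\top u$ for some $u \in C^{i-1}$ changes the pairing by $(\partial_i^\top u)\cdot c = u^\top \partial_i c = u^\top \cdot 0 = 0$. Symmetrically, replacing $c$ by $c + \partial_{i+1} v$ changes it by $c' \cdot \partial_{i+1} v = ({c'}^\top \partial_{i+1}) v = (\partial_{i+1}^\top c')^\top v = (\delta_i c')^\top v = 0$ since $c' \in \ker(\delta_i)$. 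So the form descends to $H^i(\cC) \times H_i(\cC)$.

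Next, nondegeneracy. The cleanest route is to identify the pairing with the standard duality between $H^i$ and $H_i$ and count dimensions. Concretely, over $\bF_q$ the standard bilinear form on $C_i = C^i$ identifies $C^i$ with the dual space $(C_i)^*$. Under this identification, $Z^i(\cC) = \ker(\partial_{i+1}^\top)$ is exactly the annihilator $B_i(\cC)^\circ = \{\phi : \phi(B_i(\cC)) = 0\}$, and $B^i(\cC) = \im(\partial_i^\top)$ is exactly the annihilator $Z_i(\cC)^\circ$. (Both are elementary facts: $\im(M^\top) = (\ker M)^\circ$ and $\ker(M^\top) = (\im M)^\circ$ for a matrix $M$ over a field.) Therefore $H^i(\cC) = Z^i/B^i = B_i^\circ / Z_i^\circ$, which is canonically isomorphic to $(Z_i/B_i)^* = H_i(\cC)^*$, and one checks the induced pairing is precisely $\langle\cdot,\cdot\rangle$ above. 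A nondegenerate pairing between finite-dimensional spaces forces $\dim H^i(\cC) = \dim H_i(\cC)$, giving the first claim as well. Alternatively, if one prefers to avoid the annihilator bookkeeping, the dimension equality can be obtained directly from rank--nullity: $\dim H_i(\cC) = \dim C_i - \rank(\partial_i) - \rank(\partial_{i+1})$, and the right side is symmetric under transposition since $\rank(\partial) = \rank(\partial^\top)$, so it equals $\dim H^i(\cC)$; then nondegeneracy follows because the induced map $H^i(\cC) \to H_i(\cC)^*$ is injective (this is exactly the well-definedness computation run in reverse: if $c'\cdot c = 0$ for all cycles $c$, then $c' \in Z_i(\cC)^\circ = B^i(\cC)$) between spaces of equal dimension.

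Finally, given a nondegenerate bilinear form $\langle\cdot,\cdot\rangle : V \times W \to \bF_q$ with $\dim V = \dim W = k_i$, standard linear algebra produces dual bases: pick any basis $c_1 + B_i(\cC),\dots,c_{k_i}+B_i(\cC)$ of $H_i(\cC)$, let $c^1+B^i(\cC),\dots,c^{k_i}+B^i(\cC)$ be the basis of $H^i(\cC)$ dual to it under the identification $H^i(\cC) \cong H_i(\cC)^*$, and then~(\ref{eq:dualbases}) holds by construction. I expect the main obstacle (such as it is) to be purely bookkeeping: carefully matching the four annihilator identities to the definitions of cocycles and coboundaries so that the abstract duality $H^i \cong (H_i)^*$ genuinely coincides with the concrete form $c' \cdot c$; once that is pinned down, the rest is immediate.
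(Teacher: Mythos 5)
Your proposal is correct: the well-definedness computation, the identification $H^i(\cC)\cong H_i(\cC)^*$ via the annihilator identities $\ker(M^\top)=(\im M)^\perp$ and $\im(M^\top)=(\ker M)^\perp$ (valid over any field by dimension counting), the rank--nullity dimension count, and the extraction of dual bases from the nondegenerate pairing are all sound. The paper offers no proof of its own -- it labels the lemma well known, says it follows by basic linear-algebraic manipulations, and points to Section~10.5.7 of the Nielsen--Chuang reference -- and your argument is precisely the standard linear-algebraic proof being alluded to there.
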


Lemma~\ref{lem:hombasis} can be proven by basic linear-algebraic manipulations. For readers familiar with the language of quantum stabilizer codes, the lemma says that the space of logical operators of a CSS code can be decomposed into anticommuting $X$ and $Z$ operators; a proof can be found in Section~10.5.7 of~\cite{nielsen_quantum_2010}.%\vnote{Interesting...} % This proof might be for general stabilizer codes, so may want to double check it is guaranteed to give logicals that are either all Xs or all Zs

\begin{definition}
  Let $\cC$ be a chain complex with $k_i=\dim H_i(\cC)$. We say that vectors $c^1,\dots,c^{k_i}\in Z^i(\cC)$ and $c_1,\dots,c_{k_i}\in Z_i(\cC)$ form \textbf{dual bases for $i$-cohomology and $i$-homology} if they satisfy~(\ref{eq:dualbases}).
\end{definition}

Recall that by definition $\im(\partial_{i+1})=\ker(\partial_{i+1}^\top)^\perp$ and that $\delta_i=\partial_{i+1}^\top$. Therefore the chain complex condition $\partial_i\partial_{i+1}=0$ can be equivalently stated as saying that $\ker(\delta_i)^\perp\subseteq\ker(\partial_i)$, which is precisely the CSS condition in Definition~\ref{def:CSScode}. This observation motivates the following definition:

\begin{definition}
  \label{def:cctoqcode}
  For a chain complex $\cC$ and an integer $i\in\bZ$, the \textbf{quantum code associated to level $i$ of $\cC$} is the CSS code $Q=(Q_X,Q_Z)$ with $Q_X=\ker\delta_i$ and $Q_Z=\ker\partial_i$. We by default set $i=0$ and say that the \textbf{quantum code associated to $\cC$} is given by $Q_X=\ker\delta_0$ and $Q_Z=\ker\partial_0$.
\end{definition}

Definition~\ref{def:cctoqcode} associates cohomology with $Q_X/Q_Z^\perp$ and associates homology with $Q_Z/Q_X^\perp$. We maintain this convention throughout the paper, though note that in some other works these associations are swapped.

The following lemma is immediate from the above definitions:

\begin{lemma}
  Let $Q$ be the quantum code associated to level $i$ of a chain complex $\cC$. Then the dimension of $Q$ is $k=\dim H_i(\cC)=\dim H^i(\cC)$, and the locality of $Q$ is at most the locality of $\cC$.
\end{lemma}

We may similarly express distance and local testability in terms of chain complexes:

\begin{definition}
  \label{def:sysdisexp}
  For a chain complex $\cC$, the \textbf{$i$-systolic distance $d_i(\cC)$} and the \textbf{$i$-cosystolic distance $d^i(\cC)$} are defined as
  \begin{equation*}
    d_i(\cC) = \min_{c\in Z_i(\cC)\setminus B_i(\cC)}|c|, \hspace{2em} d^i(\cC) = \min_{c\in Z^i(\cC)\setminus B^i(\cC)}|c|.
  \end{equation*}
  The \textbf{$i$-filling constant $\mu_i(\cC)$} and the \textbf{$i$-cofilling constant $\mu^i(\cC)$} are defined as
  \begin{equation*}
    \mu_i(\cC) = \max_{b\in B_{i-1}(\cC)}\min_{c\in C_i:\partial_i(c)=b}\frac{|c|}{|b|}, \hspace{2em} \mu^i(\cC) = \max_{b\in B^{i+1}(\cC)}\min_{c\in C^i:\delta_i(c)=b}\frac{|c|}{|b|}.
  \end{equation*}
  The \textbf{$i$-collective filling constant $M_i(\cC)$} and the \textbf{$i$-collective cofilling constant $M^i(\cC)$} are defined as
  \begin{align*}
    M_i(\cC) &= \max_{m\in\bN}\;\max_{b_1,\dots,b_m\in B_{i-1}(\cC)}\;\min_{c_1,\dots,c_m\in C_i\;:\;\forall j\in[m],\;\partial_i(c_j)=b_j}\frac{\left|\bigcup_{j\in[m]}c_j\right|}{\left|\bigcup_{j\in[m]}b_j\right|} \\
    M^i(\cC) &= \max_{m\in\bN}\;\max_{b^1,\dots,b^m\in B^{i+1}(\cC)}\;\min_{c^1,\dots,c^m\in C^i\;:\;\forall j\in[m],\;\delta_i(c^j)=b^j}\frac{\left|\bigcup_{j\in[m]}c^j\right|}{\left|\bigcup_{j\in[m]}b^j\right|}.
  \end{align*}
\end{definition}

If a chain complex has small filling and cofilling constants at some level $i$, then high-weight errors on the associated quantum code must have large syndromes. Equivalently, low-weight syndromes can only arise from low-weight errors. This phenomenon is equivalent to local testability, as described below.

By definition, the distance $d(Q)$ of the quantum code $Q$ associated to level $i$ of $\cC$ is equal to
\begin{equation*}
  d(Q) = \min\{d_i(\cC),d^i(\cC)\}.
\end{equation*}
Similarly, letting $n_i=\dim C_i$, then the soundness $\rho(Q)$ (i.e.~local testability, see Definition~\ref{def:CSScode}) of $Q$ is equal to
\begin{equation*}
  \rho(Q) = \min\left\{\frac{n_i}{n_{i-1}}\cdot\frac{1}{\mu_i(\cC)},\;\frac{n_i}{n_{i+1}}\cdot\frac{1}{\mu^i(\cC)}\right\}.
\end{equation*}
That is, up to normalization, the (co)filling constants describe the local testability of $Q$. The reciprocols of the (co)filling constants are sometimes called \textbf{(co)cycle expansion constants}, though in this paper the (co)filling constants will be more convenient for our purposes.

The collective (co)filling constants are by definition at least as large as their noncollective counterparts. Therefore having small collective (co)filling constants is a \textit{stronger} property than having good local testability (i.e.~small (co)filling constants). This collective form of testability was introduced in \cite{kaufman_new_2021}, who showed that it is preserved under homological products, defined below.

\subsection{Homological product}
The above definitions motivate studying ways to construct larger chain complexes from smaller ones, as a way to construct larger quantum codes from smaller ones. The following definition describes one of the most natural such operations on chain complexes, namely a product.

\begin{definition}
  \label{def:homprod}
  For chain complexes $\cA$ and $\cB$, the \textbf{homological product $\cC=\cA\otimes\cB$} is the chain complex given by the vector spaces
  \begin{equation*}
    C_i := \bigoplus_{j\in\bZ}A_j\otimes B_{i-j}
  \end{equation*}
  and the boundary maps
  \begin{equation*}
    \partial^{\cC}_i := \bigoplus_{j\in\bZ}(\partial_j^{\cA}\otimes I+(-1)^jI\otimes\partial_{i-j}^{\cB})
  \end{equation*}
\end{definition}

The homological product in Definition~\ref{def:homprod}, along with its single-sector variant (see Section~\ref{sec:sscc} below), is the principal operation studied in this paper. The following result shows how the homology groups behave under such products.

\begin{proposition}[K\"{u}nneth formula (well-known, see e.g.~\cite{hatcher_algebraic_2001})]
  \label{prop:kunneth}
  Let $\cA$ and $\cB$ be chain complexes over a field $\bF_q$, each with a finite numbers of nonzero terms. Then for every $i\in\bZ$,
  \begin{equation*}
    H_i(\cA\otimes\cB) \cong \bigoplus_{j\in\bZ}H_j(\cA)\otimes H_{i-j}(\cB).
  \end{equation*}
  Furthermore, for $a\in Z_j(\cA)$ and $b\in Z_{i-j}(\cB)$, the isomorphism above maps\footnote{\label{footnote:Bclash} Here $B_i(\cA\otimes\cB)$ and $B_{i-j}(\cB)$ refer to spaces of boundaries, which are not to be confused with the vector spaces $B_i$ defining the chain complex $\cB$. Such a slight clash of notation will unfortunately reappear throughout the paper.}
  \begin{equation*}
    a\otimes b+B_i(\cA\otimes\cB) \mapsfrom (a+B_j(\cA))\otimes(b+B_{i-j}(\cB)).
  \end{equation*}
\end{proposition}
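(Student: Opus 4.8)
The plan is to give the standard proof of the K\"{u}nneth formula over a field, which is clean because short exact sequences of $\bF_q$-vector spaces split and hence no Tor terms appear. The structural input I would use is that over $\bF_q$, every bounded chain complex decomposes, as a chain complex, into its homology (with zero differentials) plus a contractible, hence acyclic, complement.

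First I would fix splittings for $\cA$: inside $Z_j(\cA)=\ker\partial^{\cA}_j$ choose $\tilde H^{\cA}_j$ with $Z_j(\cA)=B_j(\cA)\oplus\tilde H^{\cA}_j$ (so $\tilde H^{\cA}_j\cong H_j(\cA)$), and inside $A_j$ choose $C'_j$ with $A_j=Z_j(\cA)\oplus C'_j$. Then $\partial^{\cA}_j$ vanishes on $B_j(\cA)\oplus\tilde H^{\cA}_j$ and maps $C'_j$ isomorphically onto $B_{j-1}(\cA)$, so $\cA=\cH^{\cA}\oplus\bigoplus_j\cD^{\cA}_j$, where $\cH^{\cA}$ has $\tilde H^{\cA}_j$ in degree $j$ with zero differentials, and $\cD^{\cA}_j$ is the two-term ``disk'' complex $(C'_j\xrightarrow{\ \sim\ }B_{j-1}(\cA))$ in degrees $j,j-1$, which is contractible (the inverse isomorphism is an explicit contracting homotopy). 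I would do the same for $\cB$.

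Next, since the tensor product of chain complexes distributes over direct sums, $\cA\otimes\cB$ expands into four kinds of terms: $\cH^{\cA}\otimes\cH^{\cB}$, the terms $\cD^{\cA}_j\otimes\cH^{\cB}$, $\cH^{\cA}\otimes\cD^{\cB}_k$, and $\cD^{\cA}_j\otimes\cD^{\cB}_k$. Every term involving a disk factor is contractible: tensor the contracting homotopy of the disk with the identity on the other factor (with the standard Koszul signs), and check that $\partial^{\cC}S+S\partial^{\cC}=\mathrm{id}$. Hence these terms are acyclic, so $H_i(\cA\otimes\cB)=H_i(\cH^{\cA}\otimes\cH^{\cB})$. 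As $\cH^{\cA},\cH^{\cB}$ have zero differentials, so does their product (the product differential $\partial^{\cA}\otimes I\pm I\otimes\partial^{\cB}$ vanishes), whence $H_i(\cH^{\cA}\otimes\cH^{\cB})=\bigoplus_j\tilde H^{\cA}_j\otimes\tilde H^{\cB}_{i-j}\cong\bigoplus_j H_j(\cA)\otimes H_{i-j}(\cB)$. Boundedness of $\cA,\cB$ makes all these direct sums finite, so no convergence issues arise.

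Finally I would verify that this isomorphism is the explicit one in the statement. The assignment $(a+B_j(\cA))\otimes(b+B_{i-j}(\cB))\mapsto a\otimes b+B_i(\cA\otimes\cB)$ is well defined: $a\otimes b$ is a cycle since $a,b$ are; replacing $a$ by $a+\partial^{\cA}_{j+1}\alpha$ changes $a\otimes b$ by $(\partial^{\cA}_{j+1}\alpha)\otimes b=\partial^{\cC}_{i+1}(\alpha\otimes b)$ (using $b\in Z_{i-j}(\cB)$ and the definition of $\partial^{\cC}$), and symmetrically for $b$; bilinearity then extends it to $\bigoplus_j H_j(\cA)\otimes H_{i-j}(\cB)$. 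Choosing the representatives $a,b$ in the complements $\tilde H^{\cA}_j,\tilde H^{\cB}_{i-j}$ places $a\otimes b$ in $(\cH^{\cA}\otimes\cH^{\cB})_i$, which the inclusion carries isomorphically onto $H_i(\cA\otimes\cB)$; thus the explicit map agrees with the abstract isomorphism on generators and is itself an isomorphism. The cohomology version follows by applying the chain-complex statement to the cochain complexes $\cA^*,\cB^*$. I expect the only fussy points to be the Koszul sign bookkeeping in the contracting-homotopy step and pinning down that the abstract isomorphism coincides with the stated explicit map; neither is deep, but both warrant care.
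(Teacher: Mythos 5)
Your proof is correct. Note that the paper does not prove this proposition at all---it is invoked as a well-known fact with a citation to Hatcher---so there is no in-paper argument to compare against. Your route is the standard field-case argument: since every short exact sequence of $\bF_q$-vector spaces splits, each complex decomposes as a direct sum of its homology (zero differentials) and two-term contractible ``disk'' subcomplexes; tensoring kills the contractible pieces, and the surviving summand $\cH^{\cA}\otimes\cH^{\cB}$ gives the formula. This is cleaner than the general textbook proof (which uses the exact sequence $0\to Z_*(\cA)\to A_*\to B_{*-1}(\cA)\to 0$ and a long exact sequence, with Tor terms vanishing over a field), and it buys you exactly what the paper needs: the explicit description of the isomorphism on representatives, which you verify directly ($a\otimes b$ is a cycle, and replacing $a$ by $a+\partial^{\cA}\alpha$ shifts $a\otimes b$ by $\partial^{\cC}(\alpha\otimes b)$ since $\partial^{\cB}b=0$). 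The two points you flag---the Koszul signs in the contracting homotopy $S\otimes I$ for $\cD\otimes\cB$, and checking that the abstract isomorphism agrees with the stated map---are indeed the only places requiring care, and your treatment of both (choosing representatives inside $\tilde H^{\cA}_j\otimes\tilde H^{\cB}_{i-j}$ and using that the inclusion of this summand induces the homology isomorphism) is sound.
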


Thus the dimension of quantum codes behaves nicely under homological products. The goal of our work is to understand how the distance and local testability of quantum codes behave under such products.

\subsection{Subsystem Codes}
\label{sec:subsystem}
Sometimes it will be helpful to consider quantum codes in which we only encode logical qudits into certain degrees of freedom in the code. Such codes are called subsystem codes. Below we provide a definition that is sufficient for our purposes; more general definitions are also possible.

\begin{definition}
  \label{def:subsystem}
  A \textbf{quantum CSS subsystem code} $(Q,L)$ consists of a quantum CSS code $Q=(Q_X,Q_Z)$ together with a pair $L=(L_X,L_Z)$ of tuples $L_X=(c^1,\dots,c^k)\in Q_X^k$, $L_Z=(c_1,\dots,c_k)\in Q_Z^k$ of the same length $k$, such that $c^i\cdot c_j=\1_{i=j}$ for all $i,j\in[k]$.

  The \textbf{dimension} of the subsystem code $(Q,L)$ is $k$, and the \textbf{distance} is
  \begin{equation*}
    d := \min_{c\in(Q_X\setminus\spn\{L_Z\}^\perp)\cup(Q_Z\setminus\spn\{L_X\}^\perp)}|c|,
  \end{equation*}
  where $\spn\{L_X\}$ (resp.~$\spn\{L_Z\}$) refers to the span of the $k$ vectors in the tuple $L_X$ (resp.~$L_Z$).

  The \textbf{locality $w$} and \textbf{soundness $\rho$} of a subsystem code $(Q,L)$ are simply the corresponding parameters of the CSS code $Q$, as defined in Definition~\ref{def:CSScode}.
\end{definition}

\begin{remark}
  Definition~\ref{def:subsystem} provides a basis-dependent, and therefore slightly redundant, notion of subsystem code. In particular, if the tuples $L_X,L_Z$ are replaced with some other valid $L_X',L_Z'$ satisfying $\spn\{L_X,Q_Z^\perp\}=\spn\{L_X',Q_Z^\perp\}$ and $\spn\{L_Z,Q_X^\perp\}=\spn\{L_Z',Q_X^\perp\}$, then the resulting subsystem code $(Q,L')$ has the same properties as $(Q,L)$. This redundancy can be eliminated by defining gauge groups (see below), but we will not need such generality.

  Also note that in Definition~\ref{def:subsystem}, the requirement $c^i\cdot c_j=\1_{i=j}$ implies that all $c^i$ (resp.~$c_j$) lie in distinct cosets in $Q_X/Q_Z^\perp$ (resp.~$Q_Z/Q_X^\perp$), as if for instance $c^i-c^{i'}\in Q_Z^\perp$ for some $i\neq i'$, then we would have $1=(c^i-c^{i'})\cdot c_i=0$, a contradiction.
\end{remark}

% \vnote{I find this paragraph a bit intimidating :)} 
For readers familiar with stabilizer codes, the tuples $L_X,L_Z$ describe the logical operators that are used to encode message qudits, while elements of $Q_X\cap\spn\{L_Z\}^\perp$ and $Q_Z\cap\spn\{L_X\}^\perp$ correspond to the gauge operators. We remark that some works (e.g.~\cite{bacon_sparse_2017}) have studied subsystem codes in which these spaces of gauge operators have low-weight generators. However, such codes are not neccessarily LDPC in the sense we consider, in which the stabilizer spaces $Q_Z^\perp$ and $Q_X^\perp$ must have low-weight generators.
% \vnote{Don't understand this well, though your explanation the other day helped.}
We now introduce some (slightly nonstandard) notation to describe subsystem codes in the language of chain complexes.

\begin{definition}
  \label{def:subsystemcomplex}
  For $i\in\bZ$, an \textbf{$i$-subsystem chain complex} $(\cC,L)$ consists of a chain complex $\cC$ together with a pair $L=(L_i,L^i)$ of tuples $L^i=(c^1\,\dots,c^k)\in Z^i(\cC)^k$, $L_i=(c_1,\dots,c_k)\in Z_i(\cC)^k$ of the same length $k$, such that $c^j\cdot c_\ell=\1_{j=\ell}$ for all $j,\ell\in[k]$. The \textbf{associated quantum subsystem code} $(Q,L)$ has $Q_X=Z^i(\cC)$, $Q_Z=Z_i(\cC)$, $L_X=L^i$, $L_Z=L_i$.

  The \textbf{$i$-systolic (subsystem) distance $d_i(\cC,L)$} and the \textbf{$i$-cosystolic (subsystem) distance $d^i(\cC,L)$} are defined as
  \begin{equation*}
    d_i(\cC,L) = \min_{c\in Z_i(\cC)\setminus\spn\{L^i\}^\perp}|c|, \hspace{2em} d^i(\cC,L) = \min_{c\in Z^i(\cC)\setminus\spn\{L_i\}^\perp}|c|.
  \end{equation*}
\end{definition}

By definition, the distance $d(Q,L)$ of the subsystem code $(Q,L)$ associated to level $i$ of an $i$-subsystem chain complex $(\cC,L)$ satisfies
\begin{equation*}
  d(Q,L) = \min\{d_i(\cC,L),d^i(\cC,L)\}.
\end{equation*}
Thus we may consider subsystem codes using chain complexes; note that the tuples $(L^i,L_i)=(L_X,L_Z)$ are in fact partial dual $i$-(co)homology bases, in the sense that the tuples can be extended into complete dual $i$-(co)homology bases.

We also naturally extend the notion of homological products to subsystem chain complexes:

\begin{definition}
  \label{def:subsystemproduct}
  Let $(\cA,L^{\cA})$ be an $i$-subsystem chain complex, and let $(\cB,L^{\cB})$ be a $j$-subsystem chain complex. Then the \textbf{homological product}
  \begin{equation*}
    (\cC,L^{\cC})=(\cA,L^{\cA})\otimes(\cB,L^{\cB})
  \end{equation*}
  is the $(i+j)$-subsystem chain complex defined as follows. We let
  \begin{equation*}
    \cC := \cA\otimes\cB
  \end{equation*}
  be the homological product of chain complexes. If $L^{\cA}=((a^1,\dots,a^{k^{\cA}}),(a_1,\dots,a_{k^{\cA}}))$ and $L^{\cB}=((b^1,\dots,b^{k^{\cB}}),(b_1,\dots,b_{k^{\cB}}))$, then we let
  \begin{equation*}
    L^{\cC} := \left((a^{\ell}\otimes b^m)_{\ell\in[k^{\cA}],m\in[k^{\cB}]},\;(a_{\ell}\otimes b_m)_{\ell\in[k^{\cA}],m\in[k^{\cB}]}\right).
  \end{equation*}
\end{definition}

That is, in the language of stabilizer codes, the subsystem homological product simply tensors the chosen bases of logical operators.

\begin{remark}
  \label{remark:subsystemprod}
  Even if $L^{\cA}$ and $L^{\cB}$ consist of complete pairs of dual $i$- and $j$-cohomology/homology bases respectively, the resulting $L^{\cC}$ from Definition~\ref{def:subsystemproduct} will not in general be a complete pair of dual $(i+j)$-cohomology/homology bases. Specifically, the K\"{u}nneth formula implies that 
  \begin{equation}
    \label{eq:Hipjkun}
    H_{i+j}(\cC) \cong \bigoplus_{k\in\bZ}H_k(\cA)\otimes H_{i+j-k}(\cB)
  \end{equation}
  with a similar expression also holding for the cohomology $H^{i+j}(\cC)$. Every homology representative in $L^{\cC}$ by definition has the form $a_\ell\otimes b_m$ for representatives $a_\ell,b_m$ of $H_i(\cA),H_j(\cB)$ respectively, which implies that $a_\ell\otimes b_m$ is a representative of $H_i(\cA)\otimes H_j(\cB)$. Thus every homology representative in $L^{\cC}$ lies in the $k=i+j$ term of the direct sum in~(\ref{eq:Hipjkun}); the same conclusion holds for the cohomology representatives. Therefore $L^{\cC}$ cannot contain complete $(i+j)$-cohomology/homology bases if some $k\neq i+j$ has $H_k(\cA)\otimes H_{i+j-k}(\cB)\neq 0$.
\end{remark}

\subsection{Single-Sector Chain Complexes}
\label{sec:sscc}
While the presentation of chain complexes in Section~\ref{sec:cc} is fairly standard, sometimes it is helpful to restrict attention to a specific class of chain complexes $\cC$ in which all vector spaces $C_i$ for $i\in\bZ$ are the same, and all boundary maps $\partial_i$ are also the same. To the best of our knowledge, the study of such complexes in the context of homological product codes was initiated in \cite{bravyi_homological_2014}, who termed such objets ``single-sector chain complexes.'' We will therefore sometimes refer to ordinary chain complexes, such as those considered in Section~\ref{sec:cc}, as ``multi-sector chain complexes.''

\begin{definition}
  \label{def:sscc}
  A \textbf{single-sector chain complex $\cC_*=(C,\partial^{\cC})$ over $\bF_q$} consists of a $\bF_q$-vector space $C$ and a linear \textbf{boundary map} $\partial^{\cC}:C\rightarrow C$ satisfying $(\partial^{\cC})^2=0$. When clear from context, we omit the superscript and write $\partial=\partial^{\cC}$. The associated \textbf{single-sector cochain complex $\cC^*$} has vector space $C$ and \textbf{coboundary map} $\delta=\partial^\top$. The \textbf{quantum CSS code $Q=(Q_X,Q_Z)$ associated to $\cC_*$} is given by $Q_X=\ker\delta$ and $Q_Z=\ker\partial$.

  Interpreting $\cC_*$ as an ordinary (multi-sector) chain complex with infinitely many equal terms
  \begin{equation*}
    \cdots\xrightarrow{\partial}C\xrightarrow{\partial}C\xrightarrow{\partial}\cdots,
  \end{equation*}
  we define the variables
  \begin{equation*}
    Z_*(\cC),Z^*(\cC),B_*(\cC),B^*(\cC),H_*(\cC),H^*(\cC),d_*(\cC),d^*(\cC),\mu_*(\cC),\mu^*(\cC),M_*(\cC),M^*(\cC)
  \end{equation*}
  exactly as in Definition~\ref{def:cc} and Definition~\ref{def:sysdisexp}, with any $i\in\bZ$ replacing $*$ (the choice of $i$ does not matter by symmetry).
\end{definition}

While Definition~\ref{def:sscc} treats single-sector chain complexes as infinite multi-sector chain complexes with all terms equal, we follow \cite{bravyi_homological_2014} in using a notion of homological product for single-sector complexes, presented below, that is \textit{not} equivalent to taking the multi-sector homological product of these infinite multi-sector complexes. We will restrict attention to fields of characteristic $2$ when considering single-sector products in order to improve simplicity in the presentation by avoiding signing issues.

\begin{definition}
  \label{def:sshomprod}
  For single-sector chain complexes $\cA$ and $\cB$ over a field $\bF_q$ of characteristic $2$, the \textbf{(single-sector) homological product} $\cC=\cA\otimes\cB$ is the single-sector chain complex over $\bF_q$ given by the vector space
  \begin{equation*}
    C := A \otimes B
  \end{equation*}
  and the boundary map
  \begin{equation*}
    \partial^{\cC} := \partial^{\cA}\otimes I+I\otimes\partial^{\cB}.
  \end{equation*}
\end{definition}

Given $t$ single-sector complexes $(\cC_i=(C_i,\partial_i))_{i\in[t]}$, we may take $t-1$ products as defined in Definition~\ref{def:sshomprod} to obtain a single-sector complex $\cA=\cC_1\otimes\cdots\otimes\cC_t$ given by the vector space
\begin{equation*}
  A = C_1\otimes\cdots\otimes C_t
\end{equation*}
with boundary map
\begin{equation*}
  \partial^{\cA} = \partial_1\otimes I^{\otimes t-1} + I\otimes\partial_2\otimes I^{\otimes t-2} + \cdots + I^{\otimes t-1}\otimes\partial_t.
\end{equation*}
Such higher-order products are considered in Theorem~\ref{thm:sspe} in Section~\ref{sec:peprod}.

The K\"{u}nneth formula also holds for single-sector chain complexes:

\begin{proposition}[K\"{u}nneth formula for single-sector complexes; see \cite{bravyi_homological_2014}]
  \label{prop:sskunneth}
  Let $\cA$ and $\cB$ be single-sector chain complexes over a field of characteristic $2$. Then
  \begin{equation*}
    H_*(\cA\otimes\cB) \cong H_*(\cA)\otimes H_*(\cB).
  \end{equation*}
  Furthermore, for $a\in Z_*(\cA)$ and $b\in Z_*(\cB)$, the isomorphism above maps
  \begin{equation*}
    a\otimes b+B_*(\cA\otimes\cB) \mapsfrom (a+B_*(\cA))\otimes(b+B_*(\cB)).
  \end{equation*}
\end{proposition}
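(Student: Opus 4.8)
The plan is to prove the single-sector Künneth formula (Proposition~\ref{prop:sskunneth}) by reducing it to the ordinary multi-sector Künneth formula (Proposition~\ref{prop:kunneth}), exploiting the characteristic-$2$ hypothesis and the relationship between a single-sector complex and its associated doubled/folded multi-sector complex. Concretely, given a single-sector complex $\cD=(D,\partial^{\cD})$, one can associate to it the $2$-term multi-sector complex $\widetilde{\cD}$ with $\widetilde{\cD}_1=\widetilde{\cD}_0=D$ and boundary map $\partial^{\cD}$, or more conveniently the $\bZ$-graded complex with all even levels and all odd levels equal to $D$; in characteristic $2$ the signs $(-1)^j$ in Definition~\ref{def:homprod} all equal $1$, so the multi-sector homological product boundary map $\partial^{\widetilde{\cA}}\otimes I + I\otimes\partial^{\widetilde{\cB}}$ restricts, on any fixed fiber $A\otimes B$, to exactly the single-sector boundary map $\partial^{\cA}\otimes I + I\otimes \partial^{\cB}$ of Definition~\ref{def:sshomprod}. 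This is the key structural observation that makes the reduction work.

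First I would make precise the dictionary between single-sector complexes and periodic multi-sector complexes: for a single-sector complex $\cC$, its periodic multi-sector version $\widehat{\cC}$ has $\widehat{C}_i = C$ and $\widehat{\partial}_i = \partial^{\cC}$ for all $i$; then $Z_i(\widehat{\cC}) = Z_*(\cC)$, $B_i(\widehat{\cC}) = B_*(\cC)$, and $H_i(\widehat{\cC}) = H_*(\cC)$ for every $i$, by the very definitions in Definition~\ref{def:sscc}. Next I would verify that $\widehat{\cA\otimes\cB}$ (periodic version of the single-sector product) is isomorphic, as a multi-sector complex, to a degree-shifted summand of $\widehat{\cA}\otimes\widehat{\cB}$ — more precisely, that $(\widehat{\cA}\otimes\widehat{\cB})_i = \bigoplus_{j} A\otimes B$ (one copy of $A\otimes B$ for each $j\in\bZ$), and since in characteristic $2$ the boundary map acts block-diagonally (no mixing between different $j$'s, because $\partial^{\cA}\otimes I$ sends the $j$-summand to the $(j-1)$-summand within level $i$... wait — it decreases $i$, not $j$) I need to be slightly careful. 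Actually the cleanest route: restrict $\widehat{\cA}$ to a $2$-term complex $A \xrightarrow{\partial^{\cA}} A$ sitting in degrees $1,0$ and likewise for $\cB$, form the $3$-term product, and observe the middle level is $A\otimes B \oplus A\otimes B$ with the single-sector-type boundary map on each; then invoke Proposition~\ref{prop:kunneth} on this finite product and match up homology in the middle degree. The signs vanish in characteristic $2$, so the two definitions of boundary genuinely coincide on each fiber.

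The main obstacle I expect is bookkeeping the degree-shift and the direct-sum structure so that the Künneth isomorphism from Proposition~\ref{prop:kunneth} lands exactly on $H_*(\cA)\otimes H_*(\cB)$ as a single copy rather than an infinite direct sum of copies indexed by $j\in\bZ$ — i.e.\ one must pin down the correct "single-sector slice" of the periodic multi-sector product. This is resolved by working with the finite truncations described above, or equivalently by noting that the single-sector product is by definition (Definition~\ref{def:sshomprod}) the single fiber $A\otimes B$ with map $\partial^{\cA}\otimes I + I\otimes\partial^{\cB}$, and then checking directly (using the algebraic argument behind Proposition~\ref{prop:kunneth}, which in characteristic $2$ and for finite-dimensional spaces over a field is just the standard split exact sequence / Tor-vanishing argument) that $\ker/\im$ of this single map computes $H_*(\cA)\otimes H_*(\cB)$. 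The explicit formula $a\otimes b + B_*(\cA\otimes\cB) \mapsfrom (a+B_*(\cA))\otimes(b+B_*(\cB))$ then follows because the chain-level map $Z_*(\cA)\otimes Z_*(\cB) \to Z_*(\cA\otimes\cB)$ sending $a\otimes b \mapsto a\otimes b$ is well-defined (since $(\partial^{\cA}\otimes I + I\otimes\partial^{\cB})(a\otimes b) = \partial^{\cA}a\otimes b + a\otimes\partial^{\cB}b = 0$ for cycles $a,b$) and descends to the claimed map on homology; surjectivity and injectivity on homology are exactly the content of the Künneth statement, which I would either cite from the finite-truncation reduction or reprove via the short exact sequence $0 \to Z_*(\cA) \to A \xrightarrow{\partial^{\cA}} B_*(\cA) \to 0$ (which splits, $A,B_*$ being vector spaces) tensored with $\cB$.
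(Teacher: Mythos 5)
The paper does not prove this proposition at all --- it cites \cite{bravyi_homological_2014} --- so your attempt must stand on its own, and its main route has a genuine gap. Reducing to the multi-sector K\"unneth formula (Proposition~\ref{prop:kunneth}) via truncations or periodic complexes does not work, and the paper itself flags the obstruction in Section~\ref{sec:sscc}: the single-sector product is \emph{not} equivalent to the multi-sector product of the associated periodic complexes. Concretely, in your ``cleanest route'' the 2-term truncation $A\xrightarrow{\partial^{\cA}}A$ has $H_1=\ker\partial^{\cA}$ and $H_0=A/\im\partial^{\cA}$, not $H_*(\cA)=\ker\partial^{\cA}/\im\partial^{\cA}$; and the middle level of the 3-term product is $A\otimes B\oplus A\otimes B$ with the hypergraph-product maps $\partial_2 z=\bigl((I\otimes\partial^{\cB})z,(\partial^{\cA}\otimes I)z\bigr)$ and $\partial_1(x,y)=(\partial^{\cA}\otimes I)x+(I\otimes\partial^{\cB})y$ --- this is not ``the single-sector boundary map on each copy,'' and its middle homology is $\ker\partial^{\cA}\otimes\mathrm{coker}\,\partial^{\cB}\,\oplus\,\mathrm{coker}\,\partial^{\cA}\otimes\ker\partial^{\cB}$, which differs from $H_*(\cA)\otimes H_*(\cB)$ (take $\partial^{\cA}=\partial^{\cB}=0$ on $n$-dimensional spaces: dimension $2n^2$ versus $n^2$). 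The periodic-complex variant fares no better: the levels are infinite-dimensional direct sums, Proposition~\ref{prop:kunneth} is stated only for complexes with finitely many nonzero terms, and the differential mixes the $j$-summands, as you yourself noticed mid-argument. So the step ``invoke Proposition~\ref{prop:kunneth} and match up homology in the middle degree'' fails, and your later appeal to ``cite from the finite-truncation reduction'' inherits the same gap.

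Your fallback sketch (tensor the split exact sequence $0\to Z_*(\cA)\to A\xrightarrow{\partial^{\cA}}B_*(\cA)\to 0$ with $\cB$) can be made to work, but it requires machinery you do not supply: one must treat single-sector complexes as ungraded differential modules, verify that a short exact sequence of such induces a periodic exact triangle in homology (an ungraded snake-lemma argument), identify the connecting map as the inclusion $\im\partial^{\cA}\hookrightarrow Z_*(\cA)$ tensored with $\mathrm{id}_{H_*(\cB)}$, and then read off the isomorphism from exactness; the graded K\"unneth proof does not apply verbatim. A cleaner argument --- essentially the one in \cite{bravyi_homological_2014} --- is to use that over a field one can write $A=H\oplus R\oplus\partial^{\cA}R$, so $\cA$ splits as $(H,0)$ plus a contractible single-sector complex with contracting homotopy $h$ satisfying $\partial h+h\partial=I$; in characteristic $2$, $I\otimes h$ is then a contracting homotopy for the single-sector product of any complex with the contractible summand, so only $(H^{\cA},0)\otimes(H^{\cB},0)$ contributes to homology, which yields both the isomorphism and the stated cycle-level map $a\otimes b\mapsfrom(a+B_*(\cA))\otimes(b+B_*(\cB))$.
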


Below we present some useful corollaries of the K\"{u}nneth formula.

\begin{corollary}
  \label{cor:ssprodbases}
  If $\{a^1,\dots,a^{k^{\cA}}\},\{a_1,\dots,a_{k^{\cA}}\}$ and $\{b^1,\dots,b^{k^{\cB}}\},\{b_1,\dots,b_{k^{\cB}}\}$ are dual cohomology/homology bases for single-sector chain complexes $\cA$ and $\cB$ respectively, then
  \begin{equation*}
    \{a^i\otimes b^j:i\in[k^{\cA}],j\in[[k^{\cB}]\},\{a_i\otimes b_j:i\in[k^{\cA}],j\in[[k^{\cB}]\}
  \end{equation*}
  are dual cohomology/homology bases for $\cA\otimes\cB$.
\end{corollary}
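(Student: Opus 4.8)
The statement to prove is Corollary~\ref{cor:ssprodbases}: if $\{a^i\}, \{a_i\}$ and $\{b^j\}, \{b_j\}$ are dual cohomology/homology bases for single-sector complexes $\cA$ and $\cB$, then $\{a^i\otimes b^j\}$ and $\{a_i\otimes b_j\}$ form dual cohomology/homology bases for $\cC=\cA\otimes\cB$. I would prove this in three steps: (i) the tensor products are (co)cycles; (ii) their classes form a basis of $H_*(\cC)$ and $H^*(\cC)$; (iii) the pairing is the identity matrix. Step (ii) is immediate from the K\"unneth formula for single-sector complexes (Proposition~\ref{prop:sskunneth}): since that isomorphism $H_*(\cC)\cong H_*(\cA)\otimes H_*(\cB)$ sends $(a_i+B_*(\cA))\otimes(b_j+B_*(\cB))$ to $a_i\otimes b_j+B_*(\cC)$, and $\{a_i+B_*(\cA)\}$, $\{b_j+B_*(\cB)\}$ are bases of $H_*(\cA)$, $H_*(\cB)$ respectively, the elements $\{a_i\otimes b_j+B_*(\cC)\}$ are the images of a tensor-product basis under an isomorphism, hence a basis of $H_*(\cC)$. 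The identical argument, applied to the cochain complex (which is just the single-sector complex of $\delta=\partial^\top$, whose product behaves the same way), shows $\{a^i\otimes b^j+B^*(\cC)\}$ is a basis of $H^*(\cC)$.

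For step (i), I note that $a_i\in Z_*(\cA)=\ker\partial^{\cA}$ and $b_j\in Z_*(\cB)=\ker\partial^{\cB}$ give $\partial^{\cC}(a_i\otimes b_j)=(\partial^{\cA}a_i)\otimes b_j + a_i\otimes(\partial^{\cB}b_j)=0$ using $\partial^{\cC}=\partial^{\cA}\otimes I+I\otimes\partial^{\cB}$ (recall we are in characteristic $2$, so no signs); symmetrically $a^i\otimes b^j$ is a cocycle since $\delta^{\cC}=\delta^{\cA}\otimes I+I\otimes\delta^{\cB}$ and $a^i,b^j$ are cocycles. So $a_i\otimes b_j\in Z_*(\cC)$ and $a^i\otimes b^j\in Z^*(\cC)$, and all the classes above are well-defined.

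For step (iii), which I expect to be the only place requiring a small computation, I use the fact that the standard basis of $C=A\otimes B$ is the set of tensor products of the standard bases of $A$ and $B$, so the bilinear form on $C$ factors as a product: $(x\otimes y)\cdot(x'\otimes y')=(x\cdot x')(y\cdot y')$. Hence
\[
  \langle a^i\otimes b^j + B^*(\cC),\; a_\ell\otimes b_m + B_*(\cC)\rangle = (a^i\otimes b^j)\cdot(a_\ell\otimes b_m) = (a^i\cdot a_\ell)(b^j\cdot b_m) = \1_{i=\ell}\cdot\1_{j=m} = \1_{(i,j)=(\ell,m)},
\]
using the duality hypotheses on the $\cA$- and $\cB$-bases. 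By Lemma~\ref{lem:hombasis}, this pairing on $H^*(\cC)\times H_*(\cC)$ is well-defined (independent of coset representatives) and nondegenerate, so verifying it on these particular representatives establishes~(\ref{eq:dualbases}) for $\cC$, which is exactly the definition of dual cohomology/homology bases.

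**Main obstacle.** There is essentially no deep obstacle: the content is entirely packaged in the K\"unneth formula (Proposition~\ref{prop:sskunneth}) and the multiplicativity of the Hamming/bilinear form under tensoring of bases. The one point demanding a little care is bookkeeping: one must check that the K\"unneth isomorphism is compatible on \emph{both} the homology side and the cohomology side — i.e., that applying Proposition~\ref{prop:sskunneth} to $\cC^* = \cA^*\otimes\cB^*$ (the single-sector complex of the transposed maps) is legitimate and yields the cohomology statement — and that the characteristic-$2$ hypothesis in Definition~\ref{def:sshomprod} is what lets us ignore signs when checking $a_i\otimes b_j$ is a cycle. Neither is hard, but both should be stated explicitly.
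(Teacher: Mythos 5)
Your proof is correct and matches the paper's intent: the corollary is stated there as a direct consequence of the single-sector K\"unneth formula (Proposition~\ref{prop:sskunneth}) together with the nondegenerate pairing of Lemma~\ref{lem:hombasis}, which is exactly the route you take (cycle check, K\"unneth for basis, multiplicativity of the bilinear form for duality). No gaps; the two care points you flag (applying K\"unneth to the cochain product and the characteristic-$2$ sign issue) are handled correctly.
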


\begin{corollary}
  \label{cor:ssprodcycles}
  For single-sector chain complexes $\cA_1,\dots,\cA_t$,
  \begin{equation*}
    Z_*(\cA_1\otimes\cdots\otimes\cA_t) = Z_*(\cA_1)\otimes\cdots\otimes Z_*(\cA_t) + B_*(\cA_1\otimes\cdots\otimes\cA_t).
  \end{equation*}
\end{corollary}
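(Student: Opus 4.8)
\textbf{Proof proposal for Corollary~\ref{cor:ssprodcycles}.}

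The plan is to reduce the general-$t$ statement to the two-factor case via induction, and then to establish the two-factor case $Z_*(\cA\otimes\cB) = Z_*(\cA)\otimes Z_*(\cB) + B_*(\cA\otimes\cB)$ directly from the K\"unneth formula (Proposition~\ref{prop:sskunneth}) together with the observation recorded in Corollary~\ref{cor:ssprodbases}. First I would handle the inclusion $\supseteq$, which is routine: $Z_*(\cA)\otimes Z_*(\cB)\subseteq Z_*(\cA\otimes\cB)$ since $\partial^{\cA\otimes\cB}(a\otimes b)=\partial^{\cA}a\otimes b+a\otimes\partial^{\cB}b=0$ for $a,b$ cycles, and $B_*(\cA\otimes\cB)\subseteq Z_*(\cA\otimes\cB)$ since $(\partial^{\cA\otimes\cB})^2=0$; as $Z_*(\cA\otimes\cB)$ is a subspace, it contains the sum.

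For the reverse inclusion $\subseteq$, take $c\in Z_*(\cA\otimes\cB)$. The K\"unneth isomorphism of Proposition~\ref{prop:sskunneth} says $H_*(\cA\otimes\cB)\cong H_*(\cA)\otimes H_*(\cB)$, realized on representatives by $a\otimes b + B_*(\cA\otimes\cB)\mapsfrom (a+B_*(\cA))\otimes(b+B_*(\cB))$. Concretely, fixing homology bases $\{a_i+B_*(\cA)\}_{i\in[k^{\cA}]}$ of $H_*(\cA)$ and $\{b_j+B_*(\cB)\}_{j\in[k^{\cB}]}$ of $H_*(\cB)$ with $a_i\in Z_*(\cA)$, $b_j\in Z_*(\cB)$, Corollary~\ref{cor:ssprodbases} (or directly the K\"unneth statement) gives that $\{a_i\otimes b_j + B_*(\cA\otimes\cB)\}_{i,j}$ is a basis of $H_*(\cA\otimes\cB)$. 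Hence there are scalars $\lambda_{ij}\in\bF_q$ with $c+B_*(\cA\otimes\cB) = \sum_{i,j}\lambda_{ij}\,(a_i\otimes b_j) + B_*(\cA\otimes\cB)$, i.e. $c - \sum_{i,j}\lambda_{ij}\,a_i\otimes b_j \in B_*(\cA\otimes\cB)$. Since $\sum_{i,j}\lambda_{ij}\,a_i\otimes b_j \in Z_*(\cA)\otimes Z_*(\cB)$, we conclude $c\in Z_*(\cA)\otimes Z_*(\cB) + B_*(\cA\otimes\cB)$, completing the two-factor case.

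For the induction on $t$, suppose the claim holds for $t-1$ factors. Writing $\cA' = \cA_1\otimes\cdots\otimes\cA_{t-1}$, the two-factor case applied to $\cA'\otimes\cA_t$ gives $Z_*(\cA_1\otimes\cdots\otimes\cA_t) = Z_*(\cA')\otimes Z_*(\cA_t) + B_*(\cA_1\otimes\cdots\otimes\cA_t)$. By the inductive hypothesis $Z_*(\cA') = Z_*(\cA_1)\otimes\cdots\otimes Z_*(\cA_{t-1}) + B_*(\cA')$, so $Z_*(\cA')\otimes Z_*(\cA_t)$ is contained in $Z_*(\cA_1)\otimes\cdots\otimes Z_*(\cA_t) + B_*(\cA')\otimes Z_*(\cA_t)$. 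The remaining point to check is that $B_*(\cA')\otimes Z_*(\cA_t)\subseteq B_*(\cA_1\otimes\cdots\otimes\cA_t)$: if $a' = \partial^{\cA'} x$ and $z\in Z_*(\cA_t)$, then $\partial^{\cA_1\otimes\cdots\otimes\cA_t}(x\otimes z) = \partial^{\cA'}x\otimes z + x\otimes\partial^{\cA_t}z = a'\otimes z$ (using characteristic $2$ so signs vanish), as desired. The reverse inclusion is immediate as before. The main (mild) obstacle is just bookkeeping: making sure the associativity of the single-sector product and the compatibility of the various K\"unneth isomorphisms are handled cleanly, and that the characteristic-$2$ hypothesis is invoked wherever a sign would otherwise appear; no deep difficulty arises.
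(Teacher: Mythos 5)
Your argument is correct and is essentially the route the paper intends: the corollary is stated without an explicit proof as a direct consequence of the single-sector K\"unneth formula (Proposition~\ref{prop:sskunneth}), and your two-factor argument plus induction on $t$ (using that $B_*(\cA')\otimes Z_*(\cA_t)\subseteq B_*(\cA_1\otimes\cdots\otimes\cA_t)$ and associativity of the single-sector product) fills in exactly that derivation.
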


Corollary~\ref{cor:ssprodbases} and Corollary~\ref{cor:ssprodcycles} have multi-sector analogues as well, though in this paper we only need the single-sector versions of these statements.

% TODO: describe known bounds on distance of homological product (here as well as multi-sector case)

\subsection{Product-Expansion}
\label{sec:pe}
In this section, we review the notion of product-expansion; we generally follow the exposition in \cite{kalachev_two-sided_2023}, which introduced the general high-dimensional formulation of this notion; the two-dimensional version was used to construct asymptotically good qLDPC codes in \cite{panteleev_asymptotically_2022,leverrier_quantum_2022-1,dinur_good_2023}. Product-expansion is a property of a finite set of classical codes, which generalizes ordinary code distance to such larger collections of codes. At a high level, it describes the extent to which elements of the dual tensor (see Definition~\ref{def:classtensor}) of these codes have natural low-weight decompositions.

To define product-expansion, we will need the following notation:

\begin{definition}
  \label{def:dtnot}
  Fix some $t\in\bN$ and some classical codes $C_i\subseteq\bF_q^{n_i}$ for $i\in[t]$. Then for every $i\in[t]$, define $C^{(i)}\subseteq\bigotimes_{j\in[t]}\bF_q^{n_j}=\bF_q^{\prod_{j\in[t]}n_j}$ by
  \begin{equation*}
    C^{(i)}=\left(\bigotimes_{j=1}^{i-1}\bF_q^{n_j}\right)\otimes C_i\otimes\left(\bigotimes_{j=i+1}^{t}\bF_q^{n_j}\right),
  \end{equation*}
  and for every $i,j\in[t]$, define
  \begin{equation*}
    C^{(i,j)} := C^{(i)}\cap C^{(j)}.
  \end{equation*}
  When $t$ is not clear from context, we write $C^{(i;t)}=C^{(i)}$ and $C^{(i,j;t)}=C^{(i,j)}$.
\end{definition}

In words, $C^{(i)}$ is the space of $t$-dimensional tensors for which every direction-$i$ column (in which the $i$th coordinate varies while the other $t-1$ coordinates are fixed) lies in $C_i$. Similarly, for $i\neq j$, then $C^{(i,j)}$ is the space of $t$-dimensional tensors for which every direction-$(i,j)$ plane (in which the $i$th and $j$th coordinates vary while the other $t-2$ coordinates are fixed) lies in $C_i\otimes C_j$.

Definition~\ref{def:classtensor} and Definition~\ref{def:dtnot} immediately imply that
\begin{equation}
  \label{eq:boxplus}
  C_1\boxplus\cdots\boxplus C_t = C^{(1)}+\cdots+C^{(t)}.
\end{equation}
Product-expansion measures the extent to which the sum on the right hand side of~(\ref{eq:boxplus}) can cancel to yield a low-weight element of $C_1\boxplus\cdots\boxplus C_t$ from a sum of high-weight elements of $C^{(1)},\dots,C^{(t)}$. Of course, some cancellations can occur simply because $C^{(i,j)}=C^{(i)}\cap C^{(j)}$ is nonzero for all $i,j$, assuming that $C_i,C_j$ are nonzero. Product-expansion is therefore defined to measure the extent to which ``non-trivial'' cancellations can occur in~(\ref{eq:boxplus}), which are not captured by the spaces $C^{(i,j)}$, as formalized below.

% For instance, all elements of $C_0\otimes C_1\otimes\bigotimes_{j=2}^{t-1}\bF_q^{n_j}$ lie in both $C^{(0)}$ and $C^{(1)}$. To account for such cancellations, for $0\leq i<j<t$, we let
% \begin{equation*}
%   C^{(i,j)} = \left(\bigotimes_{k=0}^{i-1}\bF_q^{n_k}\right)\otimes C_i\otimes\left(\bigotimes_{k=i+1}^{j-1}\bF_q^{n_k}\right)\otimes C_j\otimes\left(\bigotimes_{k=j+1}^{t-1}\bF_q^{n_k}\right)
% \end{equation*}
% be the space of $t$-dimensional tensors for which every direction-$(i,j)$ plane (in which the $i$th and $j$th coordinates vary while the other $t-2$ coordinates are fixed) lies in $C_i\otimes C_j$. Observe that
% \begin{equation*}
%   C^{(i,j)} = C^{(i)}\cap C^{(j)}.
% \end{equation*}

We will need the following modified notion of Hamming weight for elements of $C^{(i)}$.

\begin{definition}
  \label{def:dirham}
  Defining $C^{(i)}$ as in Definition~\ref{def:dtnot}, then for $c\in C^{(i)}$, we let $|c|_i$ denote the number of nonzero direction-$i$ columns in $c$. Formally,
  \begin{equation*}
    |c|_i = \left|\left\{(k_1,\dots,k_{i-1},k_{i+1},\dots,k_t)\in\prod_{j\in[t]\setminus\{i\}}[n_j]:\exists k_i\in[n_i]\text{ with }c_{(k_1,\dots,k_t)}\neq 0\right\}\right|.
    % |c|_i = \left|\left\{(k_j)_{j\in[t]\setminus\{i\}}\in\prod_{j\in[t]\setminus\{i\}}[n_j]:\exists k_i\in[n_i]\text{ with }c_{(k_j)_{j\in[t]}}\neq 0\right\}\right|.
  \end{equation*}
\end{definition}

We are now ready to define product-expansion.

\begin{definition}[\cite{kalachev_two-sided_2023}]
  \label{def:pe}
  Fix some $t\in\bN$ and some prime power $q$. The \textbf{product-expansion $\rho$} of a collection $(C_i\subseteq\bF_q^{n_i})_{i\in[t]}$ of classical codes is the largest real number $\rho\geq 0$ such that for every $c\in C_1\boxplus\cdots\boxplus C_t$, there exists a decomposition
  \begin{equation*}
    c=c_1+\cdots+c_t
  \end{equation*}
  with each $c_i\in C^{(i)}$ such that
  \begin{equation*}
    |c| \geq \rho\sum_{i\in[t]}n_i|c_i|_i.
  \end{equation*}
\end{definition}

For some basic intuition, observe that when $t=1$, a single code $C\subseteq\bF_q^n$ of distance $d$ has product-expansion $\rho=d/n$ equal to the relative distance of $C$.

Sometimes we have a given decomposition $c=c_1+\cdots+c_t$ with each $c_i\in C^{(i)}$, and we wish to transform it into a low-weight decomposition $c'=c_1'+\cdots+c_t'$ guaranteed to exist by product-expansion. The following lemma shows that any two decompositions of $c$ differ by sums of elements of the $C^{(i,j)}$'s.

\begin{lemma}[Follows from \cite{kalachev_two-sided_2023}]
  \label{lem:homvanexp}
  Let $c\in C_1\boxplus\cdots\boxplus C_t$. For any two decompositions
  \begin{equation*}
    c=c_1+\cdots+c_t=c_1'+\cdots+c_t'
  \end{equation*}
  with $c_i,c_i'\in C^{(i)}$ for all $i\in[t]$, then there exist choices of $c_{i,j}\in C^{(i,j)}$ for $1\leq i<j\leq t$ such that
  \begin{equation}
    \label{eq:homvanexp}
    c_i-c_i' = \sum_{j=1}^{i-1}c_{j,i}-\sum_{j=i+1}^tc_{i,j}
  \end{equation}
  for all $i\in[t]$.
\end{lemma}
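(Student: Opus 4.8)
\textbf{Proof plan for Lemma~\ref{lem:homvanexp}.}
The statement is purely linear-algebraic: two $C^{(i)}$-decompositions of the same tensor $c$ must differ by a system of ``corrections'' drawn from the pairwise intersections $C^{(i,j)}=C^{(i)}\cap C^{(j)}$. The plan is to induct on $t$. Set $d_i=c_i-c_i'\in C^{(i)}$, so that $\sum_{i\in[t]}d_i=0$; the goal is to produce $c_{i,j}\in C^{(i,j)}$ with $d_i=\sum_{j<i}c_{j,i}-\sum_{j>i}c_{i,j}$.

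\textbf{Base case and inductive step.}
For $t=1$ the hypothesis forces $d_1=0$, and the claimed sum on the right is empty, so the statement is vacuous. For the inductive step, the key observation is that
\begin{equation*}
  d_t = -\sum_{i=1}^{t-1} d_i \in C^{(t)} \cap \bigl(C^{(1)}+\cdots+C^{(t-1)}\bigr).
\end{equation*}
The main structural fact I need is that this intersection decomposes as $\sum_{i=1}^{t-1}\bigl(C^{(i)}\cap C^{(t)}\bigr)=\sum_{i=1}^{t-1}C^{(i,t)}$. This is where the ``product structure'' of the tensor spaces enters: unlike for arbitrary subspaces, for the $C^{(i)}$'s built from a tensor product of ambient $\bF_q^{n_j}$'s, an element whose every direction-$t$ column lies in $C_t$ and which is also a sum of tensors constrained in directions $1,\dots,t-1$ can be rewritten so that each summand is simultaneously constrained in direction $t$. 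Concretely one can argue this coordinate-wise in the direction-$t$ slicing: projecting onto the $C_t$-constrained part in direction $t$ commutes with the direction-$i$ constraints, so applying such a projection to a decomposition $d_t=\sum_{i<t}e_i$ with $e_i\in C^{(i)}$ yields a decomposition into $C^{(i,t)}$'s. (This is the fact that Definition~\ref{def:dtnot} and the identity~(\ref{eq:boxplus}) are set up to support, and it is exactly the content attributed to \cite{kalachev_two-sided_2023}.) Given this, write $d_t=\sum_{i=1}^{t-1}c_{i,t}$ with $c_{i,t}\in C^{(i,t)}$, which handles the $i=t$ equation.

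\textbf{Reducing to $t-1$.}
Now replace $d_i$ by $\tilde d_i := d_i + c_{i,t}$ for $i<t$. Each $\tilde d_i$ still lies in $C^{(i)}$ (since $c_{i,t}\in C^{(i,t)}\subseteq C^{(i)}$), and $\sum_{i=1}^{t-1}\tilde d_i = \sum_{i=1}^{t-1}d_i + \sum_{i=1}^{t-1}c_{i,t} = -d_t + d_t = 0$. So $(\tilde d_1,\dots,\tilde d_{t-1})$ is a null sum of the same form with $t-1$ terms — and moreover, viewed inside the ambient space with the last coordinate direction collapsed, these are honest $C^{(i)}$-type elements for the $(t-1)$-fold collection, so the inductive hypothesis applies. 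It gives $c_{i,j}\in C^{(i,j)}$ for $1\le i<j\le t-1$ with $\tilde d_i=\sum_{j<i}c_{j,i}-\sum_{j>i,\,j\le t-1}c_{i,j}$. Substituting $d_i=\tilde d_i - c_{i,t}$ recovers exactly~(\ref{eq:homvanexp}) for all $i\in[t]$, completing the induction.

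\textbf{Main obstacle.}
The only non-formal step is the decomposition $C^{(t)}\cap\sum_{i<t}C^{(i)} = \sum_{i<t}C^{(i,t)}$; everything else is bookkeeping. One must be slightly careful that this is genuinely true for the tensor-structured subspaces in question (it would be false for generic subspaces), and I would prove it by the commuting-projections argument sketched above, or alternatively by reducing to the two-factor dual-tensor identity $C_1\otimes\bF^{n_2} \cap (C_1\otimes\bF^{n_2}+\bF^{n_1}\otimes C_2 + \cdots)$ handled slice-by-slice. Since the lemma is only invoked as ``follows from \cite{kalachev_two-sided_2023},'' I expect the paper to cite that source for precisely this intersection fact and keep the induction short.
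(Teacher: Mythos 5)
Your proof is correct, but it takes a genuinely different route from the paper. The paper follows the framework of \cite{kalachev_two-sided_2023}: it encodes each $C_i$ as a 2-term cochain complex $\bF_q^{k_i}\xrightarrow{G_i}\bF_q^{n_i}$ via a generator matrix, forms the homological product, invokes the K\"unneth formula to conclude $H^{t-1}=0$, lifts the differences $c_i-c_i'$ through the $G_i$'s to a cocycle at level $t-1$, writes it as a coboundary $\delta_{t-2}b$, and reads off the $c_{i,j}$ (with sign bookkeeping) from the components of $b$. You instead induct on $t$, with the single nontrivial ingredient being the intersection identity $C^{(t)}\cap\bigl(C^{(1)}+\cdots+C^{(t-1)}\bigr)=\sum_{i<t}C^{(i,t)}$; your commuting-projections proof of it is valid and essentially one line: pick an idempotent $P$ on $\bF_q^{n_t}$ with image $C_t$, note $I^{\otimes t-1}\otimes P$ fixes every element of $C^{(t)}$ and maps each $C^{(i)}$ into $C^{(i,t)}$, and apply it to any decomposition $d_t=\sum_{i<t}e_i$. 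The remaining change of variables $\tilde d_i=d_i+c_{i,t}$ and the sign bookkeeping check out, recovering exactly~(\ref{eq:homvanexp}). The one step you wave at --- applying the inductive hypothesis to the $\tilde d_i$, which are still $t$-dimensional tensors --- does need either a slice-by-slice application in direction $t$ (apply the $(t-1)$-code statement to each direction-$t$ slice and reassemble, which lands in $C^{(i,j)}$ since $i,j<t$) or a folding of the last factor into the $(t-1)$st with code $C_{t-1}\otimes\bF_q^{n_t}$; both work, so this is only a presentational gap. What each approach buys: yours is elementary and self-contained, avoiding generator matrices, the K\"unneth formula, and sign conventions; the paper's version sits inside the chain-complex framework used throughout (making the ``follows from \cite{kalachev_two-sided_2023}'' attribution transparent) and, since K\"unneth kills all intermediate cohomology at once, it yields the analogous statements at every level of the product complex, not just the decomposition uniqueness used here.
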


Lemma~\ref{lem:homvanexp} essentially follows from the discussion in Appendix~B of \cite{kalachev_two-sided_2023}, though for completeness we provide a proof in Appendix~\ref{sec:peproofs}.

The following lemma shows that passing to subcodes preserves product-expansion, up to some loss.

\begin{lemma}
  \label{lem:pesubmain}
  Let $\bF_q$ be a field of characteristic $2$. If a collection $(C_i\subsetneq\bF_q^n)_{i\in[t]}$ of classical codes has product-expansion at least $\rho>0$, then there exists some $\rho'=\rho'(\rho,t)>0$ depending only on $\rho$ and $t$ such that every collection $(C_i'\subsetneq\bF_q^n)_{i\in[t]}$ of codes with each $C_i'\subseteq C_i$ has product-expansion at least $\rho'$.
\end{lemma}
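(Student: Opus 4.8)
\textbf{Proof plan for Lemma~\ref{lem:pesubmain}.}

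The plan is to show that product-expansion of the subcodes $(C_i')_{i\in[t]}$ follows from product-expansion of $(C_i)_{i\in[t]}$ together with a \emph{uniform} bound on how much weight one can lose when ``cleaning up'' a decomposition in $C^{(1)}+\cdots+C^{(t)}$ into a decomposition living in $C'^{(1)}+\cdots+C'^{(t)}$. The starting point is this: given $c\in C_1'\boxplus\cdots\boxplus C_t'$, we certainly have $c\in C_1\boxplus\cdots\boxplus C_t$, so Definition~\ref{def:pe} gives a decomposition $c=c_1+\cdots+c_t$ with $c_i\in C^{(i)}$ and $|c|\geq\rho\sum_i n|c_i|_i$. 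However, these $c_i$ need not lie in the subcode spaces $C'^{(i)}$. On the other hand, $c$ has \emph{some} decomposition $c=c_1'+\cdots+c_t'$ with $c_i'\in C'^{(i)}\subseteq C^{(i)}$ by hypothesis, and by Lemma~\ref{lem:homvanexp} the two decompositions differ by elements $c_{i,j}\in C^{(i,j)}=C^{(i)}\cap C^{(j)}$. The issue is that these $c_{i,j}$, and hence the valid decomposition $(c_i')$, could have much larger direction-$i$ column-weight than the near-optimal $(c_i)$.

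To control this, I would argue that the relevant quantity ``$\inf$ over valid subcode-decompositions of $\sum_i n|c_i'|_i$, divided by $|c|$'' is the reciprocal of a well-defined, strictly positive constant depending only on the ambient data — and crucially only on $\rho$ and $t$, not on the specific codes. The cleanest route is a compactness/normalization argument: the product-expansion constant of any fixed collection is a minimum of a ratio over the finite-dimensional space $C_1\boxplus\cdots\boxplus C_t$, and it is positive as soon as each $C_i\neq 0$; here each $C_i$ contains the nonzero subcode $C_i'$. More usefully, I would try to show the quantitative statement directly: if $(C_i)$ has product-expansion $\geq\rho$, then for \emph{any} subspaces $C_i'\subseteq C_i$ and any $c$ with a decomposition into the $C'^{(i)}$, there is such a decomposition with $\sum_i n|c_i'|_i \leq f(\rho,t)\cdot|c|$ for an explicit $f$. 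One obtains this by taking the near-optimal $(c_i)$ from product-expansion of $(C_i)$, correcting it to a $(c_i')$ using Lemma~\ref{lem:homvanexp}, and bounding $|c_{i,j}|_i$: since $c_{i,j}\in C^{(i)}\cap C^{(j)}$, its direction-$i$ columns lie in $C_i$, so its column-weight is controlled, and since $q$ has characteristic $2$ the bookkeeping of signs in~(\ref{eq:homvanexp}) disappears. The characteristic-$2$ hypothesis is exactly why the lemma is stated over such fields — it lets us add decompositions freely.

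The main obstacle — and where I would spend the most care — is getting a bound on the correction terms $c_{i,j}$ that is \emph{independent of the codes} and depends only on $\rho$ and $t$. A priori the $c_{i,j}$ guaranteed by Lemma~\ref{lem:homvanexp} are just \emph{some} elements of $C^{(i,j)}$ with no weight guarantee. The fix is to re-derive the correction more carefully: rather than comparing $(c_i)$ to an arbitrary $(c_i')$, I would build $(c_i')$ explicitly from $(c_i)$ by ``rounding each $c_i$ into $C_i'$ column by column.'' For each direction-$i$ column of $c_i$, either it already lies in $C_i'$ (keep it), or it does not, in which case we move that column's contribution into one of the other summands. Because $c=\sum_i c_i$ does lie in $C_1'\boxplus\cdots\boxplus C_t'$ — i.e. it has \emph{some} decomposition respecting the subcodes — a counting argument over the at most $\prod_{j\neq i}n$ columns in each direction shows the number of ``bad'' columns we must redistribute in direction $i$ is at most $\sum_i|c_i|_i$, and redistributing a bad column into another summand increases that summand's column-count by at most $n$ (a whole hyperplane of columns). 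Iterating this over the $t$ directions multiplies the bound by at most a function of $t$. This yields $\sum_i n|c_i'|_i \leq (\text{const}(t))\cdot n\sum_i n|c_i|_i \leq (\text{const}(t)/\rho)\cdot n\cdot|c|$, and absorbing the extra factor of $n$ by noting product-expansion is scale-free in $n$ (one should track this normalization carefully), we get product-expansion $\rho'\geq c(t)\cdot\rho$ for the subcodes. I would then double-check the $n$-normalization and the base case $t=1$ (where the claim is just that a subcode of a code with relative distance $\geq\rho$ has relative distance $\geq\rho$, trivially true, so $\rho'=\rho$ there) as sanity checks.
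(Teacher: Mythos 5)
Your high-level starting point matches the paper's (decompose $c$ using the product-expansion of the larger codes, then correct the decomposition so that it respects the subcodes), but both of the steps you propose for carrying out the correction have genuine gaps. First, the ``rounding'' operation is not well defined: if a direction-$i$ column of $c_i$ lies in $C_i\setminus C_i'$, you cannot simply ``move that column's contribution into one of the other summands'' --- adding that column to some $c_j$ with $j\neq i$ perturbs many direction-$j$ columns of $c_j$ in a single coordinate each, and the result will in general no longer lie in $C^{(j)}$, let alone $C'^{(j)}$. A legitimate modification of a decomposition must differ from it by elements of $C^{(i,j)}=C^{(i)}\cap C^{(j)}$ (this is exactly Lemma~\ref{lem:homvanexp}), and those correction terms come with no weight guarantee; producing low-weight ones is the actual content of the lemma, not something column-by-column bookkeeping provides.

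Second, even granting some redistribution scheme, the factor of $n$ you accumulate is fatal and cannot be ``absorbed by scale-freeness'': Definition~\ref{def:pe} already weights $|c_i|_i$ by $n_i$, so a bound of the form $\sum_i n|c_i'|_i\leq (c(t)/\rho)\, n\, |c|$ only yields $\rho'\gtrsim \rho/(c(t)\,n)$, which tends to $0$ with $n$ and is not a constant depending on $\rho$ and $t$ alone. This factor of $n$ is precisely the real difficulty. In the paper's proof (Lemma~\ref{lem:pesubcode}), after writing $c_t=c_t'+\sum_{j\in[\Delta]}b_j\otimes v_j$ with $v_1,\dots,v_\Delta$ a basis of $C_t/C_t'$, one must decompose up to $\Delta=\Theta(n)$ dual-tensor codewords $b_j$, and handling them one at a time would lose exactly such a factor of $n$. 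The paper circumvents this by invoking the \emph{collective} product-expansion of the subcollection $(C_1,\dots,C_{t-1})$ (Lemma~\ref{lem:petocpe}, imported from \cite{dinur_expansion_2024}), which decomposes all the $b_j$ simultaneously with a union-of-supports weight bound, and then iterates over the $t$ coordinates. Your proposal contains no substitute for this ingredient --- the compactness remark only gives positivity with a constant depending on the specific codes, not on $\rho$ and $t$ alone --- so the argument as sketched does not close.
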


As mentioned above, product-expansion can be viewed as a high-dimensional generalization of classical code distance, which can only increase when passing to subcodes. In this lens, Lemma~\ref{lem:pesubmain} may not be too surprising. Nevertheless, our proof of Lemma~\ref{lem:pesubmain}, provided in Appendix~\ref{sec:peproofs}, is not entirely straightforward.

\begin{remark}
  \label{remark:dlvrestrict}
  In Lemma~\ref{lem:pesubmain}, we assume that $\bF_q$ has characteristic $2$ and that all codes $C_i$ have the same length $n$ simply because our proof (see Appendix~\ref{sec:peproofs}) relies on a result of \cite{dinur_expansion_2024} (see Lemma~\ref{lem:petocpe}), where a proof was only given with these restrictions. We believe all the arguments should generalize naturally to arbitrary $\bF_q$ and to codes $C_i$ of different lengths $n_i$. However, we will not need such a generalization for the purpose of our paper, so for simplicity we work in the same restricted setting as \cite{dinur_expansion_2024} here.
\end{remark}

The following results on the product-expansion of two random or two Reed-Solomon codes were previously known:

\begin{theorem}[\cite{kalachev_two-sided_2023,dinur_good_2023}]
  \label{thm:pe2rand}
  For every $\epsilon>0$, there exists a real number $\rho=\rho(\epsilon)>0$ such that the following holds: For every $n\in\bN$ and every $k_1,k_2\leq(1-\epsilon)n$, a uniformly random pair of codes $C_1,C_2\subseteq\bF_q^n$ of respective dimensions $k_1,k_2$ has product-expansion $\geq\rho$ with probability approaching $1$ as $n\rightarrow\infty$.
  % For every $\epsilon>0$, there exists a real number $\rho=\rho(\epsilon)>0$ and a function $p:\bN\rightarrow[0,1]$, $\lim_{n\rightarrow\infty}p(n)=1$, such that the following holds: For every $n\in\bN$ and every $k_1,k_2\leq(1-\epsilon)n$, a uniformly random pair of codes $C_1,C_2\subseteq\bF_q^n$ with dimensions $k_1,k_2$ respectively has product-expansion $\geq\rho$ with probability $\geq p(n)$.
\end{theorem}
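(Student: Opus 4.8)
\textbf{Proof proposal for Theorem~\ref{thm:pe2rand}.}

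The plan is to reduce the statement about product-expansion to a more hands-on statement about the weight distribution of the dual tensor code $C_1 \boxplus C_2$, then establish that statement via a union bound over the randomness of $C_1$ and $C_2$. Recall that for $t=2$, a collection $(C_1,C_2)$ has product-expansion $\geq\rho$ if and only if every $c \in C_1\boxplus C_2 = C_1\otimes\bF_q^n + \bF_q^n\otimes C_2$ admits a decomposition $c = c_1+c_2$ with $c_1\in C^{(1)}$, $c_2\in C^{(2)}$ and $|c| \geq \rho(n|c_1|_1 + n|c_2|_2)$, where $|c_1|_1$ counts nonzero columns and $|c_2|_2$ counts nonzero rows. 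Since $C^{(1,2)} = C_1\otimes C_2$, and any two decompositions of $c$ differ by an element of $C_1\otimes C_2$ (Lemma~\ref{lem:homvanexp}), it suffices to start from \emph{any} decomposition $c = c_1+c_2$ and show we can subtract off an element of $C_1\otimes C_2$ to reduce the column support of $c_1$ and the row support of $c_2$ to a controlled size: concretely, it is enough to show that for every $c\in C_1\boxplus C_2$ there is a decomposition with $n\,|c_1|_1 + n\,|c_2|_2 \leq |c|/\rho$.

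First I would set up the combinatorial reformulation. Fix a target nonzero $c$; write $T_1\subseteq[n]$ for the set of columns of $c_1$ that cannot be killed and $T_2\subseteq[n]$ likewise for rows of $c_2$ in an optimal (minimum $n|c_1|_1+n|c_2|_2$) decomposition. After removing all columns in $[n]\setminus T_1$ from $c_1$ (absorbing the difference, which lies in $\bF_q^{[n]\setminus T_1}\otimes C_2$, into $c_2$) and symmetrically for rows, we may assume $c_1$ is supported on columns $T_1$ and $c_2$ on rows $T_2$. The key structural fact is that minimality forces each nonzero column of $c_1$ (as a vector in $C_1$) to be ``genuinely needed,'' which translates, via the distance of $C_1$ and $C_2$, into a lower bound on $|c|$ in terms of $|T_1|$ and $|T_2|$. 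This is exactly the kind of counting argument that appears in \cite{kalachev_two-sided_2023,dinur_good_2023}; I would essentially import their lemma that product-expansion of $(C_1,C_2)$ is controlled by the event that no ``small'' bipartite-structured configuration $(S_1, S_2, x)$ with $S_i\subseteq[n]$, $|S_i|\leq\gamma n$, and $x\in (C_1\otimes\bF_q^{S_2} + \bF_q^{S_1}\otimes C_2)$ has unexpectedly small weight $|x| < \rho\max\{|S_1|n, |S_2|n\}$ while still being ``irreducible'' (not decomposable over proper subsets of $S_1, S_2$).

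Second, I would run the probabilistic estimate. For fixed $S_1, S_2$ with $|S_1|=s_1$, $|S_2|=s_2$, the space $C_1\otimes\bF_q^{S_2} + \bF_q^{S_1}\otimes C_2$ restricted to the rows/columns it can be nonzero on has dimension at most $k_1 s_2 + k_2 s_1$, and for a random pair of codes its low-weight codewords are rare: the expected number of codewords of weight $< \rho\cdot n\max\{s_1,s_2\}$ is at most $q^{k_1 s_2 + k_2 s_1} \cdot \binom{n(s_1+s_2)}{\leq \rho n \max\{s_1,s_2\}} q^{\rho n\max\{s_1,s_2\}}$ divided by $q^{n\max\{s_1,s_2\}}$ (the last factor coming from the codimension of the constraint that the codeword lie in the dual tensor, using that a random linear code of dimension $k$ has each nonzero vector lying in it with probability $q^{k-n}$ — here applied to the relevant restricted ambient space). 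Using $k_1, k_2 \leq (1-\epsilon)n$, the dominant exponent is roughly $(1-\epsilon)n(s_1+s_2) - n\max\{s_1,s_2\} + (\text{entropy terms in }\rho)$, which is negative once $\rho = \rho(\epsilon)$ is chosen small enough, since $(1-\epsilon)(s_1+s_2) \leq (1-\epsilon)\cdot 2\max\{s_1,s_2\}$ beats $\max\{s_1,s_2\}$ with room to spare when $\epsilon$ is not too small — the borderline $s_1 = s_2$ case is where the $(1-\epsilon)$ factor is essential. Summing over all $\binom{n}{s_1}\binom{n}{s_2}$ choices of supports and all $s_1+s_2 \geq 1$ keeps the total expected count $o(1)$, so by Markov the bad event has vanishing probability.

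The main obstacle I expect is the bookkeeping in the reduction from ``product-expansion $\geq\rho$ for all $c$'' to ``no small bad configuration'': one has to argue carefully that an optimal decomposition of an arbitrary $c$ localizes onto a small support pair $(S_1,S_2)$ and is irreducible there, so that a bound on bad configurations of bounded relative size actually implies the full product-expansion bound (including handling $c$ whose natural support is large by peeling off the large part, which contributes proportionally to $|c|$). This localization/irreducibility argument is precisely where \cite{kalachev_two-sided_2023} does its real work, and I would follow their Appendix~B analysis; once it is in place, the union bound above is routine. A secondary technical point is getting the probability estimate right for a \emph{random} code rather than a uniformly random \emph{generator matrix} — but for the dimension ranges here these differ negligibly, and one can condition on the (probability $1-o(1)$) event that the random codes have the nominal dimension.
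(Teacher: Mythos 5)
First, a framing point: the paper does not prove Theorem~\ref{thm:pe2rand} at all --- it is imported wholesale from \cite{kalachev_two-sided_2023,dinur_good_2023}. Your proposal likewise defers its combinatorial core (that an optimal decomposition ``localizes'' onto a small, irreducible support pair and that this reduction suffices) to those same references, and moreover points to Appendix~B of \cite{kalachev_two-sided_2023}, which is the homological/K\"{u}nneth material behind Lemma~\ref{lem:homvanexp}, not the probabilistic analysis of random codes. So, judged as a standalone proof, what remains is your union bound, and that is where the genuine gap lies.

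The first-moment estimate does not close. Even granting your bookkeeping, the exponent you write, $(1-\epsilon)n(s_1+s_2)-n\max\{s_1,s_2\}$ plus entropy terms in $\rho$, is \emph{positive} in the main balanced regime $s_1\approx s_2$ whenever $\epsilon<1/2$, since $(1-\epsilon)\cdot 2s-s=(1-2\epsilon)s>0$; your own remark that $(1-\epsilon)(s_1+s_2)$ ``beats'' $\max\{s_1,s_2\}$ says precisely that the expected count blows up, while the theorem must hold for every $\epsilon>0$. Two further problems compound this. (i) Modeling membership of a fixed matrix in $C_1\boxplus C_2$ (or its restriction to supports $S_1,S_2$) as an event of probability $q^{-\mathrm{codim}}$ is false: the probability depends on the structure of the matrix. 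A rank-one matrix $uv^\top$ lies in $C_1\boxplus C_2$ iff $u\in C_1$ or $v\in C_2$, so its membership probability is roughly $q^{-(n-k_1)}+q^{-(n-k_2)}$, vastly larger than $q^{-(n-k_1)(n-k_2)}$; and the dangerous low-weight elements are exactly such structured ones, produced by cancellation between $c_1$ and $c_2$, so the ``random subspace'' heuristic fails where it matters. (ii) Without building minimality/irreducibility of the decomposition into the event you count, the event ``some $c$ in the restricted space has weight $<\rho n(s_1+s_2)$'' is not rare for large $s_1,s_2$: for example $u\otimes e_j$ with $u$ a minimum-weight codeword of $C_1$ has weight at most $n\ll\rho n(s_1+s_2)$ once $s_1+s_2>1/\rho$ and lies in every such space with $j\in S_1$, yet is harmless because it admits a one-column decomposition. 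Quantifying only over genuinely minimal decompositions, and exploiting the cancellation structure rather than bare dimension counting, is exactly where \cite{kalachev_two-sided_2023} and \cite{dinur_good_2023} do the real work, and it is substantially more than a routine union bound. The skeleton (use Lemma~\ref{lem:homvanexp}, parametrize by supports, then a probabilistic estimate) is a reasonable instinct, but as written the proposal both outsources the reduction to the papers being cited and has a quantitatively broken probabilistic step.
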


\begin{theorem}[\cite{polishchuk_nearly-linear_1994}]
  \label{thm:pe2RS}
  For every $\epsilon>0$, there exists $\rho=\rho(\epsilon)>0$ such that the following holds: For every prime power $q$ and every\footnote{\cite{polishchuk_nearly-linear_1994} only presented their proof for the case where $k_1=k_2$. However, the proof goes through flawlessly when $k_1\neq k_2$, as was for instance observed in \cite{kalachev_two-sided_2023}.} $k_1,k_2\in\bN$ with $k_1+k_2\leq(1-\epsilon)n$, the codes $(\textrm{RS}(q,k_1),\textrm{RS}(q,k_2))$ have product-expansion $\geq\rho$.
\end{theorem}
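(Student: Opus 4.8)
The plan is to deduce the statement from the classical robust testability of the bivariate Reed--Solomon tensor code --- i.e.\ from the ``bivariate testing lemma'' of Polishchuk and Spielman \cite{polishchuk_nearly-linear_1994} --- via a short translation. Throughout write $n=q$, $C_1=\textrm{RS}(q,k_1)$, $C_2=\textrm{RS}(q,k_2)$, so that (in the notation of Definition~\ref{def:dtnot}) $C^{(1)}=C_1\otimes\bF_q^n$ is the set of $n\times n$ matrices all of whose columns lie in $C_1$, $C^{(2)}=\bF_q^n\otimes C_2$ is the set of those all of whose rows lie in $C_2$, and $C^{(1,2)}=C^{(1)}\cap C^{(2)}=C_1\otimes C_2$ is the genuine tensor code. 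First I would observe that it suffices to produce, for each $c\in C^{(1)}+C^{(2)}$, \emph{some} decomposition $c=c_1+c_2$ with $c_i\in C^{(i)}$ and $|c_1|+|c_2|\le K_\epsilon\,|c|$ for a constant $K_\epsilon$ depending only on $\epsilon$: indeed every nonzero column of a matrix in $C^{(1)}$ is a nonzero codeword of $\textrm{RS}(q,k_1)$ and hence has Hamming weight $>q-k_1\ge\epsilon q$ (using $k_1\le k_1+k_2\le(1-\epsilon)q$), so $q|c_1|_1\le\epsilon^{-1}|c_1|$ and likewise $q|c_2|_2\le\epsilon^{-1}|c_2|$; therefore $|c|\ge K_\epsilon^{-1}(|c_1|+|c_2|)\ge(\epsilon/K_\epsilon)(q|c_1|_1+q|c_2|_2)$, which is exactly Definition~\ref{def:pe} with product-expansion $\rho=\epsilon/K_\epsilon$.

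Next I would obtain such a low-weight decomposition from the soundness of the axis-parallel bivariate low-degree test, namely the inequality
\begin{equation*}
  \dis(v,C^{(1)})+\dis(v,C^{(2)})\ \ge\ \kappa_\epsilon\cdot\dis(v,C^{(1,2)})\qquad\text{for every }v\in\bF_q^{n\times n},
\end{equation*}
with a constant $\kappa_\epsilon>0$ independent of $q$. Granting this, fix any decomposition $c=c_1^0+c_2^0$ with $c_i^0\in C^{(i)}$ (one exists since $c\in C^{(1)}+C^{(2)}$). Then $c_1^0\in C^{(1)}$ gives $\dis(c_1^0,C^{(1)})=0$, and $c_1^0-c=-c_2^0\in C^{(2)}$ gives $\dis(c_1^0,C^{(2)})\le|c_1^0-(c_1^0-c)|=|c|$; so the inequality above yields some $u\in C^{(1,2)}$ with $|c_1^0-u|\le|c|/\kappa_\epsilon$. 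Setting $c_1:=c_1^0-u\in C^{(1)}$ and $c_2:=c-c_1=c_2^0+u\in C^{(2)}$ (the last membership since $C^{(1,2)}\subseteq C^{(2)}$) gives $|c_1|\le|c|/\kappa_\epsilon$ and $|c_2|\le|c|+|c_1|\le(1+\kappa_\epsilon^{-1})|c|$, so $K_\epsilon=1+2\kappa_\epsilon^{-1}$ works. This step is essentially the observation --- underlying Lemma~\ref{lem:homvanexp} --- that any two $C^{(i)}$-decompositions of $c$ differ by an element of $C^{(1,2)}$.

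All the difficulty is thus concentrated in the remaining ingredient: that the displayed inequality holds with a $q$-independent $\kappa_\epsilon>0$ whenever $k_1+k_2\le(1-\epsilon)q$. This is precisely (a repackaging of) the bivariate testing lemma of \cite{polishchuk_nearly-linear_1994}, and I expect it to be the main obstacle. Its proof is algebraic rather than combinatorial: from a matrix $v$ whose rows outside a set $R$ agree with degree-$<k_2$ univariate polynomials and whose columns outside a set $S$ agree with degree-$<k_1$ univariate polynomials, one builds a nonzero bivariate ``locator'' polynomial $E(X,Y)$ --- divisible by the vanishing polynomials of $R$ and of $S$ in the two respective variables, hence of degree growing with $|R|$ and $|S|$ --- for which $E(X,Y)\cdot\tilde v(X,Y)$, reduced modulo $X^q-X$ and $Y^q-Y$, lies in a genuine bivariate low-degree code; the hypothesis $k_1+k_2\le(1-\epsilon)q$ is exactly what leaves enough degree budget to run this argument and extract a constant $\kappa_\epsilon=\kappa(\epsilon)$. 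The case $k_1\ne k_2$ requires no new idea, as noted in the footnote to the statement. Alternatively one may simply cite the bivariate testing theorem in the packaged form of the displayed inequality, in which case the two reductions above complete the proof.
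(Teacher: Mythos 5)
The paper does not actually prove this statement: Theorem~\ref{thm:pe2RS} is quoted directly from \cite{polishchuk_nearly-linear_1994}, with the footnoted observation from \cite{kalachev_two-sided_2023} that the argument also covers $k_1\neq k_2$. Your proposal is consistent with that treatment and is correct as far as it goes: you defer the real content (a $q$-independent robustness constant $\kappa_\epsilon$ for the axis-parallel bivariate test when $k_1+k_2\le(1-\epsilon)q$, including $k_1\neq k_2$) to Polishchuk--Spielman, exactly as the paper does, and you supply explicitly the translation that the paper leaves implicit. Both of your reductions check out: applying the robustness inequality to $v=c_1^0$ and shifting by the resulting $u\in C_1\otimes C_2$ (which lies in both $C^{(1)}$ and $C^{(2)}$) gives a decomposition with $|c_1|+|c_2|\le(1+2\kappa_\epsilon^{-1})|c|$, and the conversion from Hamming weight to the column/row counts $|c_i|_i$ in Definition~\ref{def:pe} is exactly where the RS distance bound $q-k_i+1>\epsilon q$ enters, so constant relative distance is genuinely needed (as your constants reflect). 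One point worth making explicit if you write this up: robustness statements derived from \cite{polishchuk_nearly-linear_1994} are sometimes stated only below a proximity threshold, but the universal inequality you display follows since it is trivial (with constant equal to the threshold) once $\dis(v,C^{(1)})+\dis(v,C^{(2)})$ exceeds a constant fraction of $q^2$. In short, your write-up is a correct and somewhat more self-contained account of why the citation gives product-expansion, but it does not independently establish the bivariate testing lemma itself, which is the sole source of difficulty and is likewise only cited by the paper.
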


Kalachev \& Panteleev \cite{kalachev_personal_2024} recently showed the following result on the product-expansion of an arbitrary constant number of random codes, over a sufficiently large alphabet:

\begin{theorem}[\cite{kalachev_personal_2024}]
  \label{thm:petrand}
  For every $t\in\bN$ and every collection of intervals $I_1,\dots,I_t\subseteq(0,1)$, there exists a real number $\rho>0$ such that the following holds: for every $n,k_1,\dots,k_t\in\bN$ such that $k_i/n\in I_i$ for each $i\in[t]$, letting $q=2^{(n+3)^t}$, then a uniformly random tuple of linear codes $C_1,\dots,C_t\subseteq\bF_q^n$ of respective dimensions $k_1,\dots,k_t$ has product-expansion $\geq\rho$ with probability approaching $1$ as $n\rightarrow\infty$.
  % For every $t\in\bN$ and every collection of intervals $I_1,\dots,I_t\subseteq(0,1)$, there exists a real number $\rho>0$, an integer $N$, and a function $p:\bN\rightarrow[0,1]$, $\lim_{n\rightarrow\infty}p(n)=1$, such that the following holds: for every $n,k_1,\dots,k_t\in\bN$ such that $n\geq N$ and $k_i/n\in I_i$ for each $i\in[t]$,  letting $q=2^{(n+3)^t}$, a uniformly random tuple of linear codes $C_1,\dots,C_t\subseteq\bF_q^n$ with dimensions $k_1,\dots,k_t$ respectively has product-expansion $\geq\rho$ with probability $\geq p(n)$.
\end{theorem}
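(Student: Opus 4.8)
We sketch a proof by induction on the number $t$ of codes, exploiting the associativity of the dual‑tensor operation together with a genericity argument that uses the astronomical field size. Recall (Definition~\ref{def:pe}) that product‑expansion $\ge\rho$ is equivalent to saying that every $c\in C_1\boxplus\cdots\boxplus C_t$ admits a decomposition $c=c_1+\cdots+c_t$ with $c_i\in C^{(i)}$ and $\sum_i n\,|c_i|_i\le|c|/\rho$ (all blocklengths equal $n$); write $\beta:=\max_i\sup I_i<1$. It is convenient to regard the field size as a free parameter and prove the statement for every $q\ge 2^{(n+3)^t}$, since the relevant failure probabilities below are decreasing in $q$. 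The base case $t=1$ is immediate: for $q\gg 2^n$ a uniformly random $k_1$‑dimensional code over $\bF_q$ is MDS with probability $\to1$ (union bound over coordinate subsets), so its relative distance, which equals its product‑expansion, is $>1-\beta$.

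For the inductive step, write $C_1\boxplus\cdots\boxplus C_t=C_1\boxplus D$ with $D:=C_2\boxplus\cdots\boxplus C_t\subseteq\bF_q^{n^{t-1}}$. Since $(C_2,\dots,C_t)$ is a uniformly random tuple over $\bF_q$ with $q\ge 2^{(n+3)^{t-1}}$, the inductive hypothesis gives a constant $\rho'=\rho'(t,I_2,\dots,I_t)>0$ such that, with probability $\to1$, $D$ has product‑expansion $\ge\rho'$. I would combine this with a \emph{two‑term} statement: for the pair $(C_1,D)$ with $C_1$ uniformly random and independent of $D$, there is a constant $\rho''>0$ such that, with probability $\to1$, every $c\in C_1\boxplus D$ admits a decomposition $c=c_1+x$ with $c_1\in C_1\otimes\bF_q^{n^{t-1}}$, $x\in\bF_q^n\otimes D$, and $n\,|c_1|_1\le|c|/\rho''$. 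Given this, the induction closes as follows: since $x\in\bF_q^n\otimes D$, each direction‑$1$ slab $x[j]\in D$ is a codeword of $D$, so the inductive hypothesis decomposes $x[j]=\sum_{i=2}^t x_i[j]$ with $\sum_{i=2}^t n\,|x_i[j]|_i\le|x[j]|/\rho'$; reassembling $c_i:=(x_i[j])_{j\in[n]}\in C^{(i)}$ for $i\ge 2$ and summing over $j$, and using $\sum_j|x[j]|=|x|\le|c|+|c_1|\le|c|(1+1/\rho'')$ together with $|c_1|\le n\,|c_1|_1$, yields $\sum_{i=1}^t n\,|c_i|_i\le|c|\cdot O(1/(\rho'\rho''))$. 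Hence the $t$‑tuple has product‑expansion $\rho=\Omega(\rho'\rho'')>0$, depending only on $t$ and the endpoints of $I_1,\dots,I_t$, as required.

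The crux — and what I expect to be the main obstacle — is the two‑term statement for $(C_1,D)$. One cannot simply invoke Theorem~\ref{thm:pe2rand} or Theorem~\ref{thm:pe2RS}, because $D$ has relative distance $\Theta(n^{2-t})\to0$, and moreover the statement must hold against the particular random‑but‑not‑uniformly‑random subspace $D$ produced by the recursion. I would prove it by a minimal‑decomposition argument: a counterexample is a decomposition $c=c_1+x$ minimizing $n|c_1|_1$ for which $|c|$ is small relative to $n|c_1|_1$; one union‑bounds over the combinatorial type of such a counterexample — the support of $c$ (a subset of $[n]\times[n^{t-1}]$, hence up to $\exp(\Theta(n^t))$ choices), together with the nonzero‑column support of $c_1$ and the nonzero‑row support of $x$ — and for each type bounds the probability over the random $C_1$ that a valid (minimal) decomposition exists by a rank/dimension count of the form $q^{-\Omega(n^t)}$. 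The astronomically large field $q=2^{(n+3)^t}$ is exactly what makes $\exp(\Theta(n^t))\cdot q^{-\Omega(n^t)}\to0$; equivalently, ``product‑expansion $<\rho$'' is a constructible condition on the code tuple cut out using only $2^{O(n^t)}$ coordinate‑subset choices, hence contained in a proper subvariety of degree $2^{O(n^t)}$, which a uniformly random tuple over a field this large avoids with high probability.

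The delicate point inside the two‑term statement is the regime where the minimizing decomposition still uses many columns of $c_1$: there the crude argument — restrict $c$ to the rows where $x$ vanishes (so $c=c_1$ there) and to the columns where $c_1$ vanishes (so $c=x$ there), and invoke the distances of $C_1$ and $D$ — is insufficient, since those distances only dominate when $|c_1|_1$ and $|x|_2$ are both below the respective distances. In the remaining regime one must use minimality of the decomposition to bound how much the row/column supports of $c_1$ and $x$ can overlap before invoking the dimension count; controlling this overlap (and, for large $|c_1|_1$, union‑bounding over a support of $c$ of size $\Theta(n^t)$, which is precisely where the doubly‑exponential field size is forced) is the technical heart of the proof. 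Finally, a routine book‑keeping of constants through the induction shows $\rho$ is (essentially) the product of the per‑level constants, so it depends only on $t$ and $I_1,\dots,I_t$.
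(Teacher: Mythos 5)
The paper does not prove this statement: Theorem~\ref{thm:petrand} is imported verbatim from the forthcoming work of Kalachev and Panteleev \cite{kalachev_personal_2024} and is used as a black box (in Corollary~\ref{cor:sspemanyrandom} and, via \cite{dinur_expansion_2024}, in Theorem~\ref{thm:dlv}). So there is no in-paper argument to compare yours against; your proposal has to stand on its own.

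On its own terms, the outer induction is reasonable: writing $C_1\boxplus\cdots\boxplus C_t=C_1\boxplus D$ with $D=C_2\boxplus\cdots\boxplus C_t$, peeling off the $C_1$-part, and decomposing the direction-$1$ slabs of the remainder by the inductive hypothesis is sound bookkeeping, and it is a nice economy that you only need one-sided control (on $n|c_1|_1$, not on the $D$-part). But the entire difficulty is concentrated in your ``two-term statement,'' and what you offer for it is not a proof. Concretely: (i) the union bound as described does not close --- fixing the support $S$ of $c$ (and the column/row supports of $c_1$ and $x$) still leaves $q^{|S|}=q^{\Theta(n^t)}$ candidate vectors, so you would need a per-support failure probability of order $q^{-|S|-\Omega(n^t)}$, i.e.\ a genuine rank/dimension computation for the event ``some $c$ supported in $S$ has no decomposition with few $C_1$-columns,'' and no such computation is given; (ii) the Schwartz--Zippel/``proper subvariety'' framing presupposes that the bad locus is proper, i.e.\ that a tuple with product-expansion $\geq\rho$ exists at all with $\rho$ independent of $n$ --- which is essentially the theorem --- and additionally needs degree bounds $2^{cn^t}$ with $c$ compatible with $q=2^{(n+3)^t}$, none of which is established; (iii) the estimate must be carried out against the highly structured code $D$, whose relative distance is $\Theta(n^{2-t})\to 0$, so as you note the known two-code results (Theorems~\ref{thm:pe2rand} and~\ref{thm:pe2RS}) cannot be invoked, yet no substitute property of $D$ beyond its inductive product-expansion actually enters your probability argument. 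You have correctly located the crux, but the crux --- which is precisely the content of the Kalachev--Panteleev result --- is left unproven, so the proposal is a plan rather than a proof.
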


% TODO: check above statement with Panteleev \& Kalachev, and eventually replace personal communication citation with their actual paper

\section{Homological Products of Product-Expanding Codes}
\label{sec:peprod}
In this section, we bound the distance of homological products of single-sector chain complexes based on product-expanding classical codes. We generalize and strengthen the results of \cite{bravyi_homological_2014} on homological products of single-sector chain complexes based on random codes. We also apply these techniques to single-sector chain complexes from Reed-Solomon codes.

The following theorem is the main result of this section.

\begin{theorem}
  \label{thm:sspe}
  Fix some $t\in\bN$ and some field $\bF_q$ of characteristic $2$. For each $i\in[t]$ let $\cC_i=(C_i,\partial_i)$ be a single-sector chain complex over $\bF_q$, and let $n_i=\dim C_i$. For each $i\in[t]$, let $\rho_i$ denote the product-expansion of the collection of codes $(B_*(\cC_1),\dots,B_*(\cC_{i-1}),Z_*(\cC_i))$. Then the homological product $\cA=(A,\partial^{\cA}):=\bigotimes_{i\in[t]}\cC_i$ has systolic distance
  \begin{equation}
    \label{eq:sspedis}
    d_*(\cA) \geq \prod_{i\in[t]}(\rho_in_i).
  \end{equation}
  
  Furthermore, if $t=2$, letting $\rho'$ denote the product-expansion of $(B_*(\cC_1),B_*(\cC_2))$ and letting $\Delta_i=d(Z_*(\cC_i))/n_i$ denote the relative distance of the classical code $Z_*(\cC_i)$, then the homological product $\cA$ has filling constant
  \begin{equation}
    \label{eq:sspeexp}
    \mu_*(\cA) \leq \frac{1}{\rho'\cdot\min\{\rho',\Delta_1,\Delta_2\}}.
  \end{equation}
\end{theorem}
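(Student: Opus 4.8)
The plan is to adapt the two-step diagram-chasing argument used in the excerpt's sketch of the $t=2$ distance bound to bound the two quantities $d_*(\cA)$ (and its cohomological counterpart, by symmetry) and $\mu_*(\cA)$. I would first prove the systolic distance bound \eqref{eq:sspedis} for general $t$ by induction on $t$, mimicking the $t=2$ argument in Section~\ref{sec:peprodinf} but carried out one sector at a time. Concretely, given a nonzero class in $H_*(\cA)$ with representative $a\in Z_*(\cA)\setminus B_*(\cA)$, I would use the Künneth formula (Proposition~\ref{prop:sskunneth}) and Corollary~\ref{cor:ssprodcycles} to write $a$ as a sum of a ``pure tensor'' part in $Z_*(\cC_1)\otimes\cdots\otimes Z_*(\cC_t)$ plus boundary terms, and use Lemma~\ref{lem:hombasis} (nondegeneracy of the homology pairing) together with Corollary~\ref{cor:ssprodbases} to find $z^i\in Z^*(\cC_i)$ (i.e.\ a logical $X$ operator in each factor) that detects the pure tensor part, so $(z^1\otimes\cdots\otimes z^t)\cdot a\neq 0$.

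The core of each inductive step is then the following ``peeling'' move: view $a$ as lying in $B_*(\cC_1)^{(1)}+\cdots+B_*(\cC_{i-1})^{(i-1)}+Z_*(\cC_i)^{(i)}$ (more precisely, after combining the already-peeled factors $1,\dots,i-1$ into a single direction via their boundary spaces, and collapsing the factors $i+1,\dots,t$ by contracting with the detecting vectors $z^{i+1},\dots,z^t$), apply the product-expansion hypothesis for the collection $(B_*(\cC_1),\dots,B_*(\cC_{i-1}),Z_*(\cC_i))$ to rewrite $a$ as a low-weight decomposition, argue using Lemma~\ref{lem:homvanexp} that the difference between decompositions lies in the pairwise intersections $C^{(i,j)}$ (which are killed by the appropriate boundary/coboundary maps together with the $z$'s, exactly as in the $t=2$ sketch where $a_2'-(a_0+a_2)\in {Q^1_X}^\perp\otimes Q^2_Z$), and conclude that the direction-$i$ weight of the relevant component is at least $\rho_i n_i$. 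Iterating over $i=1,\dots,t$ multiplies these factors and yields $|a|\geq\prod_i(\rho_i n_i)$. The same argument applied to the cochain complex $\cA^*$ gives the cosystolic bound; since both use the \emph{same} product-expansion constants (product-expansion of $(B_*(\cC_i))$ and of a single code only depend on the code and its distance, and $Z^*$, $B^*$ are governed symmetrically), the stated bound holds.

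For the filling-constant bound \eqref{eq:sspeexp} in the $t=2$ case, I would take an arbitrary $b\in B_*(\cA)=\im(\partial^{\cA})$ and exhibit some $c$ with $\partial^{\cA} c=b$ and $|c|\leq |b|/(\rho'\min\{\rho',\Delta_1,\Delta_2\})$. Writing $b=\partial^{\cA}c_0=(\partial^{\cC_1}\otimes I+I\otimes\partial^{\cC_2})c_0$ for some $c_0$, I would first note $b\in B_*(\cC_1)\otimes C_2 + C_1\otimes B_*(\cC_2)\subseteq B_*(\cC_1)^{(1)}+B_*(\cC_2)^{(2)}$, so product-expansion of $(B_*(\cC_1),B_*(\cC_2))$ gives a decomposition $b=b_1+b_2$ with $b_1\in B_*(\cC_1)^{(1)}$, $b_2\in B_*(\cC_2)^{(2)}$ and $|b|\geq\rho'(n_1|b_1|_1+n_2|b_2|_2)$; that is, $b_1$ has at most $|b|/(\rho' n_1)$ nonzero direction-$1$ columns and $b_2$ has at most $|b|/(\rho' n_2)$ nonzero direction-$2$ columns (where here the column directions need to be matched carefully to which tensor leg the boundary acts on). Then I would fill $b_1$ column-by-column: each nonzero direction-$1$ column of $b_1$ lies in $B_*(\cC_1)$, and using the cofilling/filling structure of a single-sector complex whose cycle space $Z_*(\cC_1)$ has relative distance $\Delta_1$ one can choose a preimage under $\partial^{\cC_1}$ of weight at most (roughly) $n_1/(\Delta_1 n_1)=1/\Delta_1$ times the column weight — more precisely, bound the filling constant of a single complex in terms of $\Delta$ — and analogously for $b_2$ with $\Delta_2$; assembling these column-wise preimages gives $c$ with $|c|\lesssim |b|\cdot\frac{1}{\rho'}\cdot\frac{1}{\min\{\Delta_1,\Delta_2\}}$, and the extra factor of $\rho'$ (versus $\min\{\rho',\Delta_1,\Delta_2\}$ in the denominator) comes from also needing $\mu\le 1/{\rho'}^2$ in the regime where $\rho'<\Delta_i$, which I would extract from the same product-expansion inequality by a second application bounding how much the two partial fillings can overlap.

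The main obstacle I anticipate is the bookkeeping in the general-$t$ inductive step: correctly identifying, at stage $i$, which tensor legs have been collapsed via detecting covectors $z^j$ and which have been merged into a single ``direction'' carrying the code $B_*(\cC_j)$, and verifying that the cross terms produced by Lemma~\ref{lem:homvanexp} (elements of $C^{(i,j)}$) are annihilated by the combination of boundary maps and $z$'s so that the detection $(z^1\otimes\cdots\otimes z^t)\cdot(\cdot)\neq 0$ is preserved at every stage — this is the direct analogue of the two displayed computations in the $t=2$ proof sketch, but now must be checked uniformly. A secondary technical point is the precise extraction of the constant $1/(\rho'\min\{\rho',\Delta_1,\Delta_2\})$ for the filling bound, which requires being somewhat careful about whether to charge overlaps of the two partial fillings to $\rho'$ or to the $\Delta_i$'s; I expect the clean way is to run the product-expansion decomposition on $b$ and then separately bound each $|b_i|_i$-weighted partial filling, rather than trying to fill $b$ in one shot.
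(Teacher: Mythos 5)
Your treatment of the systolic bound~(\ref{eq:sspedis}) is essentially the paper's proof: the same induction that peels one tensor factor at a time, contracting with detecting covectors $z^i$ supplied by the K\"unneth formula and Lemma~\ref{lem:hombasis}, and applying the product-expansion of $(B_*(\cC_1),\dots,B_*(\cC_{i-1}),Z_*(\cC_i))$ together with Lemma~\ref{lem:homvanexp} at each stage. (Two small cautions: the already-peeled factors $1,\dots,i-1$ remain \emph{separate} tensor directions each carrying $B_*(\cC_j)$, not a single merged direction; and your aside that the cosystolic bound follows ``with the same constants'' is not right, since it would require product-expansion of the cocycle/coboundary codes---but the theorem claims only the systolic bound, so this does not affect the task at hand.)

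The filling bound~(\ref{eq:sspeexp}) is where your plan has genuine gaps. First, filling $b_1$ and $b_2$ separately and ``assembling'' the column-wise preimages does not produce a filling of $b$: if $\partial_1^{(1)}c_1=b_1$ and $\partial_2^{(2)}c_2=b_2$, then $\partial^{\cA}(c_1+c_2)=b+\partial_2^{(2)}c_1+\partial_1^{(1)}c_2$, and the cross terms do not vanish for column-wise preimages (a direction-$1$ column of $b_1$ lies in $\im\partial_1$, but nothing forces your preimage to have its rows in $\ker\partial_2$). The paper never builds a filling from scratch: it starts from an arbitrary filling $a$ of $b$ (which exists precisely because $b\in B_*(\cA)$), uses Lemma~\ref{lem:homvanexp} to say that the low-weight product-expansion decomposition of $b$ differs from $(\partial_1^{(1)}a,\partial_2^{(2)}a)$ by a single $e\in B_*(\cC_1)\otimes B_*(\cC_2)$, and corrects $a$ by preimages $e_1\in\bF_q^{n_1}\otimes\im\partial_2$ and $e_2\in\im\partial_1\otimes\bF_q^{n_2}$, which are annihilated by the opposite boundary map (since $\im\partial\subseteq\ker\partial$), so the corrected $a'$ still satisfies $\partial^{\cA}a'=b$. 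Second, your per-column weight bound---``a preimage under $\partial^{\cC_1}$ of weight at most roughly $1/\Delta_1$ times the column weight''---is unsupported: the distance of the cycle code $Z_*(\cC_1)=\ker\partial_1$ says nothing about coset-leader weights, i.e.\ about the filling constant of the single complex $\cC_1$, which can be arbitrarily large even when $\Delta_1$ is large. In the paper the $\Delta_i$ enter for a different purpose: in the regime $|b|<\rho'\min\{\Delta_1,\Delta_2\}n_1n_2$ (the other regime is handled trivially since any filling has weight at most $n_1n_2\leq\mu|b|$), the corrected filling $a'$ has fewer than $d(Z_*(\cC_2))$ columns outside $Z_*(\cC_1)$ and fewer than $d(Z_*(\cC_1))$ rows outside $Z_*(\cC_2)$, so one can subtract a tensor codeword $f\in Z_*(\cC_1)\otimes Z_*(\cC_2)\subseteq\ker\partial^{\cA}$ agreeing with $a'$ outside those bad rows and columns, leaving a filling supported on the bad-row$\times$bad-column rectangle, of weight at most $|b|^2/({\rho'}^2n_1n_2)\leq|b|/{\rho'}^2$. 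Without these two ingredients---modifying an existing filling only by elements of $\ker\partial^{\cA}$, and using $\Delta_1,\Delta_2$ for tensor-code completion rather than for per-column filling---your argument for~(\ref{eq:sspeexp}) does not go through as stated.
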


We first describe some applications of Theorem~\ref{thm:sspe}, before proving the theorem in Section~\ref{sec:sspeproof}.

To begin, we present a simple method of turning a CSS code into a single-sector chain complex; a similar method was used in \cite{bravyi_homological_2014}.

\begin{lemma}
  \label{lem:codetocomplex}
  Let $Q=(Q_X,Q_Z)$ be a CSS code of length $n$ over $\bF_q$ of characteristic $2$ such that $\dim Q_X=\dim Q_Z$. Fix some full-rank parity check matrices $H_X,H_Z$, so that $Q_X=\ker H_X$, $Q_Z=\ker H_Z$. Define a single-sector chain complex $\cC$ with vector space $C=\bF_q^n$ and $\partial=H_X^\top H_Z$. Then $Q$ equals the quantum code associated to $\cC$.
\end{lemma}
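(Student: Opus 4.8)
\textbf{Proof plan for Lemma~\ref{lem:codetocomplex}.}

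The plan is to verify two things: first, that $\cC$ as defined is a valid single-sector chain complex, i.e.~that $\partial^2 = 0$; and second, that the CSS code associated to $\cC$ via Definition~\ref{def:sscc} (namely $Q_X' = \ker\delta$, $Q_Z' = \ker\partial$) coincides with the given code $Q$. For the first point, I would compute $\partial^2 = H_X^\top H_Z H_X^\top H_Z$ and observe that $H_Z H_X^\top$ is the middle factor; since the CSS condition $Q_X^\perp \subseteq Q_Z$ says exactly that the rows of $H_X$ (which span $Q_X^\perp$) are orthogonal to everything in $\ker H_Z = Q_Z$... wait, more directly: $Q_X^\perp \subseteq Q_Z = \ker H_Z$ means $H_Z$ annihilates $Q_X^\perp = \operatorname{rowspace}(H_X)^{\mathrm{?}}$. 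Let me be careful: $Q_X = \ker H_X$, so $Q_X^\perp = \operatorname{im}(H_X^\top)$ (the column space of $H_X^\top$, equivalently the row space of $H_X$). The condition $Q_X^\perp \subseteq Q_Z = \ker H_Z$ then says precisely $H_Z H_X^\top = 0$. Hence $\partial^2 = H_X^\top (H_Z H_X^\top) H_Z = 0$, so $\cC$ is a well-defined single-sector chain complex.

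For the second point, I need $\ker\partial = Q_Z$ and $\ker\delta = Q_X$, where $\delta = \partial^\top = H_Z^\top H_X$. For $\ker\partial = \ker(H_X^\top H_Z)$: clearly $\ker H_Z \subseteq \ker(H_X^\top H_Z)$, giving $Q_Z \subseteq \ker\partial$. Conversely, if $H_X^\top H_Z v = 0$, then since $H_X$ is full-rank, $H_X^\top$ is injective, so $H_Z v = 0$, i.e.~$v \in Q_Z$. Thus $\ker\partial = Q_Z$. The argument for $\ker\delta = \ker(H_Z^\top H_X) = \ker H_X = Q_X$ is identical, using that $H_Z$ is full-rank so $H_Z^\top$ is injective. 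This shows the quantum code associated to $\cC$ is $(Q_X, Q_Z) = Q$, as claimed.

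I do not anticipate any real obstacle here — the lemma is a routine linear-algebra check. The only point requiring a moment of care is the bookkeeping of which of $Q_X^\perp \subseteq Q_Z$ translates into $H_Z H_X^\top = 0$ (as opposed to $H_X H_Z^\top = 0$, which is the transpose and hence equivalent anyway since $(H_Z H_X^\top)^\top = H_X H_Z^\top$), and the use of full-rankness of the parity-check matrices to cancel the injective factors $H_X^\top$ and $H_Z^\top$. The hypothesis $\dim Q_X = \dim Q_Z$ is what makes $H_X$ and $H_Z$ have the same number of rows $m = n - \dim Q_X = n - \dim Q_Z$, so that $\partial = H_X^\top H_Z$ is a square $n \times n$ matrix as required for a single-sector complex; I would note this explicitly.
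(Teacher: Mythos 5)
Your proposal is correct and follows essentially the same route as the paper's proof: the CSS condition $Q_X^\perp=\im(H_X^\top)\subseteq\ker H_Z$ gives $H_ZH_X^\top=0$ hence $\partial^2=0$, and full-rankness of $H_X,H_Z$ (injectivity of $H_X^\top,H_Z^\top$) gives $\ker\partial=\ker H_Z=Q_Z$ and $\ker\delta=\ker H_X=Q_X$. Your added remark that $\dim Q_X=\dim Q_Z$ ensures $H_X^\top H_Z$ is a square $n\times n$ map is a fine (if implicit in the paper) piece of bookkeeping.
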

\begin{proof}
  Since $H_ZH_X^\top=0$ by the CSS orthogonality condition, we have $\partial^2=H_X^\top H_ZH_X^\top H_Z=0$. Therefore $\cC$ is a well-defined single-sector chain complex. Now because $H_X,H_Z$ are full-rank, $\ker\delta=\ker H_X=Q_X$ and $\ker\partial=\ker H_Z=Q_Z$, as desired.
\end{proof}

\subsection{Product of Two Codes}
\cite{bravyi_homological_2014} showed that the homological product of $t=2$ two (appropriately sampled) random single-sector chain complexes yields asymptotically good $[[N=n^2,\Theta(N),\Theta(N)]]$ quantum codes of locality $w=\Theta(\sqrt{N})$. Theorem~\ref{thm:sspe} strengthens this result in a couple of ways. First, the following corollary shows that such random product codes are also locally testable:

\begin{corollary}
  \label{cor:ssperandom}
  For every $\epsilon>0$, there exist $\Delta=\Delta(\epsilon)>0$ and $\rho=\rho(\epsilon)>0$ such that the following holds: For $i\in[2]$, let $Q^i=(Q_X^i,Q_Z^i)$ be a uniformly random length-$n$, rate-$R_i$ CSS code over any $\bF_q$ of characteristic $2$ such that $\dim(Q_X^i)=\dim(Q_Z^i)$, and such that the chosen rates satisfy $R_1,R_2\leq 1-\epsilon$. Let $\cC_i$ be the single-sector complex obtained from $Q^i$ via Lemma~\ref{lem:codetocomplex}. Then with probability approaching $1$ as $n\rightarrow\infty$, the quantum code associated to the homological product $\cA=\cC_1\otimes\cC_2$ is a $[[n^2,\;R_1R_2\cdot n^2,\;\Delta\cdot n^2]]$ code that is locally testable with locality $\leq 2n$ and soundness $\geq\rho$.
\end{corollary}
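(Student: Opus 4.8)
The plan is to derive Corollary~\ref{cor:ssperandom} directly from Theorem~\ref{thm:sspe} together with the known product-expansion bounds for random codes. First I would verify the structural hypotheses: by Lemma~\ref{lem:codetocomplex}, the single-sector complex $\cC_i$ built from $Q^i$ via $\partial_i = (H_X^i)^\top H_Z^i$ has associated quantum code exactly $Q^i$, so $Z_*(\cC_i) = \ker\partial_i = Q_Z^i$ and $Z^*(\cC_i) = \ker\delta_i = Q_X^i$, while $B_*(\cC_i) = \im\partial_i$. Since $H_X^i, H_Z^i$ are full rank and $\dim Q_X^i = \dim Q_Z^i$, a quick rank count shows $B_*(\cC_i) = \im((H_X^i)^\top H_Z^i)$ is a code of dimension equal to the number of rows of $H_X^i$, i.e.\ $(1-R_i)n$, and in fact $B_*(\cC_i) = (Q_X^i)^\perp \cap (\text{something})$—more precisely $\im((H_X^i)^\top) = \on{rowspace}(H_X^i) = (Q_X^i)^\perp$ composed with the isomorphism $H_Z^i$ on the quotient, so $B_*(\cC_i)$ is a uniformly random code of dimension $(1-R_i)n \le n$ once we condition on the randomness of $Q^i$. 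The key point is that the relevant families of codes appearing in Theorem~\ref{thm:sspe}, namely $(B_*(\cC_1), Z_*(\cC_2))$, $(Z_*(\cC_1))$, $(Z_*(\cC_2))$ and $(B_*(\cC_1), B_*(\cC_2))$, are each (pairs of) uniformly random codes over $\bF_q$ of dimensions bounded by $(1-\epsilon')n$ for some $\epsilon' = \epsilon'(\epsilon) > 0$; for $B_*(\cC_i)$ this uses $1 - R_i \le 1$, but we need a strict bound, which I would obtain by noting $\dim B_*(\cC_i) = (1-R_i)n$ and $R_i \ge \epsilon$ would be needed—actually the constraint we have is $R_i \le 1-\epsilon$, giving $\dim Z_*(\cC_i) = R_i n \le (1-\epsilon)n$; and $\dim B_*(\cC_i) = (1-R_i)n \le n$, so to get product-expansion of $(B_*(\cC_1), B_*(\cC_2))$ from Theorem~\ref{thm:pe2rand} I additionally need $1-R_i \le (1-\epsilon'')n/n$, i.e.\ I should also assume (or the corollary statement implicitly wants) $R_i$ bounded away from $0$; rereading, the cleanest route is to apply Theorem~\ref{thm:pe2rand} to the pairs where it applies and handle $(Z_*(\cC_i))$ via its relative distance being $\ge \Delta(\epsilon)$ with high probability by the Gilbert–Varshamov bound for random codes of rate $\le 1-\epsilon$.

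Concretely, the steps would be: (1) Fix $\epsilon$; set $\epsilon' = \epsilon/2$ (say). (2) By Theorem~\ref{thm:pe2rand} applied with parameter $\epsilon'$, a uniformly random pair of codes of dimensions $\le (1-\epsilon')n$ has product-expansion $\ge \rho_0 = \rho_0(\epsilon') > 0$ w.h.p.; apply this to $(B_*(\cC_1), Z_*(\cC_2))$ and to $(B_*(\cC_1), B_*(\cC_2))$, after checking these are distributed as uniformly random codes of the stated dimensions (this requires the observation that conditioning on $Q^1$ random, $B_*(\cC_1) = (Q_X^1)^\perp$ up to the fixed isomorphism induced by the full-rank $H_Z^1$, hence uniformly random of dimension $(1-R_1)n$; and $Z_*(\cC_2) = Q_Z^2$ uniformly random of dimension $R_2 n$, independent of $\cC_1$). (3) By a standard Gilbert–Varshamov / random-coding argument, $Z_*(\cC_1) = Q_Z^1$ and $Z_*(\cC_2) = Q_Z^2$, being random codes of rate $\le 1-\epsilon$, each have relative distance $\ge \Delta = \Delta(\epsilon) > 0$ w.h.p.; since the product-expansion of a single code equals its relative distance, this gives $\rho_1, \rho_2 \ge \Delta$ in the notation of Theorem~\ref{thm:sspe} (note $\rho_1$ is the product-expansion of the singleton $(Z_*(\cC_1))$ and $\rho_2$ of $(B_*(\cC_1), Z_*(\cC_2))$—so I should set $\rho := \min\{\Delta, \rho_0\}$ after matching indices carefully). (4) Union-bound over the finitely many high-probability events from (2) and (3). (5) Plug into Theorem~\ref{thm:sspe}: the systolic distance satisfies $d_*(\cA) \ge \rho_1 n \cdot \rho_2 n \ge (\min\{\Delta,\rho_0\})^2 n^2$, and symmetrically the cosystolic distance $d^*(\cA)$ is bounded the same way by applying Theorem~\ref{thm:sspe} to the cochain complex / the codes $(B^*(\cC_1), Z^*(\cC_2))$ etc., which are again random codes of the same shape (here using $Z^*(\cC_i) = Q_X^i$ and $B^*(\cC_i) = \im\delta_i = (Q_Z^i)^\perp$)—so $D = \min\{d_*(\cA), d^*(\cA)\} \ge \Delta' n^2$ for $\Delta' = (\min\{\Delta,\rho_0\})^2$. (6) The dimension $K = R_1 R_2 n^2$ follows from Proposition~\ref{prop:sskunneth}: $\dim H_*(\cA) = \dim H_*(\cC_1) \cdot \dim H_*(\cC_2) = (R_1 n)(R_2 n)$. (7) Locality: by Lemma~\ref{lem:codetocomplex} each $\partial_i$ has rows/columns of weight $\le n$ (trivially, since it is an $n \times n$ matrix), so $w^{\cA} \le w^{\cC_1} + w^{\cC_2} \le 2n$. (8) Soundness: Theorem~\ref{thm:sspe} gives $\mu_*(\cA) \le 1/(\rho' \min\{\rho', \Delta_1, \Delta_2\})$ with $\rho' \ge \rho_0$ and $\Delta_i \ge \Delta$; since all $C_i$ here have equal dimension $n^2$ in the product, the normalization factors $n_i/n_{i\pm 1}$ in the formula $\rho(Q) = \min\{(n_i/n_{i-1})/\mu_i, (n_i/n_{i+1})/\mu^i\}$ are $1$, so the soundness is $\ge \rho_0 \min\{\rho_0, \Delta\} =: \rho > 0$; the cofilling bound $\mu^*(\cA)$ follows by the same argument applied to the cochain complex.

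The main obstacle I anticipate is the bookkeeping in step (2): verifying that $B_*(\cC_i) = \im((H_X^i)^\top H_Z^i)$ is genuinely distributed as a \emph{uniformly random} linear code of dimension $(1-R_i)n$, and that the relevant pairs are independent (e.g.\ $B_*(\cC_1)$ depends only on $Q^1$ and $Z_*(\cC_2)$ only on $Q^2$). The subtlety is that $B_*(\cC_i)$ is the image of a product of two structured matrices, and one must argue it equals $(Q_X^i)^\perp$ transported by the invertible-on-the-quotient map $H_Z^i$; concretely, $\im((H_X^i)^\top H_Z^i) = (H_X^i)^\top(\im H_Z^i) = (H_X^i)^\top \bF_q^{m_Z^i}$ since $H_Z^i$ is full-rank surjective, and $(H_X^i)^\top \bF_q^{m_X^i} = \on{rowspace}(H_X^i) = (Q_X^i)^\perp$, which (as $Q_X^i = \ker H_X^i$ is uniformly random) is uniformly random of dimension $m_X^i = (1-R_i)n$. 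Once this distributional identification is in place, the rest is a routine assembly of the cited theorems and a union bound. A secondary point requiring a line of care is ensuring the dimension bounds needed by Theorem~\ref{thm:pe2rand}: for the pair $(B_*(\cC_1), Z_*(\cC_2))$ we need $1-R_1 \le 1-\epsilon'$ and $R_2 \le 1-\epsilon'$; the first needs $R_1 \ge \epsilon'$, which is \emph{not} among the hypotheses—so in the write-up I would either add the mild hypothesis $R_i \in [\epsilon, 1-\epsilon]$ (which is what "rate-$R_i$" with both quantum codes nontrivial effectively wants), or observe that if $R_1$ is close to $1$ then $\dim B_*(\cC_1)$ is small and one can still invoke Theorem~\ref{thm:pe2rand} since it only requires each dimension to be $\le (1-\epsilon')n$, with no lower bound. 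The latter is correct, so no extra hypothesis is needed; I would just state it carefully.
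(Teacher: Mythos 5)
Your proposal follows essentially the same route as the paper: identify $Z_*(\cC_i)=Q_Z^i$, $Z^*(\cC_i)=Q_X^i$, $B_*(\cC_i)=(Q_X^i)^\perp$, $B^*(\cC_i)=(Q_Z^i)^\perp$ as uniformly random codes, invoke Theorem~\ref{thm:pe2rand} (plus the standard distance bound for random codes) to get the product-expansion hypotheses, and apply Theorem~\ref{thm:sspe} with Lemma~\ref{lem:codetocomplex} and Proposition~\ref{prop:sskunneth} to both the chain and the cochain complex. One bookkeeping slip worth fixing: since $\dim Q_X^i=\dim Q_Z^i$ and the quantum dimension is $R_in$, the classical codes $Q_X^i,Q_Z^i$ have dimension $(1+R_i)n/2$ and the boundary spaces $B_*(\cC_i),B^*(\cC_i)$ have dimension $(1-R_i)n/2$ (not $R_in$ and $(1-R_i)n$), so every code appearing in the product-expansion hypotheses has dimension at most $(1-\epsilon/2)n$ and the low-rate case you worried about at the end never arises.
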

\begin{proof}
  The corollary follows immediately from Theorem~\ref{thm:sspe}, Lemma~\ref{lem:codetocomplex}, and Theorem~\ref{thm:pe2rand}, as in a uniformly random CSS code $Q^i$, all the codes $Q_X^i,Q_Z^i,{Q_X^i}^\perp,{Q_Z^i}^\perp$ are uniformly random classical codes (of the specified dimensions).
\end{proof}

Below, we also show that for appropriate choices of the rates of $Q^1,Q^2$, the random CSS codes above can be replaced by quantum Reed-Solomon codes, thereby yielding an explicit construction of good length-$N$ quantum codes with locality $O(\sqrt{N})$ and constant soundness. The key idea is to apply the product-expansion of Reed-Solomon codes shown by \cite{polishchuk_nearly-linear_1994} (see Theorem~\ref{thm:pe2RS}).

\begin{corollary}
  \label{cor:sspeRS}
  For every $\epsilon>0$, there exist $\Delta=\Delta(\epsilon)>0$ and $\rho=\rho(\epsilon)>0$ such that the following holds: For $i\in[2]$, let $Q^i=(Q_X^i,Q_Z^i)$ be a length-$n$, rate-$R_i$ quantum Reed-Solomon code over some $\bF_q$ of characteristic $2$ such that $\dim(Q_X^i)=\dim(Q_Z^i)$, and such that the chosen rates satisfy $R_1,R_2\leq 1-\epsilon$ and $|R_1-R_2|\geq\epsilon$. Let $\cC_i$ be the single-sector complex obtained from $Q^i$ via Lemma~\ref{lem:codetocomplex}. Then the quantum code associated to the homological product $\cA=\cC_1\otimes\cC_2$ is a $[[n^2,\;R_1R_2\cdot n^2,\;\Delta\cdot n^2]]$ code that is locally testable with locality $\leq 2n$ and soundness $\geq\rho$.
\end{corollary}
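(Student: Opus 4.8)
The plan is to derive Corollary~\ref{cor:sspeRS} as an essentially immediate consequence of Theorem~\ref{thm:sspe}, Lemma~\ref{lem:codetocomplex}, and the Polishchuk--Spielman product-expansion bound for Reed--Solomon codes (Theorem~\ref{thm:pe2RS}), exactly in parallel with the proof of Corollary~\ref{cor:ssperandom}. First I would record that by Lemma~\ref{lem:RSdual}, if $Q^i$ is a quantum Reed--Solomon code with $Q_X^i = \mathrm{RS}(q,a_i)$ and $Q_Z^i = \mathrm{RS}(q,b_i)$ of the same dimension (so $a_i = b_i =: k_i = R_i n$, with $n = q$), then its dual codes ${Q_X^i}^\perp = \mathrm{RS}(q,n-k_i)$ and ${Q_Z^i}^\perp = \mathrm{RS}(q,n-k_i)$ are again Reed--Solomon codes. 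Via Lemma~\ref{lem:codetocomplex}, the single-sector complex $\cC_i = (\bF_q^n, H_X^\top H_Z)$ has associated code $Q^i$, and one checks that $Z_*(\cC_i) = \ker\partial = \ker H_Z = Q_Z^i$ (since $H_Z$ is full rank) is Reed--Solomon, while $B_*(\cC_i) = \im(H_X^\top H_Z)$; because $H_X^\top$ is injective and $\im(H_Z) = \bF_q^n / \ldots$ — more usefully, $B_*(\cC_i) = \im\delta^\perp$-type reasoning gives $B_*(\cC_i) = ({Q_X^i}^\perp)$ up to the identification used in the single-sector setting, so $B_*(\cC_i)$ is also a Reed--Solomon code, of dimension $n - k_i$. (I would lift the precise identification of $B_*(\cC_i)$ with ${Q_X^i}^\perp$ from the $t=2$ argument already spelled out in the excerpt, where $B_*(\cC)$ plays exactly this role.)

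Next I would invoke Theorem~\ref{thm:sspe} with $t=2$. It requires product-expansion for the three collections $(Z_*(\cC_1))$, $(B_*(\cC_1), Z_*(\cC_2))$, and $(B_*(\cC_1), B_*(\cC_2))$, plus the relative distances $\Delta_i = d(Z_*(\cC_i))/n$. The single-code collection $(Z_*(\cC_1))$ has product-expansion equal to the relative distance of $\mathrm{RS}(q,k_1)$, which is $(n-k_1+1)/n \geq \epsilon$ by the Singleton bound, since $k_1 = R_1 n \leq (1-\epsilon)n$; similarly $\Delta_1, \Delta_2 \geq \epsilon$. For the pair $(B_*(\cC_1), Z_*(\cC_2)) = (\mathrm{RS}(q,n-k_1), \mathrm{RS}(q,k_2))$, the sum of dimensions is $(n-k_1) + k_2 = n - (R_1 - R_2) n \leq (1 - \epsilon)n$ provided $R_1 - R_2 \geq \epsilon$; Theorem~\ref{thm:pe2RS} then gives product-expansion $\geq \rho_0(\epsilon) > 0$. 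For the pair $(B_*(\cC_1), B_*(\cC_2)) = (\mathrm{RS}(q,n-k_1),\mathrm{RS}(q,n-k_2))$, the dimensions sum to $2n - (k_1+k_2)$, which is $\leq (1-\epsilon)n$ precisely when $R_1 + R_2 \geq 1 + \epsilon$ — and here is the one subtlety: the hypothesis $|R_1 - R_2| \geq \epsilon$ does not by itself force $R_1 + R_2$ large. So I would note that, by symmetry of the construction under swapping $\cC_1 \leftrightarrow \cC_2$ and/or replacing the complexes by their cochain (dual) versions, we may freely relabel; and in fact the distance bound~\eqref{eq:sspedis} in Theorem~\ref{thm:sspe} only needs the collections $(B_*(\cC_1), Z_*(\cC_2))$ and $(Z_*(\cC_1))$ (together with the analogue $(B_*(\cC_1), Z_*(\cC_2))$ for the cochain complex, i.e.\ the pair of dual codes), which via Lemma~\ref{lem:RSdual} is again a pair of RS codes with dimension sum $k_1 + (n - k_2) = n + (R_1 - R_2)n$ — so one of the two orderings of $R_1, R_2$ makes this $\leq (1-\epsilon)n$. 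Hence the distance bound $D \geq \rho_1^2 n^2 \geq \Delta(\epsilon) n^2$ follows unconditionally from $|R_1 - R_2| \geq \epsilon$. For the local testability bound~\eqref{eq:sspeexp} we genuinely need product-expansion of $(B_*(\cC_1), B_*(\cC_2))$, so there I would additionally assume (as is WLOG after possibly replacing the $Q^i$ by their CSS-duals, which swaps $R_i \mapsto 1 - R_i$ and does not change the product code's parameters) that $R_1 + R_2 \geq 1 + \epsilon$; combined with $R_1, R_2 \leq 1 - \epsilon$ this gives both needed RS product-expansion bounds via Theorem~\ref{thm:pe2RS}, and then $\mu_*(\cA) \leq 1/(\rho'\min\{\rho',\Delta_1,\Delta_2\}) = O(1/\epsilon^2)$, yielding soundness $\rho = \Omega(\epsilon^2)$ after the normalization $\rho(Q) = (n_i/n_{i\pm1}) \cdot \mu^{-1}$ from Section~\ref{sec:cc} (here $n_i/n_{i\pm1} = 1$ for single-sector complexes).

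Finally I would assemble the parameters: $N = n^2$ (the vector space is $\bF_q^n \otimes \bF_q^n$), the locality is $w^{\cA} \leq w^{\cC_1} + w^{\cC_2} \leq 2n$ since each $\partial_i = H_X^\top H_Z$ has rows/columns of weight at most $n$, the dimension is $K = \dim H_*(\cA) = \dim H_*(\cC_1)\cdot\dim H_*(\cC_2) = (R_1 n)(R_2 n) = R_1 R_2 n^2$ by the single-sector Künneth formula (Proposition~\ref{prop:sskunneth}), and the distance and soundness bounds are as above. The main obstacle — and it is a minor bookkeeping issue rather than a mathematical one — is correctly matching the dimension-sum constraint in Theorem~\ref{thm:pe2RS} (which demands the two RS dimensions sum to at most $(1-\epsilon)n$) against the three different code-pairs appearing in Theorem~\ref{thm:sspe}, and checking that the hypotheses $R_1, R_2 \leq 1-\epsilon$ and $|R_1 - R_2| \geq \epsilon$ (together with the freedom to swap or dualize) suffice to put each relevant pair into the admissible regime. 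Once that is verified, the corollary is immediate.
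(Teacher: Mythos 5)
Your high-level route is exactly the paper's: combine Theorem~\ref{thm:sspe}, Lemma~\ref{lem:codetocomplex}, and the Reed--Solomon product-expansion bound (Theorem~\ref{thm:pe2RS}), in parallel with Corollary~\ref{cor:ssperandom}. But there is a genuine error at the very first step that derails precisely the bookkeeping you identify as the crux. You set $\dim Q_X^i=\dim Q_Z^i=k_i=R_in$, conflating the classical dimensions with the quantum dimension: for a CSS code the quantum dimension is $\dim Q_Z^i-\dim{Q_X^i}^\perp$, so here $\dim(Q_X^i)=\dim(Q_Z^i)=(1+R_i)n/2$. Consequently $Z_*(\cC_i)=\ker(H_X^\top H_Z)=Q_Z^i$ is RS of rate $(1+R_i)/2$ and $B_*(\cC_i)=\im(H_X^\top H_Z)={Q_X^i}^\perp$ is RS of rate $(1-R_i)/2$ (and on the cochain side $Z^*(\cC_i)=Q_X^i$, $B^*(\cC_i)={Q_Z^i}^\perp$). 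With the correct rates, the chain pair $(B_*(\cC_1),Z_*(\cC_2))=({Q_X^1}^\perp,Q_Z^2)$ and the cochain pair $(B^*(\cC_1),Z^*(\cC_2))=({Q_Z^1}^\perp,Q_X^2)$ \emph{both} have dimension sum $\bigl(1-(R_1-R_2)/2\bigr)n\le(1-\epsilon/2)n$ (taking WLOG $R_1-R_2\ge\epsilon$, which is legitimate since swapping the tensor factors gives an isomorphic code), and the testability pairs $({Q_X^1}^\perp,{Q_X^2}^\perp)$ and $({Q_Z^1}^\perp,{Q_Z^2}^\perp)$ have dimension sum $\bigl(1-(R_1+R_2)/2\bigr)n\le(1-\epsilon/2)n$ because $R_1\ge\epsilon$. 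So Theorem~\ref{thm:pe2RS} applies to every collection that Theorem~\ref{thm:sspe} requires, on both the chain and cochain sides, with no extra hypotheses; this is exactly the point made in the introduction that one only needs expansion of $({Q_X^1}^\perp,Q_Z^2)$ and $({Q_Z^1}^\perp,Q_X^2)$, which are \emph{not} dual pairs of one another.

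The two patches your miscount forces you into are therefore both unnecessary and invalid. For the distance, you claim the cochain requirement is ``the pair of dual codes'' of the chain pair and that ``one of the two orderings'' of $R_1,R_2$ suffices; but the quantum distance is the minimum of the systolic and cosystolic distances of the \emph{same} product complex, so you need both pairs to be product-expanding simultaneously, and under your rates one of the two dimension sums always exceeds $n$ --- reordering merely exchanges which bound fails, so your ``unconditional'' distance claim does not follow from your own accounting. For the soundness bound~(\ref{eq:sspeexp}), you propose to assume $R_1+R_2\ge1+\epsilon$ ``WLOG after replacing the $Q^i$ by their CSS-duals'': no such operation is available (the pair $({Q_Z^i}^\perp,{Q_X^i}^\perp)$ is not a CSS code in general), and even if it were, it would change the complexes $\cC_i$ and hence the product code the corollary asserts something about. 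With the corrected dimensions neither maneuver is needed, the constants degrade only from $\epsilon$ to $\epsilon/2$ (e.g.\ $\Delta_i\ge\epsilon/2$), and the paper's short proof goes through verbatim.
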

\begin{proof}
  The corollary follows immediately from Theorem~\ref{thm:sspe}, Lemma~\ref{lem:codetocomplex}, and Theorem~\ref{thm:pe2RS}. Specifically, assuming without loss of generality that $R_1-R_2>\epsilon$, then $B_*(\cC_1),\;Z_*(\cC_2)$ are classical Reed-Solomon codes of rates $(1-R_1)/2,\;(1+R_2)/2$ respectively, so Theorem~\ref{thm:pe2RS} ensures that this pair of codes has product-expansion at least some $\rho=\rho(\epsilon/2)>0$. We may similarly conclude that $(B_*(\cC_1),B_*(\cC_2))$ has product-expansion at least $\rho$, so we may apply Theorem~\ref{thm:sspe} to conclude the desired result.
\end{proof}

While product-expanding random codes has been used for the ``local codes'' in constructions of asymptotically good quantum LDPC codes \cite{panteleev_asymptotically_2022,leverrier_quantum_2022-1,dinur_good_2023}, the product-expansion of Reed-Solomon codes seems to be too weak for this purpose, as observed in \cite{kalachev_two-sided_2023}. Specifically, this application requires classical codes $C_1,C_2$ such that both $(C_1,C_2)$ and $(C_1^\perp,C_2^\perp)$ are product-expanding, which cannot be obtained using Theorem~\ref{thm:pe2RS} due to the requirement that $R_1+R_2<1$. \cite{kalachev_two-sided_2023} explains how this issue seems to stem from the fact that the dual of a Reed-Solomon code is another Reed-Solomon code, but $(C_1,C_2)$ will have poor product-expansion if $C_1^\perp\subseteq C_2$. It is therefore perhaps surprising that we are able to use the product-expansion of quantum Reed-Solomon codes in Corollary~\ref{cor:sspeRS} to obtain good quantum codes with nontrivial locality.

\subsection{Higher Order Products}
Theorem~\ref{thm:sspe} also provides a way of obtaining higher-dimensional homological products with good distance, if we are given CSS codes satisfying appropriate product-expansion properties. \cite{bravyi_homological_2014} conjectured that such higher products with good distance would exist, so our result is a conditional affirmation of their conjecture.

% For intuition, we can also simplify the required product-expansion property in Theorem~\ref{thm:sspe} using Lemma~\ref{lem:pesubmain}, which when combined with Lemma~\ref{lem:codetocomplex}, immediately yields the following corollary:

% \begin{corollary}
%   \label{cor:applysspe}
%   Let $(Q^{(i)}=(Q_X^{(i)},Q_Z^{(i)}))_{i\in[t]}$ be a collection of length-$n$ CSS codes over any $\bF_q$ of characteristic $2$, such that $\dim Q_X^{(i)}=\dim Q_Z^{(i)}$ for all $i\in[t]$. Assume that $(Q_X^{(i)})_{i\in[t]}$ and $(Q_Z^{(i)})_{i\in[t]}$ both have product-expansion at least $\rho>0$. Let $\cC_i$ be the single-sector complex obtained from $Q^{(i)}$ via Lemma~\ref{lem:codetocomplex}. Then there exists some $\Delta=\Delta(\rho,t)$ depending only on $\rho$ and $t$ such that the quantum code associated to the homological product $\cA=\bigotimes_{i\in[t]}\cC_i$ is a $[[n^t,\;\prod_{i\in[t]}k_i,\;\Delta\cdot n^t]]$ code of locality $\leq tn$, where $k_i$ denotes the dimension of $Q^{(i)}$.
% \end{corollary}

% If $\rho>0$ is a constant independent of $n$ and $t$ is a large constant, then the quantum code associated to $\cA$ in Corollary~\ref{cor:applysspe} has length $N=n^t$, small polynomial locality $\leq O(N^{1/t})$, and linear distance $\Theta(N)$.

The following corollary resolves this conjecture of \cite{bravyi_homological_2014} in the affirmative over sufficiently large alphabets. Specifically, applying the product-expansion result of Kalachev \& Panteleev described in Theorem~\ref{thm:petrand} with Theorem~\ref{thm:sspe} immediately yields the following:

\begin{corollary}
  \label{cor:sspemanyrandom}
  For every $\epsilon>0$ and $t\in\bN$, there exist $\Delta=\Delta(\epsilon,t)>0$ such that the following holds: For $i\in[t]$, let $Q^i=(Q_X^i,Q_Z^i)$ be a uniformly random length-$n$, rate-$R_i$ CSS code over $\bF_q$, $q=2^{(n+3)^t}$, such that $\dim(Q_X^i)=\dim(Q_Z^i)$, and such that the chosen rates satisfy $R_i\leq 1-\epsilon$. Let $\cC_i$ be the single-sector complex obtained from $Q^i$ via Lemma~\ref{lem:codetocomplex}. Then with probability approaching $1$ as $n\rightarrow\infty$, the quantum code associated to the homological product $\cA=\cC_1\otimes\cdots\otimes\cC_t$ is a $[[n^t,\; (\prod_{i\in[t]}R_i)\cdot n^t,\; \Delta\cdot n^t]]_q$ code with locality $\leq tn$.
\end{corollary}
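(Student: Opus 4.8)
The plan is to derive Corollary~\ref{cor:sspemanyrandom} directly by combining the general distance bound from Theorem~\ref{thm:sspe} with the product-expansion guarantee of Theorem~\ref{thm:petrand}, using Lemma~\ref{lem:codetocomplex} to pass from CSS codes to single-sector complexes. First I would fix the parameters: for $i\in[t]$ take the uniformly random rate-$R_i$ CSS code $Q^i=(Q^i_X,Q^i_Z)$ over $\bF_q$ with $q=2^{(n+3)^t}$ and $\dim Q^i_X=\dim Q^i_Z$, so that Lemma~\ref{lem:codetocomplex} yields a single-sector complex $\cC_i=(\bF_q^n,\partial_i=H_X^\top H_Z)$ whose associated quantum code is exactly $Q^i$. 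Note that $Z_*(\cC_i)=\ker\partial_i=Q^i_Z$ and (by the standard computation, as in the $t=2$ corollaries) $B_*(\cC_i)=\im\partial_{i}=\im(H_X^\top)$ has dimension $n-\dim Q^i_X$, and in fact $B_*(\cC_i)=(Q^i_X)^\perp$, which for a uniformly random $Q^i_X$ is a uniformly random code of dimension $(1-R_i)n$.

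The main step is the product-expansion input. Theorem~\ref{thm:sspe} requires, for each $i\in[t]$, that the collection $(B_*(\cC_1),\dots,B_*(\cC_{i-1}),Z_*(\cC_i))$ has product-expansion at least some $\rho_i>0$; the conclusion is then $d_*(\cA)\geq\prod_{i\in[t]}(\rho_i n)$. Each such collection is a tuple of (at most $t$) independent uniformly random linear codes over $\bF_q$: the first $i-1$ are $(Q^1_X)^\perp,\dots,(Q^{i-1}_X)^\perp$ of rates $1-R_1,\dots,1-R_{i-1}$, and the last is $Q^i_Z$ of rate $R_i$ (here one uses that for a uniformly random CSS code $Q^i$ with $\dim Q^i_X=\dim Q^i_Z$, the codes $Q^i_Z$ and $(Q^i_X)^\perp$ are each individually uniformly random of the stated dimensions, and across different $i$ the codes are independent). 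All these rates lie in the fixed open interval, say $I=[\epsilon,1-\epsilon]\subseteq(0,1)$, using the hypothesis $R_i\le 1-\epsilon$ (and we may assume $R_i\ge\epsilon$ without loss of generality, or simply absorb the rate lower bound into the statement). Hence Theorem~\ref{thm:petrand}, applied with the $\le t$ intervals all equal to $I$ and with $q=2^{(n+3)^t}$ large enough, guarantees that each collection has product-expansion $\geq\rho$ for some $\rho=\rho(\epsilon,t)>0$, with probability approaching $1$ as $n\to\infty$. Taking a union bound over the $t$ events (one per value of $i$), all of these hold simultaneously with probability $\to 1$.

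Plugging $\rho_i\ge\rho$ into~\eqref{eq:sspedis} gives systolic distance $d_*(\cA)\ge\rho^t n^t$, and by symmetry (applying the same argument to the cochain complexes, i.e.\ swapping the roles of $Q^i_X$ and $Q^i_Z$, which are equidistributed) one gets the same bound on the cosystolic distance $d^*(\cA)$; thus the quantum code associated to $\cA$ has distance $D\ge\Delta\cdot n^t$ with $\Delta=\Delta(\epsilon,t):=\rho(\epsilon,t)^t>0$. The dimension is $K=\prod_{i\in[t]}\dim H_*(\cC_i)=\prod_{i\in[t]}(R_i n)=(\prod_i R_i)n^t$ by the single-sector K\"unneth formula (Proposition~\ref{prop:sskunneth}). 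Finally, the locality of $\cC_i$ is at most $n$ (each row/column of $\partial_i=H_X^\top H_Z$ has at most $n$ nonzero entries), so $\cA=\cC_1\otimes\cdots\otimes\cC_t$ has locality $\le tn$, and the associated quantum code has block length $N=n^t$. This gives the claimed $[[n^t,(\prod_i R_i)n^t,\Delta n^t]]_q$ parameters.

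The step I expect to require the most care is verifying that the relevant collections of codes genuinely satisfy the independence and uniform-randomness hypotheses needed to invoke Theorem~\ref{thm:petrand}: one must check that in a uniformly random CSS code $Q^i$ with $\dim Q^i_X=\dim Q^i_Z$, both $Q^i_Z$ and $(Q^i_X)^\perp$ are marginally uniform of the correct dimension (this is immediate since the map $Q^i_X\mapsto (Q^i_X)^\perp$ is a bijection on codes of complementary dimension), and that no problematic correlation is introduced by Lemma~\ref{lem:codetocomplex} (it is not, since $Z_*(\cC_i),B_*(\cC_i)$ depend only on $\ker H_Z=Q^i_Z$ and $\ker H_X=Q^i_X$ and not on the specific choice of full-rank parity checks). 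Everything else is a direct substitution into Theorem~\ref{thm:sspe} and a union bound over $t=O(1)$ events.
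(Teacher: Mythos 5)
Your outline matches the paper's proof in its skeleton (Lemma~\ref{lem:codetocomplex} $+$ Theorem~\ref{thm:sspe} $+$ Theorem~\ref{thm:petrand}), but you transfer product-expansion to the mixed collections differently. The paper applies Theorem~\ref{thm:petrand} only to the two $t$-tuples $(Q_X^i)_{i\in[t]}$ and $(Q_Z^i)_{i\in[t]}$ (for which the number of codes and the field size $q=2^{(n+3)^t}$ match the theorem statement exactly), and then deduces product-expansion of $\bigl(B_*(\cC_1),\dots,B_*(\cC_{i-1}),Z_*(\cC_i)\bigr)=\bigl((Q_X^1)^\perp,\dots,(Q_X^{i-1})^\perp,Q_Z^i\bigr)$ deterministically, via the subcode-closure Lemma~\ref{lem:pesubmain} (together with dropping codes as in Lemma~\ref{lem:pefewercodes}), using $(Q_X^j)^\perp\subseteq Q_Z^j$. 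You instead invoke Theorem~\ref{thm:petrand} directly on each mixed collection, arguing independence across distinct $j$ and marginal uniformity; that is a legitimate shortcut which bypasses Lemma~\ref{lem:pesubmain} entirely, except that Theorem~\ref{thm:petrand} as stated ties the field size to the number of codes in the collection ($q=2^{(n+3)^{t'}}$ for $t'$ codes), so for the collections with $i<t$ codes over $\bF_{2^{(n+3)^t}}$ it is not literally applicable. Either assume/justify monotonicity in $q$, or pad each collection with the independent uniform codes $Q_Z^{i+1},\dots,Q_Z^{t}$ to a $t$-tuple, apply Theorem~\ref{thm:petrand}, and then drop the extra codes using Lemma~\ref{lem:pefewercodes} --- or simply follow the paper's route.

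One concrete error: your classical rates are off. Since the quantum rate is $R_j$ and $\dim Q_X^j=\dim Q_Z^j$, one has $\dim Q_X^j=\dim Q_Z^j=(1+R_j)n/2$ and $\dim (Q_X^j)^\perp=(1-R_j)n/2$ (not $(1-R_j)n$), so the codes in your collections have rates $(1-R_j)/2$ and $(1+R_i)/2$, not $1-R_j$ and $R_i$. With the correct values all rates lie in $[\epsilon/2,\,1-\epsilon/2]$ automatically, so no lower bound on $R_i$ is needed; your ``WLOG $R_i\ge\epsilon$'' is not actually without loss of generality (the corollary imposes no lower bound on $R_i$, and $\Delta$ must depend only on $\epsilon,t$), so as written your argument proves a weaker statement. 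The dimension, length, locality, and chain/cochain symmetry steps are fine and agree with the paper.
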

\begin{proof}
  By the definition of a random CSS code, $(Q_X^i)_{i\in[t]}$ and $(Q_Z^i)_{i\in[t]}$ are both collections of random linear codes in $\bF_q^n$, subject to the $i$'th code in each collection having rate $(1+R_i)/2$; by definition the codes within each collection are independently drawn, while there are dependencies between the collections. Therefore by Theorem~\ref{thm:petrand}, there exists some $\rho=\rho(\epsilon,t)>0$ such that both $(Q_X^i)_{i\in[t]}$ and $(Q_Z^i)_{i\in[t]}$ are $\rho$-product-expanding with probability $\rightarrow 1$ as $n\rightarrow\infty$. Then Theorem~\ref{thm:sspe} along with Lemma~\ref{lem:pesubmain} immediately implies the desired result.
\end{proof}

While Corollary~\ref{cor:sspemanyrandom} provides asymptotically good $[[N=n^t,\Theta(N),\Theta(N)]]_q$ codes, the alphabet size $q=2^{(n+3)^t}$ grows exponentially in the block length, and the locality $tn=tN^{1/t}$ grows as a small polynomial in the block length for large constant $t$. Reducing the alphabet size would require proving product-expansion of collections of $t$ codes over some smaller alphabet than achieved in Theorem~\ref{thm:petrand}, which remains an interesting open question. The locality could potentially be reduced using a locality-reduction procedure like that of \cite{hastings_quantum_2023} (see Theorem~\ref{thm:locred}). However, \cite{hastings_quantum_2023} only considers binary alphabets, so their techniques would need to be generalized to larger alphabets in order to apply to the codes in Corollary~\ref{cor:sspemanyrandom}. % \vnote{Is this worth looking into? Perhaps can suggest to some student who wants a doable quantum project?} \lnote{I think David Lin is currently writing a locality-reduction paper that improves upon Hastings' techniques, i.e.~getting better parameters, maybe for longer complexes, etc. So might be worth waiting to see how that turns out first?}

\subsection{Proof of Theorem~\ref{thm:sspe}}
\label{sec:sspeproof}
We prove the two bounds in Theorem~\ref{thm:sspe} separately. The first bound~(\ref{eq:sspedis}) will follow directly from the following lemma. Below, we define all variables as in Theorem~\ref{thm:sspe}. Similarly as in Section~\ref{sec:pe}, for a subspace $V\subseteq C_i$, we let $V^{(i)}\subseteq A=\bigotimes_{i\in[t]}C_i$ denote the space of $t$-dimensional tensors in which every direction-$i$ column lies in $V$. For a linear map $f$ acting on $C_i$, we similarly let $f^{(i)}=I^{\otimes i-1}\otimes f\otimes I^{t-i}$ denote the map that simply applies $f$ to each direction-$i$ column. When $t$ is not clear from context, we write $V^{(i;t)}=V^{(i)}$ and $f^{(i;t)}=f^{(i)}$.

\begin{lemma}
  \label{lem:sspedis}
  Fix any $t\in\bN$ and define all variables as in Theorem~\ref{thm:sspe}. Then for every
  \begin{equation*}
    a \in \bigotimes_{i\in[t]}Z_*(\cC_i)+\sum_{i\in[t]}B_*(\cC_i)^{(i)}
  \end{equation*}
  such that there exist $z^i\in Z^*(\cC_i)$ for $i\in[t]$ satisfying
  \begin{equation*}
    % \left(\bigotimes_{i\in[t]}z_i^\top\right)a =
    \left(\bigotimes_{i\in[t]}z^i\right)\cdot a \neq 0,
  \end{equation*}
  it holds that
  \begin{equation}
    \label{eq:pedisbound}
    |a| \geq \prod_{i\in[t]}(\rho_in_i).
  \end{equation}
\end{lemma}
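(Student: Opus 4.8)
\textbf{Proof strategy for Lemma~\ref{lem:sspedis}.} The plan is to induct on $t$, mimicking the $t=2$ argument already given in the technical overview (the proof of the distance bound in Theorem~\ref{thm:sspeinf}), but being careful about how the product-expansion hypotheses chain together. The base case $t=1$ is immediate: there $a\in Z_*(\cC_1)$ with $z^1\cdot a\neq 0$, so $a\notin B_*(\cC_1)$ (by nondegeneracy of the homology pairing, i.e.~Lemma~\ref{lem:hombasis}), and hence $|a|\geq d_*(\cC_1)=\rho_1 n_1$ since the product-expansion of a single code is its relative distance.

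For the inductive step, write $a = a_0 + \sum_{i\in[t]} b_i$ where $a_0 \in \bigotimes_{i\in[t]} Z_*(\cC_i)$ and $b_i \in B_*(\cC_i)^{(i)}$. Group the decomposition as $a = a_1' + a_{\mathrm{rest}}'$ by thinking of $a$ as living in $B_*(\cC_1)^{(1)} + \bigl(\text{something in the direction-1 space } Z_*(\cC_1)^{(1)}\bigr)$: indeed $a_0 + \sum_{i\geq 2} b_i \in Z_*(\cC_1)^{(1)}$ (each factor in $a_0$ is a cycle in its own direction, and each $b_i$ for $i\geq 2$ has its direction-1 columns arbitrary, hence can be absorbed — more precisely $Z_*(\cC_1)^{(1)} = \bF_q^{n_1}\otimes(\text{everything})$ composed with the direction-1 cycle condition, and $b_i\in B_*(\cC_i)^{(i)}\subseteq Z_*(\cC_i)^{(i)}$ while $a_0\in Z_*(\cC_1)^{(1)}$). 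So $a \in B_*(\cC_1)^{(1)} + Z_*(\cC_1)^{(1)} = Z_*(\cC_1)^{(1)}$; but to invoke product-expansion we instead view $a\in B_*(\cC_1)\boxplus Z_*(\cC_1')$ where $Z_*(\cC_1') $ plays the role of the ``rest'' — wait, more carefully: we view $a$ as an element of $C^{(1)}+C^{(2')}$ where $C^{(1)} = B_*(\cC_1)^{(1)}$ absorbs $b_1$, and $C^{(2')}$ is the space spanned by $a_0 + \sum_{i\geq2}b_i$, which lies in $Z_*(\cC_1)\otimes(\bigotimes_{i\geq 2}C_i)$. Apply the product-expansion of the pair $(B_*(\cC_1), Z_*(\cC_1))$ — which has product-expansion $\geq\rho_1$ by hypothesis (this is the $i=1$ case, since $B_*$ of nothing is trivial and the collection is just $(Z_*(\cC_1))$... actually the relevant pair here requires re-examination; see the obstacle below) — to get a new decomposition $a = a_1'' + a_{\mathrm{rest}}''$ with $a_1''\in B_*(\cC_1)^{(1)}$, $a_{\mathrm{rest}}''\in Z_*(\cC_1)^{(1)}\cap(\text{the tensor of the remaining spaces})$, and $|a|\geq \rho_1(n_1|a_1''|_1 + (\prod_{i\geq 2}n_i)|a_{\mathrm{rest}}''|_{\{2,\dots,t\}})$, where $|\cdot|_{\{2,\dots,t\}}$ counts nonzero direction-1 columns. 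Then, as in the $t=2$ proof, contract $a_{\mathrm{rest}}''$ against $z^1$ in direction 1: using $(\partial_1^{\cC}{}^{\!\top}\otimes\text{stuff})$ or rather pairing with $z^1$ and the $b_i$ structure, show that $(z^1 \cdot_1 a_{\mathrm{rest}}'') \in \bigotimes_{i\geq 2}Z_*(\cC_i) + \sum_{i\geq 2}B_*(\cC_i)^{(i)}$ and is paired nontrivially by $\bigotimes_{i\geq2}z^i$. Apply the induction hypothesis to this $(t-1)$-dimensional tensor to get $|z^1\cdot_1 a_{\mathrm{rest}}''| \geq \prod_{i\geq 2}(\rho_i n_i)$, hence the number of nonzero direction-1 columns of $a_{\mathrm{rest}}''$ is at least $\prod_{i\geq2}(\rho_i n_i)$, i.e.~$|a_{\mathrm{rest}}''|_{\{2,\dots,t\}}\geq \prod_{i\geq2}(\rho_i n_i)$. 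Substituting back gives $|a| \geq \rho_1 n_1 \cdot \prod_{i\geq2}(\rho_i n_i) = \prod_{i\in[t]}(\rho_i n_i)$.

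\textbf{The main obstacle.} The delicate point is bookkeeping the product-expansion hypotheses: the lemma's statement uses $\rho_i = $ product-expansion of $(B_*(\cC_1),\dots,B_*(\cC_{i-1}),Z_*(\cC_i))$, a collection of $i$ codes, not just a pair. So the induction cannot simply peel off one direction at a time using a two-code product-expansion bound; I would instead need to set up the induction so that at stage $i$ I am working inside a $t$-fold tensor and splitting off the $i$-th direction using the $i$-code product-expansion of $(B_*(\cC_1),\dots,B_*(\cC_{i-1}),Z_*(\cC_i))$, treating the already-processed directions $1,\dots,i-1$ as contributing $B_*$ factors (they have been ``boundarified'' by the previous steps) and direction $i$ as contributing a $Z_*$ factor. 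This means the right induction is not on $t$ but rather a forward sweep $i=1,2,\dots,t$: maintain the invariant that after processing directions $1,\dots,i$, the tensor has been rewritten modulo the relevant boundaries so that contracting against $z^1,\dots,z^i$ in those directions yields a nonzero tensor in $\bigotimes_{j>i}Z_*(\cC_j) + \sum_{j>i}B_*(\cC_j)^{(j)}$, with a running lower bound $|a|\geq \prod_{j\leq i}(\rho_j n_j)\cdot(\text{number of surviving columns in directions }>i)$. Verifying that each sweep step legitimately uses exactly the collection $(B_*(\cC_1),\dots,B_*(\cC_{i-1}),Z_*(\cC_i))$ and that the Künneth-type coset/pairing argument (Lemma~\ref{lem:hombasis}, Corollary~\ref{cor:ssprodcycles}) goes through with the signs/characteristic-2 simplification is where the real work lies; the individual algebraic manipulations (that contracting a cycle-plus-boundary against a cocycle lands in the expected space, and preserves nontriviality of the pairing) are routine generalizations of the displayed $t=2$ computation.
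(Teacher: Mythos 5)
Your overall strategy --- induct, use product-expansion to regroup the decomposition, contract against one cocycle to drop a dimension, and use the pairing to keep nontriviality --- is indeed the paper's strategy, and your base case and final weight accounting are right. But there is a genuine gap in how you deploy the product-expansion hypotheses, and your proposed repair points the wrong way. The hypotheses only give expansion for the collections $(B_*(\cC_1),\dots,B_*(\cC_{i-1}),Z_*(\cC_i))$, in which the boundary codes occupy the directions of \emph{smaller} index than the cycle code. Your inductive step peels off direction $1$, which would need expansion of a collection of the shape $(Z_*(\cC_1),\dots)$ with the remaining directions as boundary codes --- not assumed (at $i=1$ you only have $\rho_1$, the relative distance of $Z_*(\cC_1)$, which is useless against the pieces in $B_*(\cC_j)^{(j)}$ for $j\geq 2$, whose direction-$1$ columns are arbitrary); you flag this yourself. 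Your ``forward sweep $i=1,\dots,t$'' then inverts the bookkeeping: you treat the \emph{already-processed} directions as the $B_*$ factors, but at the stage where direction $i$ carries the $Z_*$ code, the directions $1,\dots,i-1$ carrying the $B_*$ codes must be the ones still \emph{unprocessed} (and still present in the tensor); moreover the $i$-code expansion statement applies to $i$-dimensional tensors, so it cannot be invoked while directions $i+1,\dots,t$ remain uncontracted without extra slice-wise arguments you have not supplied.

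The fix is the paper's, and it is a plain induction on $t$ after all: peel the \emph{last} direction first. Writing $a=a_0+a_1+\cdots+a_t$ with $a_0\in\bigotimes_i Z_*(\cC_i)$ and $a_i\in B_*(\cC_i)^{(i)}$, note that $a_0$ and $a_t$ both lie in $Z_*(\cC_t)^{(t)}$, so $a\in B_*(\cC_1)^{(1)}+\cdots+B_*(\cC_{t-1})^{(t-1)}+Z_*(\cC_t)^{(t)}$, which is exactly the dual-tensor space for the $t$-code collection whose expansion is $\rho_t$. Applying $\rho_t$ together with Lemma~\ref{lem:homvanexp} gives $|a|\geq\rho_t n_t\,|a_0+a_t+\sum_{i<t}a_{i,t}|_t$ with each correction term $a_{i,t}\in B_*(\cC_i)^{(i)}\cap Z_*(\cC_t)^{(t)}$; contracting with ${z^t}^\top$ kills $a_t$, sends each $a_{i,t}$ into $B_*(\cC_i)^{(i;t-1)}$, and preserves the nonzero pairing with $\bigotimes_{i<t}z^i$, so the contracted $(t-1)$-dimensional tensor satisfies the lemma's hypotheses and the inductive bound $\prod_{i<t}(\rho_in_i)$ lower-bounds the number of nonzero direction-$t$ columns. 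Your sketch never identifies this peel-direction-$t$ step (note that already in your $t=2$ template the expansion used is that of $(B_*(\cC_1),Z_*(\cC_2))$ and the contraction is against $z^2$, i.e.\ the last direction), nor the role of Lemma~\ref{lem:homvanexp} in placing the corrections inside $B_*(\cC_i)^{(i)}\cap Z_*(\cC_t)^{(t)}$, and it leaves precisely this bookkeeping as ``the real work.''
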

\begin{proof}
  We prove the lemma by induction on $t$. For the base case, if $t=1$, then $a\in Z_*(\cC_1)$ and $z^1\in Z^*(\cC_1)$ with $a\cdot z^1\neq 0$. It follows that $a$ and $z^1$ are representatives of nontrivial homology and cohomology classes respectively, so $|a|\geq d_*(\cC_1)\geq\rho_1n_1$, and~(\ref{eq:pedisbound}) holds, as desired.

  For the inductive step, let $t\geq 2$, and assume that the lemma holds for $t-1$. Choose any $a_0\in\bigotimes_{i\in[t]}Z_*(\cC_i)$ and $a_i\in B_*(\cC_i)^{(i)}$ for $i\in[t]$ such that $a=a_0+a_1+\cdots+a_t\in A$. By definition $a_0$ and $a_t$ both lie in $Z_*(\cC_t)^{(t)}$, so
  \begin{equation*}
    a_1+\cdots+a_{t-1}+(a_0+a_t) \in B_*(\cC_1)^{(1)}+\cdots+B_*(\cC_{t-1})^{(t-1)}+Z_*(\cC_t)^{(t)}.
  \end{equation*}
  Recalling that $\rho_t$ denotes the product-expansion of $(B_*(\cC_1),\dots,B_*(\cC_{t-1}),Z_*(\cC_t))$, Definition~\ref{def:pe} with Lemma~\ref{lem:homvanexp} implies that there exist choices of $a_{i,t}\in B_*(\cC_i)^{(i)}\cap Z_*(\cC_t)^{(t)}$ for all $i\in[t-1]$ such that
  \begin{equation}
    \label{eq:applype}
    |a| \geq \rho_tn_t\biggl|a_0+a_t+\sum_{i\in[t-1]}a_{i,t}\biggr|_t,
  \end{equation}
  where as in Definition~\ref{def:dirham}, $|\cdot|_t$ denotes the number of nonzero direction-$t$ columns in a given tensor.

  Define $a'\in\bigotimes_{i\in[t-1]}C_i$ by
  \begin{equation*}
    a' := (I^{\otimes t-1}\otimes {z^t}^\top)\biggl(a_0+a_t+\sum_{i\in[t-1]}a_{i,t}\biggr) = (I^{\otimes t-1}\otimes {z^t}^\top)\biggl(a_0+\sum_{i\in[t-1]}a_{i,t}\biggr),
  \end{equation*}
  where the second equality above holds because every direction-$t$ column of $a_t$ by definition lies in $B_*(\cC_t)$ and is therefore orthogonal to $z^t\in Z^*(\cC_t)$. Thus we find that $a'=a_0'+\sum_{i\in[t-1]}a_i'$ for vectors
  \begin{align*}
    a_0' &:= (I^{\otimes t-1}\otimes{z^t}^\top)a_0 \in \bigotimes_{i\in[t-1]}Z_*(\cC_i) \\
    a_i' &:= (I^{\otimes t-1}\otimes{z^t}^\top)a_{i,t} \in B_*(\cC_i)^{(i;t-1)} \hspace{1em} \forall i\in[t-1],
  \end{align*}
  % where $B_*(\cC_i)^{(i,t-1)}$ denotes the space of $(t-1)$-dimensional tensors in which direction-$i$ column lies in $B_*(\cC_i)$ (analogously to the space $B_*(\cC_i)^{(i)}$ of $t$-dimensional tensors).
  Furthermore,
  \begin{equation*}
    \left(\bigotimes_{i\in[t-1]}z^i\right)\cdot a' = \left(\bigotimes_{i\in[t]}{z^i}^\top\right)a_0 = \left(\bigotimes_{i\in[t]}z^i\right)\cdot a \neq 0,
  \end{equation*}
  as every direction-$i$ column of $a_i$ and $a_i'$ lies in $B_*(\cC_i)$ and therefore is orthogonal to $z^i\in Z^*(\cC_i)$. Note that here we use the fact that dotting a tensor with $\bigotimes_iz^i$ is equivalent to applying the functional $\bigotimes_i{z^i}^\top$ to the tensor.

  Thus we have shown that $a'$ and $z^1,\dots,z^{t-1}$ satisfy the conditions of the lemma statement with $t$ replaced by $t-1$, so the inductive hypothesis implies that
  \begin{equation}
    \label{eq:peapplyind}
    |a'| \geq \prod_{i\in[t-1]}(\rho_in_i).
  \end{equation}
  By definition, a given component of $a'$ can only be nonzero if the respective direction-$t$ column of $a_0+a_t+\sum_{i\in[t-1]}a_{i,t}$ is nonzero. Thus~(\ref{eq:applype}) and~(\ref{eq:peapplyind}) imply that
  \begin{equation*}
    |a| \geq \prod_{i\in[t]}(\rho_in_i),
  \end{equation*}
  completing the inductive step.
\end{proof}

We show below how Lemma~\ref{lem:sspedis} implies the bound~(\ref{eq:sspedis}) in Theorem~\ref{thm:sspe}.

\begin{proof}[Proof of~(\ref{eq:sspedis}) in Theorem~\ref{thm:sspe}]
  For any $a\in Z_*(\cA)\setminus B_*(\cA)$, the K\"{u}nneth formula (Proposition~\ref{prop:sskunneth}, Corollary~\ref{cor:ssprodbases}, and Corollary~\ref{cor:ssprodcycles}) guarantees that
  \begin{equation*}
    a \in \bigotimes_{i\in[t]}Z_*(\cC_i)+\im(\partial^{\cA}) \subseteq \bigotimes_{i\in[t]}Z_*(\cC_i)+\sum_{i\in[t]}B_*(\cC_i)^{(i)},
  \end{equation*}
  and that there exist some $z^i\in Z^*(\cC_i)$ for $i\in[t]$ such that
  \begin{equation}
    \label{eq:finddualelem}
    \left(\bigotimes_{i\in[t]}z^i\right)\cdot a \neq 0.
  \end{equation}
  Specifically, Corollary~\ref{cor:ssprodbases} implies that the set of all $\bigotimes_{i\in[t]}z^i$ for $z^i\in Z^*(\cC_i)$ span the cohomology $H^*(\cC)$, so if no $z^i$ existed satisfying~(\ref{eq:finddualelem}), then the nondegeneracy of the natural cohomology/homology bilinear form would imply that $a$ necessarily lies in $B_*(\cA)$, a contradiction. Thus Lemma~\ref{lem:sspedis} implies that $|a|\geq\prod_{i\in[t]}(\rho_in_i)$, as desired.
\end{proof}

We now prove the second bound (Equation~(\ref{eq:sspeexp})) in Theorem~\ref{thm:sspe}.

\begin{proof}[Proof of~(\ref{eq:sspeexp}) in Theorem~\ref{thm:sspe}]
  Consider any boundary $b\in B_*(\cA)$. Letting $\mu$ denote the right-hand side of~(\ref{eq:sspeexp}), then our goal is to construct a filling of $b$ of weight $\leq\mu|b|$. To begin, choose an arbitrary filling $a\in A$ of $b$, so that
  \begin{equation*}
    \partial^{\cA}a = \partial_1^{(1)}a+\partial_2^{(2)}a = b.
  \end{equation*}
  Applying the $\rho'$-product expansion of $(B_*(\cC_1),B_*(\cC_2))$ to $b$, we conclude by Lemma~\ref{lem:homvanexp} that there exists some $e\in B_*(\cC_1)\otimes B_*(\cC_2)$ such that
  \begin{equation*}
    |b| \geq \rho' \cdot \bigl( n_1\cdot |e+\partial_1^{(1)}a|_1+n_2 \cdot |e+\partial_2^{(2)}a|_2 \bigr) \ .
  \end{equation*}
  Because $e\in\im(\partial_1)\otimes\im(\partial_2)$, there exist preimages $e_1,e_2\in A$ of $e$ under $\partial_1^{(1)},\partial_2^{(2)}$ respectively such that $e_1\in\bF_q^{n_1}\otimes\im(\partial_2)$ and $e_2\in\im(\partial_1)\otimes\bF_q^{n_2}$. Define $a':=a+e_1+e_2$. Then $\partial^{\cA}a'=\partial^{\cA}a+e+e=b$, and
  \begin{equation}
    \label{eq:addes}
    n_1|\partial_1^{(1)}a'|_1+n_2|\partial_2^{(2)}a'|_2 = n_1\cdot |e+\partial_1^{(1)}a|_1+n_2 \cdot |e+\partial_2^{(2)}a|_2 \leq \frac{|b|}{\rho'}.
  \end{equation}

  If $|b|/n_1n_2\geq\rho'\cdot\min\{\Delta_1,\Delta_2\}$, then $|a|\leq n_1n_2\leq\mu|b|$, so $a$ is the desired filling of $b$, and we are done. Therefore assume that $|b|/n_1n_2<\rho'\cdot\min\{\Delta_1,\Delta_2\}$. It follows that
  \begin{align*}
    \frac{|b|}{\rho'n_1} &<\Delta_2n_2 = d(Z_*(\cC_2)) \\
    \frac{|b|}{\rho'n_2} &<\Delta_1n_1 = d(Z_*(\cC_1)).
  \end{align*}
  By~(\ref{eq:addes}), $|\partial_1^{(1)}a'|_1\leq|b|/\rho'n_1$ and $|\partial_2^{(2)}a'|_2\leq|b|/\rho'n_2$. Now for $i=1,2$, let $S_i\subseteq[n_{3-i}]$ denote the set of direction-$i$ vectors in $a'$ that lie outside of $Z_*(\cC_i)$, so that $|S_i|=|\partial_i^{(i)}a'|_i$. Then we have shown that $a'\in A=\bF_q^{n_1\times n_2}$ is a matrix in which $|S_1|<d(Z_*(\cC_2))$ columns lie outside of $Z_*(\cC_1)$, and $|S_2|<d(Z_*(\cC_1))$ rows lie outside of $Z_*(\cC_2)$.

  % \vnote{A figure would be handy here or perhaps in the earlier overview.}
  Therefore there exists some $f\in Z_*(\cC_1)\otimes Z_*(\cC_2)$ that agrees with $a'$ in all columns in $[n_2]\setminus S_1$, and that agrees with $a'$ in all rows in $[n_1]\setminus S_2$. Specifically, we may obtain $f$ by first writing $a'|_{([n_1]\setminus S_2)\times([n_2]\setminus S_1)}\in C_1|_{[n_1]\setminus S_2}\otimes C_2|_{[n_2]\setminus S_1}$ as a sum of pure tensors $\sum_jc_{1,j}\otimes c_{2,j}$, and then letting $f=\sum_j\hat{c}_{1,j}\otimes\hat{c}_{2,j}$, where $\hat{c}_{i,j}$ denotes the unique element of $C_i$ whose restriction to components in $[n_i]\setminus S_{3-i}$ equals $c_{i,j}$.

  Now define $a'':=a'+f$. Then $\partial^{\cA}a''=\partial^{\cA}a'=b$ because $f\in\ker\partial^{\cA}$ by definition. Furthermore,
  \begin{align*}
    |a''|
    &\leq |S_1|\cdot|S_2| = |\partial_1^{(1)}a'|_1\cdot|\partial_2^{(2)}a'|_2 \leq \frac{|b|^2}{{\rho'}^2n_1n_2} \leq \frac{|b|}{{\rho'}^2} \leq \mu|b|,
  \end{align*}
  as desired, where the second inequality above holds by~(\ref{eq:addes}).
\end{proof}

% TODO: may want to extend above proof (and result in Theorem~\ref{thm:sspe}) to bound collective filling constant, instead of just filling constant?

\section{Homological Products of Quantum LDPC and Locally Testable Codes}
\label{sec:subprod}
In this section, we show how to construct quantum LDPC and locally testable codes with large distance as homological products of smaller such codes. Specifically, we apply the techniques of \cite{kaufman_new_2021} to appropriate homological products of subsystem chain complexes. Unlike Section~\ref{sec:peprod}, which considered single-sector chain complexes, this section will consider multi-sector subsystem chain complexes (see Section~\ref{sec:subsystem}).

\subsection{General Bounds on Homological Products}
\label{sec:genbounds}
In this section, we present some general bounds on the systolic distance and filling constants of homological products of multi-term subsystem chain complexes. The proofs here are adaptations of the proof of Theorem~3.1 in \cite{kaufman_new_2021}. However, there is an important conceptual difference: our results apply simultaneously to a chain complex and its cochain complex, due to the symmetric structure of the chain complexes we consider. That is, our results translate to distance and testability bounds for both $X$ and $Z$ errors in quantum codes. In contrast, \cite{kaufman_new_2021} were only able to use these techniques on cochain complexes to bound cosystolic distance and expansion, and needed separate techniques to bound systolic distance.

It is generally difficult to obtain homological products with good (i.e.~close to linear) systolic and cosystolic distances, as demonstrated by the fact that $N$-qudit hypergraph product codes have distance $O(\sqrt{N})$ \cite{tillich_quantum_2014}. In Theorem~\ref{thm:proddis} below, we circumvent this barrier by using subsystem codes. Specifically, the homological products we consider still have low-weight codewords, but these codewords are not used to store logical message qudits.

For simplicity we only present our results for chain complexes of lengths relevant for our purposes. However, similarly as shown in \cite{kaufman_new_2021}, the results below do generalize naturally to chain complexes with arbitrarily many terms.

In the results below, we frequently use the notation and terminology regarding chain complexes described in Section~\ref{sec:cc} and Section~\ref{sec:subsystem}. More motivation and intuition for these results, and specifically Theorem~\ref{thm:proddis}, can be found in Section~\ref{sec:subprodinf}.

\begin{theorem}
  \label{thm:proddis}
  Let $(\cA,L^{\cA})$ be a 0-subsystem chain complex where
  \begin{equation*}
    \cA = (A_1 \xrightarrow{\partial^{\cA}_1} A_0 \xrightarrow{\partial^{\cA}_0} A_{-1})
  \end{equation*}
  has 3 terms, and for some $i\in\bZ$ let $(\cB,L^{\cB})$ be an $i$-subsystem chain complex where
  \begin{equation*}
    \cB = (\cdots B_{i+2} \xrightarrow{\partial^{\cB}_{i+2}} B_{i+1} \xrightarrow{\partial^{\cB}_{i+1}} B_i \xrightarrow{\partial^{\cB}_{i}} B_{i-1} \xrightarrow{\partial^{\cB}_{i-1}} B_{i-2} \cdots)
  \end{equation*}
  has $\geq 5$ terms\footnote{This result, as well as Theorem~\ref{thm:prodfill} below, goes through when $B_{i+2}=0$, so that $\cB$ may have only $4$ terms. However, we include the $B_{i+2}$ term to emphasize the chain/cochain symmetry; all of these results apply equally well to the cochain complexes as to the chain complexes.}. Then the homological product $(\cC,L^{\cC})=(\cA,L^{\cA})\otimes(\cB,L^{\cB})$ has $i$-systolic distance
  \begin{equation*}
    d_i(\cC,L^{\cC}) \geq \frac{d_0(\cA,L^{\cA})\cdot d_i(\cB,L^{\cB})}{w^{\cA}\cdot \mu_i(\cB) \cdot n_{i-1}^{\cB}/d_{i-1}(\cB)} \ ,
  \end{equation*}
  where $n_{i-1}^{\cB}=\dim B_{i-1}$.
\end{theorem}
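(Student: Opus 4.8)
The plan is to prove the systolic distance bound for $\cC=\cA\otimes\cB$ by taking an arbitrary $Z$-cycle $c\in Z_0(\cC)$ that is \emph{not} in $\spn\{L^0\}^\perp$ (i.e.\ a nontrivial logical $Z$ operator of the subsystem code at level $0$ of $\cC$) and lower-bounding $|c|$. The starting point is the K\"{u}nneth decomposition at level $i=0$ of $\cC$: since $\cA$ has $3$ terms in degrees $1,0,-1$ and $\cB$ has $5$ terms in degrees $i+2,\dots,i-2$, the level-$0$ piece of $\cC=\cA\otimes\cB$ decomposes as $C_0=A_1\otimes B_{-1}\oplus A_0\otimes B_0\oplus A_{-1}\otimes B_1$, and by the K\"{u}nneth formula (Proposition~\ref{prop:kunneth}) the relevant homology $H_0(\cC)$ receives contributions from $H_1(\cA)\otimes H_{-1}(\cB)$, $H_0(\cA)\otimes H_0(\cB)$, and $H_{-1}(\cA)\otimes H_1(\cB)$. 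Crucially, the chosen logical tuple $L^{\cC}=L^{\cA}\otimes L^{\cB}$ lives only in the $H_0(\cA)\otimes H_0(\cB)$ summand. So if $c$ has nonzero pairing against some $a_\ell\otimes b_m\in L^0$, then its ``middle component'' in $A_0\otimes B_0$ carries a nontrivial class in $H_0(\cA)\otimes H_0(\cB)$, and in particular a nontrivial $\cA$-homology class and a nontrivial $\cB$-homology class. This is exactly the point where using a subsystem code lets us bypass the $\sqrt{N}$ barrier: we only need control of the middle-grid component, not the edge components.

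The core of the argument is then a diagram chase, adapted from the proof of Theorem~3.1 in \cite{kaufman_new_2021}. First I would reduce $c$ modulo $\im(\partial^{\cC}_1)$ — i.e.\ add a boundary — to bring it into a convenient form without changing its homology class or its pairing with $L^0$; the goal is to arrange that $c$ is supported (as much as possible) on the $A_0\otimes B_0$ block, paying for the cleanup in Hamming weight via the filling constant $\mu_i(\cB)$ of $\cB$ and the locality $w^{\cA}$ of $\cA$. Concretely: the $A_{-1}\otimes B_1$ component of $c$, after applying part of the boundary map $\partial^{\cA}_1\otimes I$ coming from $A_1\otimes B_1$, must be ``absorbed,'' and to cancel a column-pattern in the $B$-direction one fills a $\cB$-boundary, which inflates weight by at most $\mu_i(\cB)$; each such fill is triggered by at most $w^{\cA}$ nonzero entries in a row of $\partial^{\cA}$, giving the $w^{\cA}\cdot\mu_i(\cB)$ factors in the denominator. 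After this normalization, the middle block $c_0\in A_0\otimes B_0$ is a classical tensor-code-like object whose rows (in the $A$-direction, say) must represent the nontrivial $\cA$-class and whose columns must represent the nontrivial $\cB$-class; so $c_0$ has at least $d_0(\cA,L^{\cA})$ nonzero $A$-columns, and within the support each nonzero $B$-column — in order to still correct back to a genuine $\cB$-cycle rather than something killed by $\partial^{\cB}_i$ acting into $B_{i-1}$ — must have weight at least $d_i(\cB,L^{\cB})$, up to the correction factor $n_{i-1}^{\cB}/d_{i-1}(\cB)$ that accounts for the fact that after the normalization the column may only be a cycle ``modulo small syndrome in $B_{i-1}$,'' and a small syndrome there forces a small error by the distance $d_{i-1}(\cB)$ of the level-$(i-1)$ code.

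Multiplying the $A$-direction count $d_0(\cA,L^{\cA})$ by the per-column $B$-weight $d_i(\cB,L^{\cB})$ and dividing by the accumulated loss $w^{\cA}\cdot\mu_i(\cB)\cdot n_{i-1}^{\cB}/d_{i-1}(\cB)$ yields the claimed bound. The main obstacle I anticipate is making the ``absorb the edge components into a boundary, then read off the middle block'' step fully rigorous while keeping track of all the weight losses simultaneously in both grid directions — in particular, ensuring that the filling used to clean up the $A_{-1}\otimes B_1$ (and symmetrically $A_1\otimes B_{-1}$) components does not destroy the cycle condition or the pairing with $L^0$, and that the bookkeeping of $w^{\cA}$ (which bounds how many $\cB$-fills a single $\cA$-boundary operation can trigger) is tight. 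A secondary subtlety is that $c_0$ is a priori only a cycle ``up to the image of $\partial^{\cC}_1$ restricted appropriately,'' so one must invoke the distance $d_{i-1}(\cB)$ of the \emph{neighboring} level of $\cB$ — this is where the term $n_{i-1}^{\cB}/d_{i-1}(\cB)$ enters, and handling it requires care about which level's code distance governs which syndrome. The analogous bound for $d^i(\cC,L^{\cC})$ (cosystolic distance) follows by the symmetric argument on cochain complexes, exploiting the chain/cochain symmetry emphasized in the theorem's footnote.
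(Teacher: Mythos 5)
Your overall strategy is the same as the paper's: take $c=(c_1,c_0,c_{-1})\in Z_i(\cC)\setminus\spn\{(L^{\cC})^i\}^\perp$, modify it by a controlled boundary (weight paid via $\mu_i(\cB)$, propagated by $w^{\cA}$), and then read off the bound $d_0(\cA,L^{\cA})\cdot d_i(\cB,L^{\cB})$ from the middle block exactly as for a classical tensor code. But the sketch omits, or misstates, the steps that make this work. Writing $\partial^{(\cA)}=\partial^{\cA}\otimes I$ and $\partial^{(\cB)}=I\otimes\partial^{\cB}$: the component that must be cleaned is $c_1\in A_1\otimes B_{i-1}$ (in your indexing $A_1\otimes B_{-1}$), whose rows lie in $Z_{i-1}(\cB)$ because $\partial^{(\cB)}c_1=0$; the fill $f$ lives in $A_1\otimes B_i$ and one sets $c_0'=c_0+\partial^{(\cA)}f$. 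Your description --- absorbing the $A_{-1}\otimes B_1$ component using $\partial^{\cA}_1\otimes I$ out of $A_1\otimes B_1$ --- does not match the maps (that map lands in $A_0\otimes B_1$), and in fact no cleanup of $c_{-1}$ is needed at all: since $b^*\in Z^i(\cB)=B_i(\cB)^\perp$ and $\partial^{(\cA)}c_0'=\partial^{(\cB)}c_{-1}$, the vector $(I\otimes b^*)^\top c_0'$ is automatically in $Z_0(\cA)$.

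More importantly, the key idea is missing: decompose $c_1=\gamma_0+\gamma_1+\cdots+\gamma_h$ with $\gamma_0\in A_1\otimes B_{i-1}(\cB)$ and the $\gamma_j$ supported on fixed homology representatives of $H_{i-1}(\cB)$; the cycle condition $\partial^{(\cA)}c_1=-\partial^{(\cB)}c_0\in A_0\otimes B_{i-1}(\cB)$ forces $\partial^{(\cA)}\gamma_j=0$ for all $j\geq 1$, so only $\gamma_0$ needs a $\mu_i(\cB)$-fill, and after the modification every row of $c_0'$ is a genuine cycle in $Z_i(\cB)$ (not merely a cycle ``up to small syndrome''). The factor $n^{\cB}_{i-1}/d_{i-1}(\cB)$ arises precisely here, from $|\gamma_0|\leq (n^{\cB}_{i-1}/d_{i-1}(\cB))\,|c_1|$: a row of $c_1$ whose homology part is nonzero has weight at least $d_{i-1}(\cB)$, while its boundary part can have weight up to $n^{\cB}_{i-1}$. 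Your stated mechanism --- that a column is only a cycle modulo a small syndrome in $B_{i-1}$ and that ``a small syndrome forces a small error by the distance $d_{i-1}(\cB)$'' --- is not the actual source of that factor, and conflates distance with soundness (syndrome-to-error weight bounds are what filling constants give, not distance). Finally, the obstacle you flag about the fill destroying the pairing has a one-line resolution you should supply: $a^*\in Z^0(\cA)=B_0(\cA)^\perp$ is orthogonal to every column of $\partial^{(\cA)}f$, so $(a^*\otimes b^*)^\top c_0'=(a^*\otimes b^*)^\top c_0\neq 0$. With these pieces filled in, your outline becomes the paper's proof; as written, the central accounting step is not yet correct.
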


In Theorem~\ref{thm:iterative} in Section~\ref{sec:iterapp} below, we will repeatedly apply Theorem~\ref{thm:proddis} starting with a constant-sized 5-term chain complex given by \cite{dinur_expansion_2024,kalachev_personal_2024}. This starting complex is not a subsystem complex, but the complex resulting from a single application of Theorem~\ref{thm:proddis} is a subsystem complex (see Remark~\ref{remark:subsystemprod}). Therefore to iteratively apply Theorem~\ref{thm:proddis}, we need to allow $\cA$ to be a subsystem complex. It turns out that Theorem~\ref{thm:proddis} goes through when $\cB$ is a subsystem complex as well, so we present the result in such generality in case it is helpful in future applications.

We now outline the main idea for the proof of Theorem~\ref{thm:proddis}. Without loss of generality, if we re-index the chain complex $\cB$ to have $i=0$, then the homological product $\cC=\cA\otimes\cB$ is a 7-term complex given by the diagram below.

\begin{equation}
  \label{eq:3times5cd}
  \begin{tikzcd}
    A_1\otimes B_2  \arrow[r,"I\otimes\partial_2^{\cB}"] \arrow[d,"\partial_1^{\cA}\otimes I"] & A_1\otimes B_1 \arrow[r,"I\otimes\partial_1^{\cB}"] \arrow[d,"\partial_1^{\cA}\otimes I"] & A_1\otimes B_0 \arrow[r,"I\otimes\partial_{0}^{\cB}"] \arrow[d,"\partial_1^{\cA}\otimes I"] & A_1\otimes B_{-1} \arrow[r,"I\otimes\partial_{-1}^{\cB}"] \arrow[d,"\partial_1^{\cA}\otimes I"] & A_1\otimes B_{-2} \arrow[d,"\partial_1^{\cA}\otimes I"] \\
    A_0\otimes B_2  \arrow[r,"-I\otimes\partial_2^{\cB}"] \arrow[d,"\partial_0^{\cA}\otimes I"] & A_0\otimes B_1 \arrow[r,"-I\otimes\partial_1^{\cB}"] \arrow[d,"\partial_0^{\cA}\otimes I"] & A_0\otimes B_0 \arrow[r,"I\otimes\partial_{0}^{\cB}"] \arrow[d,"\partial_0^{\cA}\otimes I"] & A_0\otimes B_{-1} \arrow[r,"-I\otimes\partial_{-1}^{\cB}"] \arrow[d,"\partial_0^{\cA}\otimes I"] & A_0\otimes B_{-2} \arrow[d,"\partial_0^{\cA}\otimes I"] \\
    A_{-1}\otimes B_2 \arrow[r,"I\otimes\partial_2^{\cB}"] & A_{-1}\otimes B_1 \arrow[r,"I\otimes\partial_1^{\cB}"] & A_{-1}\otimes B_0 \arrow[r,"I\otimes\partial_{0}^{\cB}"] & A_{-1}\otimes B_{-1} \arrow[r,"I\otimes\partial_{-1}^{\cB}"] & A_{-2}\otimes B_0
  \end{tikzcd}
\end{equation}

Specifically, $C_j$ equals the direct sum of the vector spaces $A_k\otimes B_{j-k}$ for $k\in\bZ$, which are simply the vector spaces along the $j$th diagonal in~(\ref{eq:3times5cd}); the boundary map $\partial^{\cC}$ is the sum of the maps in~(\ref{eq:3times5cd}). To prove Theorem~\ref{thm:proddis}, we want to show that every $c\in Z_0(\cC)\setminus\spn\{(L^{\cC})^i\}^\perp$ has large Hamming weight. Write $c=(c_1,c_0,c_{-1})$, where $c_j\in A_j\otimes B_{-j}$.

Letting\footnote{As mentioned in Definition~\ref{def:cc}, for a boundary map $\partial_i$ (such as $\partial^{\cA}_i$ or $\partial^{\cB}_i$), when clear from context we may omit the subscript $i$. In particular, $i$ can often be inferred from the argument to which the boundary map is applied.} $\partial^{(\cA)}=\partial^{\cA}\otimes I$ and $\partial^{(\cB)}=I\otimes\partial^{\cB}$, then the $A_0\otimes B_{-1}$-component of the syndrome $\partial_0^{\cC}c$ by definition equals $\partial^{(\cA)}c_1+\partial^{(\cB)}c_0$. If we had $\partial^{(\cA)}c_1=0$, then every row of the matrix $c_0$ would lie in $Z_0(\cB)$, and we would be able to argue that $|c_0|$ is large, essentially by using the assumption that $c\in\spn\{(L^{\cC})^i\}^\perp$ to relate $c_0$ to a high-weight codeword of the classical tensor code $Z_0(\cA)\otimes Z_0(\cB)$. In general we may have $\partial^{(\cA)}c_1\neq 0$, but we are able to reduce to the case where $\partial^{(\cA)}c_1=0$ by replacing $c$ with $c'=c+\partial_{1}^{\cC}f$ for an appropriate choice of $f\in A_1\otimes B_0$. Specifically, we choose $f$ to ensure that $\partial^{(\cA)}c_1'=0$, and we bound $|f|$ in terms of the filling constant $\mu_0(\cB)$ (along with some other parameters). As long as $|f|$ and $w^{\cA}$ are sufficiently small, then a lower bound on $|c_0'|$ implies a lower bound on $|c_0|\geq|c_0'|-|\partial^{(\cA)}f|\geq|c_0'|-w^{\cA}|f|$, as desired. The formal proof details are presented below.

\begin{proof}[Proof of Theorem~\ref{thm:proddis}]
  Consider any $c\in Z_i(\cC)\setminus\spn\{(L^{\cC})^i\}^\perp$. Our goal is to bound $|c|$. By definition $c$ has the form
  \begin{equation*}
    c = (c_1,c_0,c_{-1}) \in (A_1\otimes B_{i-1}) \oplus (A_0\otimes B_i) \oplus (A_{-1}\otimes B_{i+1}).
  \end{equation*}
  Letting $\partial^{(\cA)}=\partial^{\cA}\otimes I$ and $\partial^{(\cB)}=I\otimes\partial^{\cB}$ as above, then by definition
  \begin{equation}
    \label{eq:cycledef}
    \partial^{\cC}c = (-\partial^{(\cB)}c_1,\; \partial^{(\cA)}c_1+\partial^{(\cB)}c_0,\; \partial^{(\cA)}c_0-\partial^{(\cB)}c_{-1}) = (0,0,0).
  \end{equation}
  Therefore $c_1\in A_1\otimes Z_{i-1}(\cB)\subseteq A_1\otimes B_{i-1}$ is a matrix in which every row lies in $Z_{i-1}(\cB)$.

  Let $h=\dim H_{i-1}(\cB)$, and fix vectors $b_1,\dots,b_h\in Z_{i-1}(\cB)$ that provide representatives for a basis of $H_{i-1}(\cB)$. Then there exists a unique decomposition
  \begin{equation}
    \label{eq:c1decomp}
    c_1 = \gamma_0 + \gamma_1 + \cdots + \gamma_h,
  \end{equation}
  where\footnote{See Footnote~\ref{footnote:Bclash}.} $\gamma_0\in A_1\otimes B_{i-1}(\cB)$, and for $j\in[h]$ then $\gamma_j\in A_1\otimes\spn\{b_j\}$. Note that
  \begin{equation}
    \label{eq:gamma0bound}
    |\gamma_0| \leq \frac{n^{\cB}_{i-1}}{d_{i-1}(\cB)}\cdot|c_1|,
  \end{equation}
  as for any $j$ such that the $j$th row $(c_1)_j$ of $c_1$ does not equal the $j$th row $(\gamma_0)_j$ of $\gamma_0$, then $(\gamma_1+\cdots+\gamma_h)_j\neq 0$, which implies that $(c_1)_j\in Z_{i-1}(\cB)\setminus B_{i-1}(\cB)$ and therefore $|(c_1)_j|\geq d_{i-1}(\cB)$. Therefore $|(\gamma_0)_j|\leq(n^{\cB}_{i-1}/d_{i-1}(\cB))\cdot|(c_1)_j|$, so summing over all $j$ yields~(\ref{eq:gamma0bound}).

  By~(\ref{eq:cycledef}), we have $\partial^{(\cA)}c_1\in A_0\otimes B_{i-1}(\cB)$. But by definition $\partial^{(\cA)}\gamma_0\in A_0\otimes B_{i-1}(\cB)$ while for $j\in[h]$ we have $\partial^{(\cA)}\gamma_j\in A_0\otimes\spn\{b_j\}$, where $b_1,\dots,b_h$ form a basis for $Z_{i-1}(\cB)/B_{i-1}(\cB)$. Therefore we must in fact have
  \begin{equation}
    \label{eq:pgj0}
    \partial^{(\cA)}\gamma_j=0 \hspace{1em} \forall j\in[h],
  \end{equation}
  as otherwise some row of $\partial^{(\cA)}c_1$ would lie outside of $B_{i-1}(\cB)$.

  Applying the definition of the filling constant $\mu_i(\cB)$ to each row of the matrix $\gamma_0$, there exists some $f\in A_1\otimes B_i$ such that $\partial^{(\cB)}f=\gamma_0$ and $|f|\leq\mu_i(\cB)|\gamma_0|$. Define $c_0'\in A_0\otimes B_i$ by $c_0'=c_0+\partial^{(\cA)}f$. Then
  \begin{equation*}
    \partial^{(\cB)}c_0' = \partial^{(\cB)}c_0+\partial^{(\cA)}\partial^{(\cB)}f = -\partial^{(\cA)}c_1+\partial^{(\cA)}\gamma_0 = 0,
  \end{equation*}
  where the second equality above holds by~(\ref{eq:cycledef}) along with the definition of $f$, and the third equality holds by~(\ref{eq:c1decomp}) and~(\ref{eq:pgj0}). Thus $c_0'\in A_0\otimes Z_i(\cB)$ is a matrix in which every row lies in $Z_i(\cB)$.

  By the assumption that $c\notin\spn\{(L^{\cC})^i\}^\perp$, there exists some $a^*\otimes b^*\in (L^{\cC})^i$, where $a^*\in(L^{\cA})^0$ and $b^*\in(L^{\cB})^i$, such that
  \begin{equation*}
    (a^*\otimes b^*)\cdot c = (a^*\otimes b^*)\cdot c_0 = (a^*\otimes b^*)^\top c_0\neq 0.
  \end{equation*}
  Because $a^*\in Z^0(\cA)=B_0(\cA)^\perp$, it follows that $a^*$ is orthogonal to each column of $\partial^{(\cA)}f$, so $(a^*\otimes b^*)^\top c_0'=(a^*\otimes b^*)^\top c_0\neq 0$.

  Now by~(\ref{eq:cycledef}), we have $\partial^{(\cA)}c_0'=\partial^{(\cA)}c_0=\partial^{(\cB)}c_{-1}$. Thus as $b^*\in Z^i(\cB)=B_i(\cB)^\perp$, every row of $\partial^{(\cA)}c_0'$ is orthogonal to $b^*$, so $(\partial^{\cA}\otimes{b^*}^\top)c_0'=0$, meaning that $(I\otimes b^*)^\top c_0'\in Z_0(\cA)$. But because $(a^*\otimes b^*)^\top c_0'\neq 0$, we must in fact have $(I\otimes b^*)^\top c_0'\in Z_0(\cA)\setminus\spn\{(L^{\cA})^0\}^\perp$. Therefore
  \begin{equation*}
    |(I\otimes b^*)^\top c_0'| \geq d_0(\cA,L^{\cA}).
  \end{equation*}
  For each $j\in\supp((I\otimes b^*)^\top c_0')$, then the $j$th row $(c_0')_j$ of $c_0'$ is an element of $Z_i(\cB)$ that is not orthogonal to $b^*$, and thus $(c_0')_j\in Z_i(\cB)\setminus\spn\{(L^{\cB})^i\}^\perp$, which implies that
  \begin{equation*}
    |(c_0')_j| \geq d_i(\cB,L^{\cB}).
  \end{equation*}
  Combining the two inequalities above, we conclude that
  \begin{equation}
    \label{eq:c0plower}
    |c_0'| \geq d_0(\cA,L^{\cA})d_i(\cB,L^{\cB}).
  \end{equation}

  Now we have
  \begin{align}
    \label{eq:c0pupper}
    \begin{split}
      |c_0'|
      &\leq |c_0|+|\partial^{(\cA)}f| \\
      &\leq |c_0|+w^{\cA}|f| \\
      &\leq |c_0|+w^{\cA}\mu_i(\cB)|\gamma_0| \\
      &\leq |c_0|+w^{\cA}\mu_i(\cB)\frac{n^{\cB}_{i-1}}{d_{i-1}(\cB)}\cdot|c_1|,
    \end{split}
  \end{align}
  where the first three inequalities above hold by the definitions of the variables $c_0',f,w^{\cA},\gamma_0$, and the final inequality follows from~(\ref{eq:gamma0bound}). Combining~(\ref{eq:c0pupper}) with~(\ref{eq:c0plower}), and recalling that $|c|=|c_1|+|c_0|+|c_{-1}|$, we conclude the desired inequality
  \begin{equation*}
    |c| \geq \frac{d_0(\cA,L^{\cA})d_i(\cB,L^{\cB})}{w^{\cA}\mu_i(\cB) \cdot n_{i-1}^{\cB}/d_{i-1}(\cB)} \ .\qedhere
  \end{equation*}
\end{proof}

Theorem~\ref{thm:proddis} above analyzes how systolic distance behaves under homological products; Theorem~\ref{thm:prodfill} below uses similar techniques to analyze the behavior of filling constants. Although we will not apply Theorem~\ref{thm:prodfill} to any concrete codes in this paper, we present it for completeness, as it may be useful for future constructions of qLTCs based on homological products.

Note that filling constants do not depend on subsystem data $L$, so we do not consider subsystem complexes below. As a result, the proof below is more similar to the proof of Theorem~3.1 in \cite{kaufman_new_2021}. However, we still present the proof for completeness, as there are some small differences.

\begin{theorem}
  \label{thm:prodfill}
  Let
  \begin{equation*}
    \cA = (A_1 \xrightarrow{\partial^{\cA}_1} A_0 \xrightarrow{\partial^{\cA}_0} A_{-1})
  \end{equation*}
  be a 3-term chain complex, and let 
  \begin{equation*}
    \cB = (\cdots B_{i+2} \xrightarrow{\partial^{\cB}_{i+2}} B_{i+1} \xrightarrow{\partial^{\cB}_{i+1}} B_i \xrightarrow{\partial^{\cB}_{i}} B_{i-1} \xrightarrow{\partial^{\cB}_{i-1}} B_{i-2} \cdots)
  \end{equation*}
  be a $\geq 5$-term chain complex. Then the homological product $\cC=\cA\otimes\cB$ has $i$-filling constant
  \begin{equation}
    \label{eq:prodfill}
    \mu_i(\cC) \leq 2 \cdot \mu_{i-1}(\cB)w^{\cA}\frac{n^{\cB}_{i-1}}{d_{i-1}(\cB)} \cdot \left(M_1(\cA)+\mu_i(\cB)w^{\cA}\frac{n^{\cB}_i}{d_i(\cB)}\cdot(M_0(\cA)+\mu_{i+1}(\cB))\right)
  \end{equation}
  and $i$-collective filling constant
  \begin{equation}
    \label{eq:prodcolfill}
    M_i(\cC) \leq 2 \cdot M_{i-1}(\cB)w^{\cA}\frac{n^{\cB}_{i-1}}{d_{i-1}(\cB)} \cdot \left(M_1(\cA)+M_i(\cB)w^{\cA}\frac{n^{\cB}_i}{d_i(\cB)}\cdot(M_0(\cA)+M_{i+1}(\cB))\right),
  \end{equation}
  where $n^{\cB}_j=\dim B_j$.
\end{theorem}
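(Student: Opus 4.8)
The plan is to establish both inequalities by a single diagram chase, adapting (and symmetrizing) the proof of Theorem~3.1 of~\cite{kaufman_new_2021}: the bound~(\ref{eq:prodfill}) on $\mu_i(\cC)$ comes from running the argument on one boundary of $\cC$ at level $i-1$, and the bound~(\ref{eq:prodcolfill}) on $M_i(\cC)$ from running the very same argument on an arbitrary finite family of such boundaries at once, with every invocation of a filling constant $\mu_j(\cB)$ of $\cB$ replaced by the collective filling constant $M_j(\cB)$; the collective filling constants $M_0(\cA),M_1(\cA)$ of $\cA$ enter in both cases, for the reason explained below. Re-indexing so that $i=0$ and using the $3\times 5$ grid of~(\ref{eq:3times5cd}), a boundary $b\in B_{-1}(\cC)$ has the form $b=(\beta_1,\beta_0,\beta_{-1})$ with $\beta_1\in A_1\otimes B_{-2}$, $\beta_0\in A_0\otimes B_{-1}$, $\beta_{-1}\in A_{-1}\otimes B_0$, and writing $\partial^{(\cA)}=\partial^{\cA}\otimes I$, $\partial^{(\cB)}=I\otimes\partial^{\cB}$ and suppressing signs, a filling $c=(c_1,c_0,c_{-1})\in C_0$ is exactly a solution of $(\mathrm{I})\ \partial^{(\cB)}c_1=\beta_1$, $(\mathrm{II})\ \partial^{(\cA)}c_1+\partial^{(\cB)}c_0=\beta_0$, $(\mathrm{III})\ \partial^{(\cA)}c_0+\partial^{(\cB)}c_{-1}=\beta_{-1}$.

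I would build $c$ one diagonal at a time. For $(\mathrm{I})$: since $b$ is a boundary, every row of $\beta_1$ lies in $B_{-2}(\cB)$, so a row-by-row fill inside $\cB$ gives $c_1^{\flat}$ with $|c_1^{\flat}|\le\mu_{-1}(\cB)|\beta_1|$. Put $\beta_0^{(1)}:=\beta_0-\partial^{(\cA)}c_1^{\flat}$; the identity $\partial^{\cC}b=0$ forces $\partial^{(\cB)}\beta_0^{(1)}=0$, so all rows of $\beta_0^{(1)}$ lie in $Z_{-1}(\cB)$, and I split $\beta_0^{(1)}=\gamma+\eta$ with $\gamma$ having rows in $B_{-1}(\cB)$ and $\eta$ having rows in a fixed system $\{b_j\}_j$ of $H_{-1}(\cB)$-representatives. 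As in the $\gamma_0$-estimate of Theorem~\ref{thm:proddis}, $|\gamma|\le(n^{\cB}_{-1}/d_{-1}(\cB))|\beta_0^{(1)}|$, and since each nonzero row of $\eta$ forces the corresponding row of $\beta_0^{(1)}$ to lie in $Z_{-1}(\cB)\setminus B_{-1}(\cB)$ and hence to have weight $\ge d_{-1}(\cB)$, the number of nonzero rows of $\eta$ is at most $|\beta_0^{(1)}|/d_{-1}(\cB)$. The step I expect to need the most care is the following: because $b$ is a genuine boundary, writing $b=\partial^{\cC}_0(c_1^{o},c_0^{o},c_{-1}^{o})$ gives $\beta_0^{(1)}\equiv\partial^{(\cA)}(c_1^{o}-c_1^{\flat})$ modulo matrices with rows in $B_{-1}(\cB)$, and since the row-wise homology projection kills matrices with rows in $B_{-1}(\cB)$ and does not enlarge column spaces, $\eta$ has every column in $B_0(\cA)=\im\partial^{\cA}_1$. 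Thus I may write $\eta=\sum_j v_j\otimes b_j$ with $v_j\in B_0(\cA)$, collectively fill $\{v_j\}_j$ by $\{w_j\}_j\subseteq A_1$ under $\partial^{\cA}_1$, and set $c_1':=\sum_j w_j\otimes b_j$: then $c_1'$ has rows in $\spn\{b_j\}\subseteq Z_{-1}(\cB)$ (so $\partial^{(\cB)}c_1'=0$) and $\partial^{(\cA)}c_1'=\eta$, while $|c_1'|\le M_1(\cA)\cdot(\text{number of nonzero rows of }\eta)\cdot n^{\cB}_{-1}\le M_1(\cA)(n^{\cB}_{-1}/d_{-1}(\cB))|\beta_0^{(1)}|$. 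It is here that the \emph{collective} constant $M_1(\cA)$, not merely $\mu_1(\cA)$, is forced, since $c_1'$ must simultaneously respect the row structure of $\eta$. With $c_1:=c_1^{\flat}+c_1'$, equation $(\mathrm{II})$ reduces to $\partial^{(\cB)}c_0=\gamma$, which I fill row-by-row to obtain $c_0^{\flat}$ with $|c_0^{\flat}|\le\mu_0(\cB)|\gamma|$.

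Running the same two moves one diagonal further handles $(\mathrm{III})$: with $\beta_{-1}^{(1)}:=\beta_{-1}-\partial^{(\cA)}c_0^{\flat}$ (all rows in $Z_0(\cB)$ after a cycle check), split $\beta_{-1}^{(1)}=\gamma'+\eta'$ with $\gamma'$ having rows in $B_0(\cB)$ and $\eta'$ having rows in an $H_0(\cB)$-representative system; the boundary hypothesis again forces $\eta'$ to have every column in $B_{-1}(\cA)=\im\partial^{\cA}_0$, so a collective $\cA$-fill gives $c_0'$ with $\partial^{(\cA)}c_0'=\eta'$, $\partial^{(\cB)}c_0'=0$, and $|c_0'|\le M_0(\cA)(n^{\cB}_0/d_0(\cB))|\beta_{-1}^{(1)}|$; then with $c_0:=c_0^{\flat}+c_0'$, equation $(\mathrm{III})$ reduces to $\partial^{(\cB)}c_{-1}=\gamma'$, filled row-by-row to give $c_{-1}:=c_{-1}^{\flat}$ with $|c_{-1}^{\flat}|\le\mu_1(\cB)|\gamma'|$. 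Finally I would chain the estimates: each $\cB$-fill at level $j$ multiplies the relevant weight by $\mu_j(\cB)$, each application of $\partial^{(\cA)}$ by $w^{\cA}$, each homology projection by $n^{\cB}_j/d_j(\cB)$, and each $\cA$-fill of a homology residual by $M_1(\cA)$ or $M_0(\cA)$; summing $|c|\le|c_1^{\flat}|+|c_1'|+|c_0^{\flat}|+|c_0'|+|c_{-1}^{\flat}|$, the three contributions $|c_1'|$, $|c_0'|$, $|c_{-1}^{\flat}|$ reproduce the three summands of the expanded right-hand side of~(\ref{eq:prodfill}), $|c_1^{\flat}|$ and $|c_0^{\flat}|$ are subsumed, and the absolute factor $2$ absorbs the subsumed terms together with the loss in $|\eta|\le|\beta_0^{(1)}|+|\gamma|\le 2(n^{\cB}_{-1}/d_{-1}(\cB))|\beta_0^{(1)}|$. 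For~(\ref{eq:prodcolfill}), the homology projection bounds, the row-count bounds on $\eta,\eta'$, and the $\partial^{(\cA)}$-applications all remain valid verbatim for unions of supports over the family, so the only change is that the $\cB$-fills become collective fills, replacing each $\mu_j(\cB)$ by $M_j(\cB)$. The principal obstacles are therefore the careful verification that the homology residuals $\eta,\eta'$ have columns in the correct $\cA$-boundary spaces — the one place where the hypothesis that $b$ is a boundary (and not merely a cycle, as in Theorem~\ref{thm:proddis}) is genuinely used — and the somewhat tedious bookkeeping confirming that the accumulated constants assemble into exactly~(\ref{eq:prodfill}) and~(\ref{eq:prodcolfill}).
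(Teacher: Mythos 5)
Your proposal is correct and follows essentially the same route as the paper's proof: the same diagonal-by-diagonal construction of the filling, the same use of a boundary-preimage plus the row-wise homology projection to force the homology residuals $\eta,\eta'$ to have columns in $B_0(\cA)$ and $B_{-1}(\cA)$ (which is exactly where the collective constants $M_1(\cA),M_0(\cA)$ enter), and the same final bookkeeping, with the collective bound obtained by rerunning the argument on a family and upgrading the $\cB$-fills from $\mu_j(\cB)$ to $M_j(\cB)$. The only differences are notational (e.g.\ you invoke the original preimage of $b$ where the paper notes the intermediate residual is again a boundary), not substantive.
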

\begin{proof}
  We first prove~(\ref{eq:prodfill}). Consider any $c\in B_{i-1}(\cC)$. Our goal is to find a low-weight filling $F\in C_i$ for $c$. Write
  \begin{equation*}
    c = (c_1,c_0,c_{-1}) \in (A_1\otimes B_{i-2}) \oplus (A_0\otimes B_{i-1}) \oplus (A_{-1}\otimes B_i)
  \end{equation*}
  and define $\partial^{(\cA)},\partial^{(\cB)}$ as in the proof of Theorem~\ref{thm:proddis}; note that again, Equation~(\ref{eq:cycledef}) holds for $c$. By definition, there exists some $e=(e_1,e_0,e_{-1})\in C_i$ such that
  \begin{equation*}
    \partial^{\cC}e = (-\partial^{(\cB)}e_1,\; \partial^{(\cA)}e_1+\partial^{(\cB)}e_0,\; \partial^{(\cA)}e_0-\partial^{(\cB)}e_{-1}) = (c_1,c_0,c_{-1}).
  \end{equation*}
  Therefore $c_1\in A_1\otimes B_{i-2}(\cB)$, so applying the definition of $\mu_{i-1}(\cB)$ to each row of $c_1$, there exists some $f_1\in A_1\otimes B_{i-1}$ such that $-\partial^{(\cB)}f_1=c_1$ and $|f_1|\leq\mu_{i-1}(\cB)|c_1|$. Let $c_0'=c_0-\partial^{(\cA)}f_1$. Then
  \begin{equation*}
    \partial^{(\cB)}c_0' = \partial^{(\cB)}c_0-\partial^{(\cA)}\partial^{(\cB)}f = \partial^{(\cB)}c_0+\partial^{(\cA)}c_1 = 0,
  \end{equation*}
  where the final equality above holds by~(\ref{eq:cycledef}), so $c_0'\in A_0\otimes Z_{i-1}(\cB)$

  Now for $j\in\{-1,0\}$, let $h_j=\dim H_{i-j-1}(\cB)$, and let $b^{j,1},\dots,b^{j,h_j}\in Z^{i-j-1}(\cB)$ and $b_{j,1},\dots,b_{j,h_j}\in Z_{i-j-1}(\cB)$ be dual bases for $H^{i-j-1}(\cB)$ and $H_{i-j-1}(\cB)$ respectively.

  By definition there exists a unique decomopsition
  \begin{equation*}
    c_0' = \gamma_{0,0}+\gamma_{0,1}+\cdots+\gamma_{0,h_0},
  \end{equation*}
  where $\gamma_{0,0}\in A_0\otimes B_{i-1}(\cB)$, and for $j\in[h_0]$ then $\gamma_{0,j}\in A_0\otimes\spn\{b_{0,j}\}$. In fact, by the definition of dual cohomology/homology bases, for all $j\in[h_0]$ we have $\gamma_{0,j}=\eta_{0,j}\otimes b_{0,j}$ for $\eta_{0,j}:=(I\otimes b^{0,j})^\top c_0'$. Because $c_0'=\partial^{(\cA)}(e_1-f_1)+\partial^{(\cB)}e_0$, applying $(I\otimes b^{0,j})^\top$ yields that $\eta_{0,j}=(\partial^{\cA}\otimes {b^{0,j}}^\top)(e_1-f_1)\in B_0(\cA)$ for all $j\in[h_0]$. Therefore by the definition of the collective filling constant $M_1(\cA)$, there exist some $\phi_{0,j}\in A_1$ for $j\in[h_0]$ satisfying $\partial^{(\cA)}\phi_{0,j}=\eta_{0,j}$ and
  \begin{equation*}
    \left|\bigcup_{j\in[h_0]}\phi_{0,j}\right| \leq M_1(\cA)\left|\bigcup_{j\in[h_0]}\eta_{0,j}\right|.
  \end{equation*}
  Define $f_1'\in A_1\otimes Z_{i-1}(\cB)$ by $f_1'=\sum_{j\in[h_0]}\phi_{0,j}\otimes b_{0,j}$. Then $\partial^{(\cA)}f_1'=\sum_{j\in[h_0]}\gamma_{0,j}$, and
  \begin{equation*}
    |f_1'| \leq \left|\bigcup_{j\in[h_0]}\phi_{0,j}\right|\cdot n^{\cB}_{i-1} \leq M_1(\cA)\left|\bigcup_{j\in[h_0]}\eta_{0,j}\right|\cdot n^{\cB}_{i-1} \leq M_1(\cA)\cdot\frac{|c_0'|}{d_{i-1}(\cB)}\cdot n^{\cB}_{i-1},
  \end{equation*}
  where the final inequality above holds because for every $\ell$ such that $(\eta_{0,j})_\ell\neq 0$ for some $j\in[h_0]$, then the $\ell$th row of $c_0'$ by definition lies in $Z_{i-1}(\cB)\setminus B_{i-1}(\cB)$, and thus has weight $\geq d_{i-1}(\cB)$.

  Define $F_1:=f_1+f_1'\in A_1\otimes B_{i-1}$. Observe that $-\partial^{(\cB)}F_1=-\partial^{(\cB)}f_1=c_1$ and $c_0-\partial^{(\cA)}F_1=c_0'-\partial^{(\cA)}f_1'=\gamma_{0,0}$. Therefore $c=(0,\gamma_{0,0},c_{-1})+\partial^{\cC}(F_1,0,0)$, so in particular $(0,\gamma_{0,0},c_{-1})\in B_{i-1}(\cC)$. Furthermore,
  \begin{align}
    \label{eq:F1bound}
    \begin{split}
      |F_1|
      &\leq |f_1|+|f_1'| \\
      &\leq |f_1|+M_1(\cA)\cdot\frac{n^{\cB}_{i-1}}{d_{i-1}(\cB)}(|c_0|+w^{\cA}|f_1|) \\
      &\leq \mu_{i-1}(\cB)|c_1|+M_1(\cA)\cdot\frac{n^{\cB}_{i-1}}{d_{i-1}(\cB)}(|c_0|+w^{\cA}\mu_{i-1}(\cB)|c_1|),
    \end{split}
  \end{align}
  where the second inequality above applies the definition of $c_0'=c_0-\partial^{(\cA)}f_1$.

  We now construct an element $F_0\in A_0\otimes B_i$ analogously to the construction of $F_1$ above, but ``shifted down'' one level in $\cA$ and ``shifted up'' one level in $\cB$; specifically, $\gamma_{0,0}$ will replace the role that $c_1$ played above, while $c_{-1}$ will replace the role that $c_0$ played above.

  In more detail, to construct $F_0$, because each row of $\gamma_{0,0}$ by construction lies in $B_{i-1}(\cB)$, there exists some $f_0\in A_0\otimes B_i$ such that $\partial^{(\cB)}f_0=\gamma_{0,0}$ and
  \begin{equation}
    \label{eq:f0bound}
    |f_0| \leq \mu_i(\cB)|\gamma_{0,0}| \leq \mu_i(\cB)\cdot\frac{n^{\cB}_{i-1}}{d_{i-1}(\cB)}\cdot|c_0'| \leq \mu_i(\cB)\cdot\frac{n^{\cB}_{i-1}}{d_{i-1}(\cB)}(|c_0|+w^{\cA}\mu_{i-1}(\cB)|c_1|),
  \end{equation}
  where the second inequality above holds because for every $\ell$ such that the $\ell$th rows $(\gamma_{0,0})_\ell\neq(c_0')_\ell$ are distinct, then $(c_0')_\ell$ lies in $Z_{i-1}(\cB)\setminus B_{i-1}(\cB)$ and thus has weight $\geq d_{i-1}(\cB)$.
  
  Let $c_{-1}'=c_{-1}-\partial^{(\cA)}f_0$. Then $\partial^{(\cB)}c_{-1}'=0$ because $\partial^{(\cA)}\gamma_{0,0}-\partial^{(\cB)}c_{-1}=0$, which in turn holds because $\partial^{\cC}(0,\gamma_{0,0},c_{-1})=(0,0,0)$. Thus there exists a unique decomposition
  \begin{equation*}
    c_{-1}' = \gamma_{-1,0}+\gamma_{-1,1}+\cdots+\gamma_{-1,h_i}
  \end{equation*}
  where $\gamma_{-1,0}\in A_{-1}\otimes B_i(\cB)$, and for $j\in[h_{-1}]$ then $\gamma_{-1,j}\in A_{-1}\otimes\spn\{b_{-1,j}\}$.

  Similarly as for $c_0'$ above, for $j\in[h_{-1}]$ we have $\gamma_{-1,j}=\eta_{-1,j}\otimes b_{-1,j}$ for $\eta_{-1,j}:=(I\otimes b^{-1,j})^\top c_{-1}'$, and $\eta_{-1,j}\in B_{-1}(\cA)$. Therefore there exist some $\phi_{-1,j}\in A_0$ for $j\in[h_{-1}]$ satisfying $\partial^{(\cA)}\phi_{-1,j}=\eta_{-1,j}$ and
  \begin{equation*}
    \left|\bigcup_{j\in[h_{-1}]}\phi_{-1,j}\right| \leq M_0(\cA)\left|\bigcup_{j\in[h_{-1}]}\eta_{-1,j}\right|.
  \end{equation*}
  Define $f_0'\in A_0\otimes Z_i(\cB)$ by $f_0'=\sum_{j\in[h_{-1}]}\phi_{-1,j}\otimes b_{-1,j}$, so that $\partial^{(\cA)}f_0'=\sum_{j\in[h_{-1}]}\gamma_{-1,j}$, and
  \begin{equation*}
    |f_0'| \leq M_0(\cA) \cdot \frac{|c_{-1}'|}{d_i(\cB)} \cdot n^{\cB}_i.
  \end{equation*}

  Define $F_0:=f_0+f_0'\in A_0\otimes B_i$. By construction $\partial^{\cA}F_0=c_{-1}-\gamma_{-1,0}$ and $\partial^{\cB}F_0=\gamma_{0,0}$. Thus $c=(0,0,\gamma_{-1,0})+\partial^{\cC}(F_1,F_0,0)$. Furthermore,
  \begin{align}
    \label{eq:F0bound}
    \begin{split}
      |F_0|
      &\leq |f_0|+|f_0'| \\
      &\leq |f_0|+M_0(\cA)\cdot\frac{n^{\cB}_i}{d_i(\cB)}(|c_{-1}|+w^{\cA}|f_0|),
    \end{split}
  \end{align}
  where $|f_0|$ is bounded by~(\ref{eq:f0bound})

  Finally, because $\gamma_{-1,0}\in A_{-1}\otimes B_i(\cB)$, there exists some $F_{-1}\in A_{i+1}\otimes B_{-1}$ such that $\partial^{(\cB)}F_{-1}=\gamma_{-1,0}$ and
  \begin{align}
    \label{eq:Fm1bound}
    \begin{split}
      |F_{-1}|
      &\leq \mu_{i+1}(\cB)|\gamma_{-1,0}| \\
      &\leq \mu_{i+1}(\cB)\cdot\frac{n^{\cB}_i}{d_i(\cB)}\cdot|c_{-1}'| \\
      &\leq \mu_{i+1}(\cB)\cdot\frac{n^{\cB}_i}{d_i(\cB)}(|c_{-1}|+w^{\cA}|f_0|),
    \end{split}
  \end{align}
  where $|f_0|$ is given by~(\ref{eq:f0bound}).

  Letting $F:=(F_1,F_0,F_{-1})$, we see that $c=\partial^{\cC}F$. Combining~(\ref{eq:F1bound}), (\ref{eq:F0bound}), (\ref{eq:Fm1bound}), and~(\ref{eq:f0bound}) gives
  \begin{align*}
    |F|
    &= |F_1|+|F_0|+|F_{-1}| \\
    &\leq \left(M_1(\cA)\cdot\frac{n^{\cB}_{i-1}}{d_{i-1}(\cB)}\cdot w^{\cA}+1\right)\mu_{i-1}(\cB)|c| \\
    &\hspace{1em} + \left((M_0(\cA)+\mu_{i+1}(\cB))\frac{n^{\cB}_i}{d_i(\cB)}\cdot w^{\cA}+1\right)(|c_{-1}|+|f_0|) \\
    &\leq \left(M_1(\cA)\cdot\frac{n^{\cB}_{i-1}}{d_{i-1}(\cB)}\cdot w^{\cA}+1\right)\mu_{i-1}(\cB)|c| \\
    &\hspace{1em} + \left((M_0(\cA)+\mu_{i+1}(\cB))\frac{n^{\cB}_i}{d_i(\cB)}\cdot w^{\cA}+1\right) \mu_i(\cB)\cdot\frac{n^{\cB}_{i-1}}{d_{i-1}(\cB)}\cdot w^{\cA}\mu_{i-1}(\cB)|c| \\
    &\leq 2 \cdot \mu_{i-1}(\cB)w^{\cA}\frac{n^{\cB}_{i-1}}{d_{i-1}(\cB)} \cdot \left(M_1(\cA)+\mu_i(\cB)w^{\cA}\frac{n^{\cB}_i}{d_i(\cB)}\cdot(M_0(\cA)+\mu_{i+1}(\cB))\right)\cdot|c|,
    % &\leq 2\cdot\frac{n^{\cB}_{i-1}}{d_{i-1}(\cB)}\cdot w^{\cA}\mu_{i-1}(\cB)\left(M_1(\cA)+\frac{n^{\cB}_i}{d_i(\cB)}\cdot w^{\cA}\mu_i(\cB)\cdot\left(M_0(\cA)+\mu_{i+1}(\cB)\right)\right).
  \end{align*}
  which completes the proof of~(\ref{eq:prodfill}).

  To prove~(\ref{eq:prodcolfill}), we simply repeat the proof of~(\ref{eq:prodfill}) above, but we replace vectors with collections of vectors, and filling constants with collective filling constants. Specifically, instead of being given a boundary $c=(c_1,c_0,c_{-1})\in B_{i-1}(\cC)$, we assume that for some $m\in\bN$ we are given a collection of boundaries $c^\ell=(c^\ell_1,c^\ell_0,c^\ell_{-1})\in B_{i-1}(\cC)$ for $\ell\in[m]$. Our goal is to find fillings $F^\ell\in C_i$ so that each $c^\ell=\partial^{\cC}F^\ell$ and $|\bigcup_{\ell\in[m]}F^\ell|$ is small. To find such $F^\ell$, we simply repeat the above proof of~(\ref{eq:prodfill}), but we replace each of the vectors $e=(e_1,e_0,e_{-1}),f_1,c_0',\gamma_{0,j},\eta_{0,j},\phi_{0,j},f_1',f_0,c_{-1}',\gamma_{-1,j},\eta_{-1,j},\phi_{-1,j},f_0',F=(F_1,F_0,F_{-1})$ with a collection of $m$ vectors, and we replace every filling constant $\mu_k$ with the respective collective filling constant $M_k$. All Hamming weights of vectors are replaced with Hamming weights of unions over the $m$ vectors in the respective collection. For more details, the reader can also refer to the proof of Theorem~3.1 of \cite{kaufman_new_2021}, which is similar.
\end{proof}

\subsection{Sample Application to Product-Expansion-Based Codes of Section~\ref{sec:peprod}}
In the example below, we provide a sample application of Theorem~\ref{thm:proddis} to the codes from Section~\ref{sec:peprod}, in order to construct $[[N,\Theta(N),\Omega(N^{3/4})]]$ codes with locality $O(N^{1/4})$. Specifically, we obtain such codes by taking multi-sector homological products of the codes in Corollary~\ref{cor:ssperandom} or Corollary~\ref{cor:sspeRS}, which in turn were obtained as single-sector homological products of random codes or quantum Reed-Solomon codes respectively.

\begin{example}
  Let $\cA=\cC_1\otimes\cC_2$ be a single-sector chain complex given by either Corollary~\ref{cor:ssperandom} or Corollary~\ref{cor:sspeRS} for some fixed parameters $\epsilon,R_1,R_2>0$, so that the quantum code associated to $\cA$ has parameters $[[n^2,\Theta(n^2),\Theta(n^2)]]_q$ with locality $O(n)$ and soundness $\Omega(1)$. Here the constants hidden by $\Theta,O,\Omega$ depend only on $\epsilon,R_1,R_2$.

  Note that if obtain $\cA$ from Corollary~\ref{cor:ssperandom}, so that $\cC_1,\cC_2$ are obtained from random codes, then $q$ can be any power of $2$, regardless of the value of $N$. If we instead take $\cA$ from Corollary~\ref{cor:sspeRS}, so that $\cC_1,\cC_2$ are obtained from Reed-Solomon codes, then $q$ can again be any power of $2$, but we must have $n=q$.

  Now for $t\in\bN$ let
  \begin{equation*}
    \cA_t = \left(A \xrightarrow{\partial^{\cA}} A \xrightarrow{\partial^{\cA}} \cdots \xrightarrow{\partial^{\cA}} A\right)
  \end{equation*}
  be the $t$-term (multi-sector) chain complex obtained from the single-sector chain complex $\cA$ in the natural way, by simply stringing together $t-1$ copies of the boundary map of $\cA$. Also let $L^{\cA}=((L^{\cA})^*,(L^{\cA})_*)$ be a pair of dual cohomology/homology bases for $\cA$.

  Then by Theorem~\ref{thm:proddis}, the quantum code associated to the (middle 3 terms) of the (multi-sector subsystem) homological product $(\cA_3,L^{\cA})\otimes(\cA_5,L^{\cA})$ has parameters $[[\Theta(n^4),\Theta(n^4),\Omega(n^3)]]_q$ with locality $O(n)$. Note that here we associate the subsystem data $L^{\cA}$ with the middle term of $\cA_3$ and of $\cA_5$.
\end{example}

\subsection{Iterative Application to QLTCs using Locality Reduction}
\label{sec:iterapp}
In this section, we apply the results of Section~\ref{sec:genbounds} along with the weight (i.e.~locality) reduction results of \cite{hastings_quantum_2023} to chain complexes from \cite{dinur_expansion_2024,kalachev_personal_2024}. Specifically, starting with a single appropriate constant-sized chain complex, such as an instance of the \cite{dinur_expansion_2024,kalachev_personal_2024} construction, we are able to iteratively construct an infinite family of qLDPC (subsystem) codes with close to linear distance from smaller ones.

Below, we state the main result of \cite{dinur_expansion_2024}, which is proven using the result of \cite{kalachev_personal_2024} stated in Theorem~\ref{thm:petrand}.

\begin{theorem}[Follows from \cite{dinur_expansion_2024,kalachev_personal_2024}]
  \label{thm:dlv}
  For every fixed integer $t\geq 2$, there is an explicit family of $(t+1)$-term chain complexes $(\cC_N)_{N\in\bN}$ over $\bF_2$ such that each
  \begin{equation*}
    \cC_N = (C_{N,t} \xrightarrow{\partial_t} C_{N,t-1} \xrightarrow{\partial_{t-1}} \cdots \xrightarrow{\partial_1} C_{N,0})
  \end{equation*}
  satisfies
  \begin{align}
    w^{\cC_N} &= O(1) \\
    \dim C_{N,i} &= \Theta(N) \hspace{1em} \forall\; 0\leq i\leq t \\
    d_i(\cC_N) &\geq \Omega(N/(\log N)^{t-1}) \hspace{1em} \forall\; 1\leq i\leq t \\
    d^i(\cC_N) &\geq \Omega(N/(\log N)^{t-1}) \hspace{1em} \forall\; 0\leq i\leq t-1 \\
    \label{eq:dlvmudi} \mu_i(\cC_N) &\leq O(\log N)^{t-1} \hspace{1em} \forall\; 2\leq i\leq t \\
    \label{eq:dlvmuui} \mu^i(\cC_N) &\leq O(\log N)^{t-1} \hspace{1em} \forall\; 0\leq i\leq t-2,
  \end{align}
  where the constants hidden by the $\Theta,\Omega,O$ above may depend on $t$.
\end{theorem}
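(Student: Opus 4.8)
The plan is to obtain this statement as a direct instantiation of the construction of \cite{dinur_expansion_2024}, fed with the product-expanding local codes provided by \cite{kalachev_personal_2024} (Theorem~\ref{thm:petrand}). At a high level, \cite{dinur_expansion_2024} starts from a bounded-degree $t$-dimensional complex with good (combinatorial/spectral) expansion and tensors each of its graded pieces with a single fixed-size ``local code,'' producing a $(t+1)$-term chain complex over $\bF_2$; the main theorem of that paper is that the systolic and cosystolic distances and the (co)filling constants of the output are controlled, up to losses polylogarithmic in $N$, by the distance and the \emph{product-expansion} (Definition~\ref{def:pe}) of the local code together with the expansion of the underlying complex. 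So the only ingredient that was missing before \cite{kalachev_personal_2024} is a family of suitable product-expanding local codes of the required fixed size, and the proof amounts to supplying it and then reading off parameters.

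First I would invoke Theorem~\ref{thm:petrand}: for the fixed integer $t$ and a fixed choice of rate intervals $I_1,\dots,I_t\subseteq(0,1)$, there is a tuple of $t$ classical codes of a fixed block length $n_0=O_t(1)$ over $\bF_{2^m}$, where $m=m(t)=O_t(1)$, whose product-expansion is at least some absolute constant $\rho_0=\rho_0(t)>0$. Because the construction of \cite{dinur_expansion_2024} is entirely ``local'' --- the local code, its length, and its alphabet are constants independent of $N$ --- this single tuple of codes suffices for the whole family $(\cC_N)_N$. To land over $\bF_2$ rather than over $\bF_{2^m}$, I would pass through restriction of scalars: identifying $\bF_{2^m}\cong\bF_2^m$ coordinatewise turns each local code over $\bF_{2^m}$ of length $n_0$ into a code over $\bF_2$ of length $m n_0$, preserving Hamming distance and changing product-expansion by at most a factor depending only on $m$ and $t$; this is exactly the flavor of reduction underlying Lemma~\ref{lem:petocpe} and the discussion around Remark~\ref{remark:dlvrestrict}, and if one instead prefers to work with $\bF_2$-subcodes, Lemma~\ref{lem:pesubmain} shows product-expansion is preserved up to constants. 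All such constant factors are absorbed into the $t$-dependent constants hidden in the $\Theta,\Omega,O$ of the statement.

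With the local code fixed, every displayed bound is then read off from \cite{dinur_expansion_2024}. The locality bound $w^{\cC_N}=O(1)$ holds because each boundary map is a sum of the (bounded-degree) incidence maps of the complex tensored with the fixed local-code parity maps. The bound $\dim C_{N,i}=\Theta(N)$ holds because $C_{N,i}$ is (the number of $i$-dimensional faces of the $N$-th complex) times a fixed vector space, and the number of $i$-faces is $\Theta(N)$ for each $0\le i\le t$. The systolic/cosystolic distance bounds $d_i(\cC_N),d^i(\cC_N)\ge\Omega(N/(\log N)^{t-1})$ and the (co)filling bounds $\mu_i(\cC_N),\mu^i(\cC_N)\le O((\log N)^{t-1})$ are precisely the conclusions of \cite{dinur_expansion_2024} once the product-expansion hypothesis is in force; here the only real task is to match that paper's orientation and indexing conventions (chain versus cochain, and which graded piece is indexed $0$) with our conventions from Section~\ref{sec:cc}.

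I do not expect a genuine new mathematical obstacle, since the statement is a repackaging of two external results; the work is bookkeeping, and the two points deserving care are as follows. First, one must confirm that the alphabet reduction $\bF_{2^m}\to\bF_2$ really preserves product-expansion in the precise form required as input to \cite{dinur_expansion_2024} (restriction of scalars, as above, is the natural route, and the discussion around Remark~\ref{remark:dlvrestrict} indicates that char-$2$ is exactly the regime where the needed lemma is available). Second, one must track which levels $i$ enjoy which bound, since the allowed ranges are deliberately asymmetric: distances are good for $1\le i\le t$ and $0\le i\le t-1$, but filling constants only for $2\le i\le t$ and $0\le i\le t-2$. This asymmetry is precisely the ``$\sqrt N$-barrier'' phenomenon of Section~\ref{sec:sqrtbarrier} manifesting at the extreme graded pieces of the complex, and it is the reason Theorem~\ref{thm:iterativeinf} only ever uses the middle terms of these complexes.
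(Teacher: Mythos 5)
The paper does not actually prove Theorem~\ref{thm:dlv}: it is imported wholesale from \cite{dinur_expansion_2024}, instantiated with the product-expansion result of \cite{kalachev_personal_2024} (Theorem~\ref{thm:petrand}), and the only in-paper commentary on its proof is the local-to-global sketch in Remark~\ref{remark:collfill}. Your proposal takes exactly this route --- feed the Kalachev--Panteleev local codes into the \cite{dinur_expansion_2024} machinery and read off locality, dimensions, (co)systolic distances, and (co)filling constants --- so at the level at which the paper treats this statement, you are doing the same thing, and your reading of which levels $i$ enjoy which bound is also consistent with the edge-level issues discussed in Section~\ref{sec:sqrtbarrier}.

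The one step in your bookkeeping that is not justified as written is the alphabet reduction. You propose to convert the local codes from $\bF_{2^m}$ to $\bF_2$ by restriction of scalars and assert that this ``changes product-expansion by at most a factor depending only on $m$ and $t$.'' That claim does not follow from anything in the paper, and it is not obviously true: tensor products do not commute with restriction of scalars, so the $\bF_2$-dual-tensor code of the restricted codes lives in a strictly larger ambient space ($\bF_2^{mn_0}\otimes_{\bF_2}\cdots\otimes_{\bF_2}\bF_2^{mn_0}$, of $\bF_2$-dimension $(mn_0)^t$) than the restriction of scalars of the $\bF_{2^m}$-dual-tensor code (of $\bF_2$-dimension $m\,n_0^t$), and product-expansion of the restricted codes is a genuinely different statement from product-expansion over $\bF_{2^m}$. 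Neither Lemma~\ref{lem:pesubmain} nor Lemma~\ref{lem:petocpe} helps here: both concern subcodes, respectively collective product-expansion, over a \emph{fixed} base field (cf.\ Remark~\ref{remark:dlvrestrict}), not a change of alphabet. The way the large constant-sized field is actually handled is the reverse order of operations: the local (co)boundary-expansion analysis, and hence the local-to-global argument of \cite{dinur_expansion_2024}, is carried out over the large characteristic-$2$ field, and only the finished constant-alphabet object is reinterpreted over $\bF_2$ by viewing each field symbol as a block of $m=O_t(1)$ bits; this reinterpretation affects lengths, localities, Hamming weights, and hence distances and (co)filling constants only by factors depending on $m$, which are absorbed into the $t$-dependent constants. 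If you swap your reduction step for this one, the rest of your derivation stands.
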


We will apply Theorem~\ref{thm:proddis} to the chain complexes in Theorem~\ref{thm:dlv}. We will, however, not attempt to apply Theorem~\ref{thm:prodfill} to these chain complexes in this paper; in order to do so, we would want analogous bounds as in~(\ref{eq:dlvmudi}) and~(\ref{eq:dlvmuui}) to also hold for \emph{collective} (co)filling constants $M_i$ and $M^i$. Later in Remark~\ref{remark:collfill}, we explain why such bounds indeed hold.

It is tempting to attempt iteratively constructing qLDPC codes as follows. Fix some large constant-sized 5-term chain complex $\cB$ from Theorem~\ref{thm:dlv}, and re-index the nonzero terms to range from $-2,\dots,2$. Let $L_0=((L_0)^0,(L_0)_0)$ consist of dual 0-cohomology/homology bases. Then let $\cA_0$ be the truncation of $\cB$ to the middle 3 terms, and for $i\in\bN$, define a 0-subsystem complex $(\cA_i,L_i)$ as the truncation of $(\cA_{i-1},L_{i-1})\otimes(\cB,L_0)$ to the middle 3 terms. We may then try to bound $d_0(\cA_i,L_i)$ and $d^0(\cA_i,L_i)$ using Theorem~\ref{thm:proddis}. However, the locality of $\cA_i$ is roughly $w^{\cA_i}\approx(i+1)w^{\cB}$, which for sufficiently large $i$ is greater $d_0(\cB,L_0),d^0(\cB,L_0)=O(1)$, as $\cB$ is constant-sized. Therefore for all sufficiently large $i$, the distance lower bound for $\cA_i$ from Theorem~\ref{thm:proddis} decreases as $i$ increases. In contrast, we want a distance lower bound that increases, ideally linearly, with $\dim(A_i^0)$; note that $\dim(A_i^0)$ grows exponentially with $i$.

The above attempt at iteratively constructing large-distance qLDPC codes failed because the locality grows in each iteration. We resolve this issue by applying the locality-reduction techniques of \cite{hastings_quantum_2023} (see also \cite{wills_tradeoff_2024}). Below, we state a locality-reduction result that follows from \cite{hastings_quantum_2023} with parameters sufficient for our purposes; we emphasize that we do not need the tightest possible bounds.

\begin{theorem}[Follows from \cite{hastings_quantum_2023}]
  \label{thm:locred}
  There exists a constant $c_0\in\bN$ and a monotonically increasing polynomial $p_0(\cdot)$ such that the following holds. Let $(Q,L)$ be a subsystem code of length $n$, dimension $k$, locality $w$, distance $d\geq p_0(w)$ (see Definition~\ref{def:subsystem}), and alphabet size $q=2$. Then there exists another subsystem code $(\tilde{Q},\tilde{L})$ with length $\tilde{n}\leq np_0(w)$, dimension $\tilde{k}=k$, locality $\tilde{w}\leq c_0$, and distance $\tilde{d}\geq d/p_0(w)$.\footnote{In Theorem~1 of \cite{hastings_quantum_2023}, these bounds on $\tilde{n}$ and $\tilde{d}$ are stated with an additional $\poly(\log n)$ loss factor. However, as remarked in Section~3.4 of \cite{wills_tradeoff_2024}, such additional $\poly(\log n)$ factors are unnecessary, even when using the proof of \cite{hastings_quantum_2023}.}

  Furthermore, given $(Q,L)$, such $(\tilde{Q},\tilde{L})$ can be computed by a randomized algorithm with success probability $\geq 1-\exp(-\Omega(n))$ in time $\poly(n)$, or by a deterministic algorithm in time $\poly(n,\exp(\poly(w)))$.
  % OLD: give locality reduction result that reduces locality from constant to a smaller constant, while only hurting subsystem distance by a constant factor. Note that if helpful we should be able to assume reasonableness of all logicals, meaning that all logicals (even gauged ones) are larger than all stabilizers, because distance of product is always at least max of distances of inputs, so distance of product is always at least distance of $\cB$, which will always be greater than the locality.
\end{theorem}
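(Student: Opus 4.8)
The plan is to deduce Theorem~\ref{thm:locred} from the quantum weight-reduction theorem of \cite{hastings_quantum_2023} (with the quantitative improvement of \cite{wills_tradeoff_2024}) by carrying the designated logical operators through each step of the reduction. Recall that \cite{hastings_quantum_2023} builds the weight reduction out of a constant number of elementary transformations — qubit ``copying''/splitting to reduce qubit degree, ``coning''/gauging to reduce check weight, and ``thickening'' together with a choice of cellulation to finish the job — each of which increases the block length by a factor $\poly(w)$ and decreases the distance by a factor $\poly(w)$, so that their composition takes a locality-$w$ CSS code to one of locality $\le c_0$ for an absolute constant $c_0$. First I would record that, as noted in Section~3.4 of \cite{wills_tradeoff_2024}, the thickening step can be performed with only a $\poly(w)$ (rather than $\poly(w)\cdot\poly(\log n)$) overhead, so that composing the steps gives $\tilde n\le n\,p_0(w)$ and $\tilde d\ge d/p_0(w)$ with $p_0$ a genuine (and, after replacing $p_0$ by $\sum_{i\le w}p_0(i)$, monotone) polynomial.

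The second, and main, step is to upgrade this from CSS stabilizer codes to subsystem codes. Each elementary transformation above is naturally an operation on the $3$-term chain complex $\bF_2^{m_X}\xrightarrow{H_X^\top}\bF_2^n\xrightarrow{H_Z}\bF_2^{m_Z}$ underlying $Q$, and comes equipped with explicit chain maps between the input and output complexes inducing the identity on the relevant (co)homology and — crucially — having bounded locality in \emph{both} directions: the image of a basis vector, and the preimage of a (co)cycle under the homotopy inverse, has Hamming weight at most $\poly(w)$ times that of the original. Given $(Q,L)$, I would take $(\tilde Q,\tilde L)$ to be the code and logicals obtained by pushing the middle-level complex and the tuples $L_X,L_Z$ through this sequence of maps. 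Since the maps respect the natural homology/cohomology pairing of Lemma~\ref{lem:hombasis}, the relations $c^i\cdot c_j=\1_{i=j}$ survive (after adjusting representatives by boundaries), so $(\tilde Q,\tilde L)$ is a valid subsystem code with $\tilde k=k$. For the distance, take any $\tilde c\in\tilde Q_Z\setminus\spn\{\tilde L_X\}^\perp$ (the $X$-side being symmetric); its class pairs nontrivially with some $\tilde c^i$, so pulling back along the locality-$\poly(w)$ homotopy inverse produces a class of $Q$ pairing nontrivially with $c^i$, i.e.\ a nontrivial dressed logical of $(Q,L)$, which thus has every representative of weight $\ge d$; comparing against the weight blow-up of the pullback gives $|\tilde c|\ge d/p_0(w)$ after enlarging $p_0$. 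This is where the hypothesis $d\ge p_0(w)$ enters (it at minimum makes the conclusion non-vacuous, and more substantively it guards the coning step — which replaces a heavy check by a cone over a sparse expander, adjoining new qubits and relations — against creating a short dressed logical in $\tilde Q$ that fails to pull back, i.e.\ it forces the newly adjoined cycles to be genuine boundaries).

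I expect this second step to be the main obstacle: at the level of stabilizer codes it is exactly Hastings' theorem, but one must check carefully that each transformation interacts correctly with the \emph{subsystem} distance — with the restricted operator sets $Q_X\setminus\spn\{L_Z\}^\perp$ and $Q_Z\setminus\spn\{L_X\}^\perp$ — rather than only with the ambient CSS distance, which can be far smaller since gauge operators may be light. The saving grace is that imposing subsystem structure only \emph{shrinks} the set of operators whose weight must be bounded below, so every estimate in Hastings' analysis still applies; the work is the bookkeeping to confirm that the designated logicals, not the gauge operators, are the quantities being tracked. Finally, the algorithmic claims follow from the proof: each elementary transformation is computable in time $\poly(n)$ using internal randomness (e.g.\ to sample the expanders used in coning, which succeeds except with probability $\exp(-\Omega(n))$), and derandomizes to time $\poly(n,\exp(\poly(w)))$ by exhaustively searching for the required size-$\poly(w)$ gadgets at each step.
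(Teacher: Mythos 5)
Your proposal takes essentially the same route as the paper, which also obtains this theorem by citing Hastings's weight reduction (with the remark from Wills et al.\ removing the $\poly(\log n)$ factors), noting the construction is naturally algorithmic, identifying the hypothesis $d\geq p_0(w)$ with Hastings's ``reasonable'' assumption, and asserting that the construction carries over to subsystem codes by restricting attention to the designated logical operators $L$ --- precisely the pullback/bookkeeping argument you sketch (and which you in fact spell out in more detail than the paper does). The only nit is that the paper attributes the randomness in the $\poly(n)$-time algorithm to the decongestion lemma of Freedman--Hastings rather than to sampling expanders in the coning step, but this detail of the cited construction does not affect the correctness of your argument.
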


\begin{remark}
  Theorem~\ref{thm:locred} is not exactly as stated in \cite{hastings_quantum_2023,wills_tradeoff_2024}, though our statement follows from essentially the same proof. This remark briefly describes the differences.

  First, \cite{hastings_quantum_2023} does not directly analyze the polynomial-time algorithms described in Theorem~\ref{thm:locred}, but the construction is naturally algorithmic, so tracing through the steps yields the stated algorithms. The only randomness in the $\poly(n)$-time algorithm is used in the application of the decongestion lemma of \cite{freedman_building_2021}; instead using a brute-force search yields the $\poly(n,\exp(\poly(w)))$-time deterministic algorithm.

  The requirement that $d\geq p_0(w)$ corresponds to the assumption that the relevant code is ``reasonable'' in the language of \cite{hastings_quantum_2023}. Although \cite{hastings_quantum_2023} shows that this assumption can be removed, we maintain it to emphasize that we do not need the strongest possible result, and to maintain consistency with the exposition of \cite{wills_tradeoff_2024}, which only considers reasonable codes.

  Finally, the main difference between Theorem~\ref{thm:locred} and the results in \cite{hastings_quantum_2023,wills_tradeoff_2024} is that Theorem~\ref{thm:locred} considers subsystem codes (with their associated notions of distance and dimension), whereas \cite{hastings_quantum_2023,wills_tradeoff_2024} only considered non-subsystem codes. However, tracing through the proof of \cite{hastings_quantum_2023} reveals that the construction goes through flawlessly for subsystem codes as well; the only modification is to restrict attention to bounding the weight of appropriate codewords (i.e.~logical operators), as specified by the subsystem data $L$.
  % OLD: explain any parts of above theorem that are not directly stated in \cite{hastings_quantum_2023,wills_tradeoff_2024}, specifically relating to $\poly\log N$ loss in \cite{hastings_quantum_2023} that \cite{wills_tradeoff_2024} claims is unnecessary (should check but in our case of reducing locality from constant to smaller constant is probably ok), and also the fact that we need the result for subsystem codes (should check carefully that this is ok)
\end{remark}

We now show how to construct an infinite family of qLDPC (subsystem) codes of close to linear distance and dimension by iteratively applying Theorem~\ref{thm:proddis} and Theorem~\ref{thm:locred}, using a constant-sized instance of the chain complexes in Theorem~\ref{thm:dlv} as a starting point. The iterative construction is formally presented in the following definition.

\begin{definition}
  \label{def:iterative}
  Given a parameter $N\in\bN$, we define a family of 3-term 0-subsystem chain complexes $(\cA_i,L_i)_{i\in\bZ_{\geq 0}}$ as follows. Let
  \begin{equation*}
    \cC_N = (C_{N,2} \rightarrow C_{N,1} \rightarrow C_{N,0} \rightarrow C_{N,-1} \rightarrow C_{N,-2})
  \end{equation*}
  be the $5$-term chain complex from Theorem~\ref{thm:dlv} with parameters $N$ and $t=4$, where we have reindexed the chain complex terms to range from $-2$ to $+2$. Let $L^{\cC_N}=((L^{\cC_N})^0,(L^{\cC_N})_0)$ be a pair of dual $0$-cohomology/homology bases for $\cC_N$.

  Let $(\cA_0,L_0)$ be the 3-term subsystem complex obtained by applying the locality-reduction transformation\footnote{\label{footnote:noredundantchecks} Here we view the locality-reduced code as a 3-term chain complex in the standard way, by letting $\delta_0=H_X$ and $\partial_0=H_Z$. We may furthermore assume that these maps have full rank, which can always be achieved by removing rows from $H_X$ (resp.~$H_Z$) that lie in the span of some other rows. Removing such redundant parity checks can only decrease (i.e.~improve) the locality, and does not affect the code's distance as $\ker H_X$ and $\ker H_Z$ remain unchanged.} in Theorem~\ref{thm:locred} to the subsystem code associated to (the middle 3 terms of) $(\cC_N,L^{\cC_N})$, where here we use the correspondence between 3-term subsystem chain complexes and subsystem CSS codes described in Definition~\ref{def:subsystemcomplex}.

  For $i\geq 1$, given $(\cA_{i-1},L_{i-1})$, we inductively define $(\cA_i,L_i)$ to be the 3-term subsystem chain complex obtained by applying the locality-reduction transformation in Theorem~\ref{thm:locred} to the subsystem code associated to (the middle 3 terms of) the homological product $(\cA_{i-1},L_{i-1})\otimes(\cC_N,L^{\cC_N})$.
\end{definition}

The following result bounds the code parameters of the construction in Definition~\ref{def:iterative}.

\begin{theorem}
  \label{thm:iterative}
  For every $\epsilon>0$, there exists a sufficiently large $N=N(\epsilon)\in\bN$ such that the $3$-term subsystem chain complexes $(\cA_i,L_i)$ with parameter $N$ defined in Definition~\ref{def:iterative} satisfy the following: letting $N_i:=\dim A_{i,0}$, then the subsystem code associated to $(\cA_i,L_i)$ has length $N_i$, dimension $\geq N_i^{1-\epsilon}$, distance $\geq N_i^{1-\epsilon}$, and locality $\leq O(1)$.

  Furthermore, for every fixed $\epsilon$ with $N=N(\epsilon)$, then $(\cA_i,L_i)$ can be constructed by a deterministic algorithm in time $\poly(N_i)$.
\end{theorem}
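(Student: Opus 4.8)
The plan is to prove Theorem~\ref{thm:iterative} by induction on~$i$, tracking four quantities simultaneously: the block length $N_i$, dimension $K_i$, distance $D_i$, and locality $w_i$ of the subsystem code associated to $(\cA_i,L_i)$. The starting complex $\cC_N$ is a \emph{fixed} constant-sized instance of Theorem~\ref{thm:dlv} with $t=4$, so all of its parameters ($w^{\cC_N}$, the dimensions $\dim C_{N,j}$, the systolic/cosystolic distances $d_j(\cC_N),d^j(\cC_N)$, and the filling constants $\mu_j(\cC_N),\mu^j(\cC_N)$) are absolute constants depending only on $N$ (hence only on $\epsilon$). Crucially, because $\cC_N$ has the chain/cochain symmetry built into Theorem~\ref{thm:dlv}, all the bounds of Theorem~\ref{thm:proddis} apply equally to $\cC=\cA\otimes\cB$ and to its cochain complex, so a single inductive argument controls both $d_0$ and $d^0$ (i.e.\ both $Z$- and $X$-distance).

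First I would set up the bookkeeping. Write $M := \dim C_{N,0} = \Theta(1)$ and similarly name the other constant parameters of $\cC_N$. In iteration $i$, one forms $(\cA_{i-1},L_{i-1})\otimes(\cC_N,L^{\cC_N})$, truncates to the middle 3 terms, and applies Theorem~\ref{thm:locred}. Before the locality reduction, Theorem~\ref{thm:proddis} (applied with $\cA=\cA_{i-1}$, a 3-term subsystem complex of locality $w_{i-1}$, and $\cB=\cC_N$, a 5-term complex) gives a middle-level subsystem code of length $\approx N_{i-1}\cdot M + 2 N'_{i-1} N'_{\cC_N}$, dimension $K_{i-1}\cdot k^{\cC_N}$, and distance
\begin{equation*}
  D'_i \;\geq\; \frac{D_{i-1}\cdot d_0(\cC_N,L^{\cC_N})}{w_{i-1}\cdot \mu_0(\cC_N)\cdot n^{\cC_N}_{-1}/d_{-1}(\cC_N)},
\end{equation*}
and the analogous cochain bound. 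Since all the $\cC_N$-quantities are constants, this reads $D'_i \geq D_{i-1}/(c\cdot w_{i-1})$ for an absolute constant $c=c(\epsilon)$, while the length multiplies by a constant $\Lambda=\Lambda(\epsilon)$, i.e.\ $N'_i \leq \Lambda N_{i-1}$, and the dimension multiplies by a constant $\kappa=\kappa(\epsilon)\geq 2$ (for $N$ large enough, $\kappa \geq \Lambda^{1/(1-\epsilon)}$, say — this is where $N$ must be chosen large). Then Theorem~\ref{thm:locred} brings the locality down to $w_i\leq c_0=O(1)$ (independent of $\epsilon$), at the cost of $N_i\leq N'_i\cdot p_0(w'_i)$ and $D_i\geq D'_i/p_0(w'_i)$, where $w'_i\leq w_{i-1}+w^{\cC_N}\leq c_0+O(1)$ is also an absolute constant, so $p_0(w'_i)$ is a constant. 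Net effect per iteration: $N_i\leq C_1 N_{i-1}$, $K_i\geq \kappa K_{i-1}$, $D_i\geq D_{i-1}/C_2$, $w_i\leq c_0$, for absolute constants $C_1,C_2,\kappa$ with $\kappa$ as large as we like by choosing $N$ large.

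The heart of the argument is then to verify that the multiplicative constants can be balanced so that $K_i\geq N_i^{1-\epsilon}$ and $D_i\geq N_i^{1-\epsilon}$ survive the induction. After $i$ iterations $N_i\leq C_1^i N_0$, $D_i\geq D_0/C_2^i$, $K_i\geq \kappa^i K_0$. The delicate point is $D_i$: it \emph{decreases} geometrically, whereas $N_i^{1-\epsilon}$ grows at rate $C_1^{(1-\epsilon)i}$ — wait, that is backwards, so one cannot simply keep $D_i\geq N_i^{1-\epsilon}$ if $D_i$ shrinks. The resolution (this is the main obstacle, and the reason the construction is subtle) is that the base complex must be taken large enough that $D_0$ is a large polynomial in $N_0$, and more importantly that $d_0(\cC_N,L^{\cC_N})$ is large enough relative to the locality blowup that $D'_i\geq D_{i-1}\cdot \kappa^{\,1-\epsilon}/\text{(const)}$ actually \emph{grows} — i.e.\ we need $d_0(\cC_N,L^{\cC_N}) / \bigl(\mu_0(\cC_N)\cdot n^{\cC_N}_{-1}/d_{-1}(\cC_N)\bigr) > c_0\cdot p_0(w'_i)^2 \cdot \Lambda^{1-\epsilon}$. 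By Theorem~\ref{thm:dlv} with $t=4$, the systolic distances and dimensions of $\cC_N$ are both $\Theta(N/\polylog N)$ while $\mu_0(\cC_N)=O(\polylog N)$, so the left side is $\Omega(N/\polylog N)$, which exceeds any fixed constant once $N=N(\epsilon)$ is large enough. With that choice of $N$, one checks by induction that $D_i\geq \rho^i D_0$ for some $\rho>1$, and that the exponents work out: $D_i\geq N_i^{1-\epsilon}$ and $K_i\geq N_i^{1-\epsilon}$ hold for all $i$ (the base case $i=0$ being the statement that $\cC_N$ itself, after one locality reduction, has polynomially-good parameters, which follows from Theorem~\ref{thm:dlv} and Theorem~\ref{thm:locred}). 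Finally, explicitness and the $\poly(N_i)$-time deterministic construction follow because $\cC_N$ is explicit (Theorem~\ref{thm:dlv}), the homological product is a polynomial-time operation, and the deterministic locality-reduction algorithm of Theorem~\ref{thm:locred} runs in time $\poly(n,\exp(\poly(w)))=\poly(N_i)$ since $w=O(1)$ at every step; composing $i=O(\log N_i)$ such steps keeps the total polynomial in $N_i$.
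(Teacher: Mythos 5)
Your proposal takes essentially the same route as the paper: iterate (product with a fixed constant-sized complex $\cC_N$ from Theorem~\ref{thm:dlv}, truncate to the middle three terms, locality-reduce via Theorem~\ref{thm:locred}), control the per-iteration distance via Theorem~\ref{thm:proddis} applied on both the chain and cochain sides, and choose $N=N(\epsilon)$ large enough that the per-iteration gain of order $N^{\epsilon}$ beats the $O(\log N)^{9}$ and constant losses; the induction and the $\poly(N_i)$-time deterministic construction are handled just as in the paper. Two quantitative slips should be repaired, though neither changes the plan. First, your sufficient condition $\kappa\geq\Lambda^{1/(1-\epsilon)}$ for the dimension is unachievable as stated, since both $\kappa=\dim H_0(\cC_N)$ and the per-iteration length factor $\Lambda$ are $\Theta(N)$; the condition actually needed (and satisfied for large $N$) is $\kappa\gtrsim\Lambda^{1-\epsilon}$. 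Second, in the distance balancing you require the gain $d_0(\cC_N,L^{\cC_N})\big/\bigl(\mu_0(\cC_N)\cdot\dim C_{N,-1}/d_{-1}(\cC_N)\bigr)$ to exceed roughly $c_0\,p_0^2\,\Lambda^{1-\epsilon}$ and then argue the left side ``exceeds any fixed constant''; but the right side is not a fixed constant --- it is $\Theta(N^{1-\epsilon})$ because $\Lambda=\Theta(N)$ --- so the correct (and true) comparison is $\Omega(N/\log^{9}N)$ versus $O(N^{1-\epsilon})$, which is exactly how the paper closes the induction. With these exponents fixed, and with the explicit check that the product's distance exceeds $p_0(c_0+w^{\cC_N})$ before each application of Theorem~\ref{thm:locred}, your argument coincides with the paper's proof.
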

\begin{proof}
  Fix $\epsilon>0$, and define the constant $c_0$ and the polynomial $p_0(\cdot)$ as in Theorem~\ref{thm:locred}. For every $i$, let $k_i$ denote the dimension of the subsystem code associated to $(\cA_i,L_i)$.

  By Theorem~\ref{thm:dlv}, for all sufficiently large $N$ we have $d_0(\cC_N),d^0(\cC_N)\geq\Omega(N/\log^3N)>O(1)=p_0(c_0+w^{\cC_N})$. Therefore by Theorem~\ref{thm:dlv} and Theorem~\ref{thm:locred}, the subsystem code associated to $(\cA_0,L_0)$ has length $N_0\leq Np_0(w^{\cC_N})=O(N)$, locality $w^{\cA_0}\leq c_0=O(1)$, dimension $k_0=\Theta(N)\geq\Omega(N_0)$, and distance $d(\cA_0,L_0)\geq\Omega(N/\log^3N)/p_0(w^{\cC_N})\geq\Omega(N_0/\log^3N_0)$. Note that here $O,\Theta,\Omega$ hide absolute constants independent of $N$ and $\epsilon$. Thus for all $N$ sufficiently large, we have $k_0\geq N_0^{1-\epsilon}$ and $d(\cA_0,L_0)\geq N_0^{1-\epsilon}$.

  We now show the desired bounds on $(\cA_i,L_i)$ by induction. Assume that the subsystem code associated to $(\cA_{i-1},L_{i-1})$ has dimension $k_{i-1}\geq N_{i-1}^{1-\epsilon}$, distance $d(\cA_{i-1},L_{i-1})\geq N_{i-1}^{1-\epsilon}$, and locality $w^{\cA_{i-1}}\leq c_0$. Our goal is to show that if $N$ is greater than some sufficiently large constant depending only on $\epsilon$, then $(\cA_i,L_i)$ has dimension $k_i\geq N_i^{1-\epsilon}$, distance $d(\cA_i,L_i)\geq N_i^{1-\epsilon}$, and locality $w^{\cA_i}\leq c_0$.

  To begin, the homological product $(\cA_{i-1},L_{i-1})\otimes(\cC_N,L^{\cC_N})$ by construction has locality $\leq w^{\cA_{i-1}}+w^{\cC_N}\leq c_0+w^{\cC_N}$, and by Theorem~\ref{thm:proddis} satisfies the distance bounds
  \begin{align*}
    d_0((\cA_{i-1},L_{i-1})\otimes(\cC_N,L^{\cC_N})) &\geq \frac{d_0(\cA_{i-1},L_{i-1})d_0(\cC_N,L^{\cC_N})}{c_0\mu_0(\cC_N)\cdot\dim(\cC_{N,-1})/d_{-1}(\cC_N)} \\
    d^0((\cA_{i-1},L_{i-1})\otimes(\cC_N,L^{\cC_N})) &\geq \frac{d^0(\cA_{i-1},L_{i-1})d^0(\cC_N,L^{\cC_N})}{c_0\mu^0(\cC_N)\cdot\dim(\cC_{N,1})/d^{1}(\cC_N)}.
  \end{align*}
  By the inductive hypothesis along with Theorem~\ref{thm:dlv}, it follows that
  \begin{align}
    \label{eq:applyproddis}
    d((\cA_{i-1},L_{i-1})\otimes(\cC_N,L^{\cC_N}))
    &\geq \Omega\left(\frac{N_{i-1}^{1-\epsilon}\cdot N/\log^3N}{\log^3N\cdot\log^3N}\right) = \Omega\left(\frac{N_{i-1}^{1-\epsilon}\cdot N}{\log^9N}\right).
  \end{align}
  Furthermore, by definition the subsystem code associated to the product $(\cA_{i-1},L_{i-1})\otimes(\cC_N,L^{\cC_N})$ has dimension
  \begin{align*}
    k_i &= k_{i-1}\cdot\dim(H_0(\cC_N)) \geq \Omega(N_{i-1}^{1-\epsilon}\cdot N),
  \end{align*}
  and has length
  \begin{equation*}
    \dim(A_{i-1,-1})\dim(C_{N,1})+\dim(A_{i-1,0})\dim(C_{N,0})+\dim(A_{i-1,1})\dim(C_{N,-1}) \leq O(N_{i-1}\cdot N).
  \end{equation*}
  The final inequality above holds by Theorem~\ref{thm:dlv}, and because $\dim(A_{i-1,-1}),\dim(A_{i-1,1})\leq\dim(A_{i-1,0})=N_{i-1}$ by footnote~\ref{footnote:noredundantchecks}. Note that the $\Omega,O$ above hide absolute constants independent of $N,\epsilon,i$.

  For every $N$ larger than some absolute constant, the right hand side of~(\ref{eq:applyproddis}) is larger than $p(c_0+w^{\cC_N})$, as $c_0+w^{\cC_N}$ is by definition itself an absolute constant. Therefore we may apply Theorem~\ref{thm:locred} to the (middle 3 terms of) the homological product $(\cA_{i-1},L_{i-1})\otimes(\cC_N,L^{\cC_N})$, and we conclude that the resulting 3-term subsystem chain complex $(\cA_i,L_i)$ has length
  \begin{equation*}
    N_i \leq O(N_{i-1}\cdot N\cdot p_0(c_0+w^{\cC_N})) = O(N_{i-1}\cdot N),
  \end{equation*}
  dimension
  \begin{align*}
    k_i &\geq \Omega(N_{i-1}^{1-\epsilon}\cdot N) \geq \Omega(N_i^{1-\epsilon}\cdot N^\epsilon),
  \end{align*}
  distance
  \begin{equation*}
    d(\cA_i,L_i) \geq \Omega\left(\frac{N_{i-1}^{1-\epsilon}\cdot N}{\log^9N}\cdot\frac{1}{p_0(c_0+w^{\cC_N})}\right) \geq \Omega\left(N_i^{1-\epsilon}\cdot\frac{N^\epsilon}{\log^9N}\right).
  \end{equation*}
  and locality $w^{\cA_i}\leq c_0$, where in the inequalities above we absorb the absolute constant $p_0(c_0+w^{\cC_N})$ into those hidden by the $O,\Omega$. Thus if $N$ is greater than some absolute constant depending only on $\epsilon$ (and in particular, independent of $i$), then we have $k_i\geq N_i^{1-\epsilon}$ and $d(\cA_i,L_i)\geq N_i^{1-\epsilon}$, completing the inductive step, as desired.

  It remains to be shown that $(\cA_i,L_i)$ can be constructed by a deterministic algorithm in time $\poly(N_i)$. We again proceed by induction. For the base case, $(\cA_0,L_0)$ is a constant-sized object, and thus constructible in constant time. For the inductive step, given $(\cA_{i-1},L_{i-1})$, we can by definition compute the homological product $(\cA_{i-1},L_{i-1})\otimes(\cC_N,L^{\cC_N})$ in time $\poly(N_{i-1},N)=\poly(N_i)$, and by Theorem~\ref{thm:locred} we can then perform locality-reduction to obtain $(\cA_i,L_i)$ in time $\poly(N_{i-1}N,\exp(\poly(c_0+w^{\cC_N})))=\poly(N_i)$. Thus there is a deterministic algorithm to compute $(\cA_i,L_i)$ given $(\cA_{i-1},L_{i-1})$ in time $\poly(N_i)$. By construction $N_i\geq 2N_{i-1}$, so recursively applying this algorithm $i$ times to construct $(\cA_i,L_i)$ from $(\cA_0,L_0)$ yields an overall running time of $\poly(N_i)$.
\end{proof}

A natural extension of Theorem~\ref{thm:iterative} would be to construct qLTCs rather than qLDPC codes, by choosing $\cC_N$ to be a 7-term complex instead of a 5-term complex, and then replacing the application of Theorem~\ref{thm:proddis} with an application of Theorem~\ref{thm:prodfill}. However, the bounds in Theorem~\ref{thm:prodfill} depend on {\it collective} (co)filling constants, which are in general more difficult to bound than the (co)filling constants used in Theorem~\ref{thm:proddis}. Fortunately, Remark~\ref{remark:collfill} below explains why the bounds on the (co)filling constants in Theorem~\ref{thm:dlv} also apply to the collective (co)filling constants. Yet in the locality-reduction step in the proof of Theorem~\ref{thm:iterative}, we would also need to show that the collective (co)filling constants are preserved at all $3$ levels of the chain complex, up to a small loss. Specifically, we would need to extend the work of \cite{wills_tradeoff_2024}, which showed that locality-reduction preserves a quantum code's soundness, or equivalently, preserves the (co)filling constants in the middle level of the associated 3-term chain complex. We leave this direction for future work.

\begin{remark}
  \label{remark:collfill}
  \cite{dinur_expansion_2024} only claim to prove that the bounds~(\ref{eq:dlvmudi}) and~(\ref{eq:dlvmuui}) hold for the (co)filling constants $\mu_i(\cC_N)$ and $\mu^i(\cC_n)$ respectively. However, their proof implicitly implies that the analogous bounds also hold for the \textit{collective} (co)filling constants, so that (\ref{eq:dlvmudi}) and~(\ref{eq:dlvmuui}) in Theorem~\ref{thm:dlv} can be replaced by
  \begin{align}
    \label{eq:dlvMdi} M_i(\cC_N) &\leq O(\log N)^{t-1} \hspace{1em} \forall\; 2\leq i\leq t \\
    \label{eq:dlvMui} M^i(\cC_N) &\leq O(\log N)^{t-1} \hspace{1em} \forall\; 0\leq i\leq t-2.
  \end{align}
  Recall here that $\mu_i\leq M_i$ and $\mu^i\leq M^i$ by definition.

  Specifically, \cite{dinur_expansion_2024} first provide an interpretation of collective (co)filling in terms of the (co)filling constants of a ``direct product sheaf.'' In this interpretation, given a chain complex $\cC$ and some $m\in\bN$, the direct product sheaf $\cC^m$ has vector spaces $C_i^m$ and boundary maps $\partial_i^m$ that simply apply $\partial_i$ component-wise on each of the $m$ components. Each $C_i^m$ is also given a norm $|\cdot|_m$ that counts the number of coordinates in which any of the $m$ components is nonzero. The collective $i$-(co)filling constant of $\cC$ is then precisely the maximum over all $m\in\bN$ of the $i$-(co)filling constant of $\cC_m$ under the norm $|\cdot|_m$.

  At a high level, \cite{dinur_expansion_2024} prove Theorem~\ref{thm:dlv} (for the ordinary, non-collective (co)filling constants) by constructing certain chain complexes $\cC_N$ that they analyze via a local-to-global framework. Specifically, \cite{dinur_expansion_2024} show that the local structure of $\cC_N$ can be described by a constant-sized chain complex $\cL$ (see Section~4 of \cite{dinur_expansion_2024}). \cite{dinur_expansion_2024} show that the global complex $\cC_N$ satisfies the distance and expansion bounds in Theorem~\ref{thm:dlv} as long as the local complex $\cL$ has good \textit{(co)boundary expansion}, meaning that $\cL$ has vanishing (co)homology and has good (co)filling constants (where ``good'' means bounded by a constant independent of the dimension). \cite{dinur_expansion_2024} in turn then proves that $\cL$ satisfies these properties by applying the forthcoming result of Kalachev \& Panteleev stated in Theorem~\ref{thm:petrand}.

  To prove that $\cC_N$ furthermore has good \textit{collective} (co)filling constants as stated in~(\ref{eq:dlvMdi}) and~(\ref{eq:dlvMui}), it suffices to show that for all $m\in\bN$, the direct product sheaf $\cC_N^m$ has good (co)filling constants under the norm $|\cdot|_m$. The local-to-global results of \cite{dinur_expansion_2024} (Section~6 and Section~7 in their paper) translate flawlessly to the direct product sheaf $\cC_N^m$ with the norm $|\cdot|_m$; indeed, their entire paper is written using a similar sheaf terminology. Thus it suffices to show that the local direct product sheaf $\cL^m$ has (co)boundary expansion (i.e.~vanishing (co)homology and good (co)filling constants). But \cite{dinur_expansion_2024} show precisely this result in their Proposition~4.14. That is, \cite{dinur_expansion_2024} show collective (co)boundary expansion (or in their terminology, \textit{collective robustness}) of the local complex, which along with their local-to-global framework, implies~(\ref{eq:dlvMdi}) and~(\ref{eq:dlvMui}).
\end{remark}

\section{Conclusion}
In this paper, we provide new constructions of qLDPC codes of almost linear distance. Constructions of such objects were out of reach of known techniques until the breakthrough line of work \cite{hastings_fiber_2021,panteleev_quantum_2022,breuckmann_balanced_2021,panteleev_asymptotically_2022,leverrier_quantum_2022-1,dinur_good_2023} that developed over the past few years.\footnote{We remark that \cite{hastings_quantum_2023} applies locality-reduction to construct qLDPC codes of length $N$ and distance $\tilde{\Omega}(N^{2/3})$. Therefore these codes surpass the ``$\tilde{O}(\sqrt{N})$ distance barrier'' (see Section~\ref{sec:sqrtbarrier}), but do not approach linear distance.} The constructions from these works are based on a sort of balanced product (also called a lifted product) that applies to only certain codes with a particular group symmetry. In contrast, our constructions are based on iterative applications of the more basic homological product, which is generically defined with no group symmetries. Indeed, the classical analogue of our construction, namely iterated tensor products of classical codes, immediately gives classical LDPC codes of close-to-linear distance, and has also been shown to yield properties such as local testability \cite{ben-sasson_robust_2004}.

It is perhaps surprising that we are able to obtain similar parameters as in the classical case with just homological/tensor products, especially given that for many years, qLDPC codes constructed with similar techniques were not able to achieve distance greater than $\tilde{O}(\sqrt{N})$ (see Section~\ref{sec:sqrtbarrier}). We obtain our distance bounds by combining conceptual insights (e.g.~the use of subsystem codes in Section~\ref{sec:subprod}) with certain strong properties of the base codes. Specifically, our bounds assume that the starting codes either exhibit an appropriate product-expansion property (see Section~\ref{sec:peprod}), or are constant-sized qLTCs (see Section~\ref{sec:subprod}). Currently, the only known appropriate such qLTCs, given by \cite{dinur_expansion_2024,kalachev_personal_2024}, are themselves based on balanced products following the line of work described above. However, if we could for instance prove product-expansion of higher-order products of Reed-Solomon codes, we would be able to extend our results in Section~\ref{sec:peprod} to obtain a completely explicit construction of qLDPC codes of almost linear distance, without relying on any complex constant-sized objects. We leave such approaches for future work.

% \vnote{Say something to the following effect: We have given constructions of qLDPC codes with almost linear distance, something that was well out of reach till a few years ago.  Our approach is to execute the analog of iterated tensor products of classical codes which give similar parameters immediately, in the quantum world with iterated chain complex tensor products (aka homological products). It is surprising (to us) that this is able to get such good parameters, matching the classical case. The one hammer we do need is the starting code which needs to be a contant-sized quantum LTC, whose existence is not clear without relying on recent developments on qLDPC codes using square Cayley complexes. If we could prove product expansion of higher order products of Reed-Solomon codes we can instead apply our first approach to get a completely explicit construction without relying on any complex constant-sized components.}

% \section{Acknowledgments}

\bibliographystyle{alpha}
\bibliography{library}

\appendix

\section{Proofs of Product-Expansion Properties}
\label{sec:peproofs}
This section provides proofs for properties of product-expansion. For this purpose, we carry over the notation $C^{(i)}$, $C^{(i,j)}$, $|\cdot|_i$ from Section~\ref{sec:pe}.

We begin with a proof of Lemma~\ref{lem:homvanexp}.

\begin{proof}[Proof of Lemma~\ref{lem:homvanexp}]
  The lemma essentially follows from the discussion in Appendix~B of~\cite{kalachev_two-sided_2023}, though we include a proof for completeness. At a high level, \cite{kalachev_two-sided_2023} provide an interpretation of $C_1\boxplus\cdots\boxplus C_t$ using a homological product of chain complexes associated to the codes $C_1,\dots,C_t$. The lemma follows by applying the K\"{u}nneth formula to this product, and then tracing through the definitions.
  % \cite{kalachev_two-sided_2023} show that product-expansion is equivalent to \textit{coboundary expansion} at the next-to-highest level of a certain $(t+1)$-term chain complex associated with the codes $C_1,\dots,C_t$. Here coboundary expansion means that the chain complex has cosystolic expansion (see Definition~OLD), and that its homology vanishes. Tracing through the definitions, we find that the vanishing homology at the next-to-highest level of this chain complex (which \cite{kalachev_two-sided_2023} show using the K\"{u}nneth formula) is equivalent to the lemma statement above.
  
  For $i\in[t]$, let $k_i=\dim C_i$, and define a 2-term cochain complex
  \begin{equation*}
    \cC_i=(\bF_q^{k_i}\xrightarrow{G_i}\bF_q^{n_i}),
  \end{equation*}
  where the coboundary map is a generator matrix $G_i\in\bF_q^{n_i\times k_i}$ for $C_i$, meaning that $C_i=\im G_i$. Then let
  \begin{equation*}
    \cA^* = (A^0 \xrightarrow{\delta_0} A^1 \xrightarrow{\delta_1} \cdots \xrightarrow{\delta_{t-1}} A^t)
  \end{equation*}
  be the $(t+1)$-term cochain complex given by the homological product (Definition~\ref{def:homprod}) of $\cC_1,\dots,\cC_t$, that is
  \begin{equation*}
    \cA = \cC_1 \otimes \cdots \otimes \cC_t.
  \end{equation*}
  The K\"{u}nneth formula (Proposition~\ref{prop:kunneth}) implies that the cohomology $H^i(\cA)$ vanishes for all $0\leq i\leq t-1$, so in particular $H^{t-1}(\cA)=0$.

  For $0\leq i\leq t$, the space $A^i$ is the direct sum of ${t\choose i}$ spaces of $t$-dimensional tensors, that is,
  \begin{equation}
    \label{eq:Aistructure}
    A^i \cong \bigoplus_{S\subseteq[t]:|S|=i}\left(\bigotimes_{j\in S}\bF_q^{n_j}\right)\otimes\left(\bigotimes_{j\in[t]\setminus S}\bF_q^{k_j}\right),
  \end{equation}
  where we write $\cong$ instead of $=$ above because the $t$ factors in the tensor products above should be reordered to go in increasing order by $j$. Letting $G^{(i)}=I^{\otimes i-1}\otimes G_i\otimes I^{t-i}$ denote the map that applies $G_i$ to all the direction-$i$ vectors in a $t$-dimensional tensor, then the coboundary maps of $\cA$ by definition consist of sums of maps $G^{(i)}$ with appropriate signs. More details on the chain complex $\cA$ can be found in Appendix~B of~\cite{kalachev_two-sided_2023} and in Section~4 of~\cite{dinur_expansion_2024} (though note that these works use different signing conventions for the coboundary maps).

  Now for $i\in[t]$, let $a_i={G^{(i)}}^{-1}(-1)^{i-1}(c_i-c_i')$ be the $t$-dimensional tensor obtained by applying $G_i^{-1}$ to every direction-$i$ column of $(-1)^{i-1}(c_i-c_i')$; note that as every direction-$i$ column of $c_i-c_i'$ lies in $C_i=\im G_i$, these inverses are (uniqely) well-defined. Then by definition $a:=(a_1,\dots,a_t)$ is an element of $A^{t-1}$, and
  \begin{equation*}
    \delta_{t-1}a = \sum_{i\in[t]}(-1)^{i-1}(-1)^{i-1}(c_i-c_i') = c-c = 0.
  \end{equation*}
  Thus because $H^{t-1}(\cA)=0$ as shown above, there must exist some $b\in A^{t-2}$ with $a=\delta_{t-2}b$. It follows from~(\ref{eq:Aistructure}) that $b$ is a collection of tensors $b=(b_{i,j})_{1\leq i<j\leq t}$, where $b_{i,j}$ has length $k_i$ and $k_j$ in directions $i$ and $j$ respectively, and length $n_\ell$ in all other directions $\ell\in[t]\setminus\{i,j\}$. Furthermore, for $i\in[t]$, by definition
  \begin{align*}
    a_i &= (\delta_{t-2}b)_i = \sum_{j=1}^{i-1}(-1)^{j-1}G^{(j)}b_{j,i} + \sum_{j=i+1}^t(-1)^{j-2}G^{(j)}b_{i,j}.
  \end{align*}
  Applying $(-1)^{i-1}G^{(i)}$ to both sides of the above equation, we conclude that
  \begin{equation*}
    c_i-c_i' = (-1)^{i-1}G^{(i)}a_i = \sum_{j=1}^{i-1}(-1)^{(j-1)+(i-1)}G^{(j)}G^{(i)}b_{j,i} + \sum_{j=i+1}^t(-1)^{(i-1)+(j-2)}G^{(i)}G^{(j)}b_{i,j}.
  \end{equation*}
  Thus for all $1\leq i<j\leq t$, if we let
  \begin{equation*}
    c_{i,j} = (-1)^{(i-1)+(j-1)}G^{(i)}G^{(j)}b_{i,j},
  \end{equation*}
  then $c_{i,j}\in C^{(i,j)}$ and~(\ref{eq:homvanexp}) holds, as desired.
\end{proof}

We now turn to proving Lemma~\ref{lem:pesubmain}. We begin with the following lemma.

\begin{lemma}
  \label{lem:pefewercodes}
  If a collection of codes $(C_i\subseteq\bF_q^{n_i})_{i\in[t]}$ has product-expansion $\rho>0$, and $C_t\neq\bF_q^{n_t}$, then the subcollection $(C_i)_{i\in[t-1]}$ also has product expansion at least $\rho$.
\end{lemma}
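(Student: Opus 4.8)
The plan is to embed a decomposition problem for the smaller collection $(C_i)_{i\in[t-1]}$ into a decomposition problem for the full collection $(C_i)_{i\in[t]}$, apply the product-expansion hypothesis for the full collection, and then project back down. Concretely, suppose we are given $c\in C_1\boxplus\cdots\boxplus C_{t-1}\subseteq\bigotimes_{j\in[t-1]}\bF_q^{n_j}$. Since $C_t\neq\bF_q^{n_t}$, there is a nonzero functional $\phi\in\bF_q^{n_t}$ vanishing on $C_t$, and we may normalize by picking a coordinate $r\in[n_t]$ with $\phi_r\neq 0$; actually it is cleanest to just fix any coordinate $r$ and build tensors supported on the $r$-th slice in direction $t$. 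Define $\tilde c\in\bigotimes_{j\in[t]}\bF_q^{n_j}$ by placing $c$ in the direction-$t$ slice $k_t=r$ and zeros elsewhere. Then $\tilde c\in C^{(1)}+\cdots+C^{(t)}$ (indeed $\tilde c\in C^{(1)}+\cdots+C^{(t-1)}$, taking the trivial $c_t=0$ part), and crucially $|\tilde c| = |c|$ and for each $i\le t-1$ the direction-$i$ column count of the natural lift of a decomposition piece is unchanged.

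First I would apply the $\rho$-product-expansion of $(C_i)_{i\in[t]}$ to $\tilde c$: there is a decomposition $\tilde c=\tilde c_1+\cdots+\tilde c_t$ with $\tilde c_i\in C^{(i;t)}$ and $|\tilde c|\ge \rho\sum_{i\in[t]}n_i|\tilde c_i|_i$. The second step is to restrict everything to the slice $k_t=r$: let $c_i$ be the direction-$t$ slice of $\tilde c_i$ at coordinate $r$, viewed as a tensor in $\bigotimes_{j\in[t-1]}\bF_q^{n_j}$. Restricting to a fixed direction-$t$ coordinate is a linear projection, so $c_1+\cdots+c_{t-1}+c_t = c$ on that slice; and restricting $C^{(i;t)}$ (for $i\le t-1$) to a direction-$t$ slice lands in $C^{(i;t-1)}$, while restricting $C^{(t;t)}$ to a slice lands in all of $\bigotimes_{j\in[t-1]}\bF_q^{n_j}$. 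So after restriction we have $c = c_1+\cdots+c_{t-1} + c_t$ with $c_i\in C^{(i;t-1)}$ for $i\le t-1$ and $c_t\in\bigotimes_{j\in[t-1]}\bF_q^{n_j}$ arbitrary. The obstacle is the leftover term $c_t$: we need a genuine decomposition of $c$ into pieces lying in $C^{(1;t-1)},\dots,C^{(t-1;t-1)}$ only, with no slack term. The third step resolves this: since $\tilde c_t\in C^{(t;t)}$, every direction-$t$ column of $\tilde c_t$ lies in $C_t$, so $\phi$ annihilates each such column. Using the functional $\phi$ we can recover $c_t$'s contribution: more precisely, because $\tilde c = \sum_i \tilde c_i$ is supported only on the slice $k_t=r$ (as $\tilde c$ is), applying $\phi$ in direction $t$ to the whole equation gives $0 = \sum_{i\in[t-1]}\phi(\tilde c_i\text{ in dir }t) $ since $\phi$ kills $\tilde c_t$; this lets us solve for the slice-$r$ restriction $c_t$ as a $\bF_q$-linear combination of the slices of $\tilde c_1,\dots,\tilde c_{t-1}$ at the other direction-$t$ coordinates. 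Distributing those contributions appropriately back into $c_1,\dots,c_{t-1}$ produces honest pieces $c_i'\in C^{(i;t-1)}$ with $c = c_1'+\cdots+c_{t-1}'$ and $|c_i'|_i \le |\tilde c_i|_i$ (each $c_i'$ is a sum of at most $n_t$ slices of $\tilde c_i$, but its support in direction-$i$ columns is contained in the direction-$i$ column support of $\tilde c_i$).

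Finally, combining the weight bound: $|c| = |\tilde c| \ge \rho\sum_{i\in[t]}n_i|\tilde c_i|_i \ge \rho\sum_{i\in[t-1]}n_i|\tilde c_i|_i \ge \rho\sum_{i\in[t-1]}n_i|c_i'|_i$, which is exactly the product-expansion inequality for the subcollection with constant $\rho$. Taking the sup over valid $\rho$ for the full collection gives the claim. The main obstacle I anticipate is making the bookkeeping in the third step precise --- showing that when we "fold" the slice-$r$ value of $c_t$ into the other pieces, the direction-$i$ column supports do not grow. I would handle this by noting that for $i\le t-1$, the direction-$i$ column support of the slice-$r$ restriction of $\tilde c_i$ is contained in the projection of the direction-$i$ column support of $\tilde c_i$, and folding in $\phi$-combinations of other direction-$t$ slices of $\tilde c_i$ keeps us within the same set of direction-$i$ columns; so no new direction-$i$ columns appear, and $|c_i'|_i \le |\tilde c_i|_i$ holds. (One has to be a bit careful that the functional $\phi$ and its normalization are compatible with the slice we chose; choosing $r$ with $\phi_r\ne 0$ makes this clean.)
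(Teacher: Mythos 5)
Your proposal is correct and is essentially the paper's proof: embed $c$ in the direction-$t$ slice at a coordinate $r$ in the support of a nonzero $\phi\in C_t^\perp$, apply product-expansion of the full collection, and contract direction $t$ with $\phi/\phi_r$, which kills the $C_t$-piece, recovers $c$, and does not increase direction-$i$ column counts. The only slip is the intermediate equation in your third step, which should read $\phi_r\,c=\sum_{i\in[t-1]}\phi(\tilde c_i)$ rather than $0=\sum_{i\in[t-1]}\phi(\tilde c_i)$; your own folding computation yields exactly the pieces $c_i'=\phi_r^{-1}\phi(\tilde c_i)$, i.e.\ the same decomposition the paper writes down directly.
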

\begin{proof}
  Fix any $c\in C_1\boxplus\cdots\boxplus C_{t-1}$. Our goal is to construct a decomposition $c=c_1+\cdots+c_{t-1}$ with each $c_i\in C^{(i;t-1)}$ such that $\sum_{i\in[t-1]}n_i|c_i|_i\leq|c|/\rho$. For this purpose, because $C_t\subsetneq\bF_q^{n_t}$, the dual $C_t^\perp$ must be nonzero, so there exists some $v\in C_t^\perp\setminus\{0\}$. Fix an arbitrary coordinate $s\in\supp(v)\subseteq[n_t]$, and let $c'\in C_1\boxplus\cdots\boxplus C_t$ be natural the embedding of the $(t-1)$-dimensional tensor $c\in C_1\otimes\cdots\otimes C_{t-1}$ into $C_1\otimes\cdots\otimes C_{t-1}\otimes\bF_q^{\{s\}}\subseteq C_1\otimes\cdots\otimes C_t$. Formally, for $r\in[n_t]$, let $c'|_r$ denote the restriction of $c'$ to components where the last coordinate equals $r$. Then $c'|_s=c$ and $c'|_r=0$ for all $r\neq s$.

  By the product-expansion of $C_1,\dots,C_t$, there exists a decomposition $c'=c_1'+\cdots+c_t'$ with each $c_i'\in C^{(i;t)}$ such that $\sum_{i\in[t]}n_i|c_i'|_i\leq|c'|/\rho=|c|/\rho$. For $i\in[t]$, define
  \begin{equation*}
    c_i := (I^{\otimes t-1}\otimes v^\top)v_s^{-1}c_i'.
  \end{equation*}
  Then $c_t=0$, while $c_i\in C^{(i;t-1)}$ for $i\in[t-1]$. Thus
  \begin{equation*}
    c = (I^{\otimes t-1}\otimes v^\top)v_s^{-1}c' = c_1+\cdots+c_{t-1}.
  \end{equation*}
  Furthermore, by definition $|c_i|_i\leq|c_i'|_i$, so $\sum_{i\in[t-1]}n_i|c_i|_i\leq\sum_{i\in[t-1]}n_i|c_i'|_i\leq|c|/\rho$, as desired.
  % Furthermore, by definition $c'\in C^{(1;t)}+\cdots+C^{(t-1;t)}$, so Lemma~\ref{lem:homvanexp} implies that
  % \begin{align*}
  %   c_t'
  %   &\in C^{(1,t;t)}+\cdots+C^{(t-1,t;t)} \\
  %   &= (C^{(1;t-1)}+\cdots+C^{(t-1;t-1)})\otimes C_t \\
  %   &= (C_1\boxplus\cdots\boxplus C_{t-1})\otimes C_t.
  % \end{align*}
  % By the definition of $v\in C_t^\perp$, it follows that
  % \begin{equation}
  %   \label{eq:pedowndot}
  %   \sum_{r\in[n_t]}v_r\cdot c_t'|_r = (I^{\otimes t-1}\otimes v^\top)c_t' = 0.
  % \end{equation}
  
  % Now for all $i\in[t-1]$, define $c_i\in C^{(i;t-1)}$ by
  % \begin{equation*}
  %   c_i = c_i'|_s+\sum_{r\in[n_t]\setminus\{s\}}\frac{v_r}{v_s}\cdot c_i'|_r.
  % \end{equation*}
  % Then
  % \begin{align*}
  %   \sum_{i\in[t-1]}c_i
  %   &= \sum_{i\in[t-1]}c_i'|_s+\sum_{i\in[t-1]}\sum_{r\in[n_t]\setminus\{s\}}\frac{v_r}{v_s}\cdot\sum_{i\in[t-1]}c_i'|_r \\
  %   &= (c'-c_t')|_s+\sum_{r\in[n_t]\setminus\{s\}}\frac{v_r}{v_s}\cdot(c'-c_t')|_s \\
  %   &= c-c_t'|_s-\sum_{r\in[n_t]\setminus\{s\}}\frac{v_r}{v_s}\cdot c_t'|_s \\
  %   &= c-c_t'|_s+c_t'|_s \\
  %   &= c,
  % \end{align*}
  % where the second equality above holds as $c'=c_1'+\cdots+c_t'$, the third equality holds because $c'|_s=c$ and $c'|_r=0$ for $r\neq s$, and the fourthequality holds by~(\ref{eq:pedowndot}).
\end{proof}

We now apply Lemma~\ref{lem:pefewercodes} to show the following lemma, which will immediately imply Lemma~\ref{lem:pesubmain}.

\begin{lemma}
  \label{lem:pesubcode}
  Let $\bF_q$ be a field of characteristic $2$, and let $(C_i\subsetneq\bF_q^n)_{i\in[t]}$ be a collection of codes\footnote{We believe that the restrictions that $\bF_q$ have characteristic $2$ and that all $C_i$ have the same length $n$ are not necessary, but for simplicity in this paper we maintain these assumptions; see Remark~\ref{remark:dlvrestrict} for details.} with product-expansion at least $\rho>0$. Then there exists some $\rho'=\rho'(\rho,t)>0$ depending only on $\rho$ and $t$ such that for every subcode $C_t'\subseteq C_t$ the collection $(C_1,\dots,C_{t-1},C_t')$ has product-expansion at least $\rho'$.
\end{lemma}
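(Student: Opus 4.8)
The goal is to deduce product-expansion of $(C_1,\dots,C_{t-1},C_t')$ for an arbitrary subcode $C_t'\subseteq C_t$ from product-expansion of $(C_1,\dots,C_t)$, with a loss depending only on $\rho$ and $t$. The natural strategy is to reduce to a statement already available in the excerpt: Lemma~\ref{lem:petocpe}, attributed to \cite{dinur_expansion_2024}, which is cited in Remark~\ref{remark:dlvrestrict} precisely for a statement of this flavor (and explains the characteristic-$2$, equal-length restrictions). I would first handle the case $t=1$ by hand: there product-expansion is relative distance, and a subcode can only have larger relative distance, so $\rho'=\rho$ works trivially. For $t\geq 2$, the idea is to peel off the last coordinate.

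The first real step is to pass from the subcode $C_t'\subseteq C_t$ to the ambient code $C_t$ by using the given $\rho$-product-expansion of the full collection $(C_1,\dots,C_t)$ together with a ``completion'' argument: given $c\in C_1\boxplus\cdots\boxplus C_{t-1}\boxplus C_t'$, we certainly have $c\in C_1\boxplus\cdots\boxplus C_t$, so $\rho$-product-expansion supplies a decomposition $c=c_1+\cdots+c_t$ with $\sum_i n_i|c_i|_i\le|c|/\rho$ and $c_t\in C^{(t)}$ (with respect to $C_t$, not $C_t'$). The difficulty is that $c_t$ need not have its direction-$t$ columns in $C_t'$. To fix this, I would use Lemma~\ref{lem:homvanexp}: since both $c=c_1+\cdots+c_t$ and $c=c_1'+\cdots+c_{t-1}'+c_t'$ (the latter a decomposition witnessing membership in $C_1\boxplus\cdots\boxplus C_{t-1}\boxplus C_t'$, which exists) are decompositions of the same $c$, the two differ by elements $c_{i,j}\in C^{(i,j)}$. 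This lets us transfer the ``low-weight'' property of the first decomposition onto a decomposition whose last part lies in $(C_t')^{(t)}$. The bookkeeping here — tracking how the $|c_i|_i$ change when we add the correction terms $c_{i,j}$, and bounding the weights of the $c_{i,j}$ themselves — is where Lemma~\ref{lem:petocpe} (equating product-expansion with a ``collective'' variant that is more robust under such manipulations) is meant to do the heavy lifting; I would invoke it rather than redo the argument of \cite{dinur_expansion_2024}.

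Combining this with Lemma~\ref{lem:pefewercodes} gives the inductive structure: Lemma~\ref{lem:pefewercodes} already shows that dropping a coordinate (when the corresponding code is proper) preserves product-expansion up to no loss, and Lemma~\ref{lem:pesubcode} handles shrinking the last code. Iterating over the $t$ coordinates — at each stage either shrinking one code to a prescribed subcode via the single-coordinate argument, or reordering so that the coordinate to be modified is last — yields Lemma~\ref{lem:pesubmain}: if each $C_i\subsetneq\bF_q^n$ is replaced by a subcode $C_i'\subseteq C_i$, then applying the single-coordinate subcode bound $t$ times gives $\rho'=\rho'(\rho,t)>0$. The loss compounds $t$ times but stays a function of $\rho,t$ only, as required.

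I expect the main obstacle to be the weight bookkeeping in Lemma~\ref{lem:pesubcode}: when we correct the decomposition $c=c_1+\cdots+c_t$ by the terms $c_{i,j}\in C^{(i,j)}$ from Lemma~\ref{lem:homvanexp} to force the last part into $(C_t')^{(t)}$, we must control $|c_{i,j}|$ and hence the new $|c_i|_i$ in terms of $|c|/\rho$. A naive bound may be exponential in $t$, which is acceptable since we only need dependence on $\rho$ and $t$, but care is needed to ensure it does not secretly depend on $n$ or on $\dim C_t - \dim C_t'$. This is exactly the point where the ``collective product-expansion'' reformulation of \cite{dinur_expansion_2024} is designed to help, since it packages the columns into groups and makes the correction terms' weights controllable; so the cleanest route is to cite Lemma~\ref{lem:petocpe} and reduce Lemma~\ref{lem:pesubcode} to a short argument about that collective notion rather than arguing directly with $|\cdot|_i$.
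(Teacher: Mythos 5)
Your skeleton matches the paper's proof — first decompose $c$ using the $\rho$-product-expansion of the full collection $(C_1,\dots,C_t)$, then use Lemma~\ref{lem:homvanexp} to bring the subcode $C_t'$ into play, then invoke the collective product-expansion of $(C_1,\dots,C_{t-1})$ via Lemma~\ref{lem:petocpe} — but the step you yourself flag as the main obstacle is exactly the content of the lemma, and the mechanism you propose for it does not work. Lemma~\ref{lem:homvanexp} is purely algebraic: it produces correction terms $c_{i,j}\in C^{(i,j)}$ with no weight control whatsoever, so ``bounding the weights of the $c_{i,j}$ themselves'' is not a viable route, and Lemma~\ref{lem:petocpe} cannot supply such bounds either — collective product-expansion only decomposes an explicitly identified family of dual tensor codewords of $(C_1,\dots,C_{t-1})$, and your proposal never says what that family is.

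The missing idea is the quotient-basis expansion. The paper uses Lemma~\ref{lem:homvanexp} only for the membership statement $c_t\in{C_t'}^{(t;t)}+(C_1\boxplus\cdots\boxplus C_{t-1})\otimes C_t$; it then fixes a basis $v_1,\dots,v_\Delta$ of $C_t/C_t'$ and writes $c_t=c_t'+\sum_{j\in[\Delta]}b_j\otimes v_j$ with $c_t'\in{C_t'}^{(t;t)}$ and each $b_j\in C_1\boxplus\cdots\boxplus C_{t-1}$. Linear independence of the $v_j$ modulo $C_t'$ gives, column by column, $|c_t'|_t\leq|c_t|_t$ and $\left|\bigcup_{j\in[\Delta]}b_j\right|\leq|c_t|_t$; this support-containment step is what rules out any hidden dependence on $n$ or on $\Delta=\dim C_t-\dim C_t'$, and it does not come from Lemma~\ref{lem:petocpe}. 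Collective product-expansion is then applied to the family $(b_j)_{j\in[\Delta]}$ — collectivity is needed precisely because $\Delta$ can be as large as $n$, so decomposing each $b_j$ separately would lose a factor of $\Delta$ — and the resulting pieces $b_{j,i}$ are folded into $c_i':=c_i+\sum_j b_{j,i}\otimes v_j$, costing at most $n_t\left|\bigcup_j b_{j,i}\right|_i$ in direction $i$ and yielding $\rho'=\rho/(1+1/P(\rho,t))$. Without the quotient-basis decomposition and these support bounds, your plan has no way to produce a decomposition whose last part lies in ${C_t'}^{(t;t)}$ at controlled weight, so as written it does not prove the lemma. (Your $t=1$ remark and the iteration giving Lemma~\ref{lem:pesubmain} are fine but peripheral.)
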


To prove Lemma~\ref{lem:pesubcode}, we will use the following definition.

\begin{definition}
  The \textbf{collective product-expansion $P$} of a set $(C_i\subseteq\bF_q^{n_i})_{i\in[t]}$ of classical codes is the largest real number $P\geq 0$ such that for every $m\in\bN$ and every sequence $(c_a\in C_1\boxplus\cdots\boxplus C_t)_{a\in[m]}$ of $m$ dual tensor codewords, there exist decompositions
  \begin{equation*}
    c_a = c_{a,1}+\cdots+c_{a,t}
  \end{equation*}
  with each $c_{a,i}\in C^{(i)}$ such that
  \begin{equation*}
    \left|\bigcup_{a\in[m]}c_a\right| \geq P\sum_{i\in[t]}n_i\left|\bigcup_{a\in[m]}c_{a,i}\right|_i,
  \end{equation*}
  where $|\bigcup_ac_a|$ denotes the number of components on which $c_a$ does not vanish for some $a$ (i.e.~the size of the union of the supports of the $c_a$'s), and $|\bigcup_ac_{a,i}|_i$ denotes the number of direction-$i$ vectors on which $c_{a,i}$ does not vanish for some $a$.
\end{definition}

Lemma~4.13 and Proposition~4.14 in \cite{dinur_expansion_2024} along with Lemma~\ref{lem:pefewercodes} above imply the following:

\begin{lemma}[\cite{dinur_expansion_2024}]
  \label{lem:petocpe}
  Let $\bF_q$ be a field of characteristic $2$, and let $(C_i\subsetneq\bF_q^n)_{i\in[t]}$ be a set of codes with product-expansion at least $\rho>0$. Then there exists some $P=P(\rho,t)>0$ depending only on $\rho$ and $t$ such that for every subset $S\subseteq[t]$, the codes $(C_i)_{i\in S}$ have collective product-expansion at least $P$.
\end{lemma}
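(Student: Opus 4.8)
The plan is to deduce Lemma~\ref{lem:petocpe} from the local-to-global machinery of \cite{dinur_expansion_2024}, together with Lemma~\ref{lem:pefewercodes}. First I would reduce to the case $S=[t]$. Since every $C_i\subsetneq\bF_q^n$, and product-expansion is unchanged under permuting the codes, repeated application of Lemma~\ref{lem:pefewercodes} shows that every subcollection $(C_i)_{i\in S}$ of a $\rho$-product-expanding collection is again $\rho$-product-expanding. Hence it suffices to prove that $\rho$-product-expansion of a collection of $s\le t$ codes forces collective product-expansion at least some $P_s(\rho)>0$ depending only on $\rho$ and $s$, and then set $P(\rho,t):=\min_{1\le s\le t}P_s(\rho)$.

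Next I would pass to the language of chain complexes exactly as in the proof of Lemma~\ref{lem:homvanexp}: given codes $(C_i\subseteq\bF_q^n)_{i\in[s]}$ with generator matrices $G_i$, form the $2$-term cochain complexes $\cC_i=(\bF_q^{k_i}\xrightarrow{G_i}\bF_q^n)$ and their homological product $\cA=\cC_1\otimes\cdots\otimes\cC_s$; this plays the role of the ``local complex'' $\cL$ in \cite{dinur_expansion_2024}. The dictionary of \cite[Lemma~4.13]{dinur_expansion_2024} converts $\rho$-product-expansion of $(C_i)_i$ into a lower bound, depending only on $\rho$ and $s$, on the coboundary-expansion (filling) constants of $\cA$ at the relevant level; and it converts collective product-expansion of $(C_i)_i$ into the same kind of bound for the direct-product complexes $\cA^{\oplus m}$, where each term is equipped with the norm counting coordinates on which \emph{some} of the $m$ components is nonzero. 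So the statement reduces to: a coboundary-expansion bound for $\cA$ implies a uniform-in-$m$ coboundary-expansion bound for $\cA^{\oplus m}$ in this support-union norm. This is precisely the ``collective robustness'' assertion proved in \cite[Proposition~4.14]{dinur_expansion_2024}, with quantitative loss depending only on the original constant and on $s$, hence only on $\rho$ and $t$; translating back through Lemma~4.13 yields the desired $P(\rho,t)$.

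The step I expect to be the main obstacle is the collective-robustness passage itself, \cite[Proposition~4.14]{dinur_expansion_2024}: showing that replacing the Hamming norm by the union of supports over $m$ simultaneous components costs only a constant factor in coboundary expansion, uniformly in $m$. That is the substantive content and is carried out in \cite{dinur_expansion_2024} by a direct analysis of the associated sheaf; the remainder of the proof of Lemma~\ref{lem:petocpe} is bookkeeping, namely checking that the Lemma~4.13 dictionary applies to the concrete complex $\cA$ built from our codes and pushing the reduction to subcollections through Lemma~\ref{lem:pefewercodes}.
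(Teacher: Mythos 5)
Your proposal matches the paper's intended argument: the paper proves this lemma purely by citation, stating that it follows from Lemma~4.13 and Proposition~4.14 of \cite{dinur_expansion_2024} together with Lemma~\ref{lem:pefewercodes}, which is exactly the combination you describe (Lemma~\ref{lem:pefewercodes} plus permutation symmetry to reduce to full subcollections, the Lemma~4.13 dictionary between product-expansion and (collective) robustness of the local tensor complex, and Proposition~4.14 for the collective-robustness step, with loss depending only on $\rho$ and the number of codes). So the proposal is correct and takes essentially the same route, just spelled out in more detail than the paper bothers to.
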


We now apply Lemma~\ref{lem:pefewercodes} and Lemma~\ref{lem:petocpe} above to prove Lemma~\ref{lem:pesubcode}.

\begin{proof}[Proof of Lemma~\ref{lem:pesubcode}]
  Fix any $c\in C_1\boxplus\cdots\boxplus C_{t-1}\boxplus C_t'$. Let $n_1=\cdots=n_t=n$. Our goal is to find an appropriate decomposition $c=c_1'+\cdots+c_t'$ with $c_t'\in {C_t'}^{(t;t)}$ and $c_i'\in C_i^{(i;t)}$ for $i\in[t-1]$ such that $|c|\geq\rho'\sum_{i\in[t]}n_i|c_i'|_i$, for an appropriate $\rho'>0$.

  By the product-expansion of $C_1,\dots,C_t$, there exists a decomposition $c=c_1+\cdots+c_t$ each each $c_i\in C_t^{(i;t)}$ such that
  \begin{equation}
    \label{eq:pesubstart}
    \sum_{i\in[t]}n_i|c_i|_i\leq|c|/\rho.
  \end{equation}

  Because $c\in C_1\boxplus\cdots\boxplus C_{t-1}\boxplus C_t'$, Lemma~\ref{lem:homvanexp} implies that
  \begin{equation}
    \label{eq:ctdecomp}
    c_t \in {C_t'}^{(t;t)}+\sum_{i\in[t-1]}(C_i^{(i;t)}\cap C_t^{(t;t)}) = {C_t'}^{(t;t)}+(C_1\boxplus\cdots\boxplus C_{t-1})\otimes C_t.
  \end{equation}
  Let $\Delta=\dim C_t-\dim C_t'$, and let $v_1,\dots,v_\Delta\in C_t$ form a basis for $C_t/C_t'$, meaning that $C_t=\spn\{C_t',v_1,\dots,v_\Delta\}$. Then by~(\ref{eq:ctdecomp}), we can decompose $c_t$ into
  \begin{equation*}
    c_t = c_t'+b_1\otimes v_1+\cdots+b_\Delta\otimes v_\Delta,
  \end{equation*}
  where $c_t'\in{C_t'}^{(t;t)}$, and $b_j\in C_1\boxplus\cdots\boxplus C_{t-1}$ for $j\in[\Delta]$. Furthermore, by definition
  \begin{equation}
    \label{eq:ctbounds}
    |c_t|_t \geq |c_t'|_t \hspace{1em}\text{ and }\hspace{1em} |c_t|_t \geq \left|\bigcup_{j\in[\Delta]}b_j\right|,
  \end{equation}
  as a given direction-$t$ column of $c_t$ must be nonzero if either the associated direction-$t$ column of $c_t'$ is nonzero, or if the associated component of any $b_j$ for $j\in[\Delta]$ is nonzero.

  Because $(C_1,\dots,C_t)$ by assumption have product-expansion at least $\rho>0$, Lemma~\ref{lem:petocpe} implies that that $(C_1,\dots,C_{t-1})$ have collective product-expansion at least $P=P(\rho,t)>0$. Therefore there exist $b_{j,i}\in C^{(i;t-1)}$ for $j\in[\Delta],i\in[t-1]$ such that each $b_j=b_{j,1}+\cdots+b_{j,t-1}$, and
  \begin{equation}
    \label{eq:pesubapplycol}
    \left|\bigcup_{j\in[\Delta]}b_j\right| \geq P\sum_{i\in[t-1]}n_i\left|\bigcup_{j\in[\Delta]}b_{j,i}\right|_i.
  \end{equation}
  Now for $i\in[t-1]$, we define $c_i'\in C^{(i;t)}$ by
  \begin{equation*}
    c_i' := c_i+\sum_{j\in[\Delta]}b_{j,i}\otimes v_j.
  \end{equation*}
  Then by definition
  \begin{equation*}
    \sum_{i\in[t]}c_i' = \left(\sum_{i\in[t-1]}c_i\right)+\left(c_t'+\sum_{j\in[\Delta]}b_j\otimes v_j\right) = \sum_{i\in[t]}c_i = c.
  \end{equation*}
  Furthermore,
  \begin{align*}
    \sum_{i\in[t]}n_i|c_i'|_i
    &\leq \sum_{i\in[t-1]}n_i\left(|c_i|_i+n_t\left|\bigcup_{j\in[\Delta]}b_{j,i}\right|_i\right)+n_t|c_t'|_t \\
    &\leq \sum_{i\in[t]}n_i|c_i|_i + n_t\sum_{i\in[t-1]}n_i\left|\bigcup_{j\in[\Delta]}b_{j,i}\right|_i \\
    &\leq \frac{|c|}{\rho}+\frac{n_t}{P}\left|\bigcup_{j\in[\Delta]}b_j\right| \\
    &\leq \frac{|c|}{\rho}+\frac{n_t}{P}|c_t|_t \\
    &\leq \frac{|c|}{\rho}+\frac{|c|}{P\rho},
  \end{align*}
  where the first inequality above holds by the definition of $c_1',\dots,c_t'$, the second inequality holds by~(\ref{eq:ctbounds}), the third inequality holds by~(\ref{eq:pesubstart}) and~(\ref{eq:pesubapplycol}), the fourth inequality holds by~(\ref{eq:ctbounds}), and the fifth inequality holds by~(\ref{eq:pesubstart}).

  Thus letting
  \begin{equation*}
    \rho'(\rho,t) := \frac{\rho}{1+1/P(\rho,t)},
  \end{equation*}
  we have shown that $(C_1,\dots,C_{t-1},C_t')$ has product-expansion at least $\rho'$, as desired.
\end{proof}

Lemma~\ref{lem:pesubcode} immediately implies Lemma~\ref{lem:pesubmain}:

\begin{proof}[Proof of Lemma~\ref{lem:pesubmain}]
  The desired result follows by applying Lemma~\ref{lem:pesubcode} $t$ times, once for each $i\in[t]$ to replace $C_i$ with $C_i'$ in the product-expanding collection of codes.
\end{proof}

\end{document}